\newtheorem{theorem}{Theorem}[section]
\newtheorem{corollary}[theorem]{Corollary}
\newtheorem{lemma}[theorem]{Lemma}
\newtheorem{proposition}[theorem]{Proposition}
\newtheorem*{conjecture}{Conjecture}
\theoremstyle{definition}
\newtheorem*{definition}{Definition}
\newtheorem{example}[theorem]{Example}
\newtheorem{remark}[theorem]{Remark}
\newtheorem{remarks}[theorem]{Remarks}
\newtheorem{examples}[theorem]{Examples}
\newcommand{\Z}{\mathbb{Z}}
\newcommand{\R}{\mathbb{R}}
\newcommand{\C}{\mathbb{C}}
\newcommand{\M}{\mathcal{M}}
\newcommand{\D}{\mathcal{D}}
\newcommand{\G}{\Gamma}
\newcommand{\T}{\mathbb{T}}
\newcommand{\K}{\mathcal{K}}
\newcommand{\Pf}{\mathrm{Pf}}
\newcommand{\GL}{\operatorname{GL}}
\DeclarePairedDelimiter\floor{\lfloor}{\rfloor}
\DeclareSymbolFont{EulerScript}{U}{eus}{m}{n}
\DeclareSymbolFontAlphabet\mathscr{EulerScript}
\begin{document}

\title{The dimer and Ising models on Klein bottles}

\author{David Cimasoni}
\address{Universit\'e de Gen\`eve, Section de math\'ematiques, 2 rue du Li\`evre, 1211 Gen\`eve, Switzerland}
\email{david.cimasoni@unige.ch}

\subjclass[2010]{82B20} 
\keywords{}

\begin{abstract}
We study the dimer and Ising models on a finite
planar weighted graph with periodic-antiperiodic
boundary conditions, i.e. a graph~$\G$ in the Klein bottle~$\K$.
Let~$\G_{mn}$ denote the graph obtained
by pasting~$m$ rows and~$n$ columns of copies of~$\G$, which embeds in~$\K$ for~$n$
odd and in the torus~$\mathbb{T}^2$ for~$n$ even.
We compute the dimer partition function~$Z_{mn}$ of~$\G_{mn}$ for~$n$ odd,
in terms of the well-known characteristic polynomial~$P$
of~$\G_{12}\subset\mathbb{T}^2$ together with a new characteristic
polynomial~$R$ of~$\G\subset\K$.

Using this result together with work of Kenyon, Sun and Wilson,
we show that in the bipartite case, this partition function has the asymptotic
expansion
\[
\log Z_{mn}=mn\frac{\mathbf{f}_0}{2}+\mathsf{fsc}+o(1)
\]
for~$m,n$ tending to infinity and~$m/n$ bounded below and above,
where~$\mathbf{f}_0$ is the bulk free energy for~$\G_{12}\subset\mathbb{T}^2$
and~$\mathsf{fsc}$ an explicit finite-size correction term.
The remarkable feature of this later term is its {\em universality\/}: it does not depend on the graph~$\G$,
but only on the zeros of~$P$ on the unit torus and on an explicit (purely imaginary) conformal shape parameter.
A similar expansion is also obtained in the non-bipartite case, assuming a conjectural
condition on the zeros of~$P$.

We then show that this asymptotic expansion holds for the Ising partition function as well, with~$\mathsf{fsc}$ taking a particularly simple form: it vanishes in the subcritical regime, is equal to~$\log(2)$ in the supercritical regime, and to an explicit function of the shape parameter at criticality.
These results are in full agreement with the conformal field theory predictions of
Bl\"ote, Cardy and Nightingale. 
\end{abstract}

\maketitle


\section{Introduction}
\label{sec:intro}

\subsection{Background on the dimer model}
\label{sub:intro-dimer}

A {\em dimer configuration\/} on a finite graph~$\G=(V,E)$ is a perfect matching on~$\G$, i.e.
a family of edges~$M\subset E$ such that each vertex~$v\in V$ is adjacent to exactly one edge of~$M$.
Given a non-negative edge weight system~$\nu=(\nu_e)_{e\in E}$ on~$\G$,
a probability measure on the set~$\M(\G)$ of dimer configurations on~$\G$ is given by
\[
\mathbb{P}(M)=\frac{\nu(M)}{Z}\;,\quad\text{with }\;\;\nu(M)=\prod_{e\in E}\nu_e\;\text{ and }\;
Z=\sum_{M\in\M(\G)}\nu(M)\,.
\]
The normalization constant~$Z=Z(\G,\nu)$ is the {\em partition function\/}
of the {\em dimer model\/} on~$\G$.

The first breakthrough in the study of this model came with the foundational work of Kasteleyn~\cite{Ka1,Ka3},
Temperley and Fisher~\cite{T-F,Fisher}. They showed that, in the case of a planar graph~$\G$,
the partition function~$Z$ can be expressed as
the Pfaffian of a signed, weighted skew-adjacency matrix of~$\G$, now called a {\em Kasteleyn matrix\/}.
In the case of a general graph, embedded in an orientable surface of arbitary genus~$g$, this method extends,
but~$Z$ is equal to an alternating sum of~$2^{2g}$ Pfaffians~\cite{Tes,C-R}. In particular, if~$\G$ embeds in
the torus~$\mathbb{T}^2$, then~$Z$ is equal to an alternating sum of~$4$ Pfaffians.
Using this method, Kasteleyn~\cite{Ka1}, Fisher~\cite{Fisher} and
Ferdinand~\cite{Fer} were able to compute successive terms in the asymptotic expansion of
the partition function for the weighted square lattice with various boundary conditions
(see e.g.~\cite[Section~1.1]{KSW} for details).
In the case of periodic-periodic boundary conditions, i.e. when the lattice is embedded in the
torus~$\mathbb{T}^2$,
the final result reads as follows.
For the~$m\times n$ square lattice with horizontal (resp. vertical) edges-weights equal to~$x$
(resp.~$y$), periodic-periodic boundary conditions and~$mn$ even,
the corresponding partition function~$Z_{mn}$ satisfies
\[
\log Z_{mn}=mn\,\mathbf{f}_0(x,y)+\mathsf{fsc}_{(-1)^{m+n}}(\textstyle\frac{nx}{my})+o(1)\,,
\]
where the {\em bulk free energy\/}~$\mathbf{f}_0$
depends on the weights~$x,y$ in an explicit but complicated way,
while the constant order {\em finite size correction\/} term~$\mathsf{fsc}_\pm$ only depends on the {\em shape parameter\/}~$\frac{nx}{my}$ of the torus, together with the parity of~$m+n$.

For conformally invariant two-dimensional models
on a closed surface~$\Sigma$ of vanishing Euler characteristic,
such an asymptotic expansion is believed to hold for arbitrary graphs, with the finite-size correction
term depending only on the {\em universality class\/} of the model at criticality and on the topology
of the surface, but not on the underlying graph~\cite{BCN,Ca-P}.
There are exactly two closed surfaces with~$\chi(\Sigma)=0$, namely the torus~$\T^2$ and the Klein
bottle~$\K$, corresponding to periodic-periodic and periodic-antiperiodic boundary conditions, respectively.
Generalizing the asymptotic expansion displayed above from the square lattice to arbitrary weighted graphs
in~$\T^2$ and~$\K$ is no easy task, and was out of reach with the tools available in the 1960s.

The extension to bipartite toric graphs was made possible with the second breakthrough in the study of the dimer model, namely the work of 
Kenyon and coauthors, in particular the seminal article of Kenyon, Okounkov and Sheffield~\cite{KOS}.
In a nutshell, it was discovered that many large scale properties of the dimer model on a doubly
periodic bipartite graph~$\G$ can be understood from the behavior of the associated
{\em characteristic polynomial\/}~$P(z,w)\in\R[z^{\pm 1},w^{\pm 1}]$, defined as the determinant of
a twisted Kasteleyn matrix for~$\G\subset\mathbb{T}^2$.
(Here, we make use of the notation of~\cite{KSW}, where~$P$ stands for the full polynomial
which factors as~$P(z,w)=Q(z,w)Q(z^{-1},w^{-1})$.)
In particular, the Pfaffian formula of~\cite{Tes}
can now be reformulated as
\begin{equation}
\label{equ:Tesler}
Z=\textstyle{\frac{1}{2}}\left(\pm P(1,1)^{1/2}\pm P(-1,1)^{1/2}\pm P(1,-1)^{1/2}\pm P(-1,-1)^{1/2}\right)\,,
\end{equation}
where the signs can be given a natural geometric interpretation~\cite{C-R}.
A crucial role is played by
the intersection of the corresponding {\em spectral curve\/}, i.e. the zeros of~$P$, with the unit torus~$S^1\times S^1$. As proved in~\cite{KOS,K-O}, there is at most two such zeros, and they are positive nodes.

Another important feature of this polynomial is that it behaves well with respect to enlargement of the
fundamental domain. In other words, if one considers the graph~$\G_{mn}\subset\mathbb{T}^2$
obtained by pasting~$m$ rows and~$n$ columns of copies of~$\G$, then the associated characteristic
polynomial~$P_{mn}$ can be computed from~$P_{11}=P$ via
\begin{equation}
\label{equ:Pmn}
P_{mn}(\zeta,\xi)=\prod_{z^n=\zeta}\prod_{w^m=\xi}P(z,w)
\end{equation}
for any~$\zeta,\xi\in\C^*$ (see~\cite[Theorem~3.3]{KOS}).
Note that the definition of the characteristic polynomial extends
to arbitrary (possibly non-bipartite) graphs~$\G\subset\mathbb{T}^2$, and the formula~\eqref{equ:Pmn} still
holds, but the corresponding spectral curve is not well-understood.
It is believed to intersect the unit torus in at most two
points which are real positive nodes (see~\cite[Section~1.2]{KSW}), but this remains a conjecture.

With these tools in hand, the only hurdle left in the computation of the asymptotic expansion
of  the partition function~$Z_{mn}$
for arbitrary toric graphs~$\Gamma_{mn}$ was the determination of the asymptotic behavior of~$P_{mn}$ for a
non-negative analytic function~$P$ on the unit torus whose only zeros are positive nodes.
This was done by Kenyon, Sun and Wilson in~\cite[Theorem~1]{KSW}, see Section~\ref{sub:KSW} below for a summary.
The resulting asymptotic expansion for~$Z_{mn}$ is too involved to be stated in detail here,
so we refer the reader to~\cite[Theorem~2]{KSW} and mention that it is in full agreement with the
conformal field theory (CFT) predictions of~\cite{BCN}.
These results are proved for arbitrary bipartite graphs in~$\T^2$, and for non-bipartite graphs as well
assuming the aforementioned conjectural condition on the zeros of the characteristic polynomial.

\medskip

While the toric case is now fully settled, the case of the Klein bottle remains very poorly understood.
To the best of our knowledge, the only available results deal with the square lattice~\cite{L-W,Lu-Wu99,IOH}, and are partially contradictory (see Examples~\ref{ex:fsc-sq} and~\ref{ex:sqbip} below, where we point out inaccuracies in~\cite{Lu-Wu99} and~\cite{IOH}, as well as in~\cite{Lu-Wu}).

The main goal of the present article is to fill this gap, i.e. to compute the asymptotic expansion
of the dimer partition function, and in particular the finite size correction term, for arbitrary 
weighted graphs in the Klein bottle.

\subsection{Results on the dimer model}
\label{sub:results-dimer}

To understand the technical difficulties of this endeavour, let us go once again
through the list of tools used in the toric case.

The first fundamental tool is Kasteleyn's theorem extended to toric graphs, which can be stated as
Equation~\eqref{equ:Tesler}. Fortunately, such a formula is also available for non-orientable
surfaces: if~$\G$ is embedded in a closed (possibly non-orientable) surface~$\Sigma$,
then~$Z$ can be computed as a linear combination of~$2^{2-\chi(\Sigma)}$ Pfaffians of (possibly
complex-valued) Kasteleyn matrices~\cite{Tes}.
Furthermore, the coefficients in this linear combination can still be interpreted geometrically~\cite{Cim}.
In particular, the partition function of any graph embedded in~$\K$ is given by~$4$ Pfaffians,
which turn out to be two pairs of conjugate complex numbers, so~$2$ well-chosen Pfaffians are sufficient.

The second tool is the characteristic polynomial~$P$, defined for graphs in~$\T^2$.
Intrinsically, this polynomial should be understood as an element of the group ring~$\R[H_1(\T^2;\Z)]$,
the choice of a basis of~$H_1(\T^2;\Z)\simeq\Z^2$ then leading to the more familiar
polynomial ring~$\R[z^{\pm 1},w^{\pm 1}]$. Therefore, since~$H_1(\K;\Z)\simeq\Z\oplus\Z/2\Z$ and the Kasteleyn matrices are now complex valued, a naturally defined characteristic polynomial~$R$ for
graphs in the Klein bottle should be an element of the quotient
ring~$\C[H_1(\K;\Z)]\simeq\C[z^{\pm 1},w]/(w^2-1)$,
i.e. two 1-variable polynomials~$R(z,1),R(z,-1)\in\C[z^{\pm 1}]$.
The definition of such a polynomial is the first technical step of this work, see Section~\ref{sub:Klein}.
The Pfaffian formula of~\cite{Tes} reinterpreted in the spirit of~\cite{Cim} now reads
\begin{equation}
\label{equ:Pfintro}
Z=\big|{\rm Im}\big(R(\pm 1,1)^{1/2}\big)\big|+\big|{\rm Re}\big(R(\pm 1,-1)^{1/2}\big)\big|\,,
\end{equation}
see Proposition~\ref{prop:Pf}. Remarkably, the polynomials~$R$ of~$\G$ and~$P$ of its~$2$-cover~$\widetilde{\G}\subset\T^2$
(see Figure~\ref{fig:Gmn}) are related via
\begin{equation}
\label{equ:RPintro}
R(z,w)R(-z,w)=P(z^2,w)
\end{equation}
for~$w=\pm 1$, see Proposition~\ref{prop:P}. Moreover, the order~$2$ symmetry of~$\widetilde{\G}$
implies the formula~$P(z,w)=P(z,w^{-1})$ for the associated polynomial.

If~$\G$ is bipartite, then much more can be said, constituting the first technically
challenging results of this article. As in the toric case, the Kleinian
characteristic polynomial factors as~$R(z,w)=S(z,w)S(z^{-1},w)$ for~$w=\pm 1$,
and one can use the powerful tools of~\cite{KOS}, namely amoebas of Harnack curves,
to show that all the zeros of~$P(z,w)$ on the unit torus satisfy~$z=-1$.
Furthermore, we prove that all the zeros of~$S(z,1)$ and~$S(z,-1)$ are simple, purely imaginary,
and interlaced along the imaginary axis (Proposition~\ref{prop:rootS}).
We also determine their behavior as one moves along the associated amoeba (or phase diagram),
see Lemma~\ref{lemma:S}.

\begin{figure}[tb]
\centering
\includegraphics[width=0.8\textwidth]{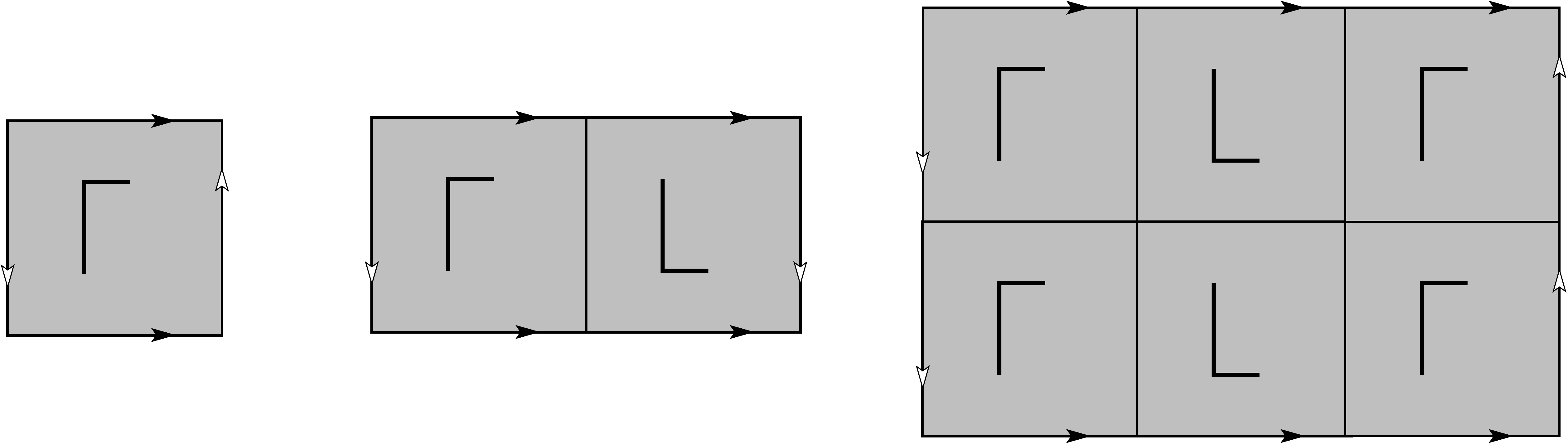}
\caption{Left: the graph~$\G\subset\K$, pictured with periodic boundary conditions in the
vertical direction (black arrows) and antiperiodic boundary conditions in the horizontal one (white arrows).
Center: the~$2$-cover~$\widetilde{\G}=\G_{12}\subset\T^2$.
Right: the associated graph~$\G_{mn}\subset\K$, here with~$m=2$ and~$n=3$.}
\label{fig:Gmn}
\end{figure}

The third tool used in the toric case is the fact that~$P$ behaves well under enlargement of
fundamental domains, as expressed in Equation~\eqref{equ:Pmn}. The extension of this result to the Kleinian
case is the main technical novelty of the present work, and the subject of the whole of Section~\ref{sec:enlarge}.
But first of all, let us clarify what we mean by ``enlargement of fundamental domain'' in the Klein bottle.
Given a weighted graph~$\G\subset\K$, consider the graph~$\G_{mn}$ obtained by pasting~$m$ rows
and~$n$ columns of copies of~$\G$ as illustrated in Figure~\ref{fig:Gmn}.
Observe that if~$n$ is even, then~$\G_{mn}$ embeds in the torus and is nothing but the~$m\times n/2$ enlargement
of~$\widetilde{\G}=\G_{12}\subset\T^2$,
a case well-understood. However, if~$n$ is odd, then~$\G_{mn}$ embeds in the Klein bottle,
and this is the case we will focus on.
Defined more intrinsically, we are looking at non-trivial covers~$\K_{mn}\to\K$ of the Klein bottle by
itself (which incidently only exist since~$\K$ has vanishing Euler characteristic), see also Section~\ref{sub:rem} below.

The idea now is to relate the two Kasteleyn matrices of~$\G_{mn}\subset\K_{mn}$ used in~\eqref{equ:Pfintro},
understood as discrete operators twisted by~$1$-dimensional representations~$\rho,\rho'$
of~$\pi_1(\K_{mn})<\pi_1(\K)$,
with the associated Kasteleyn operators on~$\G\subset\K$ twisted by the corresponding
{\em induced representations\/}~$\rho^\#,(\rho')^\#$ of~$\pi_1(\K)$.
This uses a general result, Theorem~\ref{thm:C-K} below,
which is probably well-known to the experts, but whose precise statement we have not been able
to find in the literature.
It is therefore the subject of a separate note with Adrien Kassel~\cite{C-K}, where we give a detailed proof
together with applications to other models of statistical physics.
Unlike that of the torus, the fundamental group of the Klein bottle is not abelian,
so the representations~$\rho^\#,(\rho')^\#$ need not split as products of~$1$-dimensional
representations as in Equation~\eqref{equ:Pmn}.
It turns out that they split as products of representations of dimension~$1$ and~$2$. Furthermore,
the determinant of the Kasteleyn matrices for~$\G$ twisted by the~$2$-dimensional representations
can be expressed as evaluations of the toric characteristic polynomial~$P$ of~$\widetilde{\G}\subset\T^2$.
The final result is somewhat cumbersome, so we will not state it here but refer the reader to Theorem~\ref{thm:Rmn}. Together with Equation~\eqref{equ:Pfintro}, it yields the following result.

\begin{theorem}
\label{thmintro:Z}
For positive integers~$m,n$ with~$n$ odd, we have
\[
Z_{mn}=\left|\sin(\alpha_n/2)\right|P_{mn}(1,1)^{1/4}+\left|\cos(\alpha'_n/2)\right|P_{mn}(1,-1)^{1/4}
\]
if~$m$ is odd, and
\[
Z_{mn}=\left|\sin((\alpha_n-\alpha'_n)/2)\right|P_{mn}(1,1)^{1/4}+P_{mn}(1,-1)^{1/4}
\]
if~$m$ is even, where
\[
\alpha_n=\mathrm{Arg}\Big(\prod_{z^n=1}R(z,1)\Big)\,,\quad\alpha'_n=\mathrm{Arg}\Big(\prod_{z^n=1}R(z,-1)\Big)\,,
\]
and~$P_{mn}(1,\pm 1)^{1/4}$ denotes the non-negative fourth root of~$P_{mn}(1,\pm 1)\ge 0$.
\end{theorem}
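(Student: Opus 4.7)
The plan is to combine the Pfaffian formula of Proposition~\ref{prop:Pf}, namely equation~\eqref{equ:Pfintro}, applied directly to the graph $\G_{mn}\subset\K$, with the enlargement formula of Theorem~\ref{thm:Rmn}, which expresses the Klein characteristic polynomial $R_{mn}$ of $\G_{mn}$ at the points $(\pm 1,\pm 1)$ in terms of $R$ and $P$ associated to the base graph $\G\subset\K$.

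First I would apply~\eqref{equ:Pfintro} to $\G_{mn}\subset\K$ to get
\[
Z_{mn}=\bigl|\operatorname{Im}R_{mn}(\pm 1,1)^{1/2}\bigr|+\bigl|\operatorname{Re}R_{mn}(\pm 1,-1)^{1/2}\bigr|.
\]
Writing a complex number as $re^{i\theta}$ with $r\ge 0$, one has $|\operatorname{Im}(re^{i\theta})^{1/2}|=\sqrt{r}\,|\sin(\theta/2)|$ and $|\operatorname{Re}(re^{i\theta})^{1/2}|=\sqrt{r}\,|\cos(\theta/2)|$, so the task reduces to computing the modulus and the argument of $R_{mn}(\pm 1,\pm 1)$.

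Next I would invoke Theorem~\ref{thm:Rmn}, which by decomposing the relevant induced representation of $\pi_1(\K)$ into irreducible constituents writes $R_{mn}(\pm 1,\pm 1)$ as a product of two contributions: a one-dimensional piece, equal to $\prod_{z^n=1}R(z,1)$, $\prod_{z^n=1}R(z,-1)$, or a combination of both (depending on the parity of $m$ and on the choice of second variable), and a two-dimensional piece consisting of values of $P(z,w)$ at pairs of roots of unity with $w\notin\{\pm 1\}$. Using equation~\eqref{equ:RPintro} to rewrite $R(z,w)R(-z,w)$ as $P(z^2,w)$, and exploiting that for $n$ odd the map $z\mapsto z^2$ permutes the $n$-th roots of unity, the squared modulus of the one-dimensional product is identified with an appropriate product of $P$-values. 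Combining with the two-dimensional factor, which is real and non-negative, and with the formula~\eqref{equ:Pmn} for $P_{mn}$, this yields $|R_{mn}(\pm 1,\pm 1)|=P_{mn}(1,\pm 1)^{1/2}$. The argument of $R_{mn}(\pm 1,\pm 1)$ is then carried entirely by the one-dimensional piece and equals $\alpha_n$, $\alpha'_n$, or $\alpha_n-\alpha'_n$ modulo $2\pi$ according to the case at hand. Substituting these data into the Pfaffian expression produces exactly the two formulas claimed.

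The main obstacle is the representation-theoretic bookkeeping underlying Theorem~\ref{thm:Rmn}: one must carefully identify which one-dimensional characters of $\pi_1(\K)$ appear as direct summands of the induced representation, and which are absorbed into two-dimensional irreducible components. The parity of $m$ controls exactly this splitting, and is ultimately why the coefficient of $P_{mn}(1,-1)^{1/4}$ simplifies to $1$ when $m$ is even (the character $w\mapsto -1$ being pushed into a two-dimensional summand, whose determinant is real positive) but carries a genuine $|\cos(\alpha'_n/2)|$ factor when $m$ is odd. Once that decomposition is in place, everything else is the elementary modulus--argument computation described above.
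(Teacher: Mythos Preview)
Your proposal is correct and follows the same overall strategy as the paper: apply Proposition~\ref{prop:Pf} to $\G_{mn}$, then use Theorem~\ref{thm:Rmn} together with the non-negativity of $P$ on the unit torus (Proposition~\ref{prop:Ppos}) to read off the argument of $R_{mn}(1,\pm 1)$ from the one-dimensional factors.

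The one place you take a slightly longer route is the modulus identity $|R_{mn}(1,\pm 1)|^2=P_{mn}(1,\pm 1)$. You propose to derive it by squaring the explicit product from Theorem~\ref{thm:Rmn}, using~\eqref{equ:RPintro} together with the fact that $z\mapsto z^2$ permutes the $n$-th roots of unity for $n$ odd, and then reassembling everything via~\eqref{equ:Pmn} (which also tacitly uses the symmetry $P(z,w)=P(z,w^{-1})$ to pair the two-dimensional factors). This works, but the paper bypasses it entirely by invoking Proposition~\ref{prop:P}(iii) directly for the graph $\G_{mn}$, which gives $P_{mn}(1,w)=|R_{mn}(\pm 1,w)|^2$ in one line. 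Either way is fine; the paper's shortcut just avoids some bookkeeping.
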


\medskip

These expressions are handy for the determination of the asymptotics of~$Z_{mn}$,
which now boils down to two distinct problems: the computation of the asymptotics of~$P_{mn}(1,\pm 1)$,
and of~$\mathrm{Arg}\left(\prod_{z^n=1}R(z,\pm 1)\right)$. The first question being answered by~\cite[Theorem~1]{KSW}, we are left with the second.
As it turns out, the limit of the
coefficients~$\left|\sin(\alpha_n/2)\right|$,~$\left|\cos(\alpha'_n/2)\right|$
and~$\left|\sin((\alpha_n-\alpha'_n)/2)\right|$ for~$n$ odd tending to~$\infty$ can only take the three
possible values~$0,\frac{\sqrt{2}}{2}$ and~$1$.
Furthermore, they are determined by the number of roots of~$R(z,1)$
and~$R(z,-1)$ outside the unit disc (see Lemmas~\ref{lemma:alpha} and~\ref{lemma:cst} for the precise statements). Using~\eqref{equ:RPintro}, one can then show that these limits are constant if the
dimer weights vary continuously without~$P(-1,\pm 1)$ vanishing.
This leads to an asymptotic expansion of~$Z_{mn}$, valid for general graphs, with the conjectural assumption
that all the zeros of~$P(z,w)$ on the unit torus are positive nodes with~$z=-1$ (see Theorem~\ref{thm:as-gen}).

In the bipartite case, this assumption is known to hold. Furthermore, as outlined above,
we have a good understanding of the locations of the zeros of~$S(z,\pm 1)$ along the imaginary axis.
This leads to the following result (Theorem~\ref{thm:as-bip}), where~$\vartheta_{00},\vartheta_{01}$
denote Jacobi theta functions and~$\eta$ is the Dedekind eta function, see
Section~\ref{sub:KSW}.

\begin{theorem}
\label{thmintro:as-bip}
Let~$\G\subset\K$ be a weighted bipartite graph embedded in the Klein bottle. Then, we have the asymptotic expansion
\[
\log Z_{mn}=mn\frac{\mathbf{f}_0}{2}+\mathsf{fsc}+o(1)
\]
for~$m$ and~$n$ tending to infinity with~$n$ odd and~$m/n$ bounded below and above, with
\[
\mathbf{f}_0=\iint_{S^1\times S^1}\log |Q(z,w)|\frac{dz}{2\pi iz}\frac{dw}{2\pi iw}
\]
and~$\mathsf{fsc}=\log\mathsf{FSC}$ given as follows:
\begin{enumerate}
\item{If~$Q(z,w)$ has no zeros in the unit torus, then~$\mathsf{FSC}=1$.}
\item{If~$Q(z,w)$ has two zeros~$(-1,w_0)\neq(-1,\overline{w}_0)$ in the unit torus
with~$w_0=\exp(2\pi i\psi)$, then
\[
\mathsf{FSC}=\frac{\vartheta_{00}(m\psi|\tau)}{\eta(\tau)}+\frac{\vartheta_{01}(m\psi|\tau)}{\eta(\tau)}
\quad\text{with}\quad
\tau=i\,\frac{m}{n}\left|\frac{\partial_zQ(-1,w_0)}{\partial_wQ(-1,w_0)}\right|\,.
\]}
\item{If~$Q(z,w)$ has a single (real) node at~$(-1,w_0)$ in the unit torus, then
\[
\mathsf{FSC}=\frac{\vartheta_{00}(\tau)}{\eta(\tau)}+\frac{\vartheta_{01}(\tau)}{\eta(\tau)}
\quad\text{with}\quad
\tau=i\,\frac{m}{n}\left|\frac{\partial^2_zQ(-1,w_0)}{\partial^2_wQ(-1,w_0)}\right|^{1/2}\,.
\]}
\end{enumerate}
\end{theorem}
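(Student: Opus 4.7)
The strategy combines three ingredients: the exact formula from Theorem~\ref{thmintro:Z}, the KSW asymptotic expansion of the toric polynomial $P_{mn}$, and the detailed information on the roots of the Kleinian polynomial $R(z,\pm 1)$ from Proposition~\ref{prop:rootS} and Lemmas~\ref{lemma:alpha},~\ref{lemma:cst}. By Theorem~\ref{thmintro:Z}, $Z_{mn}$ splits as a sum of two terms of the form $C_{mn}^{\pm}\cdot P_{mn}(1,\pm 1)^{1/4}$ with bounded angular prefactors $C_{mn}^{\pm}\in[0,1]$, so the task reduces to (a)~expanding each polynomial factor using KSW, and (b)~computing the limits of $C_{mn}^{\pm}$ along odd~$n$.

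For~(a), the bipartite hypothesis together with the amoeba argument already sketched in the paper confines all zeros of $P$ on $S^1\times S^1$ to $z=-1$, which is exactly the regime covered by \cite[Theorem~1]{KSW}. The hypothesis that $m/n$ stays bounded below and above keeps the KSW modular parameter $\tau$ in a compact subset of the upper half-plane, so the expansion is uniform and yields
\[
P_{mn}(1,\pm 1)^{1/4}=\exp\!\Bigl(\tfrac{mn}{2}\mathbf{f}_0+o(1)\Bigr)\cdot\Theta_{\pm},
\]
where the $mn/2$ bulk reflects $P=Q(z,w)Q(z^{-1},w^{-1})$ and $\Theta_{\pm}$ is an explicit theta quotient. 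Inspection of the KSW formulas identifies $\Theta_{+}$ (from $\xi=1$) with $\vartheta_{00}(m\psi|\tau)/\eta(\tau)$ and $\Theta_{-}$ (from $\xi=-1$) with $\vartheta_{01}(m\psi|\tau)/\eta(\tau)$ in case~(2), since moving $\xi$ from $+1$ to $-1$ is precisely the $1/2$-shift of characteristic separating these two theta functions; analogous expressions with $\psi=0$ and second derivatives arise in case~(3), and both $\Theta_{\pm}$ reduce to~$1$ in case~(1).

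For~(b), Lemmas~\ref{lemma:alpha} and~\ref{lemma:cst} reduce the limits of $|\sin(\alpha_n/2)|$, $|\cos(\alpha'_n/2)|$ and $|\sin((\alpha_n-\alpha'_n)/2)|$ to counts of roots of $R(z,\pm 1)$ in $\{|z|>1\}$. In the bipartite case $R(z,w)=S(z,w)S(z^{-1},w)$, and Proposition~\ref{prop:rootS} gives that the roots of $S(z,\pm 1)$ are simple, purely imaginary and interlaced along the imaginary axis, while Lemma~\ref{lemma:S} tracks their motion across the unit circle as the weights move along the amoeba. Through Equation~\eqref{equ:RPintro}, which relates $R$ to $P$, these data match the three cases of the theorem and yield: in case~(1) the two coefficient limits sum to~$1$ (one equal to~$1$, the other to~$0$), whereas in cases~(2) and~(3) both equal~$1$ individually.

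Assembling (a) and (b) gives $\log Z_{mn}=\tfrac{mn}{2}\mathbf{f}_0+o(1)$ in case~(1), and $\log Z_{mn}=\tfrac{mn}{2}\mathbf{f}_0+\log(\Theta_{+}+\Theta_{-})+o(1)$ in cases~(2) and~(3), matching the stated $\mathsf{FSC}$. The main obstacle is this matching step: the Kleinian counts of roots of $S(z,\pm 1)$ outside the unit circle control the angular prefactors, whereas the toric zeros of $P$ on $S^1\times S^1$ control the KSW factors, and the two are connected only through Equation~\eqref{equ:RPintro}. Pinning down which of the three limiting regimes $\{0,\sqrt 2/2,1\}$ each prefactor falls into---and verifying that the phases align so that $\Theta_{+}+\Theta_{-}$ comes out as the clean $\vartheta_{00}+\vartheta_{01}$ sum, with the argument $m\psi$ in case~(2) and its collapse to $0$ in case~(3)---is where the bulk of the delicate bookkeeping lies.
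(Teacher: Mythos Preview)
Your proposal is correct and follows essentially the same route as the paper: the paper's proof of Theorem~\ref{thm:as-bip} starts from Corollary~\ref{cor:Zmn} (the bipartite $S$-version of your starting point Theorem~\ref{thmintro:Z}), applies the KSW expansion~\eqref{equ:KSW} to $P=Q\overline{Q}$, applies Lemma~\ref{lemma:alpha} to $S(z,\pm 1)$ to reduce the angular prefactors to parities of integers $A,A'$, and then invokes Lemma~\ref{lemma:S} to pin these parities down in each of the three cases. One small bookkeeping slip: the assignment is actually $\xi=1\leadsto\vartheta_{01}$ and $\xi=-1\leadsto\vartheta_{00}$ (via $\Xi(-1,\xi w_0^{-m}|\tau)$ and~\eqref{equ:Xi}), the reverse of what you wrote, but since both prefactors equal~$1$ in cases~(2)--(3) the sum $\vartheta_{00}+\vartheta_{01}$ is unaffected.
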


Without surprise, the bulk free energy~$\mathbf{f}_0$ is that of the toric graph~$\widetilde{\G}\subset\T^2$.
On the other hand, the finite size correction~$\mathsf{fsc}$ is different from the one obtained in the toric case, and
does not seem to be related to it in a simple way (see also Section~\ref{sub:cons}).
The most remarkable aspect of this result is the universality of~$\mathsf{fsc}$, a term which only depends on the
phase of the model and on the (purely imaginary) conformal shape parameter~$\tau$.
To the best of our knowledge, this universality feature for the Klein
bottle cannot be derived from the corresponding result for the torus. 
Another fact worth mentioning is that even though~$Z_{mn}$ depends on both polynomials~$P$ and~$R$ (recall Theorem~\ref{thmintro:Z}), its asymptotic expansion is determined by~$P$.


As an illustration, we compute the explicit example of the square lattice.
Note that the~$(M\times N)$-square lattice in the Klein bottle is always ``locally bipartite'',
in the sense that all faces have even degree, but it is bipartite if and only if~$M$ is even and~$N$ odd:
in the other cases, there exist (homologically non-trivial) cycles of odd length.
Therefore, our result is a blend of the bipartite case (see Example~\ref{ex:sqbip})
and of the non-bipartite case (see Example~\ref{ex:fsc-sq}):
the finite size correction in the asymptotic expansion of the dimer partition function for
the~$(M\times N)$-square lattice in the Klein bottle is given by~$\mathsf{fsc}=\log\mathsf{FSC}$, with
\[
\mathsf{FSC}=\left\{
\begin{array}{ll}
\frac{\vartheta_{00}(\tau)}{\eta(\tau)} & \text{for~$M$ and~$N$ even,}\\
\frac{\vartheta_{00}(\tau)}{\eta(\tau)}+\frac{\vartheta_{01}(\tau)}{\eta(\tau)} & \text{for~$M$ even and~$N$ odd,}\\
\left(\frac{2\vartheta_{01}(2\tau)}{\eta(2\tau)}\right)^{1/2} & \text{for~$M$ odd and~$N$ even,}
\end{array}\right.
\]
where~$\tau=i\frac{Mx}{2Ny}$
and~$x$ (resp.~$y$) denotes the weight of the horizontal (resp. vertical) edges.
These functions are illustrated in Figure~\ref{fig:fsc}.
This recovers (and sometimes, corrects) the aforementioned results of~\cite{L-W,Lu-Wu99,IOH}. 
We also compute new examples, such as the hexagonal and triangular lattices, see
Examples~\ref{ex:hexa} and~\ref{ex:tri-sq}.

\begin{figure}[tb]
\labellist\small\hair 2.5pt
\pinlabel {even~$\times$ odd} at 640 550
\pinlabel {odd~$\times$ even} at 640 490
\pinlabel {even~$\times$ even} at 640 420
\endlabellist
\centering
\includegraphics[width=0.4\textwidth]{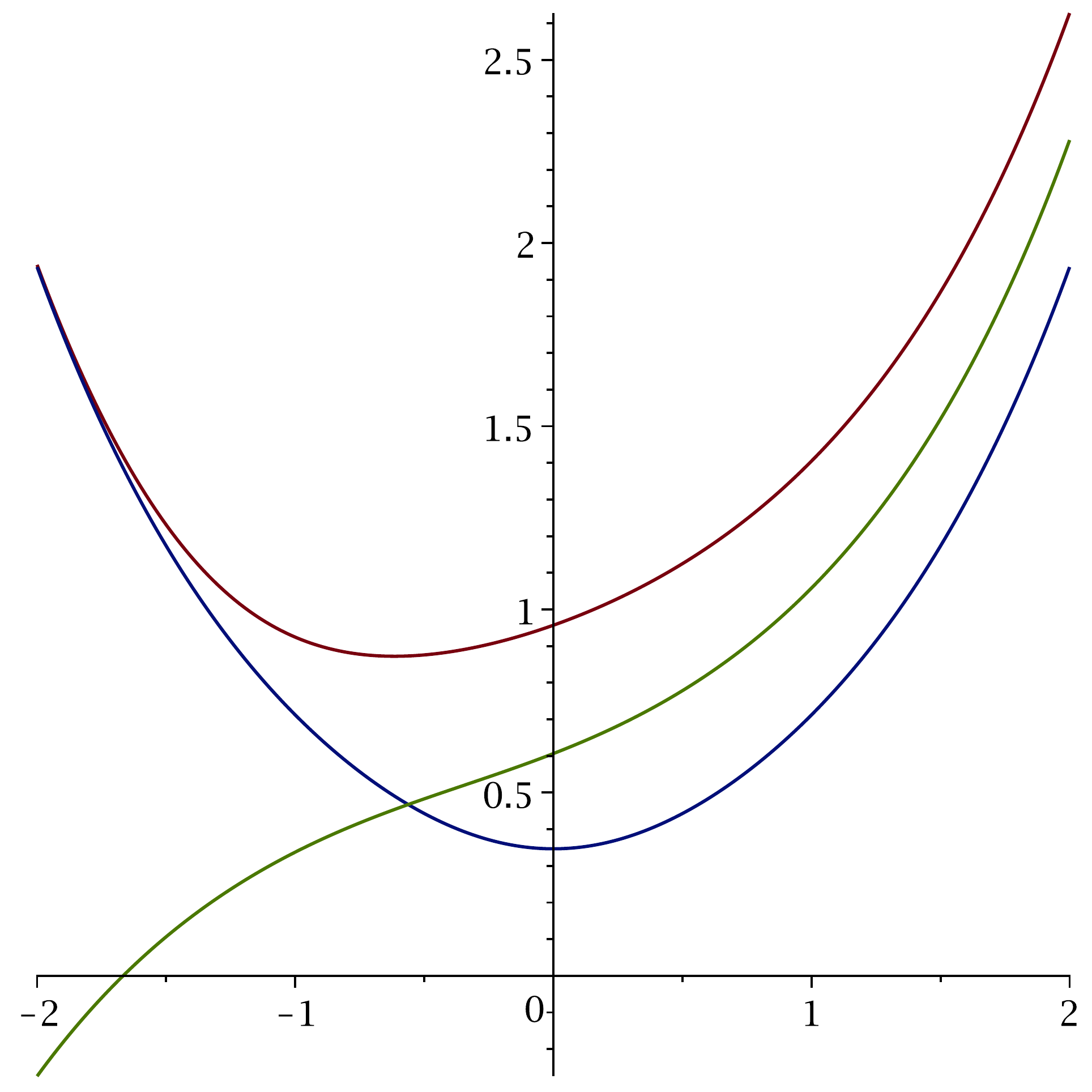}
\caption{Finite size corrections~$\mathsf{fsc}$ for the~$M\times N$ square lattice on the Klein bottle,
as a function of~$\log(\frac{Mx}{2Ny})$, with curves labeled according to the parity of~$(M,N)$.}
\label{fig:fsc}
\end{figure}

A couple of consequences are discussed in Section~\ref{sub:cons}. For instance, we show that the limit~$\lim_{m,n\to\infty}\frac{Z(\G_{mn})^2}{Z(\widetilde{\G}_{mn})}$
is universal, in the same sense as the finite size corrections. Also,
motivated by the CFT predictions of~\cite{BCN}, we compute the~$\tau_\text{im}\to\infty$
asymptotic of~$\mathsf{fsc}$ and check that the result is in agreement with~\cite{BCN}.
In particular, both cases~$(2)$ and~$(3)$ in Theorem~\ref{thmintro:as-bip}
yield the same value~$c=1$ for the central charge of a
conformal field theory describing the bipartite dimer model in the liquid phase.

\medskip

We now turn to the Ising model.

\subsection{Background on the Ising model}
\label{sub:intro-Ising}

The Ising model, first introduced by Lenz~\cite{Len20} in an attempt to understand Curie's
temperature for ferromagnets, is one of the most famous models in statistical physics.
It can be defined as follows.
Given a finite graph~$G=(V(G),E(G))$ endowed with a positive edge weight system~$J=(J_e)_{e\in E(G)}$, the
{\em energy\/} of a {\em spin configuration\/}~$\sigma\in\{-1,+1\}^{V(G)}$ is
defined by~$\mathcal{H}(\sigma)=-\sum_{e=uv\in E(G)}J_e\sigma_u\sigma_v$.
Fixing an {\em inverse temperature}~$\beta\ge 0$ determines a probability measure on the set~$\Omega(G)$ of spin configurations by
\[
\mu_{G,\beta}(\sigma)=\frac{e^{-\beta\mathcal{H}(\sigma)}}{Z_\beta^J(G)},\quad\text{with}
\quad Z_\beta^J(G)=\sum_{\sigma\in\Omega(G)}e^{-\beta\mathcal{H}(\sigma)}\,.
\]
The normalization constant~$Z_\beta^J(G)$ is called the {\em partition function\/}
of the {\em Ising model on~$G$ with coupling constants~$J$\/}.

Once again, we are interested in the asymptotic expansion of~$Z_\beta^J(G_{mn})$
for an arbitrary weighted graph~$G$ embedded in the torus or the Klein bottle.
As we shall see, this asymptotic expansion depends on the position of the parameter~$\beta$ with respect to
some {\em critical inverse temperature\/}, whose definition we now briefly recall.
Let~$\mathcal G$ be the infinite weighted planar graph obtained as the universal cover of~$G$ (i.e., as~$G_{mn}$ with~$m,n\to\infty$). Ising probability measures can be constructed on~$\mathcal G$ as limits of finite volume probability measures~\cite{McCW}: let us denote by~$\mu_{\mathcal G,\beta}^+$ the Ising measure at inverse temperature~$\beta$ on~$\mathcal G$ with $+$ boundary conditions. 
Let us assume that the embedded graph~$G$ is {\em non-degenerate}, i.e. that  the complement of its edges consists of topological discs. A Peierls argument~\cite{Pei36} and the GKS inequality~\cite{Gri67,KS68} then classically imply that the Ising model on~$\mathcal G$ exhibits a phase transition at some critical value~$\beta_c\in(0,\infty)$:
\begin{itemize}
\item for~$\beta<\beta_c$, we have~$\mu_{\mathcal G,\beta}^+(\sigma_v)=0$ for any~$v\in V(\mathcal G)$,
\item for~$\beta>\beta_c$, we have~$\mu_{\mathcal G,\beta}^+(\sigma_v)>0$ for any~$v\in V(\mathcal G)$.
\end{itemize}
We refer to~\cite{Li2,CDC} for a computation of the critical inverse temperature for arbitrary
non-degenerate doubly periodic weighted graphs.

There is a classical two-step method to apply dimer technology to the Ising model.
First, the Ising partition function can be expresses via the {\em high-temperature expansion\/}~\cite{vdW}
\[
Z_\beta^J(G)=\Big(\prod_{e\in E(G)}\cosh(\beta J_e)\Big)2^{|V(G)|}\sum_{\gamma\in\mathcal{E}(G)}\prod_{e\in\gamma}\tanh(\beta J_e)\,,
\]
where~$\mathcal{E}(G)$ denotes the set of even subgraphs of~$G$, that is, the set of subgraphs~$\gamma$ of~$G$ such that every vertex of~$G$ is adjacent to an even number of edges of~$\gamma$.
The second step is the so-called {\em Fisher correspondence\/}~\cite{Fisher}, which has a long and interesting history with several variations on the same theme (see e.g.~\cite[Section~3.1]{CCK} and references therein).
Let us consider a weighted graph~$(G,x)$ embedded in a surface, and denote by~$(G^F,x^F)$ the associated weighted graph obtained from~$(G,x)$ as illustrated in Figure~\ref{fig:Fisher}. As one easily checks, the high-temperature expansion of the Ising partition function on~$G$ is related to the dimer partition function on~$G^F$ via
\begin{equation}
\label{equ:Fisher}
2^{|V(G)|}\sum_{\gamma\in\mathcal{E}(G)}\prod_{e\in\gamma}x_e=Z(G^F,x^F)\,.
\end{equation}
In conclusion, we have the relation
\begin{equation}
\label{equ:Ising-dimer}
Z_\beta^J(G)=\Big(\prod_{e\in E(G)}\cosh(\beta J_e)\Big)Z(G^F,x^F)
\end{equation}
between the Ising partition function on~$G$ and the dimer partition function on~$G^F$, where the associated
weights are given by~$x_e=\tanh(\beta J_e)$. It allows to study the Ising model on a graph via the dimer model on the associated Fisher graph.

\begin{figure}[tb]
\labellist\small\hair 2.5pt
\pinlabel {$x_e$} at 145 290
\pinlabel {$x_e$} at 515 300
\pinlabel {$1$} at 515 260
\pinlabel {$1$} at 470 260
\pinlabel {$1$} at 495 200
\endlabellist
\centering
\includegraphics[width=0.5\textwidth]{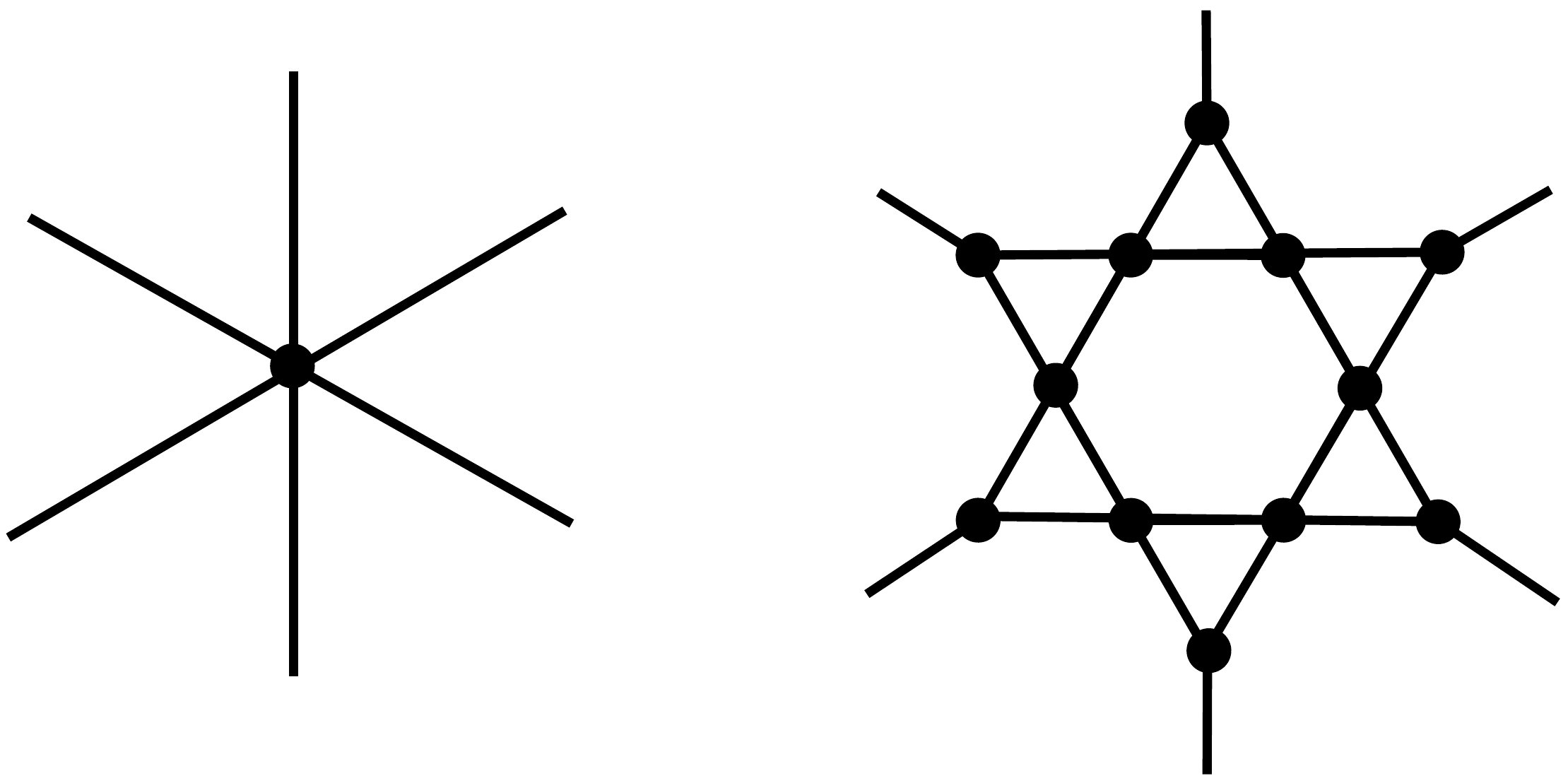}
\caption{The weighted graph~$(G,x)$ near a vertex, and the associated weighted graph~$(G^F,x^F)$ obtained via the Fisher correspondence.}
\label{fig:Fisher}
\end{figure}

However, the Fisher graph fails to be bipartite, so the very powerful tools of~\cite{KOS} are {\em a priori\/} not available.
As it turns out, a more technical mapping from the Ising model to a bipartite
dimer model exists~\cite{Dub}, thus allowing to understand the spectral curve of the dimer model
on the Fisher graph~\cite{Li,CDC}: it is disjoint from the unit torus for~$\beta\neq\beta_c$,
and meets it at a single real positive node for~$\beta=\beta_c$.
The tools of~\cite{KSW} then make it a routine task to compute the asymptotic expansion of~$Z_\beta^J(G_{mn})$
in the toric case, see Remark~\ref{rem:Ising}~(i) below.

\medskip

What about the case of the Klein bottle? Here again, it is poorly understood: so far,
the only known results deal with the critical isotropic square lattice~\cite{LWIsing,Ch-P,Iz12},
and are partially contradictory
(see Example~\ref{ex:Ising} below, where we recover the formula
of~\cite{LWIsing} and of~\cite{Ch-P}).

\subsection{Result on the Ising model}
\label{sub:results-Ising}

We compute the asymptotic expansion of the Ising partition function for an arbitrary planar
graph with periodic-antiperiodic boundary conditions, as follows.

\begin{theorem}
\label{thmintro:Ising}
Let~$(G,J)$ be a non-degenerate weighted graph embedded in the Klein bottle, and let~$P(z,w)$ be the
characteristic polynomial of the associated Fisher graph~$\widetilde{G^F}\subset\mathbb{T}^2$.
Then, the Ising partition function on~$G_{mn}$ satisfies
\[
\log Z^J_\beta(G_{mn})= mn\frac{\mathbf{f}_0}{2} + \mathsf{fsc}+o(1)
\]
for~$m$ and~$n$ tending to infinity with~$n$ odd and~$m/n$ bounded below and above, with
\[
\mathbf{f}_0=2\sum_{e\in E(G)}\log\cosh(\beta J_e)+\frac{1}{2} \int_{\mathbb T^2} \log P(z,w)\frac{dz}{2\pi iz}\frac{dw}{2\pi iw}
\]
and~$\mathsf{fsc}=0$ in the subcritical regime~$\beta<\beta_c$,~$\mathsf{fsc}=\log(2)$ in the supercritical regime~$\beta>\beta_c$, and
\[
\mathsf{fsc}=\log\left(\left(\frac{\vartheta_{00}(\tau)}{\eta(\tau)}\right)^{1/2}+\left(\frac{\vartheta_{01}(\tau)}{2\eta(\tau)}\right)^{1/2}\right)\,,\quad\text{where}\quad\tau=i\,\frac{m}{n}\left|\frac{\partial^2_zP(-1,1)}{\partial^2_wP(-1,1)}\right|^{1/2}
\]
in the critical regime~$\beta=\beta_c$.\end{theorem}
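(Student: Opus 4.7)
The plan is to reduce the Ising partition function to the dimer partition function on the Fisher graph via the Fisher correspondence~\eqref{equ:Ising-dimer}, and then to apply the Klein bottle dimer asymptotics (Theorem~\ref{thm:as-gen}) to the latter. Applying~\eqref{equ:Ising-dimer} to the weighted graph~$G_{mn}$ and using the fact that each edge of~$G$ is repeated~$mn$ times gives
\[
\log Z^J_\beta(G_{mn}) \;=\; mn\sum_{e\in E(G)}\log\cosh(\beta J_e) \;+\; \log Z\bigl((G^F)_{mn}\bigr).
\]
Since~$G^F\subset\K$ and~$(G^F)_{mn}\subset\K$ for~$n$ odd, we may apply Theorem~\ref{thm:as-gen} to the second term, provided the conjectural spectral assumption on the characteristic polynomial~$P$ of~$\widetilde{G^F}\subset\T^2$ holds. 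For Fisher graphs of non-degenerate Ising models this is known thanks to~\cite{Dub,Li,CDC}: the spectral curve is disjoint from the unit torus for~$\beta\neq\beta_c$, and meets it in a single real positive node at~$(-1,1)$ at criticality. Combining the bulk contribution of Theorem~\ref{thm:as-gen} with the~$\log\cosh$ term immediately reproduces the claimed~$\mathbf{f}_0$.

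For the finite-size correction, I would use the explicit formula of Theorem~\ref{thmintro:Z} to write the dimer partition function as a sum of two terms involving~$P_{mn}(1,\pm 1)^{1/4}$ with trigonometric prefactors~$c_n^{(\pm)}$ built from~$\alpha_n,\alpha'_n$ (and the parity of~$m$). The asymptotics of~$P_{mn}(1,\pm 1)$ are supplied by~\cite[Theorem~1]{KSW}, while the limits of~$c_n^{(\pm)}$ are controlled by Lemmas~\ref{lemma:alpha}--\ref{lemma:cst} through the number of roots of~$R(z,\pm 1)$ outside the unit disc, which via~\eqref{equ:RPintro} is in turn governed by the zeros of~$P(z^2,\pm 1)$. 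In the subcritical regime a single term dominates and~$\mathsf{fsc}=0$. At criticality, the node at~$(-1,1)$ forces both~$P_{mn}(1,\pm 1)^{1/4}$ to carry explicit theta-function asymptotics (the fourth root producing the~$1/2$ exponents and the~$2\eta$ denominator, with shape parameter~$\tau=i\frac{m}{n}\bigl|\partial^2_zP(-1,1)/\partial^2_wP(-1,1)\bigr|^{1/2}$ inherited from the Hessian ratio at the node), and summing with the limiting prefactors yields the critical formula. In the supercritical regime, both~$P_{mn}(1,\pm 1)^{1/4}$ are of the same leading order and both prefactors tend to~$1$, producing an overall factor of~$2$ and hence~$\mathsf{fsc}=\log 2$.

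The hard part is the supercritical regime, since~$P$ is nowhere zero on the unit torus for every~$\beta\neq\beta_c$, so a naive reading of case~(1) of Theorem~\ref{thm:as-gen} would predict~$\mathsf{fsc}=0$ throughout~$\beta\neq\beta_c$. The sub/supercritical distinction is instead encoded in the coefficients~$c_n^{(\pm)}$: as~$\beta$ crosses~$\beta_c$, a conjugate pair of roots of~$R(z,1)$ crosses the unit circle at~$z=\pm i$ (the preimages of~$(-1,1)$ under the squaring map of~\eqref{equ:RPintro}), producing a discrete jump in the root counts and hence in the limits provided by Lemmas~\ref{lemma:alpha}--\ref{lemma:cst}. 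To pin down the exact values in each phase, I would anchor the analysis at~$\beta=0$, where~$Z^J_0(G_{mn})=2^{|V(G_{mn})|}$ and a direct computation gives~$\mathsf{fsc}=0$, and at~$\beta\to\infty$, where the two all-constant spin configurations dominate and yield~$\mathsf{fsc}=\log 2$, and then extend by the local constancy of the root counts within each phase.
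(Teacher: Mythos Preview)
Your approach is essentially the same as the paper's: reduce via the Fisher correspondence, apply Theorem~\ref{thm:as-gen} to~$G^F\subset\K$, invoke the spectral input from~\cite{Li,CDC} (the paper packages this as Lemma~\ref{lemma:node}), and pin down the modulo~$4$ integers~$A,A'$ by anchoring at the extremal temperatures and using the local constancy statement in Theorem~\ref{thm:as-gen}. Two points deserve sharpening. First, your~$\beta\to\infty$ anchor via ``the two all-constant spin configurations dominate'' is heuristic; the paper instead computes directly at~$x_e=1$ on the dimer side, using~$Z(G^F_{mn},x^F)=2^{|V(G_{mn})|}|\mathcal{E}(G_{mn})|=2^{1+mn|E(G)|}$ and an explicit evaluation of~$P$ at~$x=1$ to read off~$\mathsf{fsc}=\log 2$ exactly (Lemma~\ref{lemma:fsc}). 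Second, and more substantively, your critical case is underspecified: to obtain the~$\sqrt{2}/2$ prefactor (equivalently the~$2\eta$ in the denominator) you must show that~$A$ is \emph{odd} at~$\beta=\beta_c$, and merely noting that two roots of~$R(z,1)$ sit on the unit circle at~$z=\pm i$ does not by itself give this, since in the formula~\eqref{equ:A} one has~$m_+=m_-=1$ and the naive count is unchanged from the subcritical side. The paper resolves this by transporting the bipartite machinery (Propositions~\ref{prop:Q(-1)},~\ref{prop:rootS} and Lemma~\ref{lemma:S}) to the Fisher setting through the identity~$P(z,w)=k\cdot Q_{\widetilde{G}^C}(z,w)$ of~\cite{CDC}, which lets one argue with~$R(z,\pm 1)$ playing the role of~$S(z,\pm 1)$ and conclude that~$A$ is odd at the node (Lemma~\ref{lemma:tech-Ising}).
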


Once again, the most remarkable feature of this result
is the universality of the finite-size correction term: it only depends on the regime of the model,
and at criticality, on the shape parameter~$\tau$ (see Figure~\ref{fig:Ising}).
The fact that the finite-size correction does not vanish for~$\beta>\beta_c$ might come as a surprise,
see Example~\ref{ex:tri-sq}.
Another fact worth mentioning is that for~$\beta=\beta_c$ and~$\tau_\text{im}\to\infty$, the 
asymptotic behavior of~$\mathsf{fsc}$ matches the CFT predictions of~\cite{BCN},
yielding the value~$c=\frac{1}{2}$ for the central charge of a conformal field theory describing the Ising model
(see Remark~\ref{rem:Ising}~(iii)).

\begin{figure}[tb]
\includegraphics[width=0.4\textwidth]{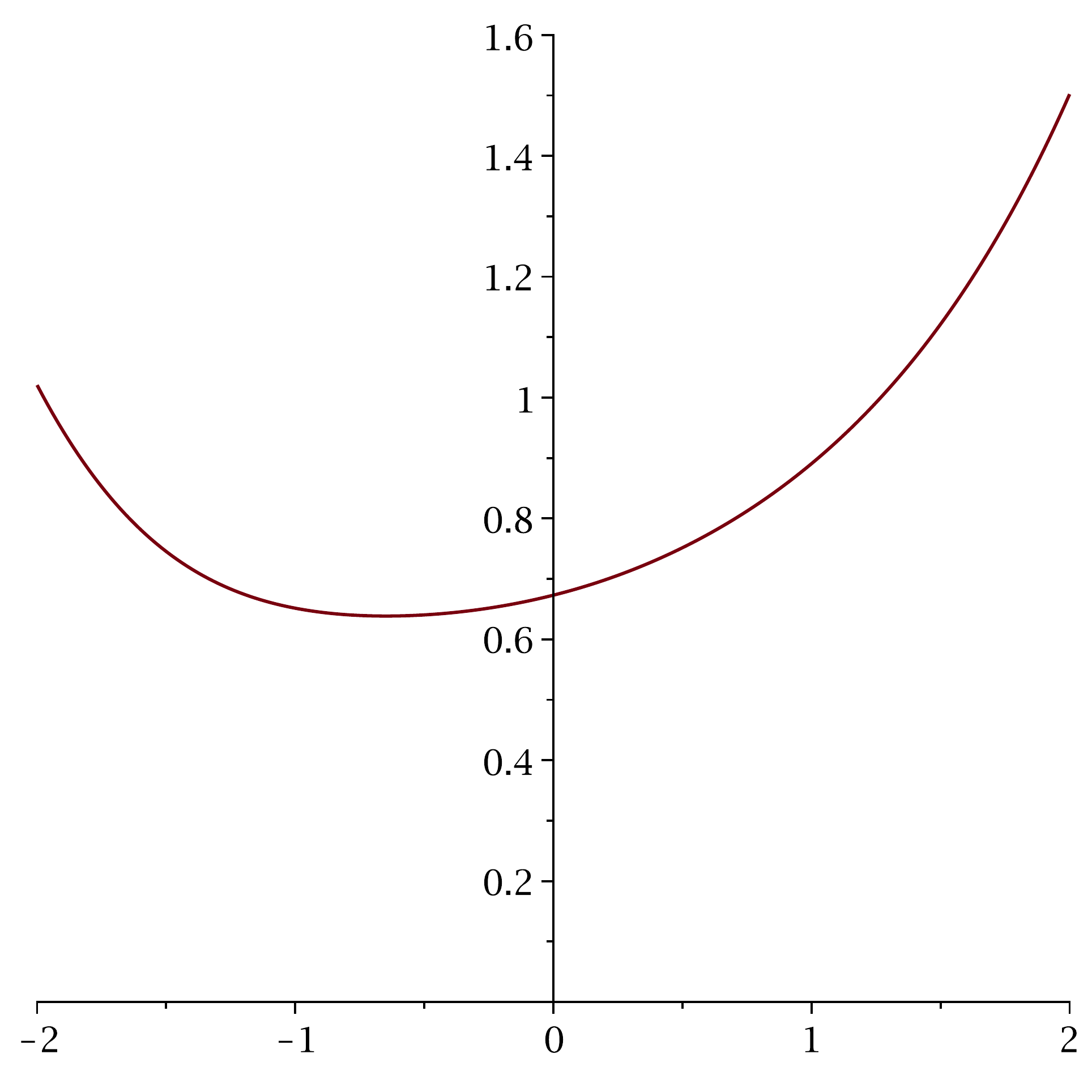}
\caption{Finite size correction term~$\mathsf{fsc}$ to the Ising partition function
for an arbitrary graph on the Klein bottle, in the critical regime~$\beta=\beta_c$,
as a function of~$\log(|\tau|)$.}
\label{fig:Ising}
\end{figure}

\subsection{Further directions}
\label{sub:rem}

The techniques developed in this article could
help to
produce further results, that we now briefly outline.

\subsubsection*{More general fundamental domains}

When enlarging the fundamental domain, we have restricted our attention to rectangular ones, i.e. domains
spanned by the vectors expressed as~$(n,0)$ and~$(0,m)$, with~$n$ odd,
in the basis of the plane given by the two vectors spanning the
fundamental domain of~$\G$. One could study the more general quadrangular domains spanned 
by vectors of the form~$(n,p)$ and~$(0,m)$ with~$n$ odd. These integers also describe a finite covering of
the Klein bottle by itself, so our methods apply.
For~$p\neq 0$, we expect more general finite size correction terms to appear, as in~\cite{KSW}, where the
most general form of finite coverings of the torus by itself is considered. However, we expect
the associated shape parameter~$\tau$ to remain purely imaginary, see Figure~\ref{fig:tau}.

Incidentally, it is an amusing exercice in combinatorial group theory to show that any
subgroup~$H$ of finite index of~$\pi_1(\K)=\left<a,b\,|\,aba^{-1}b\right>$ with~$H$ isomorphic to~$\pi_1(\K)$
is of the form~$H=\left<a^nb^p,b^m\right>$ with~$n$ odd. As a consequence, all finite coverings of the Klein
bottle by itself are of the form described above.

\subsubsection*{Loop statistics}

If~$\G$ is a bipartite graph embedded in a surface~$\Sigma$, then
the difference of two dimer configurations on~$\G$ gives a collection of oriented loops on~$\G$, and therefore a homology class in~$H_1(\Sigma;\Z)$.
In~\cite[Section~4]{KSW}, the authors consider the case of a bipartite graph~$\G\subset\Sigma=\mathbb{T}^2$ whose spectral curve intersects the unit torus in two distinct zeros, and describe the asymptotic distribution of these classes in~$H_1(\mathbb{T}^2;\Z)=\Z^2$ for the covers~$\G_{mn}$ of~$\G$, as~$m,n\to\infty$. We refer the reader to~\cite[Theorem~4]{KSW} for the more
precise, complete and general statement, and to~\cite{BdT09} for the previously studied case of the
hexagonal lattice.

In our context, one can fix a bipartite graph~$\G\subset\Sigma=\K$ whose characteristic polynomial has two conjugate zeros in the unit torus, and use the second case of Theorem~\ref{thmintro:as-bip} to determine the asymptotic distribution
of the homology classes in~$H_1(\K;\Z)=\Z\oplus\Z/2\Z$ for the
covers~$\G_{mn}$ of~$\G$, as~$m,n\to\infty$ with~$n$ odd.

\subsubsection*{Asymptotic expansion beyond the constant order}

In~\cite{IIH}, Ivashkevich, Izmailian and Hu study the asymptotic expansion of the dimer and Ising partition functions
for the square lattice embedded in the torus beyond the bulk free energy and the constant order term
(see also~\cite{BEP2}).
Actually, they consider the full asymptotic expansion
\[
\log Z_{mn}=mn\,\mathbf{f}_0+\mathsf{fsc}(\tau)+\sum_{p\ge 2}\frac{\mathbf{f}_p(\tau)}{(mn)^{1-p}}\,,
\]
and express all the terms~$\mathbf{f}_p$ using elliptic theta functions evaluated at the conformal parameter~$\tau$.
This work is extended in~\cite{IOH} to the dimer
model on the square lattice with various boundary conditions (see however Example~\ref{ex:sqbip} below).

It would be a worthy endeavor to use Theorem~\ref{thmintro:Z} and the good understanding of
the corresponding characteristic polynomials to try to compute subleading terms in the asymptotic expansion
of the bipartite dimer and Ising partition functions.
Determining which terms are universal and which ones are not would be of particular interest, see the introduction of~\cite{IIH}.

\subsubsection*{Non-bipartite dimers on the torus and Klein bottle}

Together with Theorem~2 of~\cite{KSW}, the present work settles the question
of finite-size corrections for bipartite dimers on the torus and Klein bottles.
For non-bipartite dimers however, there is still some work to be done.

As mentioned above, it is believed that
the corresponding spectral curve intersects
the unit torus in at most two points which are real positive nodes,
but this remains to be rigorously demonstrated.
Even then, the precise form of the finite-size
correction term would require some further study,
both in the torus and Klein bottle cases
(see e.g. Example~\ref{ex:tri-sq} below).

\subsubsection*{Beyond the flat case}

As stated at the beginning of this introduction, the asymptotic expansion
of the dimer and Ising partition functions is believed to take a
particularly simple form when the graph is embedded in a closed surface~$\Sigma$
with vanishing Euler characteristic~\cite{Ca-P}.
The torus and Klein bottles being the only two such surfaces,
this very favorable case is now settled.
But what about closed surfaces with non-vanishing Euler characteristic?
For square and triangular lattices on a genus~$2$ surface,
there is numerical evidence of the finite-size correction terms being
naturally expressed as sums of Riemann theta functions~\cite{CSC1,CSC2},
drawing a striking parallel with the conformal field theory results of~\cite{ABMNV}.
(See also~\cite{BLR} for recent advances on conformal invariance of
dimers on Riemann surfaces, and~\cite{I-K} for related results on
the determinant of discrete Laplacians.)

In the spirit of the present work, the study of irreducible representations
of finite quotients of~$\pi_1(\Sigma)$ could lead to an explicit
expression for the term~$\mathbf{f}_0$ in the asymptotic expansion
(work in progress with Adrien Kassel).
The presence of curvature makes it unlikely for our methods
alone to determine the universal finite-correction terms, and thus solve the outstanding problem stated above. However, it is our hope that
they will serve as stepping stone towards this goal.

\subsection*{Organisation of the article}

In Section~\ref{sec:char}, we describe the general set up of our work (a weighted graph~$\G$
embedded in the Klein bottle), define the associated Kleinian characteristic polynomial~$R$, and relate it to
the well-known toric characteristic polynomial~$P$ of~$\widetilde{\G}\subset\T^2$. In the bipartite case, we also study the location
of the roots of~$R$.

In Section~\ref{sec:enlarge}, we study how the polynomial~$R_{mn}$ of~$\G_{mn}$ can be computed from
the polynomials~$R$ and~$P$ (Theorem~\ref{thm:Rmn}), and prove Theorem~\ref{thmintro:Z}.
This requires subtle modifications of the twisted Kasteleyn matrices to ensure periodicity of
all the ingredients (Section~\ref{sub:period}), as well as the identification
of induced representations of~$\pi_1(\K)$ and their factorisation into irreducible ones (Section~\ref{sub:repr}).

Section~\ref{sec:as} deals with the resulting asymptotic expansion of the dimer and Ising partition functions.
We first consider the general (possibly non-bipartite) dimer model in Section~\ref{sub:as-gen},
before focusing on bipartite graphs and proving Theorem~\ref{thmintro:as-bip} in Section~\ref{sub:as-bip}.
A couple of consequences are discussed in Section~\ref{sub:cons}.
Finally, in Section~\ref{sub:as-Ising}, we study the Ising model and prove Theorem~\ref{thmintro:Ising}.

\subsection*{Acknowledgments}
The author would like to thank Adrien Kassel and Anh Minh Pham for helpful discussions,
and the anonymous referee for several sensible suggestions.
Partial support by the Swiss National Science Foundation is thankfully acknowledged.


\section{Characteristic polynomials for dimers on Klein bottles}
\label{sec:char}

The aim of this section is to define and study two classes of characteristic polynomials that play a fundamental role in this article. We begin in Section~\ref{sub:gen} by introducing the general setup of a weighted graph embedded in the Klein bottle, together with an appropriate orientation of its edges. This allows us to define the associated Kleinian characteristic polynomial~$R$ in Section~\ref{sub:Klein}, which by~\cite{Cim} can be used
to compute the corresponding dimer partition function.
In Section~\ref{sub:toric}, we recall the definition of the toric characteristic polynomial~$P$ of~\cite{KOS,KSW}, and explain its relation with its Kleinian counterpart. Finally, in Sections~\ref{sub:bip} and~\ref{sub:S}, we consider the special case of bipartite graphs and show how the results of~\cite{KOS} on~$P$ imply strong conditions on~$R$.

\subsection{The general setup}
\label{sub:gen}

Throughout this section,~$\G$ denotes a finite connected graph with vertex set~$V$ of even cardinality, edge set~$E$, and non-negative edge weights~$\nu=(\nu_e)_{e\in E}\in[0,\infty)^E$. This weighted graph is embedded in the Klein bottle~$\K$ in such a way that~$\K\setminus\G$ consists in topological discs. To represent the pair~$\G\subset\K$ conveniently, we cut~$\K$ open along two well-chosen oriented simple closed curves~$a,b$. In this way, one obtains a rectangular {\em fundamental domain\/}~$\mathcal{D}$ with horizontal sides corresponding to~$a$ and vertical sides corresponding to~$b$, as illustrated in Figure~\ref{fig:Klein} (left).
Let us write~$a'\subset\K$ for the simple closed curve corresponding to the horizontal line cutting~$\D$ in two, oriented from left to right. Note that~$a,a'$ generate the first homology group of~$\K$, and that~$a+a'$ is homologous to~$b$ and hence of order~$2$ in~$H_1(\K;\Z)$. We assume that~$\G$ is in general position with respect to~$\D$, in the sense that~$V$ is disjoint from~$a,a'$ and~$b$, while each edge of~$\G$ intersects each of these three curves at most once, and if so, transversally.

Let us write~$\widetilde{\G}\subset\T^2$ for the {\em orientation cover\/} of~$\G\subset\K$, i.e. the pair obtained by gluing two copies of the fundamental domain~$\D$ along a vertical side as illustrated in Figure~\ref{fig:Klein} (right).
We denote by~$\widetilde{V}$,~$\widetilde{E}$,~$\widetilde{\nu}$ and~$\widetilde{\D}$ the corresponding vertex set, edge set, edge weights and fundamental domain, respectively, and endow the torus~$\T^2$ with an orientation that is pictured counterclockwise.

\begin{figure}[tb]
\labellist\small\hair 2.5pt
\pinlabel {$a$} at 150 -7
\pinlabel {$a'$} at 152 125
\pinlabel {$C'$} at 152 62
\pinlabel {$C$} at 90 40
\pinlabel {$a$} at 150 220
\pinlabel {$b$} at 218 150
\pinlabel {$b$} at -7 60
\pinlabel {$\tilde{b}$} at 305 65
\pinlabel {$\tilde{b}$} at 730 65
\pinlabel {$\tilde{b}'$} at 535 150
\pinlabel {$\tilde{a}$} at 560 218
\pinlabel {$\tilde{a}$} at 560 -5
\pinlabel {$\tilde{a}'$} at 700 120
\endlabellist
\centering
\includegraphics[width=0.8\textwidth]{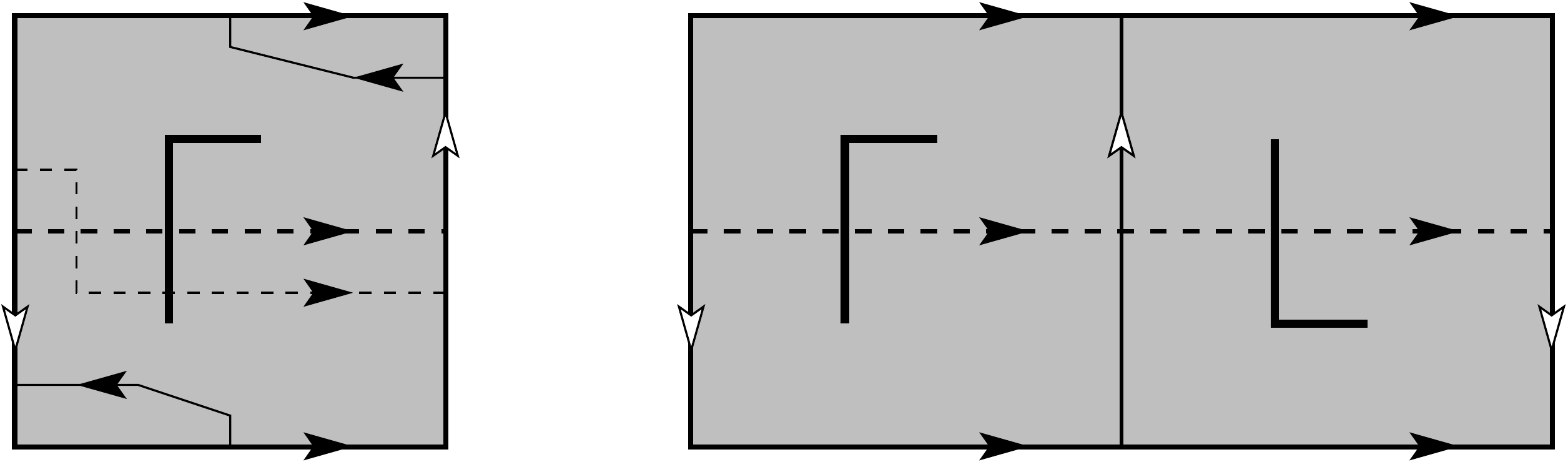}
\caption{Left: the graph~$\G\subset\K$ pictured in the fundamental domain~$\D$ delimited by the curves~$aba^{-1}b$, together with the associated curves~$C,C'\subset\G$. Right: the orientation cover~$\widetilde{\G}\subset\T^2$ pictured in the fundamental domain~$\widetilde{\D}$ delimited by the curves~$\tilde{b}\tilde{a}\tilde{b}^{-1}\tilde{a}^{-1}$.}
\label{fig:Klein}
\end{figure}

Following~\cite{Cim}, let us fix an orientation~$K$ on the edges of~$\G$ satisfying the following two conditions.
\begin{enumerate}[(i)]
\item{If one lifts~$K$ to~$\widetilde{\G}$ and then inverts the orientation of all the edges whose endpoints are both contained in the upper half part of~$\widetilde{\D}$, the resulting orientation~$\widetilde{K}$ is a {\em Kasteleyn orientation\/} on~$\widetilde{\G}\subset\T^2$. This means that each face of~$\widetilde{\G}\subset\T^2$ has an odd number of edges in its boundary that are oriented in the clockwise direction. (This makes sense as~$\T^2$ is oriented.)}
\item{Let~$C$ be the oriented closed curve in~$\G$ (homologous to~$a$) given by one edge~$e\in E$ intersecting~$a$ together with the oriented curve in~$\G$ joining the endpoints of~$e$, having~$a$ to its left in the lower half of~$\D$, to its right in the upper half of~$\D$, and meeting every vertex of~$\G$ adjacent to~$a$ on this side. Let~$C'\subset\G$ be defined in the same way for~$a'$. These curves are illustrated in the left part of Figure~\ref{fig:Klein}. We require the total number of edges of~$C$ and~$C'$ where~$K$ disagrees with the orientation of these curves to be even. (The curves~$C,C'$ are not uniquely defined, but it follows from the general theory of~\cite{Cim} that the parity of this number does not depend on the choices made.)}
\end{enumerate}

\begin{remarks}
\label{rem:K}
\begin{enumerate}[(i)]
\item{By~\cite[Theorem~4.3]{Cim}, such an orientation exists if and only if~$\G$ has an even number of vertices, which we assumed.}
\item{By this same result, such an orientation is unique up to flipping the edge orientations around a set of vertices, and up to reversing the orientations of all edges meeting the curve~$b$.}
\item{Deforming the curves~$a,a'$ (or equivalently, deforming the graph~$\G$ inside the fundamental
domain~$\mathcal{D}$) leads to natural local transformations of the orientation~$K$ which keep
the lifted orientation~$\widetilde{K}$ unchanged. (In case the deformation sweeps an odd number of vertices, one
also needs to invert all the edges meeting~$a$ in order for condition~(ii) above to hold.) On the other hand, the orientation~$K$ does not depend on the curve~$b\subset\K$.}
\end{enumerate}
\end{remarks}

Let us illustrate these conditions with the three lattices that provide the running examples of this article.

\begin{figure}[h]
\labellist\small\hair 2.5pt
\pinlabel {$a$} at 176 470
\pinlabel {$a'$} at 180 250
\pinlabel {$x_1$} at 58 368
\pinlabel {$x_2$} at 58 140
\pinlabel {$y_1$} at 138 271
\pinlabel {$y_2$} at 145 405
\pinlabel {$C$} at 1050 403
\pinlabel {$C'$} at 1055 227
\endlabellist
\centering
\includegraphics[width=0.8\textwidth]{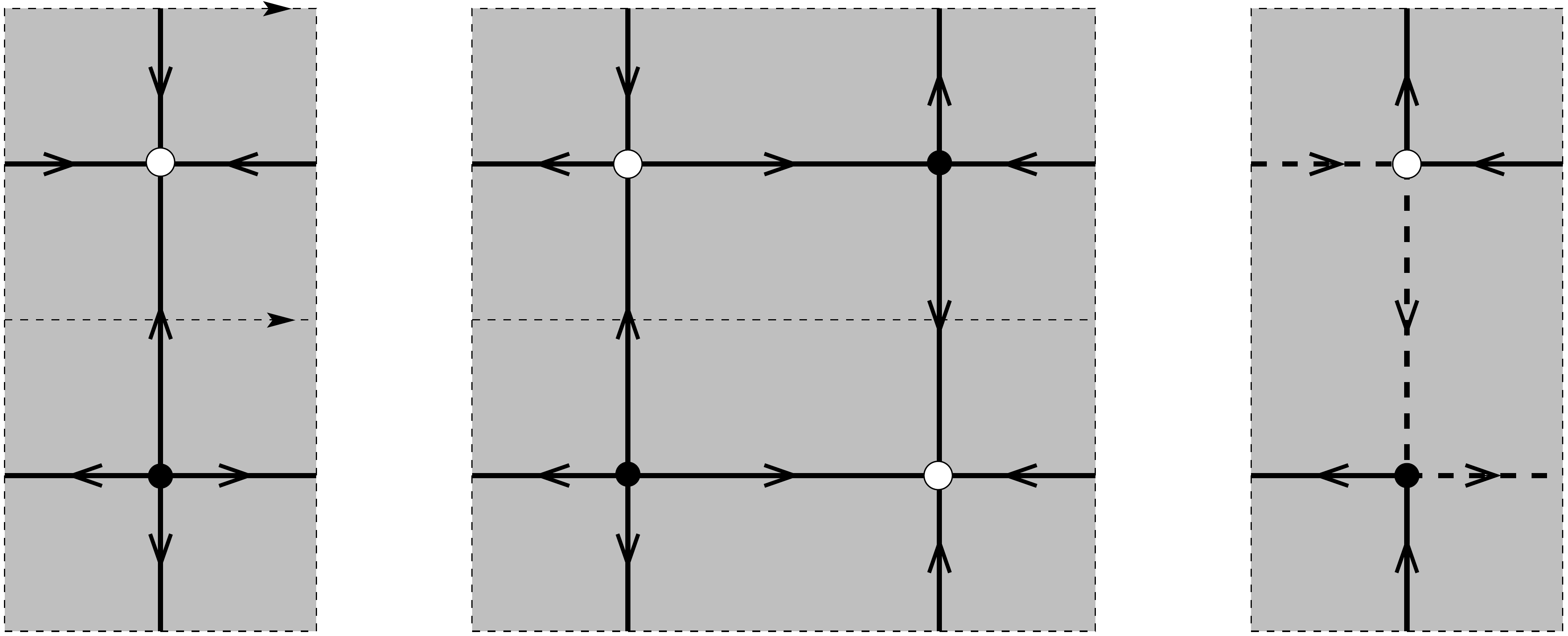}
\caption{The~$2\times 1$ square lattice of Example~\ref{ex:K}. Left: the orientation~$K$ on~$\G\subset\K$. Middle: The Kasteleyn orientation~$\widetilde{K}$ on~$\widetilde{\G}\subset\T^2$. Right: the oriented curves~$C$ and~$C'$ in~$\G$.}
\label{fig:ex1}
\end{figure}

\begin{examples}
\label{ex:K}
Consider the~$2\times 1$ square lattice~$\G$ naturally embedded in the Klein bottle as illustrated in the left part of Figure~\ref{fig:ex1}. (The weights are represented as well for later use.) The orientation~$K$ pictured on this graph satisfies condition~(i) above: this is easily checked using the middle part of~Figure~\ref{fig:ex1} which represents the Kasteleyn orientation~$\widetilde{K}$ on~$\widetilde{\G}\subset\T^2$. (On this example, the two upper horizontal edges of~$\widetilde{\G}$ have their orientation inverted to obtain~$\widetilde{K}$.)
Finally, the orientation~$K$ satisfies condition~(ii) as well: the curves~$C,C'$ are illustrated on the right part of this same figure, and the number of edges of~$C$ and~$C'$ where~$K$ disagrees with the orientation of these curves is equal to~$2$. 

Consider now the~$1\times 2$ square lattice embedded in the Klein bottle as illustrated in the left part of Figure~\ref{fig:ex2-4}. (The curve~$a'$ is not represented as a straight line in order for it to intersect the graph transversally.) One easily checks that the orientation pictured there satisfies conditions~(i) and~(ii).

Finally, fundamental domains for the hexagonal and triangular lattices are pictured in the center and right parts of Figure~\ref{fig:ex2-4}, together with orientations satisfying both conditions.
\end{examples}

\begin{figure}[tb]
\labellist\small\hair 2.5pt
\pinlabel {$a$} at 420 365
\pinlabel {$a'$} at 420 312
\pinlabel {$x_1$} at 172 258
\pinlabel {$x_2$} at 53 258
\pinlabel {$y_1$} at 143 150
\pinlabel {$y_2$} at 375 150
\pinlabel {$a$} at 935 405
\pinlabel {$a'$} at 935 255
\pinlabel {$\nu_1$} at 810 270
\pinlabel {$\nu_2$} at 704 200
\pinlabel {$\nu_3$} at 815 125
\pinlabel {$a$} at 1330 445
\pinlabel {$a'$} at 1340 270
\endlabellist
\centering
\includegraphics[width=0.9\textwidth]{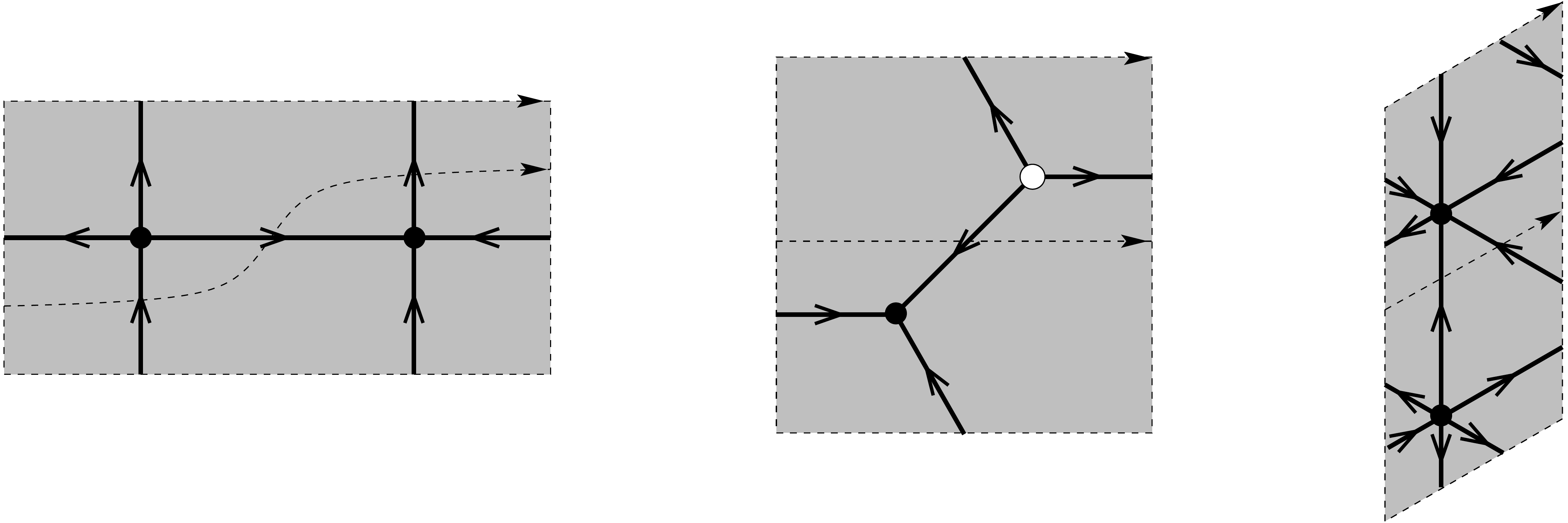}
\caption{The square, hexagonal and triangular lattices of Examples~\ref{ex:K} together with appropriate orientations.}
\label{fig:ex2-4}
\end{figure}

\begin{remark}
\label{rem:C}
Condition~(ii) is needed to normalize the orientation~$K$, as not all orientations satisfying condition~(i) can be used to compute the dimer partition function (see Proposition~\ref{prop:Pf} below). For simplicity, we shall furthermore assume that the curve~$C$ (resp.~$C'$) used in this normalization can be chosen disjoint from~$a'$ (resp.~$a$). This is easily seen to hold in the first, third and fourth lattices of Example~\ref{ex:K},
and can be assumed without loss of generality via vertical extension of the fundamental domain.
However, if one considers graphs that are ``too small'', such as the second lattice of Example~\ref{ex:K},
then this assumption is not satisfied. As we progress, we shall explain the small modifications that need to be made in such cases (see Remarks~\ref{rem:Kmn},~\ref{rem:perK} and~\ref{rem:as-gen}~(iii) below).
\end{remark}

\subsection{The Kleinian characteristic polynomial}
\label{sub:Klein}

We now have a weighted graph~$\G\subset\K$ together with an orientation~$K$ of its edges.
Let us order the vertex set~$V$ of~$\G$ and for~$z\in\C^*$ and~$w=\pm 1$, denote
by~$A(z,w)=\left(A(z,w)_{v,v'}\right)_{v,v'\in V}$ the associated (complex-valued, twisted) {\em Kasteleyn matrix\/} given by
\[
A(z,w)_{v,v'}=\sum_{e=(v,v')}\varepsilon_{vv'}^K(e)\,i^{e\cdot a+e\cdot a'}\,\nu_e\,z^{e\cdot b} w^{e\cdot a}\,,
\]
where the sum is over all the oriented edges~$e$ from~$v\in V$ to~$v'\in V$, the sign~$\varepsilon_{vv'}^K(e)$ is~$+1$ if~$K$ orients~$e$ from~$v$ to~$v'$ and~$-1$ otherwise, and~$e\cdot c$ denotes the {\em algebraic intersection number\/} of the oriented edge~$e$ with the oriented curve~$c\in\{a,a',b\}$ in~$\K$. Here some caution is needed. We shall say that~$e\cdot b=+1$ if~$e$ crosses the vertical side of the fundamental domain from left to right, while~$e\cdot b=-1$ it crosses it from right to left and~$e\cdot b=0$ if~$e$ and~$b$ are disjoint. (An integral intersection number with~$b$ is indeed well-defined in  the Klein bottle.) On the other hand, the intersection number~$e\cdot a$ does not carry a sign: we have~$e\cdot a=1$ if~$e$ and~$a$ meet (once) and~$e\cdot a=0$ else. The same holds for~$a'$. (Only a~$\Z/2\Z$-valued intersection number with~$a$ and~$a'$ is defined in~$\K$.)

\begin{definition}
The {\em characteristic polynomial\/} of~$\G\subset\K$ is
\[
R(z,w)=\det A(z,w)\in\C[z^{\pm 1},w]\,.
\]
\end{definition}

Several remarks are in order.

\begin{remarks}
\label{rem:R}
\begin{enumerate}[(i)]
\item This polynomial is really an element of~$\C[z^{\pm 1},w]/(w^2-1)$, the group ring of~$H_1(\K;\Z)=\Z\oplus\Z/2\Z$. In other words, it only carries relevant information for~$w=\pm 1$ and can be considered as two 1-variable Laurent polynomials~$R(z,1),R(z,-1)\in\C[z^{\pm 1}]$. 
\item Given~$\G\subset\K$ with curves~$a,a',b$, the polynomial~$R(z,w)$ is uniquely defined up to complex conjugaison. This follows from Remark~\ref{rem:K}~(ii) together with the fact that~$a+a'$
is homologous to~$b$. Moreover, Remark~\ref{rem:K}~(iii) implies that deforming the curves~$a,a'$ (and~$b$) leads to
the additional transformation~$R(z,w)\mapsto -R(z,-w)$.

\item The polynomial~$R(iz,w)$ lies in~$\R[z^{\pm 1},w]$. Indeed,~$e\cdot a+e\cdot a'$ and~$e\cdot b$ have the same parity since~$[a+a']=[b]\in H_1(\K;\Z/2\Z)$. 
Equivalently, we have~$R(-z,w)=\overline{R(z,w)}$.
\end{enumerate}
\end{remarks}

We now compute this characteristic polynomial for our running examples.

\begin{examples}
\label{ex:R}
Let us order the vertices of the~$2\times 1$ square lattice of Figure~\ref{fig:ex1} from bottom to top. The corresponding Kasteleyn matrix is given by
\[
A(z,w)=\begin{pmatrix}
0&iy_1+iy_2w+x_1z+x_2z^{-1}\cr
-iy_1-iy_2w-x_1z^{-1}-x_2z&0
\end{pmatrix}\,,
\]
so~$R(z,w)=S(z,w)S(z^{-1},w)$ with~$S(z,w)=iy_1+iy_2w+x_1z+x_2z^{-1}$.

Ordering the vertices of the~$1\times 2$ square lattice of Figure~\ref{fig:ex2-4} from left to right, we have
\[
A(z,w)=\begin{pmatrix}
i^2y_1w-i^2y_1w&ix_1+x_2z^{-1}\cr
-ix_1-x_2z&i^2y_1w-i^2y_1w
\end{pmatrix}
=
\begin{pmatrix}
0&ix_1+x_2z^{-1}\cr
-ix_1-x_2z&0
\end{pmatrix}
\,.
\]
(Note that loops are counted twice, as the sum is over all {\em oriented\/} edges from~$v$ to~$v'$.) Hence, we obtain~$R(z,w)=(ix_1+x_2z)(ix_1+x_2z^{-1})=(x_2^2-x_1^2)+ix_1x_2(z+z^{-1})$.

For the hexagonal lattice of Figure~\ref{fig:ex2-4}, we have
\[
A(z,w)=\begin{pmatrix}
0&i\nu_1+\nu_2z+i\nu_3w\cr
-i\nu_1-\nu_2z^{-1}-i\nu_3w&0
\end{pmatrix}\,,
\]
so~$R(z,w)=S(z,w)S(z^{-1},w)$ with~$S(z,w)=i\nu_1+\nu_2z+i\nu_3w$.

Finally, for the isotropic triangular lattice of Figure~\ref{fig:ex2-4},
we have
\[
A(z,w)=\begin{pmatrix}
iz^{-1}-iz&-i-iw-z-z^{-1}\cr
i+iw+z+z^{-1}&iwz-iwz^{-1}
\end{pmatrix}\,,
\]
leading to~$R(z,w)=(z^2+z^{-2}+2iz+2iz^{-1})(1+w)-(1+w)^2+2(1-w)$.
\end{examples}

Using this characteristic polynomial, the Pfaffian formula of~\cite{Cim} (see also~\cite{Tes}) can now be reformulated in the following way.

\begin{proposition}
\label{prop:Pf}
The dimer partition function of~$\G$ is given by
\[
Z=\big|{\rm Im}\big(R(\pm 1,1)^{1/2}\big)\big|+\big|{\rm Re}\big(R(\pm 1,-1)^{1/2}\big)\big|\,.
\]
\end{proposition}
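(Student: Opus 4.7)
\medskip

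\noindent\textbf{Proof plan.} The starting point is the general Pfaffian formula of [Cim] (see also [Tes]), which expresses the dimer partition function of a graph embedded in a closed surface $\Sigma$ as an explicit linear combination of $2^{2-\chi(\Sigma)}$ Pfaffians of twisted Kasteleyn matrices. For $\Sigma=\K$ this gives four Pfaffians, one for each element of $H^1(\K;\Z/2\Z)\simeq(\Z/2\Z)^2$. The first step is to identify these four twisted Kasteleyn matrices with $A(z,w)$ for $(z,w)\in\{(\pm 1,\pm 1)\}$. The twist in the vertical direction (dual to the curve~$b$) corresponds to the factor~$z^{e\cdot b}$, and the twist in the horizontal direction (dual to the class of $a$, equivalently $a'$, in $H_1(\K;\Z/2\Z)$) corresponds to the factor~$w^{e\cdot a}$; the factor $i^{e\cdot a+e\cdot a'}$ arises because $\K$ is non-orientable, so the Kasteleyn weights must be complex to realize a Kasteleyn orientation on the non-orientable surface. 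Crucially, conditions (i) and (ii) imposed on the orientation $K$ are exactly those required in [Cim] for the four Arf-invariant signs in the Pfaffian formula to come out as~$(+,+,+,-)$ (or any fixed such pattern up to global sign), making the four evaluations $(\pm 1,\pm 1)$ the correct representatives of the four twisting classes.

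The second step is to exploit the reality property $R(-z,w)=\overline{R(z,w)}$ from Remark~\ref{rem:R}~(iii). Since $\mathrm{Pf}(A(z,w))^2=R(z,w)$, taking Pfaffians is compatible (up to a sign) with this conjugation, and I expect to show
\[
\mathrm{Pf}(A(-1,w)) \;=\; \overline{\mathrm{Pf}(A(1,w))}
\]
for $w=\pm 1$, where the choice of square root on the right is fixed by picking an explicit continuous branch along a path of orientations. In particular, the pair of Pfaffians indexed by the two values of $z$ contributes~$2\mathrm{Re}$ or~$2i\mathrm{Im}$ of a single complex number, depending on the relative Arf sign.

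The third step is to combine the signs dictated by the Arf invariants on the Klein bottle. Two of the four characters of $H^1(\K;\Z/2\Z)$ have one Arf parity and two have the other, so within each line $w=1$ and $w=-1$ the two contributions from $z=\pm 1$ come with opposite signs in one case and the same signs in the other. The $w=1$ row yields $2|\mathrm{Im}(R(\pm 1,1)^{1/2})|$ (once absorbed into absolute values, since the square-root ambiguity is immaterial), while the $w=-1$ row yields $2|\mathrm{Re}(R(\pm 1,-1)^{1/2})|$; the precise assignment of ``imaginary for $w=1$ and real for $w=-1$'' comes from the fact that the homology class $a+a'\in H_1(\K;\Z)$ is homologous to $b$ and has order $2$, which fixes the Arf invariants up to the explicit normalization provided by condition (ii). Dividing by $2$ (as in the general formula) and taking absolute values gives the claimed identity.

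The main obstacle is bookkeeping the signs: establishing rigorously that the four Arf signs in [Cim, Tes] correspond to the pattern above, and that the square-root branches for the two values of $z$ are genuinely complex conjugate (rather than negatives of conjugates). This requires unpacking the definition of the Arf invariant associated to a spin$^c$/pin$^-$-type structure on~$\K$, and checking that the curves $C$ and $C'$ in condition~(ii) represent the two generators of the relevant $(\Z/2\Z)^2$ in a way compatible with the choice of branch of the Pfaffian. Remark~\ref{rem:C} flags precisely why this is delicate in the ``too small'' cases, and I expect the proof to implicitly assume the cleaner setup of Remark~\ref{rem:C}, with the other cases reduced to it by vertical extension.
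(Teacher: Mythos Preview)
Your plan is broadly on the right track and close in spirit to the paper's argument, but there is one genuine step you have not accounted for, and the paper's route is shorter than yours in one place.

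First, the shortcut you miss: the paper does not start from the general four-Pfaffian formula and then pair the $z=\pm 1$ terms via conjugation as you propose. Instead it directly quotes the Klein-bottle version of the Pfaffian formula already established in \cite{Cim} (Theorem~6.3 there), which for the normalization where $n^K(C)$ is odd reads
\[
Z=\bigl|\,\mathrm{Im}\bigl(\Pf(A(1,1))\bigr)+\mathrm{Re}\bigl(\Pf(A(1,-1))\bigr)\,\bigr|\,.
\]
So only two Pfaffians appear from the start. Your conjugation identity $\Pf(A(-1,w))=\overline{\Pf(A(1,w))}$ (which does hold, entrywise, since $e\cdot a+e\cdot a'\equiv e\cdot b\pmod 2$) is then used in the opposite direction: not to reduce four Pfaffians to two, but to show that the formula with $z=1$ coincides with the one with $z=-1$, justifying the $\pm$ in the statement. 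The case where $n^K(C)$ is even is handled by reversing all edges crossing $b$, which swaps $R(1,w)$ and $R(-1,w)=\overline{R(1,w)}$ and hence changes nothing.

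Second, and more importantly, there is a step you do not mention at all: passing from
\[
\bigl|\,\mathrm{Im}(\Pf(A(1,1)))+\mathrm{Re}(\Pf(A(1,-1)))\,\bigr|
\quad\text{to}\quad
\bigl|\mathrm{Im}(\Pf(A(1,1)))\bigr|+\bigl|\mathrm{Re}(\Pf(A(1,-1)))\bigr|\,.
\]
This is a non-cancellation statement, and it does not follow from Arf-invariant bookkeeping alone. The paper invokes the explicit expansion of $\Pf(A(1,\pm 1))$ given in \cite{Cim} (p.~174 there) to check that the two terms cannot have opposite signs, i.e.\ that every contributing matching enters each term with a coherent sign. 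Your plan jumps directly to absolute values on each piece (``once absorbed into absolute values, since the square-root ambiguity is immaterial''), but the square-root ambiguity is irrelevant only \emph{after} you know the two real quantities have the same sign; without the non-cancellation argument the step is unjustified. This is the one substantive gap in your proposal.
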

\begin{proof}
Let us first assume that~$K$ is such that the number~$n^K(C)$ of edges of~$C$ where~$K$ disagrees with the orientation of~$C$ is odd. (By condition~(ii) above,~$n^K(C')$ is then odd as well.) Then, we are precisely in the setting of~\cite[Theorem~6.3]{Cim}, which in our case amounts to the Pfaffian formula
\[
Z=\left|{\rm Im}(\Pf(A(1,1)))+{\rm Re}(\Pf(A(1,-1)))\right|\,.
\]
Considering the expansion of~$\Pf(A(1,\pm 1)$ given in~\cite[p.174]{Cim}, on easily checks that the formula
\[
Z=\left|{\rm Im}(\Pf(A(1,1)))\right|+\left|{\rm Re}(\Pf(A(1,-1)))\right|
\]
holds as well, as there is no possible cancellation between these two terms. The statement now follows from the identity~$\Pf(A(1,w))^2=\det(A(1,w))=R(1,w)$ for~$w=\pm 1$, together with the equality~$R(-z,w)=\overline{R(z,w)}$ of Remark~\ref{rem:R}~(iii).
If~$K$ is such that~$n^K(C)$ and~$n^K(C')$ are both even, then it can be changed to~$K'$ with~$n^{K'}(C)$ and~$n^{K'}(C')$ both odd by reversing the orientation of all edges crossing~$b$. By Remark~\ref{rem:R}~(iii), this amounts to replacing~$R(1,w)$ by~$R(-1,w)=\overline{R(1,w)}$ for~$w=\pm 1$, and the proposition still holds.
\end{proof}

\subsection{The toric characteristic polynomial}
\label{sub:toric}

A second polynomial plays a crucial role in our study: it is the characteristic polynomial introduced for bipartite graphs by Kenyon, Okounkov and Sheffield in their seminal paper~\cite{KOS}, and extended to general toric graphs by Kenyon, Sun and Wilson~\cite{KSW}. Let us recall its definition in our context.

For~$z,w\in\C^*$, let~$\widetilde{A}(z,w)$ be the Kasteleyn matrix associated with the weighted graph~$\widetilde{\G}\subset\T^2$ and the Kasteleyn orientation~$\widetilde{K}$ (recall condition~(i) above). In other words, for~$v,v'\in\widetilde{V}$, we have
\[
\widetilde{A}(z,w)_{v,v'}=\sum_{e=(v,v')}\varepsilon_{vv'}^{\widetilde{K}}(e)\,\widetilde{\nu}_e\,z^{e\cdot\tilde{b}} w^{e\cdot\tilde{a}}\,,
\]
where~$e\cdot\tilde{a}\in\Z$ denotes the algebraic intersection number in~$\T^2$ of the oriented edge~$e$ of~$\widetilde{\G}$ with the oriented simple closed curve~$\tilde{a}$, and similarly for~$e\cdot\tilde{b}$ (recall Figure~\ref{fig:Klein}, right). Concretely, we have~$e\cdot\tilde{a}=+1$ if~$e$ crosses~$\tilde{a}$ from bottom to top~($e\cdot\tilde{a}=-1$ if~$e$ goes top to bottom), and~$e\cdot\tilde{b}=+1$ if~$e$ crosses~$\tilde{b}$ from left to right~($e\cdot\tilde{b}=-1$ if~$e$ goes right to left).

\begin{definition}
The {\em characteristic polynomial\/} of~$\widetilde{\G}\subset\T^2$ is
\[
P(z,w)=\det\widetilde{A}(z,w)\in\R[z^{\pm 1},w^{\pm 1}]\,.
\]
\end{definition}

\begin{remarks}
\label{rem:P}
\begin{enumerate}[(i)]
\item{In the general setting of~\cite{KOS,KSW}, this characteristic polynomial depends on the choice of a Kasteleyn orientation. In our setting, it is defined with respect to a specific orientation (recall Remark~\ref{rem:K})
and therefore does not depend on such a choice.}
\item{By definition, the matrix~$\widetilde{A}$ is of even dimension and satisfies~$\widetilde{A}(z,w)^T=-\widetilde{A}(z^{-1},w^{-1})$, which implies the equality~$P(z,w)=P(z^{-1},w^{-1})$. Since~$P(z,w)$ has real coefficients, this in turn implies that for~$z,w\in S^1$, we have~$P(z,w)=P(z^{-1},w^{-1})=P(\overline{z},\overline{w})=\overline{P(z,w)}$ which is real.}
\end{enumerate}
\end{remarks}

\begin{examples}
\label{ex:P}
For the~$2\times 1$ square lattice of Figure~\ref{fig:ex1} and the hexagonal lattice of Figure~\ref{fig:ex2-4}, we have 
\[
\widetilde{A}(z,w)=\begin{pmatrix}
0&\widetilde{A}_{\circ\bullet}(z,w)\cr
-\widetilde{A}_{\circ\bullet}(z^{-1},w^{-1})^T&0
\end{pmatrix}\,,
\]
with
\[
\widetilde{A}_{\circ\bullet}(z,w)=\begin{pmatrix}
y_1+y_2w^{-1}&x_1+x_2z^{-1}\cr
-x_2-x_1z&y_1+y_2w
\end{pmatrix}
\quad\text{and}\quad
\widetilde{A}_{\circ\bullet}(z,w)=\begin{pmatrix}
\nu_1+\nu_3w&-\nu_2\cr
\nu_2z&\nu_1+\nu_3w^{-1}
\end{pmatrix}\,,
\]
respectively. Hence, we obtain a factorization~$P(z,w)=Q(z,w)Q(z^{-1},w^{-1})$ with
\[
Q(z,w)= y_1^2+y_2^2+2x_1x_2+y_1y_2(w+w^{-1})+x_1^2z+x_2^2z^{-1}
\]
and
\[
Q(z,w)= \nu_1^2+\nu_3^2+\nu_1\nu_3(w+w^{-1})+\nu_2^2z\,,
\]
respectively.

Finally, for the~$1\times 2$ square lattice with edge weights~$x_1=x_2=:x$ and~$y_1=y_2=:y$, we get
\[
P(z,w)=y^4(w-w^{-1})^4-4x^2y^2(w-w^{-1})^2+x^4(2+z+z^{-1})\,,
\]
while the isotropic triangular lattice yields
\[
P(z,w)=(z^2+z^{-2})(w+w^{-1}+2)+10(w+w^{-1})+w^2+w^{-2}+34\,.
\]
\end{examples}

The following proposition appears to be folklore (see~\cite[p.~974]{KSW}), holds for arbitrary toric graphs (not necessarily covers of graphs on the Klein bottle), and is immediate in the case of bipartite graphs. As we were unable to find a proof of the general case in the literature, we include one here for completeness.

\begin{proposition}
\label{prop:Ppos}
Given any toric graph~$G\subset\T^2$ and any non-negative edge weights~$\nu\in[0,\infty)^E$, the corresponding characteristic polynomial~$P$ takes non-negative values on~$S^1\times S^1$.
\end{proposition}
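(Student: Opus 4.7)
The plan hinges on two structural facts: the skew-Hermitian character of $\widetilde A(z,w)$ on $S^1\times S^1$, and the multiplicativity of $P$ under enlargement of the fundamental domain. First, I would make explicit that for $(z,w)\in S^1\times S^1$ the matrix $\widetilde A(z,w)$ is skew-Hermitian.  Indeed, combining the identity $\widetilde A(z,w)^T=-\widetilde A(z^{-1},w^{-1})$ from Remark~\ref{rem:P}(ii) with the fact that $\bar z=z^{-1}$ and $\bar w=w^{-1}$ on the unit circle yields
\[
\widetilde A(z,w)^*=\overline{\widetilde A(z,w)^T}=-\overline{\widetilde A(z^{-1},w^{-1})}=-\widetilde A(z,w).
\]
Consequently its spectrum is purely imaginary, $P(z,w)\in\R$ (as already noted in Remark~\ref{rem:P}(ii)), and specializing to the four points $(\zeta,\xi)\in\{\pm 1\}^2$ gives a matrix with real entries, hence genuinely real skew-symmetric.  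The non-zero eigenvalues at these points come in pairs $\{i\mu_j,-i\mu_j\}$, giving the base-case inequality $P(\pm 1,\pm 1)=\Pf(\widetilde A(\pm 1,\pm 1))^2\ge 0$.

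To propagate non-negativity to a generic point $(z_0,w_0)\in S^1\times S^1$, the plan is to invoke the covering formula~\eqref{equ:Pmn} applied to the toric cover $G_{NM}\subset\T^2$: for any positive integers $N,M$ and any $(\zeta,\xi)\in\{\pm 1\}^2$, the matrix $\widetilde A_{NM}(\zeta,\xi)$ again has real entries and is skew-symmetric, so
\[
0\le\Pf(\widetilde A_{NM}(\zeta,\xi))^2=P_{NM}(\zeta,\xi)=\prod_{z^N=\zeta,\,w^M=\xi}P(z,w).
\]
The involution $(z,w)\mapsto(z^{-1},w^{-1})$ preserves $P$ (again Remark~\ref{rem:P}(ii)), so the non-fixed terms of this product pair up into squared factors $P(z,w)^2\ge 0$, and the fixed points in $\{\pm 1\}^2$ contribute non-negatively by the previous paragraph.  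This already shows that the product formula is globally consistent with $P\ge 0$.

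The \emph{main obstacle}, and the reason this result warrants a proof of its own despite being ``folklore'', is that the covering identity only constrains \emph{products} of values of $P$, not individual values.  To isolate the sign of $P(z_0,w_0)$ I would argue as follows.  The roots of unity are dense in $S^1\times S^1$ and $P$ is continuous, so it suffices to treat a point $(z_0,w_0)$ at which both coordinates are roots of unity, say $z_0^{2N}=1=w_0^{2M}$.  Apply the covering identity to $G_{NM}$ with $(\zeta,\xi)=(z_0^N,w_0^M)\in\{\pm 1\}^2$: the factor $P(z_0,w_0)$ appears once, the other $NM-1$ factors correspond to points $(z_0\alpha,w_0\beta)$ with $\alpha^N=\beta^M=1$ and again satisfy $(z_0\alpha)^{2N}=(w_0\beta)^{2M}=1$, and induction on the pair $(N,M)$ -- starting from the base case $N=M=1$ where $(z,w)\in\{\pm 1\}^2$ -- combined with the pairing under the involution and the non-negativity of $P_{NM}(\zeta,\xi)$, should force $P(z_0,w_0)\ge 0$.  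The delicate point is to set up the induction so that each inductive step genuinely reduces the problem; this is the part of the argument I expect to require the most care, and is plausibly the piece missing from the standard references.
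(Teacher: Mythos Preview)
Your setup is sound through the point where you observe that $P_{NM}(\zeta,\xi)\ge 0$ for $(\zeta,\xi)\in\{\pm 1\}^2$, and that the involution $(z,w)\mapsto(z^{-1},w^{-1})$ acts on the factors of the product $\prod_{z^N=\zeta,\,w^M=\xi}P(z,w)$ with fixed points contained in $\{\pm 1\}^2$. The problem is precisely the one you flag but do not resolve: this pairing is fatal to your strategy, not merely delicate. If $(z_0,w_0)\notin\{\pm 1\}^2$, then $(z_0,w_0)$ and $(z_0^{-1},w_0^{-1})$ are distinct points of the index set, and since $P(z_0,w_0)=P(z_0^{-1},w_0^{-1})$, their contribution to the product is $P(z_0,w_0)^2$. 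Thus the whole product factors as a non-negative contribution from $\{\pm 1\}^2$ times a product of squares, and the inequality $P_{NM}(\zeta,\xi)\ge 0$ holds \emph{regardless} of the sign of $P(z_0,w_0)$. No induction on $(N,M)$ can extract sign information from an identity in which the target value only ever appears squared. The covering formula alone simply does not determine the sign of individual values of $P$.

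The paper's proof takes a genuinely different route: it varies the \emph{weights} rather than the evaluation point. After reducing to the case where $G$ admits a perfect matching and all weights are positive, the author uses the covering formula only to produce a dense subset $T\subset S^1\times S^1$ on which $P\neq 0$ (since $Z>0$ forces some $P_{mn}(\pm 1,\pm 1)>0$). Then for each fixed $(z,w)\in T$ the map $\nu\mapsto P_{(G,\nu)}(z,w)$ is continuous and nonvanishing on $(0,\infty)^E$, so its sign is constant on this connected set; evaluating at $\nu$ equal to the indicator of a perfect matching gives $P=1>0$, whence $P_{(G,\nu)}(z,w)>0$ for all $\nu$. Density of $T$ and continuity in $(z,w)$ finish the argument. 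The key idea you are missing is this connectedness argument in weight space, which supplies the sign information that the product formula cannot.
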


\begin{proof}
First note that~$P=P_{(G,\nu)}$ is left unchanged by adding edges of weight~$0$, and by subdividing an edge~$e$ of weight~$\nu_e$ into three edges of weights~$\nu_e$,~$1$, and~$1$. Using these transformations, any weighted graph~$(G,\nu)\subset\T^2$ can be modified to obtain a weighted graph with the same polynomial, but admitting a perfect matching. Using the continuity of~$\nu\mapsto P_{(G,\nu)}(z,w)$, we can assume that all edge weights are positive. Hence, it can be assumed that the dimer partition function of~$(G,\nu)$ does not vanish. Note that for~$z,w=\pm 1$, we have~$P(z,w)=\Pf(\widetilde{A}(z,w))^2\ge 0$. An appropriate linear combination of these four Pfaffians gives the dimer partition function~\cite{Tes,C-R}, which we assumed not to vanish. Hence, we have~$P(z,w)>0$ for at least one~$(z,w)$ in~$\{\pm 1\}^2$. Applying this fact to the~$\Z/m\Z\times\Z/n\Z$ covering of~$G\subset\T^2$ and using~\cite[Theorem~3.3]{KOS} (see also Equation~\eqref{equ:Pmn} below), we find that~$P(z,w)\neq 0$ for~$(z,w)$ in some dense subset~$T\subset S^1\times S^1$. For any fixed element~$(z,w)$ of~$T$, we hence have the equality
\[
\{\nu\in(0,\infty)^E\,|\,P_{(G,\nu)}(z,w)\ge 0\}=\{\nu\in(0,\infty)^E\,|\,P_{(G,\nu)}(z,w)> 0\}\,,
\]
which is open and closed in~$(0,\infty)^E$ by continuity of~$\nu\mapsto P_{(G,\nu)}(z,w)$. It is also non-empty, as the choice for~$\nu$ of the indicator function of a perfect matching gives the value~$P_{(G,\nu)}(z,w)=1$. By connectedness of~$(0,\infty)^E$, we conclude that~$P_{(G,\nu)}(z,w)\ge 0$ for all~$\nu$ and any fixed~$(z,w)\in T$. The statement follows from the density of~$T$ and the continuity of~$(z,w)\mapsto P(z,w)$.
\end{proof}

We need further properties of these polynomials for covers of graphs on the Klein bottle.

\begin{proposition}
\label{prop:P}
\begin{enumerate}[(i)]
\item{$P(z,w)=P(z,w^{-1})\in\R[z^{\pm 1},w^{\pm 1}]$.}
\item{For~$w=\pm 1$, we have the equality~$P(z,w)=R(z^{1/2},w)R(-z^{1/2},w)$ in~$\C[z^{\pm 1/2}]$.}
\item{$P(1,w)=|R(\pm 1,w)|^2$ for~$z\in\C^*$ and~$w=\pm 1$.}
\end{enumerate}
\end{proposition}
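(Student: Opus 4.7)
The proof will proceed by analysing the $\Z/2$-symmetry of the cover $\pi\colon\widetilde{\G}\to\G$ via its deck involution $\sigma\colon\T^2\to\T^2$. Identifying $\sigma$ with the nontrivial element of $\pi_1(\K)/\pi_1(\T^2)$, I check that on $H_1(\T^2;\Z)$ it acts by $\sigma_\ast(\widetilde a)=\widetilde a$ and $\sigma_\ast(\widetilde b)=-\widetilde b$, and that it is orientation-reversing; combining these yields $\sigma(e)\cdot\widetilde a=-(e\cdot\widetilde a)$ and $\sigma(e)\cdot\widetilde b=+(e\cdot\widetilde b)$ for every oriented edge $e$ of $\widetilde\G$. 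The behaviour of the Kasteleyn orientation $\widetilde K$ under $\sigma$ requires care: since $\widetilde K$ flips on edges with both endpoints in the upper half of $\widetilde\D$ and $\sigma$ swaps the upper and lower halves, the Kasteleyn sign changes on edges confined to one half but is preserved on edges crossing the middle curve $\widetilde a'$. Letting $h(v)\in\{0,1\}$ record which half of $\widetilde\D$ contains $v$, this is compactly expressed by the master identity
\[
\widetilde A(z,w)_{\sigma(v),\sigma(v')}=-(-1)^{h(v)+h(v')}\,\widetilde A(z,w^{-1})_{v,v'}.
\]

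\textbf{Part (i).} Absorbing the $(-1)^{h+h}$ factor by conjugating with the diagonal matrix $D_{v,v}=(-1)^{h(v)}$, and writing $U_\sigma$ for the permutation matrix of $\sigma$, the master identity becomes $U_\sigma^\top\widetilde A(z,w)U_\sigma=-D\,\widetilde A(z,w^{-1})\,D^{-1}$. Taking determinants and using $|\widetilde V|=2|V|$ even gives $P(z,w)=P(z,w^{-1})$. The membership $P\in\R[z^{\pm 1},w^{\pm 1}]$ is immediate from the real weights $\widetilde\nu_e$ and integer signs in the definition of $\widetilde A$.

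\textbf{Part (ii).} For $w=\pm 1$ the master identity becomes $S\,\widetilde A(z,w)=-\widetilde A(z,w)\,S$ with $S=DU_\sigma$. The identity $h(\sigma(v))=1-h(v)$ yields $S^2=-I$, so $\C^{\widetilde V}$ decomposes into the $\pm i$-eigenspaces $V_{\pm i}$ of $S$, each of dimension $|V|$, and the anti-commutation forces $\widetilde A$ to be block-antidiagonal in this decomposition. Consequently
\[
\det\widetilde A(z,w)=(-1)^{|V|}\det M_+(z,w)\,\det M_-(z,w)
\]
for two $|V|\times|V|$ off-diagonal blocks $M_\pm$. The crux of the argument, and the main obstacle, is to identify $M_\pm(z^2,w)$ with (scalar multiples and diagonal conjugates of) the Kleinian Kasteleyn matrices $A(\pm z,w)$. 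Parametrising $V_{\pm i}$ by restriction to the left-half lifts $v_0$ (so that a basis vector satisfies $F(v_1)=\pm i(-1)^{h(v_0)}F(v_0)$), I would compute, for each edge $e_0\in E(\G)$, the combined contribution of its two lifts to a matrix entry of $M_\pm$, and verify three compatibilities: the $\pm i$ phase together with the sign discrepancy between $\widetilde K$ and the naive lift of $K$ must regenerate the factor $i^{e_0\cdot a+e_0\cdot a'}$ built into $A(z,w)$; the split of $\widetilde b$-crossings between the two lifts (one through $\widetilde b'$ contributing $z^0$, the other through $\widetilde b$ contributing $z^{\pm 2}$ in $\widetilde A(z^2,w)$) must combine into $z^{e_0\cdot b}$ for $M_+$ versus $(-z)^{e_0\cdot b}$ for $M_-$; and the $w$-factors must match via $w^{e\cdot\widetilde a}=w^{e_0\cdot a}$ when $w=\pm 1$. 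After this entry-by-entry match, $M_\pm(z^2,w)=\pm c\,A(\pm z,w)$ for an explicit scalar $c$, and the global sign in $P(z^2,w)=R(z,w)R(-z,w)$ is pinned down by specialising at $z=1$ and invoking Part~(iii) below.

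\textbf{Part (iii).} Setting $z=1$ in (ii) yields $P(1,w)=R(1,w)\,R(-1,w)$. By Remark~\ref{rem:R}(iii), $R(-z,w)=\overline{R(z,w)}$, so $R(-1,w)=\overline{R(1,w)}$ and therefore $P(1,w)=R(1,w)\,\overline{R(1,w)}=|R(1,w)|^2=|R(-1,w)|^2$.
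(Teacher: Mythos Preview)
Your overall strategy—exploiting the deck involution $\sigma$ of the double cover—is the same as the paper's, but your ``master identity'' contains a genuine error that breaks Part~(ii) and needs patching even for Part~(i). You derive $\sigma(e)\cdot\tilde b=e\cdot\tilde b$ from the homological action $\sigma_*[\tilde b]=-[\tilde b]$ together with orientation reversal. But the exponent $e\cdot\tilde b$ in the Kasteleyn matrix is the intersection with the \emph{specific curve} $\tilde b$, not with its homology class; since $e$ is an open edge (not a cycle), this number is not homologically invariant in the second argument. Concretely, $\sigma$ exchanges the curves $\tilde b$ and $\tilde b'$, so one actually has $\sigma(e)\cdot\tilde b=\pm\, e\cdot\tilde b'$, which generically differs from $e\cdot\tilde b$. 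Consequently your master identity fails as a matrix identity: the right-hand side should carry $z^{e\cdot\tilde b'}$ rather than $z^{e\cdot\tilde b}$. You seem aware of this in your Part~(ii) sketch (where you distinguish the two lifts crossing $\tilde b$ versus $\tilde b'$), but it is inconsistent with the identity you have already used.

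For Part~(i) this is recoverable: the two weight systems $z^{e\cdot\tilde b}$ and $z^{e\cdot\tilde b'}$ differ by a diagonal gauge transformation (since $\tilde b$ and $\tilde b'$ are homologous), so the determinants agree and $P(z,w)=P(z,w^{-1})$ still follows. For Part~(ii), however, the anti-commutation $S\widetilde A=-\widetilde A S$ no longer holds as stated, so the clean block-antidiagonal picture collapses; fixing it requires inserting a $z$-dependent gauge factor into $S$, after which $S^2=-I$ and the eigenspace description become considerably messier. The paper sidesteps all of this by \emph{first} passing from $z^{e\cdot\tilde b}$ to the $\sigma$-symmetric exponent $\tfrac{1}{2}(e\cdot\tilde b-e\cdot\tilde b')$ (a determinant-preserving gauge change) and multiplying rows and columns of upper-half vertices by $i$; the resulting matrix $\widetilde A'(z,w)$ genuinely \emph{commutes} with the permutation $f$ when $w=\pm 1$, so it is block-\emph{diagonal} and the two blocks are directly identified with $A(\pm z^{1/2},w)$ via the standard argument of \cite{KOS}. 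This is both cleaner and avoids the delicate entry-by-entry verification you describe as ``the main obstacle'' but do not carry out. Your Part~(iii) is fine.
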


\begin{proof}
To show the first point, fix~$z,w\in\C^*$ and an arbitrary square root~$z^{1/2}$ of~$z$. First observe that~$P(z,w)=\det\widetilde{A}(z,w)$ is left unchanged when replacing~$z^{e\cdot\tilde{b}}$ by~$z^{1/2(e\cdot\tilde{b}-e\cdot\tilde{b}')}$, where~$\tilde{b},\tilde{b}'\subset\T^2$ denote the two lifts of~$b\subset\K$ (recall Figure~\ref{fig:Klein}). Also, multiplying by~$i$ the rows and columns of~$\widetilde{A}(z,w)$ corresponding to a vertex in the upper half of~$\widetilde{\D}$ amounts to multiplying its determinant by~$(-1)^{|V|}=1$, so the resulting matrix~$\widetilde{A}'(z,w)$ still has determinant equal to~$P(z,w)$. However, this new matrix is now symmetric in the following sense: if~$f$ denotes the involution of~$\C^{\widetilde{V}}$ corresponding to the non-trivial deck transformation of the covering~$(\T^2,\widetilde{\G})\to(\K,\G)$, we have the~$f\widetilde{A}'(z,w)f=\widetilde{A}'(z,w^{-1})$. The equality~$P(z,w)=P(z,w^{-1})$ follows.

To prove the second point, consider again the modified matrix~$\widetilde{A}'(z,w)$ above, which for~$w=\pm 1$ is invariant under the involution~$f$ of~$\C^{\widetilde{V}}$. Following the standard arguments of~\cite[Theorem~3.3]{KOS}, we obtain that in the right basis,~$\widetilde{A}'(z,w)$ is given by~$A(z^{1/2},w)\oplus A(-z^{1/2},w)$. The statement follows.

The third point is a consequence of the second one and of Remark~\ref{rem:R}~(iii).
\end{proof}

\subsection{The bipartite case: basics}
\label{sub:bip}

Let us now assume that the graph~$\G$ is {\em bipartite\/}, i.e. that its vertices can be partitioned into two sets (say, sets~$B$ and~$W$ of {\em black\/} and {\em white\/} vertices, respectively) so that no edge joins two vertices of the same set. For such a graph to admit a perfect matching, it is necessary to have~$|B|=|W|$, which we assume.
In such a case,
the vertices can be ordered so that the matrix~$A(z,w)$ is block off-diagonal,
leading to the characteristic polynomial factorizing as
\begin{equation}
\label{equ:factorRS}
R(z,w)=S(z,w)S(z^{-1},w)\,,
\end{equation}
where~$S(z,w)\in\C[z^{\pm 1},w]$ is the {\em bipartite characteristic polynomial\/} of~$\G\subset\K$.

\begin{remark}
\label{rem:S}
Given~$\G\subset\K$ and curves~$a,a',b$, the polynomial~$S$ is well-defined up to a sign and complex conjugation:
as before, this follows from Remark~\ref{rem:K}~(ii) together with the fact that~$a+a'$ is homologous to~$b$.
Moreover, by Remark~\ref{rem:K}~(iii), deforming the curves~$a,a',b$ leads to the additional
transformations~$S(z,w)\mapsto iS(z,-w)$ and~$S(z,w)\mapsto z^{\pm 1}S(z,w)$.
\end{remark}

As one easily checks, such transformations are coherent with the properties listed below, which can be obtained using
Remark~\ref{rem:R}~(iii) and Proposition~\ref{prop:Pf}.

\begin{proposition}
\label{prop:S}
The polynomial~$S(z,w)$ satisfies the equalities
\begin{enumerate}[(i)]
\item{$S(-z,w)=\pm\overline{S(z,w)}\in\C[z^{\pm 1},w]$, and}
\item{$Z=\left|{\rm Im}(S(\pm 1,1))\right|+\left|{\rm Re}(S(\pm 1,-1))\right|$.\qed}
\end{enumerate}
\end{proposition}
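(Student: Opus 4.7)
The plan is to expand $S(z,w)=\det K(z,w)$ as a sum over the perfect matchings of $\G$, where $K(z,w)$ denotes the black-to-white block of the bipartite Kasteleyn matrix $A(z,w)$. Each perfect matching $M$ contributes a monomial of the form
\[
c(M)\,\nu_M\,i^{N(M)}\,z^{B(M)}\,w^{A(M)},
\]
where $c(M)\in\{\pm 1\}$ collects the Kasteleyn signs together with the associated permutation sign, $\nu_M=\prod_{e\in M}\nu_e$, and $N(M)=\sum_{e\in M}(e\cdot a+e\cdot a')$, $B(M)=\sum_{e\in M}e\cdot b$, $A(M)=\sum_{e\in M}e\cdot a$.

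The crux of part~(i) is the claim that $N(M)+B(M)$ has a common parity $q\in\{0,1\}$ that does not depend on $M$. To see this, pick any two perfect matchings $M,M_0$; their symmetric difference $M\triangle M_0$ is a disjoint union of closed cycles in $\G$, and the relation $[a]+[a']=[b]$ in $H_1(\K;\Z/2\Z)$ forces every such cycle $\gamma$ to satisfy $\gamma\cdot a+\gamma\cdot a'\equiv \gamma\cdot b\pmod 2$. Summing over the cycles shows that $N(M)+B(M)$ and $N(M_0)+B(M_0)$ agree modulo~$2$. Using the identity $\overline{i^{N}}=(-1)^{N}i^{N}$, one then compares
\[
S(-z,w)=\sum_M c(M)\,\nu_M\,i^{N(M)}(-1)^{B(M)}z^{B(M)}w^{A(M)},
\]
\[
\overline{S(z,w)}=\sum_M c(M)\,\nu_M\,(-1)^{N(M)}i^{N(M)}z^{B(M)}w^{A(M)}.
\]
Each monomial of the first sum equals $(-1)^{B(M)+N(M)}=(-1)^q$ times the corresponding monomial of the second, so $S(-z,w)=(-1)^q\overline{S(z,w)}$, proving~(i).

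Part~(ii) then follows at once from Proposition~\ref{prop:Pf}. Setting $z=\pm 1$ in the factorisation $R(z,w)=S(z,w)S(z^{-1},w)$ gives $R(\pm 1,w)=S(\pm 1,w)^2$, so any square root satisfies $R(\pm 1,w)^{1/2}=\pm S(\pm 1,w)$. Substituting this into the formula $Z=|\mathrm{Im}(R(\pm 1,1)^{1/2})|+|\mathrm{Re}(R(\pm 1,-1)^{1/2})|$ and absorbing the sign into the absolute values yields the announced expression.

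The main obstacle is the mod-$2$ homology step in~(i). It is tempting to try to verify the identity edge-by-edge, but the per-edge intersection numbers $e\cdot a$, $e\cdot a'$, $e\cdot b$ have no reason to be related individually; the relation $[a]+[a']=[b]$ in $\Z/2\Z$-homology only constrains closed cycles. The trick is to consider two matchings simultaneously, so that the comparison is governed by the closed cycles appearing in the symmetric difference, after which the rest of the argument is purely formal bookkeeping.
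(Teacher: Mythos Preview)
Your proof is correct and follows the same line as the paper, which merely records that the proposition ``can be obtained using Remark~\ref{rem:R}(iii) and Proposition~\ref{prop:Pf}'' and then writes \qedsymbol. For part~(ii) you do exactly this: $R(\pm 1,w)=S(\pm 1,w)^2$ and substitute into Proposition~\ref{prop:Pf}. For part~(i), you unwind what Remark~\ref{rem:R}(iii) really says in the bipartite setting: the homology relation $[a+a']=[b]$ in $H_1(\K;\Z/2\Z)$ forces the parity of $N+B$ to be constant across matchings, because the comparison of any two matchings is governed by the closed cycles in their symmetric difference. That is precisely the mechanism behind Remark~\ref{rem:R}(iii), now made explicit at the level of $S$ rather than $R$.

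Your closing observation is well taken and worth emphasizing: the per-edge identity $e\cdot a+e\cdot a'\equiv e\cdot b\pmod 2$ can fail (for instance, the horizontal edges in the $2\times 1$ square lattice of Example~\ref{ex:R} have $e\cdot b=\pm 1$ but $e\cdot a+e\cdot a'=0$), so the argument genuinely needs the passage to closed cycles. For $R=\det A$ this is automatic, since each nonzero term in the determinant expansion is indexed by a permutation whose cycles are closed walks in $\G$; for $S=\det K$ the terms are matchings, not cycles, and your device of comparing two matchings is exactly the right fix.
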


A bipartite structure on~$\G\subset\K$ lifts to a bipartite structure~$\widetilde{V}=\widetilde{B}\sqcup\widetilde{W}$ on~$\widetilde{\G}\subset\T^2$. As above, we then have a factorization
\begin{equation}
\label{equ:factorPQ}
P(z,w)=Q(z,w)Q(z^{-1},w^{-1})\in\R[z^{-1},w^{-1}]\,,
\end{equation}
where~$Q(z,w)$ is the {\em bipartite characteristic polynomial\/} of~$\widetilde{\G}\subset\T^2$, defined as the determinant of corresponding bipartite Kasteleyn matrix~$\widetilde{A}_{\circ\bullet}(z,w)$.

\begin{remark}
\label{rem:Q}
The polynomial~$Q$ is uniquely defined from~$\G\subset\K$ and the curves~$a,a',b$, up to a global sign depending
on the ordering of the vertex set~$\widetilde{V}$.
Moreover, deforming the curves~$a,a'$ and~$b$ leads to the transformations~$Q(z,w)\mapsto Q(z,-w)$
and~$Q(z,w)\mapsto z^{\pm 1}Q(z,w)$.
\end{remark}

In this bipartite case, the proof of Proposition~\ref{prop:P} extends to give the following statement.

\begin{proposition}
\label{prop:Q}
The polynomial~$Q(z,w)\in\R[z^{\pm 1},w^{\pm 1}]$ satisfies the equalities
\begin{enumerate}[(i)]
\item{$Q(z,w)=Q(z,w^{-1})\in\R[z^{\pm 1},w^{\pm 1}]$, and}
\item{$Q(z,w)=\pm S(z^{1/2},w)S(-z^{1/2},w)\in\C[z^{\pm 1/2}]$ for~$w=\pm 1$.}\qed
\end{enumerate}
\end{proposition}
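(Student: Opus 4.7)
The plan is to mirror the argument used for Proposition~\ref{prop:P}, applied to the bipartite block $\widetilde A_{\circ\bullet}(z,w)$ rather than the full Kasteleyn matrix $\widetilde A(z,w)$. Since $Q(z,w)=\det\widetilde A_{\circ\bullet}(z,w)$, the deck-transformation symmetry of $\widetilde\G\to\G$ will restrict to a corresponding symmetry of the bipartite block, and both statements fall out of this restriction.

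Concretely, I would first modify $\widetilde A_{\circ\bullet}(z,w)$ by the two operations used in the proof of Proposition~\ref{prop:P}: (a) replace each monomial $z^{e\cdot\tilde b}$ by $z^{(e\cdot\tilde b-e\cdot\tilde b')/2}$, where $\tilde b,\tilde b'\subset\T^2$ denote the two lifts of $b\subset\K$; and (b) multiply by $i$ every row indexed by an upper-half white vertex and every column indexed by an upper-half black vertex. Both operations change the determinant only by a constant factor independent of $z,w$, and the resulting matrix $\widetilde A'_{\circ\bullet}(z,w)$ satisfies
\[
f_W\,\widetilde A'_{\circ\bullet}(z,w)\,f_B=\widetilde A'_{\circ\bullet}(z,w^{-1})\,,
\]
where $f_W$ and $f_B$ are the deck involutions of $\C^{\widetilde W}$ and $\C^{\widetilde B}$. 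This is the bipartite avatar of the identity $f\,\widetilde A'(z,w)\,f=\widetilde A'(z,w^{-1})$ used in the proof of Proposition~\ref{prop:P}.

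For (i), I would take determinants in this relation. Since $f_W$ and $f_B$ are products of $|W|$ and $|B|$ disjoint transpositions respectively, we have $(\det f_W)(\det f_B)=(-1)^{|W|+|B|}=(-1)^{|V|}=+1$, and hence $Q(z,w)=Q(z,w^{-1})$. Reality of $Q$ is immediate from the definition: the entries of $\widetilde A_{\circ\bullet}$ are real linear combinations of the monomials $z^{e\cdot\tilde b}w^{e\cdot\tilde a}$. For (ii), I would fix $w=\pm 1$, in which case the relation becomes $f_W\widetilde A'_{\circ\bullet}(z,w)f_B=\widetilde A'_{\circ\bullet}(z,w)$, so that $\widetilde A'_{\circ\bullet}$ commutes with the pair $(f_W,f_B)$. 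Following the standard diagonalisation argument of~\cite[Theorem~3.3]{KOS} already invoked for Proposition~\ref{prop:P}~(ii), I would split $\C^{\widetilde W}$ and $\C^{\widetilde B}$ into the $\pm 1$ eigenspaces of $f_W,f_B$, identify each eigenspace with $\C^W$ or $\C^B$, and check that in the resulting basis $\widetilde A'_{\circ\bullet}(z,w)$ becomes block-diagonal with blocks $A_{\circ\bullet}(z^{1/2},w)$ and $A_{\circ\bullet}(-z^{1/2},w)$, where $A_{\circ\bullet}$ denotes the bipartite block of the Kleinian Kasteleyn matrix (whose determinant is $S$ by~\eqref{equ:factorRS}). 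Taking determinants yields $Q(z,w)=\pm S(z^{1/2},w)\,S(-z^{1/2},w)$.

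The main obstacle will be the bookkeeping of signs: the rescaling in step (b) contributes a factor of $i^{|V|}$, and the Kleinian Kasteleyn matrix $A$ carries additional $i^{e\cdot a+e\cdot a'}$ factors absent from the toric matrix $\widetilde A$, so the factors $A_{\circ\bullet}(\pm z^{1/2},w)$ arising in the decomposition have to be matched carefully with the definition of $S$. Since (ii) only asks for equality up to a global sign, this sign bookkeeping does not obstruct the argument; one can also fix it a posteriori by specialising at $z=1$ and comparing with Proposition~\ref{prop:P}~(iii) together with Proposition~\ref{prop:S}~(ii).
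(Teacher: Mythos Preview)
Your proposal is correct and follows essentially the same approach as the paper, which simply states that ``the proof of Proposition~\ref{prop:P} extends'' to the bipartite block and marks the statement with a \qed. You have in fact been more explicit than the paper about the sign bookkeeping (the factor $i^{|V|}$ from step~(b), the determinant $(-1)^{|W|+|B|}$ of the deck involutions, and the matching of the diagonal blocks with the Kleinian $A_{\circ\bullet}$), all of which the paper leaves implicit.
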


By Remark~\ref{rem:S} and Proposition~\ref{prop:S}~(i), one can choose the curves~$a,a'$ so
that
\begin{equation}
\label{equ:S}
S(-z,w)=\overline{S(z,w)}\in\C[z^{\pm 1},w]\,.
\end{equation}
Moreover, by Remark~\ref{rem:Q} and
Proposition~\ref{prop:Q}~(ii), one can order the vertex set~$\widetilde{V}$ so that the equality
\begin{equation}
\label{equ:QS}
Q(z,w)=S(z^{1/2},w)S(-z^{1/2},w)\in\C[z^{\pm 1/2}]
\end{equation}
holds for~$w=\pm 1$. As a consequence, we also have the equality~$Q(z,w)=|S(z^{1/2},w)|^2$
for~$w=\pm 1$. From now, one we will assume these normalizations of~$Q$ and~$S$.

\subsection{The bipartite case: roots of~$S$}
\label{sub:S}

This section contains the first technical results of this article. They play a crucial role in our proof of Theorem~\ref{thmintro:as-bip} and of Theorem~\ref{thmintro:Ising}.

For a bipartite toric graph~$\widetilde{\G}\subset\T^2$, there is a natural action of~$\R^2$ on the set of edge weights (the \emph{magnetic field coordinates} of~\cite[Section~2.3.3]{KOS}). In the case of a bipartite graph~$\G\subset\K$, there is an analogous natural action of~$\R$ on edge weights, defined as follows: for~$B\in\R$ and~$\nu=(\nu_e)_{e\in E}\in[0,\infty)^E$, set
\[
(B\bullet\nu)_e=\exp((e\cdot b)B)\nu_e\,,
\]
where~$e\cdot b\in\{-1,0,1\}$ denotes the intersection number in~$\K$ of the edge~$e$, oriented from the white to the black vertex, with the oriented curve~$b$ (recall Figure~\ref{fig:Klein}). Writing~$S_B$ for the bipartite characteristic polynomial of the weighted graph~$(\G,B\bullet\nu)\subset\K$, one easily checks the equality
\[
S_B(z,w)=S(\exp(B)z,w)\in\C[z^{\pm 1},w]\,.
\]
Similarly, one obtains the equality
\[
Q_B(z,w)=Q(\exp(2B)z,w)\in\R[z^{\pm 1},w^{\pm 1}]\,.
\]

We recall a couple of concepts from~\cite{KOS}. The {\em Newton polygon\/} of~$Q(z,w)=\sum_{(i,j)\in\Z^2}a_{ij}z^iw^j$ is defined as the convex hull of the set~$\{(i,j)\in\Z^2\,|\,a_{ij}\neq 0\}\subset\R^2$. We shall say that~$\G\subset\K$ (and~$\widetilde{\G}\subset\T^2$) are {\em non-degenerate\/} if the Newton polygon
of the corresponding characteristic polynomial~$Q$ has positive area. For a non-degenerate bipartite toric graph, the associated {\em spectral curve\/}
\[
\mathcal{C}=\{(z,w)\in (\C^*)^2\,|\,Q(z,w)=0\}
\]
is extremely well understood thanks to the work of Kenyon, Okounkov and Sheffield~\cite{KOS,K-O}. In a nutshell, it belongs to a special class of curves known as {\em Harnack curves\/}~\cite{Mik}, for which the
map~$(\C^*)^2\to\R^2$ defined by~$(z,w)\mapsto(\log|z|,\log|w|)$ is at most two-to-one~\cite{M-R}.
The image of~$\mathcal{C}$ via this map is called the {\em amoeba\/} of~$Q$~\cite{GKZ}, and is denoted by~$\mathbb{A}(Q)$.

We now use these tools to study the zeros of the characteristic polynomials~$Q$ and~$S$
associated with a non-degenerate bipartite graph~$\G\subset\K$.

\begin{proposition}
\label{prop:Q(-1)}
If~$\G\subset\K$ is a non-degenerate bipartite graph and~$(z,w)\in S^1\times S^1$ belongs to the spectral curve~$Q(z,w)=0$, then we have~$z=-1$.
\end{proposition}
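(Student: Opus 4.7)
The plan is to combine three structural facts about $Q$: the Klein symmetry $Q(z,w)=Q(z,w^{-1})$ from Proposition~\ref{prop:P}(i), the reality of its coefficients, and the Harnack property of its spectral curve provided by~\cite{KOS,K-O}, which guarantees that the amoeba map $\{Q=0\}\to \mathbb{A}(Q)$ is at most $2$-to-$1$.

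Given a zero $(z_0,w_0)\in S^1\times S^1$ of $Q$, complex conjugation yields a zero at $(\overline{z_0},\overline{w_0})$, the Klein involution one at $(z_0,\overline{w_0})$, and composing the two a fourth zero at $(\overline{z_0},w_0)$. All four points project to $(0,0)\in\mathbb{A}(Q)$, so the Harnack bound forces at least two of them to coincide. A short case check shows this is equivalent to $z_0\in\{\pm 1\}$ or $w_0\in\{\pm 1\}$.

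Next, I treat the case $w_0=\pm 1$. Proposition~\ref{prop:Q}(ii) factorizes $Q(z,w_0)=S(z^{1/2},w_0)\,S(-z^{1/2},w_0)$, and the normalization~\eqref{equ:S} specializes on $|\zeta|=1$ to $S(-\zeta,w_0)=\overline{S(\zeta^{-1},w_0)}$. Hence a zero of $Q$ on the unit torus with $w_0$ real corresponds to a zero $\zeta_0$ of $S(\cdot,w_0)$ on the unit circle, paired by the involution $\zeta\mapsto -\zeta^{-1}$ with a second zero. The squares $\zeta_0^2$ and $(-\zeta_0^{-1})^2=\zeta_0^{-2}$ recover $z_0$ and $\overline{z_0}$, so the Harnack bound of two zeros on the torus is saturated by this pair alone. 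The plan is to push this bookkeeping further, showing that the only self-paired options $\zeta_0=-\zeta_0^{-1}$ (i.e. $\zeta_0=\pm i$) are compatible with the remaining constraints, yielding $z_0=\zeta_0^2=-1$.

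It remains to exclude $z_0=1$. Here the positivity $P\geq 0$ on $S^1\times S^1$ (Proposition~\ref{prop:Ppos}) combined with the Klein-symmetric identity $P(1,w)=Q(1,w)\,Q(1,w^{-1})=Q(1,w)^2$ forces $Q(1,\cdot)$ to be real-valued and non-negative on $|w|=1$. The delicate task is to conclude the strict inequality $Q(1,w)>0$ in the non-degenerate case, so that $(1,w)$ is never a zero. This is the main obstacle, and is where the specific form of the Kasteleyn orientation $\widetilde K$ inherited from the Klein bottle construction (condition~(i) of Section~\ref{sub:gen}) is expected to enter: the defining upper-half inversion implements an antiperiodic twist in the $z$-direction, shifting any critical vanishing of $P$ from $z=1$ (as would be the case for a standard torus orientation) to $z=-1$. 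The cleanest route to a rigorous proof of this step appears to go through the amoeba $\mathbb{A}(Q)$, exploiting its $(X,Y)\mapsto(X,-Y)$ symmetry together with its Harnack shape to locate the intersection with the $Y$-axis to the left of the origin, giving $z=-1$ as the only possible location of zeros on the unit torus.
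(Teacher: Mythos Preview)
Your opening move---the four-fold orbit under complex conjugation and the Klein involution, together with the Harnack two-point bound---matches the paper exactly and correctly reduces the problem to $z_0\in\{\pm 1\}$ or $w_0\in\{\pm 1\}$. But from there your argument does not close.

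For the case $w_0=\pm 1$, your $S$-bookkeeping only produces the pair of zeros $(z_0,w_0)$ and $(\overline{z_0},w_0)$, and two zeros on the unit torus is precisely what Harnack allows. Nothing in this forces $\zeta_0=\pm i$; you are hoping the ``self-paired'' case is the only one compatible, but you never produce the extra constraint that would rule out a genuine conjugate pair with $z_0\notin\mathbb{R}$. Similarly, for excluding $z_0=1$, the identity $P(1,w)=Q(1,w)^2$ gives no sign information, and you acknowledge that strict positivity is the obstacle without resolving it. Your closing speculation about amoeba symmetry $(X,Y)\mapsto(X,-Y)$ is correct as a symmetry statement but does not by itself pin down where the origin sits relative to $\mathbb{A}(Q)$.

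The paper's proof supplies the missing idea: it shows that a zero at $(1,\pm 1)$ would be a \emph{singular} point of the spectral curve. Indeed, the Klein symmetry $Q(z,w)=Q(z,w^{-1})$ forces $\partial_wQ(z,\pm 1)=0$ for all $z$, and if $Q(1,w_0)=0$ with $w_0=\pm 1$, then~\eqref{equ:QS} and~\eqref{equ:S} give $Q(1,w_0)=|S(1,w_0)|^2$, so $S(\pm 1,w_0)$ both vanish and hence $\partial_zQ(1,w_0)=0$ as well. But boundary points of the amoeba of a Harnack curve are smooth (see~\cite{M-R}), so whenever the horizontal axis first meets $\partial\mathbb{A}(Q)$ the corresponding real zero must be $(-1,\pm 1)$, not $(1,\pm 1)$. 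The paper then runs a continuity argument in the magnetic-field parameter $B$ (i.e.\ slides along the horizontal axis through the amoeba) to show that the $z$-coordinate of the unit-torus zero stays pinned at $-1$ throughout the interior: if it did not, one of $z_B$ or $w_B$ would have to be constant, and constant $w_B$ would force $Q(\cdot,w_0)\equiv 0$, impossible for a Harnack curve. This singularity-plus-deformation argument is the substantive content you are missing.
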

\begin{proof}
First note that, using the symmetry~$Q(z,w)=Q(z,w^{-1})$ of Proposition~\ref{prop:Q} together with the fact that~$Q$ has real coefficients, the elements~$(z,w)\in S^1\times S^1$ of the spectral curve come in groups of four:~$0=Q(z,w)=Q(z,\overline{w})=Q(\overline{z},w)=Q(\overline{z},\overline{w})$.
Since a Harnack curve intersects the unit torus in at most two points, we must have~$z=\pm 1$ or~$w=\pm 1$.

Next, observe that the symmetry~$Q(z,w)=Q(z,w^{-1})$ immediately implies that~$\partial_wQ(z,w)$ vanishes for all~$z\in\C^*$ and~$w=\pm 1$. Also, if~$Q(1,w)=0$ for some fixed~$w=\pm 1$, then Equations~\eqref{equ:S} and~\eqref{equ:QS} imply that both~$S(1,w)$ and~$S(-1,w)$ vanish. Using Equation~\eqref{equ:QS} again, it follows
that~$\partial_zQ(1,w)=0$. Therefore, if~$Q(1,w)=0$ for some fixed~$w=\pm 1$, then we have~$\partial_zQ(1,w)=\partial_wQ(1,w)=0$; in other words, this is a singularity of the spectral curve.

Having established these two facts, let us analyse the intersection of the unit torus with the spectral curve~$Q_B(z,w)=0$, as~$B$ varies in~$\R$; the aim is to check that any element~$(z,w)$ in this intersection satisfies~$z=-1$ (the case~$B=0$ giving the proposition). This amounts to analysing the intersection of the amoeba of~$Q$ along the horizontal axis. If~$B$ lies outside the amoeba, then the intersection is empty and the statement holds trivially. If~$B$ lies on the boundary of the amoeba, then the intersection consists in a single real point that is not a singularity (see e.g. the last sentence of~\cite[Theorem~1]{M-R}). By the second fact above, it is of the form~$(-1,\pm 1)$. Finally, as~$B$ travels inside the amoeba from one boundary point~$B_0$ to another boundary point~$B_1>B_0$, we have~$Q_B(z_B,w_B)=Q_B(\overline{z_B},\overline{w_B})=0$ for some~$(z_B,w_B)\in S^1\times S^1$ varying continuously,
and satisfying the following conditions:
\begin{itemize}
\item{$z_{B_0}=z_{B_1}=-1$, and~$w_{B_0}=\pm 1$,~$w_{B_1}=\pm 1$};
\item{$(z_B,w_B)\neq(\overline{z_B},\overline{w_B})$ for~$B_0<B<B_1$ (except possibly at isolated real nodes);}
\item{$z_B=\pm 1$ or~$w_B=\pm 1$ for all~$B\in[B_0,B_1]$ (by the first fact above).}
\end{itemize}
By continuity of~$B\mapsto(z_B,w_B)$, we either have~$z_B$ constant (equal to~$z_{B_0}=-1$, and we are done), or~$w_B$ constant (equal to some~$w_0=\pm 1$) for all~$B\in[B_0,B_1]$. In the later case, we have~$0=Q_B(z_B,w_0)=Q(\exp(2B)z,w_0)$, so the polynomial map~$z\mapsto Q(z,w_0)$ vanishes on the arc~$\{\exp(2B)z_B\,|\,B\in[B_0,B_1]\}$. This implies that this polynomial is zero, which is impossible for a Harnack curve. This concludes the proof.
\end{proof}

We use a detailed study of the amoeba of the spectral curve to show the following result.

\begin{proposition}
\label{prop:rootS}
All the roots of~$S(z,1)$ and~$S(z,-1)$ are purely imaginary, and simple.
\end{proposition}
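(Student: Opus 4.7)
The proposition splits into a ``purely imaginary'' part and a ``simplicity'' part, which I would handle in turn.

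For the \textbf{purely imaginary} assertion, given a root $z_0\in\C^*$ of $S(z,1)$, I would set $B_0=\log|z_0|$ so that $z_0/e^{B_0}\in S^1$. By construction of the magnetic-field action, $S_{B_0}(z,1)=S(e^{B_0}z,1)$ vanishes at $z_0/e^{B_0}$, so by the factorization $Q_{B_0}(z,1)=S_{B_0}(z^{1/2},1)S_{B_0}(-z^{1/2},1)$ of Proposition~\ref{prop:Q}~(ii), the point $(z_0^2/e^{2B_0},1)\in S^1\times S^1$ is a zero of $Q_{B_0}$. The deformed graph $(\G,B_0\bullet\nu)$ is still non-degenerate (its Newton polygon is unchanged), so Proposition~\ref{prop:Q(-1)} applies and forces $z_0^2/e^{2B_0}=-1$, i.e., $z_0\in i\R^*$. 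The argument for $S(z,-1)$ is identical.

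For \textbf{simplicity}, suppose for contradiction that $S(z,1)$ has a root of multiplicity $\geq 2$ at some $z_0=it_0$ with $t_0\in\R^*$. Applying the magnetic field $B_0=\log|t_0|$ as above, we may assume $t_0=\pm 1$, so that $z_0=\pm i$ and $S(\pm i,1)=\partial_zS(\pm i,1)=0$. Expanding the identity $Q(z,1)=S(z^{1/2},1)S(-z^{1/2},1)$ around $z=-1$ and using that $S(\pm i,1)$ is real (an immediate consequence of Equation~\eqref{equ:S} evaluated at $\zeta=\pm i$), one checks that $Q(z,1)$ has a zero of order at least $2$ in $z$ at $(-1,1)$. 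Combined with the automatic identities $\partial_wQ(-1,1)=0$ and $\partial_z\partial_wQ(-1,1)=0$ coming from $Q(z,w)=Q(z,w^{-1})$, this yields a singular point of the spectral curve $\mathcal{C}=\{Q=0\}$ at $(-1,1)$ whose Hessian is diagonal in the $(z,w)$-coordinates.

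The contradiction then comes from the \textbf{Harnack structure}: by Kenyon-Okounkov-Sheffield, $\mathcal{C}$ is a Harnack curve, and any singular point on $S^1\times S^1$ must be a positive real node, i.e., a transverse crossing of two smooth real branches, which in turn forces the Hessian to be indefinite. The structure forced by a double root of $S$ either degenerates (when both $\pm i$ are simultaneously roots of $S(z,1)$, making $\partial_z^2Q(-1,1)=0$) or yields a diagonal Hessian of definite sign, in either case contradicting the positive-node condition. The argument for $S(z,-1)$ is analogous. The \textbf{main obstacle} is the sign analysis in this final step, namely showing that the product $\partial_z^2Q(-1,1)\cdot\partial_w^2Q(-1,1)$, expressible in terms of the values and derivatives of $S$ at $\pm i$, has the wrong sign for a positive node. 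A cleaner alternative I would pursue in parallel is a continuity argument: deform the weights to a configuration where $\mathcal{C}$ is smooth, whence the roots of $T(u):=S(iu,1)\in\R[u^{\pm 1}]$ are automatically simple real numbers, and track the real roots of $T$ back along the deformation, using Harnack-ness to forbid collisions away from the controlled boundary of the amoeba.
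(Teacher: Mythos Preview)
Your argument for the \emph{purely imaginary} part is correct and coincides with the paper's.

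The \emph{simplicity} part, however, has a genuine gap that your own ``main obstacle'' remark already hints at, and that the paper explicitly flags: \emph{such double roots cannot be excluded using the factorization $Q(z,w)=S(z^{1/2},w)S(-z^{1/2},w)$ alone.} Concretely, after your reduction you are comparing two scenarios at the node $(-1,1)$ of $Q$:
\begin{itemize}
\item[(A)] $S(\cdot,1)$ has a double root at $i$ and no root at $-i$;
\item[(B)] $S(\cdot,1)$ has simple roots at both $i$ and $-i$.
\end{itemize}
Both scenarios force $Q(-1,1)=\partial_zQ(-1,1)=\partial_wQ(-1,1)=\partial_z\partial_wQ(-1,1)=0$, hence a node of $Q$ with \emph{diagonal} Hessian. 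For a real node on a Harnack curve the Hessian is indefinite, i.e.\ $\partial_z^2Q(-1,1)$ and $\partial_w^2Q(-1,1)$ have opposite signs. But this is exactly what \emph{does} happen in scenario~(B), which genuinely occurs (e.g.\ the square lattice of Example~\ref{ex:sqbip}). To rule out~(A) via your route you would need to show that in~(A) the product $\partial_z^2Q(-1,1)\cdot\partial_w^2Q(-1,1)$ is nonnegative. However, $\partial_w^2Q(-1,1)$ is \emph{not determined by $S(z,\pm 1)$}: the polynomials $S(z,1),S(z,-1)$ encode $Q$ only on the slices $w=\pm 1$, and carry no information about second-order behaviour in $w$. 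So the local Hessian picture simply cannot separate~(A) from~(B).

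Your alternative continuity sketch points in the right direction but is too vague as stated: you need a mechanism that prevents two real roots of $T(u)=S(iu,1)$ from colliding as you move through the space of Harnack curves, and ``forbid collisions away from the boundary of the amoeba'' is not such a mechanism, since at a genuine node of $Q$ collisions on the unit circle are precisely what is at stake. The paper's proof supplies the missing global input: it identifies each root of $S(z,\pm 1)$ with an intersection of the horizontal axis with $\partial\mathbb{A}(Q)$, uses a specific one-parameter deformation ($\nu\mapsto\nu^{(t)}$, multiplying by~$t$ the weights of edges crossing~$a$) to show that the roots of $S(z,1)$ and $S(z,-1)$ are \emph{interlaced} along the imaginary axis, and finally rules out the only remaining non-interlaced configuration by a further degeneration to the square lattice. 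None of this is recoverable from the local node analysis.
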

\begin{proof}
Let us fix~$w_0=\pm 1$ and~$z\in\C^*$ such that~$S(z,w_0)=0$. For~$B\in\R$ such that~$\exp(B)=|z|$, we have~$S_B(\exp(-B)z,w_0)=S(z,w_0)=0$ with~$(\exp(-B)z,w_0)$ in the unit torus. By Equation~\eqref{equ:QS}, we have that~$Q_B(\exp(-2B)z^2,w_0)$ vanishes as well, so by Proposition~\ref{prop:Q(-1)}, we must have~$\exp(-2B)z^2=-1$. This implies that~$z$ is purely imaginary.

It remains to show that these roots are simple.
Since~$Q_B(z,w)$ has at most (real) nodes on the unit torus, Equation~\eqref{equ:QS} and the argument above imply that the roots of~$S(z,w_0)$ have order at most~$2$,
with possible double roots corresponding to nodes of~$Q_B$. More precisely, a node~$(-1,w_0)$ of~$Q_B$ 
either corresponds to two conjugate simple roots~$\exp(B)i$ and~$-\exp(B)i$ of~$S(z,w_0)$,
or to a single double root of~$S(z,w_0)$ at~$z_0=\pm\exp(B)i$.
Unfortunately, such double roots cannot be excluded using Equation~\eqref{equ:QS} alone,
so we will use a careful analysis of the amoeba~$\mathbb{A}(Q)$ of~$Q$ to rule them out.
When perturbing the edge weights, such a node of~$Q_B$ would yield an oval in the 
boundary of~$\mathbb{A}(Q)$, meeting the horizontal axis in two points close to~$B$
corresponding to two simple roots of~$S(z,w_0)$ close to~$z_0$. We now show that this cannot happen,
as each oval of~$\partial\mathbb{A}(Q)$ meeting the horizontal axis
in two points yields two (simple) roots of~$S(z,w_0)$ that are located on opposite sides of the imaginary axis.

To show this claim, let us consider~$Q$ without any node (they can be deformed into ovals),
and write
\[
B'_0<B_1<B_1'<B_2<B_2'<\dots<B_{n-1}<B_{n-1}'<B_n
\]
for the coordinates of the intersection points of the horizontal axis with~$\partial\mathbb{A}(Q)$,
as illustrated in Figure~\ref{fig:amoeba} (left).
Each~$B_\ell$ (resp.~$B_\ell'$) corresponds to a simple root~$z_\ell$ (resp.~$z_\ell'$)
of~$S_{B_\ell}(z,w_\ell)$ (resp.~$S_{B_\ell}(z,w_\ell')$) for some~$w_\ell,w_\ell'=\pm 1$.
Let us first study these~$w_\ell,w_\ell'$ before turning to the claim above.
Due to the particular configuration of ovals in Harnack curves, each pair of points~$(B_\ell,B'_\ell)$
is linked by an oval of~$\partial\mathbb{A}(Q)$ for~$1\le\ell\le n$. Hence, by continuity of the (real) zeros of~$Q_B$
corresponding to these ovals, we have~$w_\ell=w_\ell'$ for all~$1\le\ell\le n$.
Moreover, when moving from~$B'_\ell$ to~$B_{\ell+1}$ in the interior of~$\mathbb{A}(Q)$,
the corresponding
roots of~$Q_B$ in the unit torus are of the form~$(-1,w_B)\neq(-1,\overline{w_B})$ with~$w_B$ moving
along the unit circle from~$w_\ell$ to~$w_{\ell+1}$.
The amoeba map of a Harnack curve being at most two-to-one, we necessarily 
have~$w'_\ell\neq w_{\ell+1}$ for all~$0\le\ell\le n$. Assuming without loss of generality that~$w'_0=1$,
we now have~$w_\ell=w'_\ell=(-1)^\ell$ for all~$0<\ell< n$ and~$w_n=(-1)^n$, thus completing the determination of these variables.
(This is illustrated by black and white dots in Figure~\ref{fig:amoeba}.)

\begin{figure}[tb]
\labellist\small\hair 2.5pt
\pinlabel {$B$} at 470 232
\pinlabel {\scriptsize $B'_0$} at 125 205
\pinlabel {\scriptsize $B_1$} at 160 205
\pinlabel {\scriptsize $B'_1$} at 220 205
\pinlabel {\scriptsize $B_2$} at 255 205
\pinlabel {\scriptsize $B'_2$} at 317 205
\pinlabel {\scriptsize $B_3$} at 350 205
\pinlabel {$z'_0$} at 675 250
\pinlabel {$z_1$} at 675 278
\pinlabel {$z'_1$} at 675 150
\pinlabel {$z_2$} at 675 115
\pinlabel {$z'_2$} at 675 375
\pinlabel {$z_3$} at 675 400
\endlabellist
\centering
\includegraphics[width=0.7\textwidth]{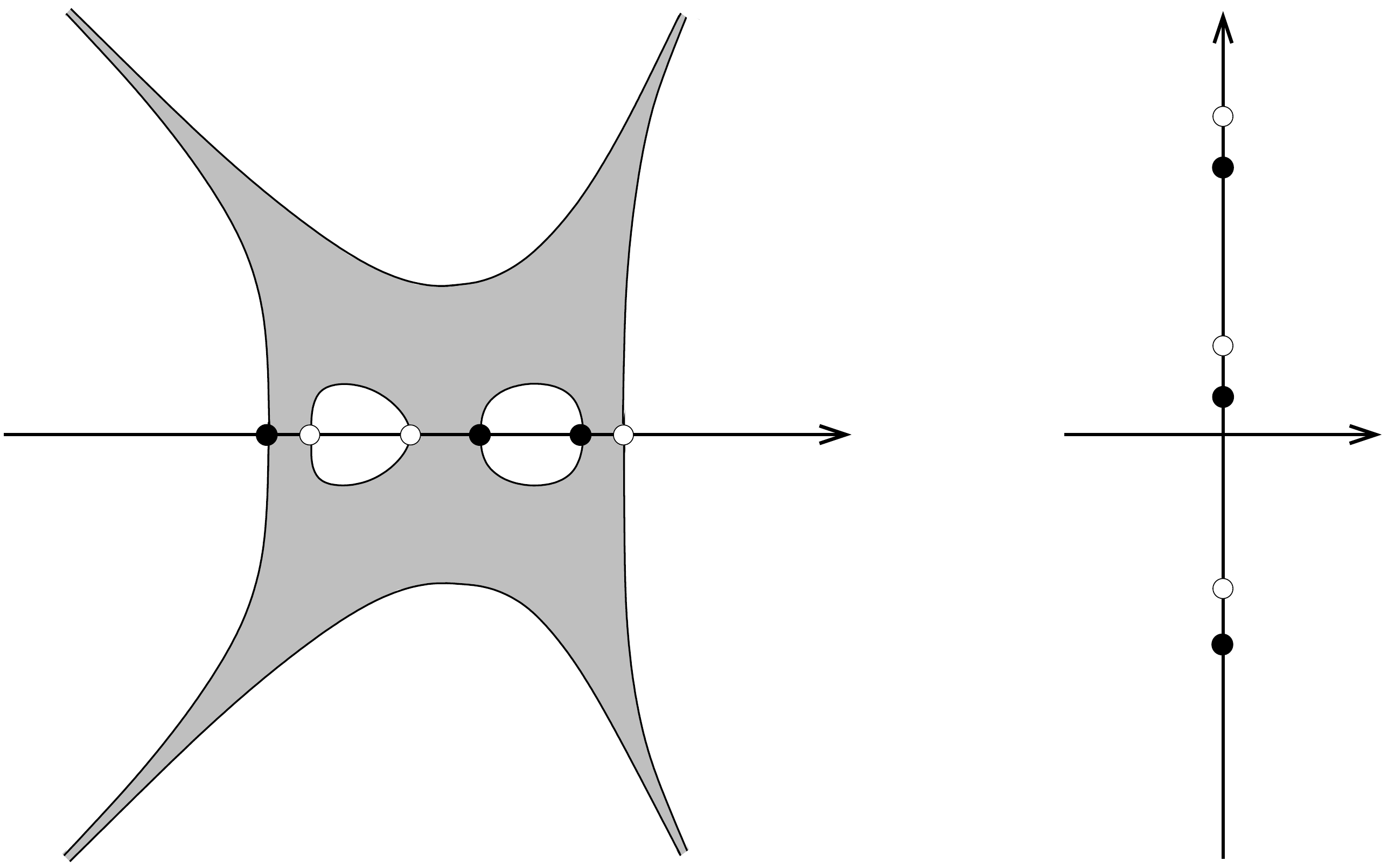}
\caption{Left: the intersection of a (schematized) amoeba with the horizontal axis, here with~$n=3$.
Right: the corresponding interlaced roots of~$S(z,1)$ and~$S(z,-1)$, respectively
represented by black and white dots.}
\label{fig:amoeba}
\end{figure}

Let us turn to the roots~$z_\ell,z'_\ell$, and to our claim:
writing~$s_\ell$ and~$s_\ell'$ for the sign of~$iz_\ell$ and~$iz_\ell'$, respectively,
we wish to show that~$s_\ell$ and~$s_\ell'$ never coincide for~$1\le\ell\le n-1$.
To do so, let us consider the following~$1$-parameter deformation of the model. For~$t\in\R$,
let~$\nu^{(t)}$ denote the edge weights on~$\G\subset\K$ obtained via multiplication of~$\nu_e$ by~$t$
each time the edge~$e$ meets the curve~$a\subset\K$ (recall Figure~\ref{fig:Klein}).
Writing~$Q^{(t)}$ for the corresponding characteristic polynomial, we clearly have~$Q^{(1)}(z,w)=Q(z,w)=Q^{(-1)}(z,-w)$,
while~$Q^{(0)}$ only depends on~$z$. Hence, the amoebas of~$Q$ and~$Q^{(-1)}$ coincide, while the (degenerate)
one of~$Q^{(0)}$ consists in~$n$ vertical lines.
More precisely, when~$t$ decreases from~$1$ to~$0$, the topology of~$\partial\mathbb{A}(Q^{(t)})$ is unchanged, but the ovals become wider, the
points~$B'_\ell$ and~$B_{\ell+1}$ grow closer, eventually meeting at one of these vertical lines for~$t=0$.
As~$t$ decreases further from~$0$ to~$-1$, a symmetrical deformation is observed, ending up with the same
amoeba~$\mathbb{A}(Q^{(-1)})=\mathbb{A}(Q)$, but with the roles of~$w$ and~$-w$ exchanged.
(Note that this is consistent with the determination of~$w_\ell,w_\ell'$ above.)
The key observation is that during this continuous deformation,
the roots~$z'_\ell$ of~$S(z,(-1)^\ell)$ and~$z_{\ell+1}$ 
of~$S(z,(-1)^{\ell+1})$ are exchanged without additional collisions between any of the roots~$z_0',z_1,z_1',\dots,z_{n-1}',z_n$.
Since the norm of these roots are ordered as the corresponding~$B_\ell$ and~$B_\ell'$,
we are in one of the following two cases:
\begin{enumerate}[(i)]
\item{either~$s'_0=s_1=s_1'=\dots s_n'=s_0$, and the roots are ordered as their norms;}
\item{or~$s'_0=s_1\neq s_1'=  \dots = s_{n-1}\neq s_{n-1}'=s_n$, and the roots
of~$S(z,1)$ and~$S(z,-1)$ alternate along the imaginary axis (see Figure~\ref{fig:amoeba}, right).}
\end{enumerate}
The second case yields the claim, so we are left with ruling out the first one for~$n\ge 2$.
By means of contradiction, let us consider a non-degenerate bipartite graph~$\G\subset\K$ realising case~(i)
with~$n\ge 2$. Then, it necessarily contains two paths winding in the horizontal direction (since~$n\ge 2$), and one winding in the
vertical direction (since it is non-degenerate). Sending the weights of the edges not contained
in these paths to~$0$ and shrinking all degree~$2$ vertices of these paths (see e.g.~\cite{G-K}) leads to
the bipartite square lattice of Example~\ref{ex:R}, which is easily seen to display a root configuration as in case~(ii) above, with~$n=2$.
Therefore, this transformation produces a continuous deformation of the roots of~$S(z,1)$ and~$S(z,-1)$ from case~(i) with~$n\ge 2$ to case~(ii)
with~$n=2$, which is impossible to realise while staying in one of the two allowed families of configurations.
This concludes the proof.
\end{proof}

Note that the proof above yields the additional remarkable fact that the real Laurent
polynomials~$S(iz,1)$ and~$S(iz,-1)$ are {\em interlaced\/}, i.e. have only real roots which alternate
along the real line (see e.g.~\cite{interlace}).

The final result of this section requires the following notations.
Since~$S(-z,1)=\overline{S(z,1)}$, the leading coefficient of~$S(z,1)$ has argument~$\lambda\frac{\pi}{2}$ for some~$\lambda\in\Z$. Let us denote by~$r$ the total number of roots of~$S(z,1)$ with modulus~$>1$, counted with multiplicities, and set~$A:=\lambda+r$. We shall write~$A':=\lambda'+r'$ for the corresponding integers
associated with the polynomial~$S(z,-1)$.

\begin{lemma}
\label{lemma:S}
\begin{enumerate}
\item{If~$Q(z,w)$ has no zeros in the unit torus, then~$A+A'$ is even.}
\item{If~$Q(z,w)$ has two distinct zeros in the unit torus, then~$A$ is odd and~$A'$ is even.}
\item{If~$Q(z,w)$ has a node in the unit torus, then~$A$ is odd and~$A'$ is even.}
\end{enumerate}
\end{lemma}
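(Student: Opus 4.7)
The strategy is to track the parities of $A(B)$ and $A'(B)$ associated with the one-parameter family of weights $B\bullet\nu$, $B\in\R$, and to reduce to a base case outside the amoeba. First, since $S_B(z,1)=S(e^Bz,1)$ has leading coefficient $c_b e^{Bb}$ with $e^{Bb}$ positive real, the arguments $\lambda\pi/2$ and $\lambda'\pi/2$ are independent of $B$. Second, $r(B)$ and $r'(B)$ are piecewise constant and change only when $e^B$ equals the modulus of a root $\alpha_k$ of $S(z,\pm 1)$; by Proposition~\ref{prop:rootS}, $\alpha_k$ is purely imaginary, so this condition is $S_B(\pm i,\pm 1)=0$, equivalently $Q_B(-1,\pm 1)=0$.

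Using the amoeba analysis from the proof of Proposition~\ref{prop:rootS}, these events are of two types: (a)~a simple boundary crossing of $\mathbb{A}(Q)$ on the horizontal axis, where $Q_B$ acquires a simple real zero $(-1,w_*)$ with $w_*=\pm 1$ and a single root of $S_B(z,w_*)$ crosses the unit circle, so the corresponding $r^{(w_*)}$ changes by $\pm 1$; and (b)~a real node interior to $\mathbb{A}(Q)$, where $Q_B$ has a double zero $(-1,w_*)$ with both $\pm i$ being roots of $S_B(z,w_*)$, so $r^{(w_*)}$ changes by $\pm 2$. Consequently, the parity of $A(B)+A'(B)=\lambda+\lambda'+r(B)+r'(B)$ flips across each boundary event and is preserved across each real node.

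The base case is $B\gg 0$, where $(2B,0)\notin\mathbb{A}(Q)$ (Case~(1)) and $r(B)=r'(B)=0$, giving $A+A'=\lambda+\lambda'$. Writing $S(z,w)=S_0(z)+S_1(z)w$, the Laurent polynomials $S(z,\pm 1)=S_0(z)\pm S_1(z)$ generically share the same highest $z$-degree $b$ (the residual case where the leading coefficients of $S_0$ and $S_1$ cancel can be handled via the factorization~\eqref{equ:QS}), so that $\lambda\equiv\lambda'\equiv b\pmod 2$ using the normalization~\eqref{equ:S} (which forces leading coefficients to be real for even $b$ and imaginary for odd $b$). Hence $\lambda+\lambda'\equiv 0\pmod 2$, settling Case~(1).

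For Cases~(2) and~(3), the given weights (at $B=0$) and the base case $B\gg 0$ are connected by a path crossing an odd number of boundary events of $\mathbb{A}(Q)$ (namely those needed to enter the amoeba component containing, or bounded by, $(0,0)$), so $A+A'\equiv 1\pmod 2$. To pinpoint which of $A$ or $A'$ is odd, I would invoke the interlacing pattern $w_0'=1$, $w_\ell=w_\ell'=(-1)^\ell$, $w_n=(-1)^n$ of the $w_*$-labels along $B_0'<B_1<B_1'<\cdots<B_n$ from the proof of Proposition~\ref{prop:rootS}, and count how many boundary crossings with $w_*=1$ (affecting $r$) versus $w_*=-1$ (affecting $r'$) are encountered as $B$ decreases from $+\infty$ to $0$; a careful bookkeeping then shows that in both Case~(2) and Case~(3), $A$ is odd and $A'$ is even. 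The main obstacle I expect is this final bookkeeping step, as it requires pinning down the index $n$ and the precise location of $(0,0)$ relative to the amoeba boundary points in terms of the intrinsic structure of the spectral curve of $Q$.
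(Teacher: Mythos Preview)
Your strategy---deform along the magnetic field $B$, track parity changes of $A,A'$ at amoeba boundary crossings, anchor at $B\gg 0$---is exactly the paper's, and you correctly obtain $\lambda\equiv\lambda'\pmod 2$ generically and $A+A'$ odd in Cases~(2)--(3).

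The gap you flag is the crux, and the interlacing pattern from the proof of Proposition~\ref{prop:rootS} will not close it: that pattern was stated \emph{up to the convention} $w_0'=1$, so it only records the alternation of the $w$-labels along the horizontal axis, not their actual values. To decide which of $A,A'$ is odd one must identify the $w$-label of the \emph{first} crossing as $B$ decreases from $+\infty$, and this requires combinatorial input beyond the amoeba picture. The paper supplies it by expanding $Q(z,w)$ over dimer configurations of $\widetilde\G$ (via \cite[Proposition~3.1]{KOS}): writing $h$ for the maximal horizontal height, the degree-$h$ coefficient of $Q(z,(-1)^{h+1})$ is the positive sum $Z_{h0}+Z_{h1}$, while that of $Q(z,(-1)^h)$ is the signed sum $Z_{h0}-Z_{h1}$. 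Hence as weights degenerate, it is the largest root of $S(z,(-1)^h)$ that escapes to infinity, identifying the first crossing as $w_*=(-1)^h$. Since the base case gives $A\equiv A'\equiv h$, this first crossing yields $A$ odd and $A'$ even regardless of the parity of $h$; subsequent oval crossings come in pairs through the \emph{same} $w_*$ (by the oval topology of Harnack curves), and nodes change neither parity, so the conclusion persists. The same analysis also handles the non-generic case you defer parenthetically: there the ray $[B_0,\infty)$ lies inside a horizontal tentacle of $\mathbb{A}(Q)$, so $B\gg 0$ is already in Case~(2), and the root at infinity accounts for the required parity shift.
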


\begin{proof}
The strategy of the proof is once again to fix~$\G$ and study the intersection of the unit torus with the spectral
curve~$Q_B(z,w)=0$ as~$B$ varies in~$\R$, i.e. to analyse the amoeba~$\mathbb{A}(Q)$ of~$Q$ along the horizontal
axis. More precisely, we shall start by checking that the statement holds for~$B$ big enough, and then show that it
remains true as~$B$ decreases. The crucial idea is the following one: the three cases in the statement of the
lemma correspond to three possible locations of~$B\in\R$, and passing from one to another corresponds to crossing
the boundary or a node of~$\mathbb{A}(Q)$, i.e. a real root~$(z_0,w_0)$ of~$Q_B$, with~$z_0=-1$ by
Proposition~\ref{prop:Q(-1)}. By Equation~\eqref{equ:QS} and Proposition~\ref{prop:rootS},
this corresponds to one or two simple roots of~$S_B(z,w_0)$ exiting the unit disc 
along the imaginary axis, which in turn corresponds
to a change in the integer~$A$ if~$w_0=1$, and in the integer~$A'$ if~$w_0=-1$.
Observe also that varying continuously the weights~$\nu\in(0,\infty)^E$
amounts to continuously deforming the amoeba~$\mathbb{A}(Q)$ without changing its topology, as the amoeba of Harnack
curve is severely constrained by the corresponding Newton polygon (see~\cite{Mik}, and more detail below).
In particular, this will not change the order in which we meet the roots of~$S_B(z,1)$ and~$S_B(z,-1)$
as~$B$ decreases; in other words, this will not change the order of the moduli of the roots of~$S(z,1)$
and~$S(z,-1)$.

Before implementing this idea in detail, let us start by studying the parity of the integers~$\lambda,\lambda'$.
Since~$S(-z,1)=\overline{S(z,1)}$, the leading coefficient of~$S(z,1)$ has argument~$\lambda\frac{\pi}{2}$
with~$\lambda$ of the same parity as the top-degree~$d$ of~$S(z,1)$. Similarly, the integer~$\lambda'$
has the same parity as the top-degree~$d'$ of~$S(z,-1)$.
By Equation~\eqref{equ:QS}, the square of the leading coefficient of~$S(z,1)$ is equal to the leading
coefficient of~$Q(z,1)$, which is the degree~$d$ coefficient, and similarly for~$S(z,-1)$.
To describe these coefficients explicitely, let us fix 
a reference matching~$M_0$ on~$\widetilde{\G}$, assuming without loss of generality that it is disjoint
from~$\tilde{a}$ and~$\tilde{b}$. By~\cite[Proposition~3.1]{KOS}, we have
\[
Q(z,w)=\sum_{M\in\mathcal{M}(\widetilde{G})}(-1)^{h_xh_y+h_y}\nu(M)z^{h_x}w^{h_y}\,,
\]
where~$(h_x,h_y)\in\Z^2$ denote the coordinates of~$[M-M_0]\in H_1(\mathbb{T}^2;\Z)=\Z\tilde{a}\oplus\Z\tilde{b}$.
(This formula is valid for a specific Kasteleyn orientation on~$\widetilde{\G}\subset\mathbb{T}^2$, but
one can check that our conventions for~$\widetilde{K}$ are coherent with this choice.)
Let us denote by~$h$ the maximal value of~$h_x$ over all~$M\in\mathcal{M}(\widetilde{G})$ and
write~$Z_{h0}$ (resp.~$Z_{h1}$) for the contribution to the partition function of~$\widetilde{\G}$ of the
matchings with~$h_x=h$ and~$h_y$ even (resp. odd).
By the equality displayed above, we have~$d\le h$, and the degree~$h$ coefficient of~$Q(z,1)$ is equal
to~$Z_{h0}-Z_{h1}$ if~$h$ is even and to~$Z_{h0}+Z_{h1}\neq 0$ if~$h$ is odd.
Similarly, the top degree~$d'$ of~$Q(z,-1)$ is at most~$h$, and its degree~$h$ coefficient is equal
to~$Z_{h0}+Z_{h1}\neq 0$ if~$h$ is even and to~$Z_{h0}-Z_{h1}$ if~$h$ is odd.
Let us first assume that~$h$ is even. In such a case, we have that~$d'=h$ is even, while~$d$ is equal to~$h$
(and therefore even as well) unless the weights~$\nu$ satisfy the equality~$Z_{h0}=Z_{h1}$, i.e.
\begin{equation}
\label{equ:deg}
\sum_{M\in\{\mathcal{M}(\widetilde{\G})\,|\,h_x=h,\;\text{$h_y$ even}\}}\!\!\nu(M)=\sum_{M\in\{\mathcal{M}(\widetilde{\G})\,|\,h_x=h,\;\text{$h_y$ odd}\}}\!\!\nu(M)\,.
\end{equation}
Therefore, we have that~$\lambda\equiv d$ and~$\lambda'\equiv d'$ are both even if the
weights~$\nu$ are \emph{generic}, in the sense
that they do not satisfy equality~\eqref{equ:deg} above. The case of~$h$ odd is similar, leading
to~$\lambda\equiv d\equiv h$ always odd, and~$\lambda'\equiv d'\equiv h$ odd as well for generic weights.

We now investigate the geometric meaning of Equation~\eqref{equ:deg}. Consider a path in the space of
generic weights ending in non-generic weights. By the discussion above, this corresponds to the degree~$h$
coefficient of~$Q(z,(-1)^h)$ (or equivalently, of~$S(z,(-1)^h)$) tending to zero, with all other coefficients of~$S(z,1)$ and~$S(z,-1)$ bounded away from zero.
This results in the modulus of the biggest root of~$S(z,(-1)^h)$ tending to infinity and all other roots of~$S(z,1)$ and~$S(z,-1)$ having bounded modulus.
By the observation at the beginning of the proof,
this implies that the biggest root of~$S(z,1)$ and~$S(z,-1)$ belongs to~$S(z,(-1)^h)$. In other words,
as~$B$ decreases within generic weights, the first time we hit the boundary of~$\mathbb{A}(Q)$ corresponds to
a root of~$S(z,(-1)^h)$.
Now, recall that for Harnack curves, the Newton polygon~$\Delta(Q)$ of~$Q$ allows to describe the associated
amoeba~$\mathbb{A}(Q)$ as follows: each interval between two adjacent points in~$\Z^2\cap\partial\Delta(Q)$
produces one tentacle of~$\mathbb{A}(Q)$ with asymptotic direction orthogonal to this interval.
The horizontal axis is generically not contained in one of these tentacles; the only way for this to happen is if
two adjacent horizontal tentacles from either sides of this axis merge to give a single tentacle, thus
sending the right-most intersection of~$\partial\mathbb{A}(Q)$ with the horizontal axis to infinity.
By the discussion above, this corresponds to the weights~$\nu$ varying so that the modulus of the biggest root
of~$S(z,(-1)^h)$ tends to infinity, i.e. to the non-generic case defined by Equation~\eqref{equ:deg}.
In summary, the non-genericity condition defined by Equation~\eqref{equ:deg} corresponds
precisely to some ray~$[B_0,\infty)$ of the horizontal axis being contained in~$\mathbb{A}(Q)$.  

We are finally ready to start the actual proof of the statement.
Let us first consider the case of~$B$ big enough on the horizontal axis, with generic weights.
By the discussion above, we are outside~$\mathbb{A}(Q)$, and therefore in case~(1).
Since~$B$ is big, we also have~$r=r'=0$, as all the roots of~$S_B(z,1)$ and~$S_B(z,-1)$ have
modulus~$<1$. Therefore, in the generic case for~$B$ big enough, we are in case~(1) and have~$A\equiv A'\equiv h$,
so the statement holds.
Let us now turn to the non-generic case for~$B$ big enough. This time, we are inside a horizontal tentacle
of~$\mathbb{A}(Q)$, and therefore in case~(2). As discussed above, such a case can be obtained as a limit of
generic weights, with the leading coefficient of~$S_B(z,(-1)^h)$ tending to zero and its biggest root
tending to infinity: this corresponds to changing the parity of the corresponding integer~$A$ or~$A'$, which
results in~$A$ odd and~$A'$ even, as claimed. In any case, we see that the lemma holds for~$B$ big enough.

We now study the behavior of~$A$ and~$A'$ as we decrease~$B$ along the horizontal axis. 
As explained earlier, the only way for~$A$ or~$A'$ to change is if (simple) roots of~$S_B(z,1)$ or~$S_B(z,-1)$
cross the values~$\pm i$ along the imaginary axis. This corresponds to~$Q_B(-1,1)$ or~$Q_B(-1,-1)$ vanishing,
i.e. to~$B$ crossing the boundary or a real node of~$\mathbb{A}(Q)$. Therefore, we are left with the proof
that the statement of the lemma is coherent with such phase transitions.
Starting with~$B$ big enough (in the generic case), we have~$A\equiv A'\equiv h$. Let us decrease~$B$ until
we first cross the boundary of~$\mathbb{A}(Q)$, thus transitioning from case~(1) to case~(2).
As discussed above, this corresponds to the biggest root of~$S(z,(-1)^h)$ exiting the unit disc,
leading to~$A$ odd and~$A'$ even. Continuing to decrease~$B$, we might cross once again~$\partial\mathbb{A}(Q)$,
thus exiting the amoeba, but perhaps through an oval this time (thus entering a \emph{gazeous phase}).
This results in a simple root of~$S(z,w_0)$ exiting the unit disc, for some~$w_0=\pm 1$,
and therefore a change in the parity of~$A+A'$.
(We know from the proof of Proposition~\ref{prop:rootS} that~$w_0=(-1)^{h+1}$, but this is not needed here.)
It corresponds to transitioning from case~(2) back to case~(1),
and we indeed have~$A+A'$ even once again. Note however that, due to the particular topology of oval arrangements
in Harnack curves, the next time we hit~$\partial\mathbb{A}(Q)$ must be through the same oval; therefore,
this corresponds to another simple root of~$S(z,w_0)$ exiting the unit disc, for the same~$w_0=\pm 1$ as before.
We thus return to~$A$ odd and~$A'$ even, which is once again consistent with the claimed statement.
The last possible phase transition is when we cross a real node~$(-1,w_0)$ inside~$\mathbb{A}(Q)$,
which by Equation~\eqref{equ:QS} and Proposition~\ref{prop:rootS} corresponds
to two simple roots of~$S(z,w_0)$ exiting the unit disc, and to a transition from case~(2) to case~(3).
The parity of the integers~$A$ and~$A'$ is obviously unchanged, concluding the proof.
\end{proof}


\section{Enlarging the fundamental domain}
\label{sec:enlarge}

The aim of this section is to show how the dimer partition function of a periodic weighted graph~$\G_{mn}\subset\K$ of arbitrary size can be computed from the characteristic polynomials of the original weighted graph~$\G_{11}=\G\subset\K$, see Theorem~\ref{thm:Zmn}. Via the Pfaffian formula (Proposition~\ref{prop:Pf}), this can be achieved if we understand how the Kleinian characteristic polynomial~$R_{mn}$ of~$\G_{mn}$ can be expressed in terms of~$R$ and~$P$. The answer is given in Theorem~\ref{thm:Rmn}, which is the main technical achievement of this section.

It is organised as follows. In Section~\ref{sub:CK}, we state a recent result of Kassel and the author~\cite{C-K}, probably folklore, which plays a crucial role in this discussion. In Section~\ref{sub:cover}, we state Theorem~\ref{thm:Rmn} and show how it implies Theorem~\ref{thm:Zmn}. The proof of Theorem~\ref{thm:Rmn} is contained in Sections~\ref{sub:period} to~\ref{sub:proof}.

\subsection{Covering spaces and twisted operators}
\label{sub:CK}

One of the main technical tools used in this article is a result due to Adrien Kassel and the author~\cite{C-K}, but probably known to the experts. The aim of the present section is to succinctly explain a special case of this result adapted to our context.

As in Section~\ref{sub:gen}, let us fix a connected graph~$\G$ together with edge weights~$\nu$ and an orientation~$K$. The embedding of~$\G$ in the Klein bottle~$\K$ endowed with the curves~$a,a'$ provides an additional structure: the map~$\omega\colon E\to\Z/2\Z$ given by~$\omega(e)=e\cdot a+e\cdot a'$. (Technically, this is a~$1$-cocycle representing the first Stiefel-Whitney class of~$\K$.) Finally, let us fix a base vertex~$v_0\in V$ and a finite-dimensional complex linear representation~$\rho\colon\pi_1(\G,v_0)\to\operatorname{GL}(W)$.

It is not difficult to show that any such homomorphism~$\rho$ can be represented by a {\em connection\/}, i.e. a family~$\Phi=(\varphi_e)_{e\in\mathbb{E}}\in\operatorname{GL}(W)^{\mathbb{E}}$ indexed by the set~$\mathbb{E}$ of oriented edges of~$\G$, such that~$\varphi_{\overline{e}}=\varphi_e^{-1}$ if~$e,\overline{e}$ denote the same edge with opposite orientations. This means that for each loop~$\gamma$ in~$\G$ based at~$v_0$, the composition of the corresponding automorphisms~$\varphi_e$ is equal to~$\rho(\gamma)$.
Using this data and in the spirit of~\cite{Ken}, one can define an associated {\em twisted Kasteleyn operator\/}~$A^\rho=A^\rho(\G,\nu,K,\omega)$ acting on the set~$W^V$ of~$W$-valued functions on~$V$ as follows: for~$f\in W^V$ and~$v\in V$, set
\[
(A^\rho f)(v)=\sum_{e=(v,v')}\varepsilon_{vv'}^K(e)\,i^{\omega(e)}\,\nu_e\,\varphi_e(f(v'))\,,
\]
where the notations are as in Section~\ref{sub:Klein}.

\begin{remarks}
\label{rem:twist}
\begin{enumerate}[(i)]
\item A fixed homomorphism~$\rho$ can be representated by various connections. However, one can show that any two such connections are {\em gauge equivalent\/}. This implies that the corresponding twisted Kasteleyn operators are conjugated by an automorphism of~$W^V$, and justifies the abuse of notation.
\item If~$\rho_1,\rho_2$ are two representations, then the operators~$A^{\rho_1\oplus\rho_2}$ and~$A^{\rho_1}\oplus A^{\rho_2}$ are clearly conjugated by an automorphism of~$W^V$.
\end{enumerate}
\end{remarks}

As a natural class of examples, consider the homomorphisms~$\rho\colon\pi_1(\G,v_0)\to\operatorname{GL}(W)$ given by the irreducible representations that factor through the inclusion induced homomorphism~$\pi_1(\G,v_0)\to\pi_1(\K,v_0)$ and the abelianization~$\pi_1(\K,v_0)\to H_1(\K;\Z)\simeq\Z[a]\oplus\Z/2\Z[b]$. Being abelian and irreducible, such a representation is~$1$-dimensional and fully determined by the image~$z\in\C^*$ of~$[a]$ and~$w\in\{\pm 1\}$ of~$[b]$. The resulting twisted Kasteleyn operator is nothing but~$A(z,w)$, as defined in Section~\ref{sub:Klein}.

The main technical novelty of our approach is that, in order to understand the dimer model on (bigger and bigger) Klein bottles, one needs to consider not only these operators, but the ones twisted by~$2$-dimensional representations as well.

To see this, let us consider a covering map~$p\colon\widehat{\G}\to\G$ with~$\widehat{\G}$ a finite connected graph. The additional data~$\nu,K,\omega$ on~$\G$ lifts uniquely to~$\widehat{\nu},\widehat{K},\widehat{\omega}$ on~$\widehat{\G}$; therefore, any representation~$\rho\colon\pi_1(\widehat{\G},\hat{v}_0)\to\operatorname{GL}(W)$ allows us to define~$\widehat{A}^\rho:=A^\rho(\widehat{\G},\widehat{\nu},\widehat{K},\widehat{\omega})$ as above.
Note that~$p$ being a covering map, it induces an injection~$p_*\colon\pi_1(\widehat{\G},\hat{v}_0)\hookrightarrow\pi_1(\G,v_0)$ on fundamental groups (see e.g.~\cite[Chapter~1]{Hat}). Hence, one can identify~$\pi_1(\widehat{\G},\hat{v}_0)$ with a subgroup of~$\pi_1(\G,v_0)$.  Finally, recall that given any linear representation~$\rho\colon H\to\operatorname{GL}(W)$ of a subgroup~$H$ of a group~$G$ (for example, of~$\pi_1(\widehat{\G},\hat{v}_0)<\pi_1(\G,v_0)$), there is an {\em induced representation\/}~$\rho^\#\colon G\to\operatorname{GL}(Z)$, well-defined up to isomorphism (see~\cite[Section~3.3]{Serre} and Section~\ref{sub:repr} below).

The following statement is a special case of the main theorem of~\cite{C-K}.

\begin{theorem}
\label{thm:C-K}
There is an isomorphism~$W^{\widehat{V}}\to Z^V$ that conjugates~$\widehat{A}^\rho$ and~$A^{\rho^\#}$.
\end{theorem}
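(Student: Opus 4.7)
The plan is to realise both twisted operators $\widehat A^\rho$ and $A^{\rho^\#}$ as restrictions of a single untwisted operator on the universal cover $\pi\colon \widetilde\G \to \G$, and then to apply the standard Shapiro/Frobenius identification between $H$-equivariant $W$-valued functions and $G$-equivariant $Z$-valued functions on $\widetilde V$. Concretely, the data $(\nu, K, \omega)$ on $\G$ admits a canonical deck-invariant lift to $\widetilde\G$, yielding a well-defined untwisted Kasteleyn operator $\widetilde A$ on $\C^{\widetilde V}$ whose coefficients are $G$-invariant. By the gauge-equivalence statement of Remark~\ref{rem:twist}~(i), one can choose a connection representing $\rho$ for which the identification $W^V \cong \operatorname{Map}(\widetilde V, W)^{G,\rho}$ (the subspace of $\rho$-equivariant maps) intertwines $A^\rho$ with the restriction of $\widetilde A \otimes \id_W$. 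Applied verbatim to the cover $p\colon \widehat\G \to \G$, the same recipe identifies $W^{\widehat V}$ with $\operatorname{Map}(\widetilde V, W)^{H,\rho}$, where $H := p_* \pi_1(\widehat\G, \hat v_0) < G := \pi_1(\G, v_0)$, sending $\widehat A^\rho$ to the restriction of $\widetilde A \otimes \id_W$; and applied to $\G$ with the induced representation $\rho^\#$ on $Z$, it identifies $Z^V$ with $\operatorname{Map}(\widetilde V, Z)^{G,\rho^\#}$, sending $A^{\rho^\#}$ to the restriction of $\widetilde A \otimes \id_Z$.

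With these identifications in place, the theorem reduces to a purely representation-theoretic statement. Fixing coset representatives $g_1 = 1, g_2, \ldots, g_d$ for $G/H$, with $d = [G:H]$, I define the Shapiro map
\[
\Psi\colon \operatorname{Map}(\widetilde V, W)^{H,\rho} \to \operatorname{Map}(\widetilde V, Z)^{G,\rho^\#}, \qquad (\Psi \hat f)(\tilde v) = \sum_{i=1}^d g_i \otimes \hat f(g_i^{-1} \tilde v).
\]
Writing $g g_j = g_{\sigma(j)} h_j$ with $\sigma \in \operatorname{Sym}_d$ and $h_j \in H$, a direct substitution shows $(\Psi \hat f)(g \tilde v) = \sum_j g_{\sigma(j)} \otimes \rho(h_j)\hat f(g_j^{-1}\tilde v) = \rho^\#(g)(\Psi \hat f)(\tilde v)$, so $\Psi$ is well-defined; it is bijective because a $G$-equivariant $Z$-valued function is determined by its $g_1 \otimes W$-component on a fundamental domain for $H$, which in turn recovers the original $H$-equivariant $W$-valued function on $\widetilde V$. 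Finally, $\Psi$ commutes with $\widetilde A$: since the coefficients of $\widetilde A$ are $G$-invariant, the reindexing $\tilde e \mapsto g_i \cdot \tilde e$ identifies the edges out of $g_i^{-1} \tilde v$ with the edges out of $\tilde v$, so both $(\Psi \circ \widetilde A) \hat f(\tilde v)$ and $(\widetilde A \circ \Psi) \hat f(\tilde v)$ produce the same sum over edges out of $\tilde v$, decomposed diagonally along $Z = \bigoplus_i g_i \otimes W$.

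The main obstacle I anticipate is not conceptual but administrative: I must verify that the passage from the connection formalism of Section~\ref{sub:CK} to the equivariant-function picture on $\widetilde V$ is faithful to the sign twists $\varepsilon^K$ and $i^\omega$. The sign $\varepsilon^K(e)$ compares the chosen orientation $K$ with a source--target orientation of $e$, and $\omega(e) \in \Z/2\Z$ is a cocycle; both pull back tautologically under $p$ and $\pi$, so the combined weights $\varepsilon^K_{vv'}(e)\,i^{\omega(e)}\,\nu_e$ reappear identically on every lift of $e$, which is precisely why $\widetilde A$ is $G$-invariant. The remaining bookkeeping step is the explicit production of compatible connections for $\rho$ on $\widehat\G$ and for $\rho^\#$ on $\G$: I would fix a spanning tree $T$ of $\G$, trivialize both connections along $T$ and its preimage $p^{-1}(T) \subset \widehat\G$ (which consists of $d$ disjoint trees permuted by the deck group), and read off the non-tree holonomies from the resulting loops; by construction these realise exactly the matrix entries of $\rho^\#$ in the basis $\{g_i \otimes W\}_{i=1}^d$, closing the argument.
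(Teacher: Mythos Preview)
Your argument is essentially correct and is the standard route to this result: lift everything to the universal cover, use that the untwisted operator there has deck-invariant coefficients, and then invoke the Shapiro/Frobenius identification between $H$-equivariant $W$-valued functions and $G$-equivariant $Z$-valued functions. The equivariance check for $\Psi$ is right, the dimension count gives bijectivity, and the intertwining with $\widetilde A$ follows immediately from $G$-invariance of the lifted coefficients, exactly as you say.

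There is no proof to compare against in this paper: the theorem is stated here but proved in the companion note~\cite{C-K}, which the author describes as giving ``a detailed proof together with applications.'' Your outline is almost certainly the same skeleton as theirs, since this is really the only natural argument. Two small points worth tidying before you write it up. First, a notational clash: in this paper $\widetilde\G$ already denotes the orientation 2-cover in the torus, not the universal cover, so pick a different symbol. Second, your appeal to Remark~\ref{rem:twist}~(i) is slightly backwards: the point is not that some special gauge makes the identification work, but rather that \emph{any} connection $\Phi$ representing $\rho$ determines an explicit isomorphism $W^V \to \operatorname{Map}(\text{univ.\ cover}, W)^{G,\rho}$ (via parallel transport of values from a fixed basepoint lift), and this isomorphism intertwines $A^\rho$ with the restriction of the lifted operator for that particular $\Phi$; gauge equivalence then guarantees the conjugacy class of $A^\rho$ is independent of the choice. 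Your final paragraph with the spanning-tree trivialization is the concrete way to see this, and it is what actually does the work.
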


To illustrate this result, consider the simpler case of a toric graph~$\G\subset\T^2$ and the associated (real-valued) Kasteleyn matrix. Let~$\widehat{\G}=\G_{mn}$ denote the lift of~$\G$ by the~$m\times n$ cover~$\T_{mn}^2\to\T^2$ of the torus by itself. This covering being normal, the trivial representation of~$\pi_1(\G_{mn})$ is easily seen to induce the representation of~$\pi_1(\G)$ given by the composition
\[
\pi_1(\G)\stackrel{i_*}{\longrightarrow}\pi_1(\T^2)\stackrel{\mathit{pr}}{\longrightarrow}\pi_1(\T^2)/\pi_1(\T_{mn}^2)\simeq\mathit{Gal}(\widehat{\G}/\G)\stackrel{\rho_{\text{reg}}}{\longrightarrow}\operatorname{GL}(Z)\,,
\]
where~$i_*$ denotes the inclusion induced homomorphism,~$\mathit{pr}$ the canonical projection, and~$\rho_{\text{reg}}$ the {\em regular representation\/} of the Galois group~$\mathit{Gal}(\widehat{\G}/\G)\simeq\Z/m\Z\times\Z/n\Z$ of this covering. For such a finite group, this regular representation is known to split as the direct sum of all irreducible representations of~$\mathit{Gal}(\widehat{\G}/\G)$ (see~\cite[Section~2.4]{Serre}). In our case, this group being abelian, all the irreducible representations are~$1$-dimensional so~$\rho_{\text{reg}}$ splits as
\[
\rho_{\text{reg}}=\bigoplus_{z^n=1}\bigoplus_{w^m=1}\rho(z,w)\,,
\]
with~$\rho(z,w)$ mapping a fixed generator of~$\Z/m\Z$ (resp.~$\Z/n\Z$) to~$w\in\C^*$ (resp.~$z\in\C^*$). Using the version of Theorem~\ref{thm:C-K} adapted to this context together with Remark~\ref{rem:twist}~(ii) above, we obtain the following fact: the (untwisted) Kasteleyn operator associated with~$\G_{mn}\subset\T^2$ is conjugate to the direct sum of the Kasteleyn operators associated with~$\G\subset\T^2$ twisted by~$\rho(z,w)$, the product being over all~$z,w\in\C^*$ such that~$z^n=1$ and~$w^m=1$. Taking the determinant, and writing~$P_{mn}$ for the characteristic polynomial of~$\G_{mn}\subset\T^2$, we get
\[
P_{mn}(1,1)=\prod_{z^n=1}\prod_{w^m=1}P(z,w)\,,
\]
which is nothing but a special case of Equation~\eqref{equ:Pmn}.

The section of~\cite{KOS} containing this latter statement is entitled ``enlarging the fundamental domain''. We borrowed this title for the present section, whose aim is to perform the same action, no longer on tori, but on Klein bottles.

\subsection{Covering the Klein bottle by itself}
\label{sub:cover}

As in the introduction, let us fix a weighted graph~$\G\subset\K$, two positive integers~$m$ and~$n$ with~$n$ odd, and denote by~$\G_{mn}\subset\K_{mn}$ the lift of~$\G\subset\K$ by the~$m\times n$ cover~$\K_{mn}\to\K$ of the Klein bottle by itself (recall Figure~\ref{fig:Gmn}). The edge weights on~$\G$ lift to edge weights on~$\G_{mn}$, so one can consider the associated Kleinian and toric characteristic polynomials~$R_{mn}(z,w)$ and~$P_{mn}(z,w)$, as explained in Section~\ref{sec:char}.

The main result of this section is the expression of~$R_{mn}$ in terms of~$R_{11}=R$ and~$P_{11}=P$,
as follows.

\begin{theorem}
\label{thm:Rmn}
For any positive integers~$m,n$ with~$n$ odd, we have
\[
R_{mn}(1,1)=
\begin{cases}
\prod_{z^n=1}\left(R(z,1)\prod_{1\le k\le m-1,\,k\text{ even}}P(z,\zeta^{k})\right)&\mbox{if~$m$ is odd;} \\
\prod_{z^n=1}\left(R(z,1)\overline{R(z,-1)}\prod_{1\le k\le m-1,\,k\text{ even}}P(z,\zeta^{k})\right)& \mbox{if~$m$ is even,}
\end{cases}
\]
and
\[
R_{mn}(1,-1)=
\begin{cases}
\prod_{z^n=1}\left(\overline{R(z,-1)}\prod_{1\le k\le m-1,\,k\text{ odd}}P(z,\zeta^{k})\right)&\mbox{if~$m$ is odd;} \\
\prod_{z^n=1}\prod_{1\le k\le m-1,\,k\text{ odd}}P(z,\zeta^{k}) & \mbox{if~$m$ is even,}
\end{cases}
\]
where~$\zeta$ stands for~$\exp(\pi i/m)$.
\end{theorem}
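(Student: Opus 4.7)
The plan is to compute $R_{mn}(1,\pm 1)$ by exploiting the cover $\G_{mn}\subset\K_{mn}\to\G\subset\K$ via Theorem~\ref{thm:C-K}. I would first reinterpret the Kasteleyn matrix $A_{mn}(1,\pm 1)$ defining $R_{mn}(1,\pm 1)$ as a Kasteleyn operator on $\G_{mn}$ twisted by a one-dimensional representation $\chi_\pm$ of $\pi_1(\G_{mn})$ that factors through $H_1(\K_{mn};\Z)$, sending the loop class of $b^m$ to $\pm 1$ and all other generators to $1$. Theorem~\ref{thm:C-K} then identifies this operator, up to a canonical conjugation, with the Kasteleyn operator $A^{\chi_\pm^\#}$ on $\G\subset\K$ twisted by the induced representation $\chi_\pm^\#$ of $\pi_1(\K)$. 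Taking determinants, everything reduces to decomposing $\chi_\pm^\#$ into irreducibles and identifying the resulting determinantal factors. Some preparatory work is needed beforehand to verify that the restriction of the Kasteleyn orientation on $\G_{mn}\subset\K_{mn}$ is indeed the pullback of $K$ (possibly after gauge transformations and sign modifications on edges crossing the lift of $b$), which is the role of the ``periodicity'' step mentioned in Section~\ref{sub:period}.

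Next, I would analyse the representation theory of $\pi_1(\K)=\langle a,b\mid aba^{-1}b\rangle$ and its subgroup $H=\langle a^n,b^m\rangle$ of index $mn$ (for $n$ odd). The irreducible complex representations of $\pi_1(\K)$ are either one-dimensional, factoring through $H_1(\K;\Z)=\Z[a]\oplus\Z/2[b]$ and hence of the form $\rho(z,w)$ with $z\in\C^*$ and $w\in\{\pm 1\}$, or two-dimensional of the form $\rho(z,w)$ with $w\in\C^*\setminus\{\pm 1\}$ (under the identification $w\sim w^{-1}$), realized by sending $b\mapsto\mathrm{diag}(w,w^{-1})$ and $a$ to an anti-diagonal matrix of suitable determinant. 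By Frobenius reciprocity, $\chi_\pm^\#$ contains precisely those irreducible representations whose restriction to $H$ contains $\chi_\pm$; a coset enumeration shows these are the representations $\rho(z,w)$ with $z^n=1$ and $w^m=\pm 1$. For $m$ odd both $w=1$ and $w=-1$ are available precisely for one of the two choices of sign in $\chi_\pm$, while for $m$ even both $w=\pm 1$ contribute to the $+$ sign and neither to the $-$ sign; in all cases the two-dimensional contributions come from $w=\zeta^k$ with $\zeta^{km}=\pm 1$ and $k\in\{1,\ldots,m-1\}$ of the parity forced by the sign of $\chi_\pm$.

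With the decomposition in hand, Remark~\ref{rem:twist}(ii) yields $\det A^{\chi_\pm^\#}$ as a product over these summands. The contribution of a one-dimensional factor $\rho(z,w)$ is $R(z,w)$ (or its complex conjugate $\overline{R(z,w)}=R(-z,w)$, depending on the sign by which the representation realizes the generator $a$, via Remark~\ref{rem:R}(iii)); this accounts for the $R(z,1)$, $R(z,-1)$, and $\overline{R(z,-1)}$ factors in the statement. The contribution of a two-dimensional factor $\rho(z,\zeta^k)$ is identified with $P(z,\zeta^k)$ by pulling back to the orientation cover $\widetilde\G\subset\T^2$ and applying Theorem~\ref{thm:C-K} a second time: restricted to $\pi_1(\T^2)<\pi_1(\K)$, the two-dimensional $\rho(z,\zeta^k)$ diagonalizes as $(z,\zeta^k)\oplus(z,\zeta^{-k})$, and Proposition~\ref{prop:P}(ii) together with the symmetry $P(z,w)=P(z,w^{-1})$ of Proposition~\ref{prop:P}(i) shows that the resulting determinant is precisely $P(z,\zeta^k)$. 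A bookkeeping of the admissible $k$ (even in $\{1,\ldots,m-1\}$ when $\chi_\pm=\chi_+$, odd otherwise) matches the case distinctions in the announced formulas.

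The main obstacle is making the first step fully rigorous. The cover $\K_{mn}\to\K$ is not immediate to analyse because, for $n$ odd, the horizontal monodromy inverts $b$, so a naive lift of $K$ need not satisfy the normalization conditions of Section~\ref{sub:gen} on $\G_{mn}\subset\K_{mn}$. Reconciling these requires a careful choice of connection representing $\chi_\pm^\#$ and a parallel adjustment of the intersection cocycle $\omega$, which is precisely why Section~\ref{sub:period} precedes the main identification; this is where the technical novelty of the argument resides. A secondary but essential point is the explicit verification that the two-dimensional induced twists on $\G\subset\K$ pull back to the toric characters used to define $P$ on $\widetilde{\G}\subset\T^2$, which relies on the compatibility of the orientations $K$ and $\widetilde K$ built into Section~\ref{sub:gen}.
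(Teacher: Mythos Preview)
Your proposal follows the same strategy as the paper (Sections~\ref{sub:period}--\ref{sub:proof}): apply Theorem~\ref{thm:C-K} to the cover $\K_{mn}\to\K$, decompose the induced representations (you via Frobenius reciprocity, the paper via explicit character computations in Lemmas~\ref{lemma:irreps}--\ref{lemma:dec} after factoring through the finite quotients $C_m\rtimes C_{2n}$ and $C_{2m}\rtimes C_{2n}$), and identify the one- and two-dimensional determinantal factors with evaluations of $R$ and $P$. The only imprecision is your invocation of Proposition~\ref{prop:P}(ii) for the two-dimensional factors: that result is stated only for $w=\pm 1$, so it does not literally cover $w=\zeta^k\notin\{\pm 1\}$; the identification $\det A^{\rho^{k,\ell}}=P(\zeta_n^\ell,\zeta^k)$ for general $k$ is the content of the paper's Lemma~\ref{lemma:P}, whose proof uses the same $i$-multiplication trick as Proposition~\ref{prop:P} but without the $f$-invariance splitting step (which is what fails for $w\neq w^{-1}$).
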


We postpone the proof of this theorem to Sections~\ref{sub:period} to~\ref{sub:proof}.

Writing~$Z_{mn}$ for the dimer partition function of the weighted graph~$\G_{mn}$, it allows us to prove
Theorem~\ref{thmintro:Z}, that we now recall for the reader's convenience.
Note that since~$P_{mn}$ can be computed in terms of~$P$ via Equation~\eqref{equ:Pmn},
this theorem shows that~$Z_{mn}$ can be expressed using the polynomials~$P$ and~$R$ alone.

\begin{theorem}
\label{thm:Zmn}
For positive integers~$m,n$ with~$n$ odd, we have
\[
Z_{mn}=\left|\sin(\alpha_n/2)\right|P_{mn}(1,1)^{1/4}+\left|\cos(\alpha'_n/2)\right|P_{mn}(1,-1)^{1/4}
\]
if~$m$ is odd, and
\[
Z_{mn}=\left|\sin((\alpha_n-\alpha'_n)/2)\right|P_{mn}(1,1)^{1/4}+P_{mn}(1,-1)^{1/4}
\]
if~$m$ is even, where
\[
\alpha_n=\mathrm{Arg}\Big(\prod_{z^n=1}R(z,1)\Big)\,,\quad\alpha'_n=\mathrm{Arg}\Big(\prod_{z^n=1}R(z,-1)\Big)\,,
\]
and~$P_{mn}(1,\pm 1)^{1/4}$ denotes the non-negative fourth root of~$P_{mn}(1,\pm 1)\ge 0$.
\end{theorem}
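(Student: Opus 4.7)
The plan is to combine the Pfaffian formula (Proposition~\ref{prop:Pf}) with the factorisations of Theorem~\ref{thm:Rmn}, extracting moduli and arguments separately. First, applying Proposition~\ref{prop:Pf} to $\G_{mn}\subset\K_{mn}$ gives
\[
Z_{mn}=\bigl|\mathrm{Im}\,R_{mn}(1,1)^{1/2}\bigr|+\bigl|\mathrm{Re}\,R_{mn}(1,-1)^{1/2}\bigr|,
\]
while Proposition~\ref{prop:P}(iii) applied to the graph $\G_{mn}$ yields $|R_{mn}(1,w)|^2=P_{mn}(1,w)\ge 0$ for $w=\pm 1$. Hence $|R_{mn}(1,w)|^{1/2}=P_{mn}(1,w)^{1/4}$, which accounts for the moduli appearing in the target formulas.

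The key elementary observation is that for any nonzero $X=\rho\,e^{i\theta}$ with $\rho>0$, irrespective of the branch of the square root one has $\bigl|\mathrm{Im}\,X^{1/2}\bigr|=\rho^{1/2}|\sin(\theta/2)|$ and $\bigl|\mathrm{Re}\,X^{1/2}\bigr|=\rho^{1/2}|\cos(\theta/2)|$. Setting $X=R_{mn}(1,\pm 1)$, the problem thus reduces to determining $\mathrm{Arg}\,R_{mn}(1,1)$ and $\mathrm{Arg}\,R_{mn}(1,-1)$.

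For this, I invoke Theorem~\ref{thm:Rmn} together with the crucial fact that whenever $z^n=1$ and $k\in\Z$, the number $P(z,\zeta^k)$ is a non-negative real: indeed $(z,\zeta^k)\in S^1\times S^1$, so Proposition~\ref{prop:Ppos} applies. Consequently, the $P$-factors in the formulas of Theorem~\ref{thm:Rmn} contribute nothing to the arguments, which are then determined entirely by the remaining $R$-factors. For $m$ odd, this gives $\mathrm{Arg}\,R_{mn}(1,1)=\alpha_n$ and $\mathrm{Arg}\,R_{mn}(1,-1)=-\alpha'_n$ (the complex conjugation in the formula for $R_{mn}(1,-1)$ flips the sign). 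For $m$ even, one gets $\mathrm{Arg}\,R_{mn}(1,1)=\alpha_n-\alpha'_n$, while $R_{mn}(1,-1)$ is itself a product of non-negative reals, hence has argument $0$.

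Substituting these arguments back, and using $|\cos(-x)|=|\cos(x)|$ in the $m$-odd case together with $|\cos(0)|=1$ in the $m$-even case, yields the two formulas of the statement. The argument is essentially an assembly step; the one mild worry — degenerate edge weights where $R_{mn}(1,\pm 1)$ might vanish, making the argument ill-defined — is handled by continuity, since both sides of each identity depend continuously on the weights. No genuine obstacle arises beyond bookkeeping, the real content having already been absorbed into Theorem~\ref{thm:Rmn}.
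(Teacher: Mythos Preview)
Your proof is correct and follows essentially the same approach as the paper: apply Proposition~\ref{prop:Pf} to~$\G_{mn}$, then use Theorem~\ref{thm:Rmn} together with Proposition~\ref{prop:Ppos} to isolate the arguments, and Proposition~\ref{prop:P}(iii) to identify the moduli with~$P_{mn}(1,\pm 1)^{1/4}$. Your presentation is slightly more explicit about the polar decomposition $|\mathrm{Im}\,X^{1/2}|=\rho^{1/2}|\sin(\theta/2)|$ and about the degenerate case via continuity, but the substance is identical.
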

\begin{proof}[Proof of Theorem~\ref{thm:Zmn}]
Applying Proposition~\ref{prop:Pf} to~$\G_{mn}$ gives
\[
Z_{mn}=\big|{\rm Im}\big(R_{mn}(1,1)^{1/2}\big)\big|+\big|{\rm Re}\big(R_{mn}(1,-1)^{1/2}\big)\big|\,.
\]
In the case of~$m$ odd, Theorem~\ref{thm:Rmn} and Proposition~\ref{prop:Ppos} yield
\[
Z_{mn}=\left|\sin(\alpha_n/2)\right|\left|R_{mn}(1,1)\right|^{1/2}+\left|\cos(\alpha'_n/2)\right|\left|R_{mn}(1,-1)\right|^{1/2}\,,
\]
with~$\alpha_n,\alpha_n'$ as in the statement above. The final formula follows from
Proposition~\ref{prop:P}~(iii) applied to~$\G_{mn}$. The case of~$m$ even is similar.
\end{proof}

In the case of a bipartite graph,  Theorem~\ref{thm:Zmn} can be reformulated as follows.

\begin{corollary}
\label{cor:Zmn}
For positive integers~$m,n$ with~$n$ odd, we have
\[
Z_{mn}=\left|\sin(\beta_n)\right|\left|Q_{mn}(1,1)\right|^{1/2}+\left|\cos(\beta'_n)\right|\left|Q_{mn}(1,-1)\right|^{1/2}
\]
if~$m$ is odd, and
\[
Z_{mn}=\left|\sin(\beta_n-\beta'_n)\right|\left|Q_{mn}(1,1)\right|^{1/2}+\left|Q_{mn}(1,-1)\right|^{1/2}
\]
if~$m$ is even, where~$\beta_n=\mathrm{Arg}\big(\prod_{z^n=1}S(z,1)\big)$ and~$\beta'_n=\mathrm{Arg}\big(\prod_{z^n=1}S(z,-1)\big)$.
\end{corollary}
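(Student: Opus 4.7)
The strategy is simply to translate Theorem~\ref{thm:Zmn} through the bipartite factorisations~$R=S(z,w)S(z^{-1},w)$, $P=Q(z,w)Q(z^{-1},w^{-1})$, exploiting the fact that the inversion map~$z\mapsto z^{-1}$ is a bijection on the set of~$n$-th roots of unity. So the proof is essentially a two-step bookkeeping, and I do not expect any serious obstacle.

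First I would rewrite the moduli appearing in Theorem~\ref{thm:Zmn}. Applying the factorisation $P_{mn}(z,w)=Q_{mn}(z,w)Q_{mn}(z^{-1},w^{-1})$ at $z=1,w=\pm 1$ gives $P_{mn}(1,\pm 1)=Q_{mn}(1,\pm 1)^{2}$ (since $w^{-1}=w$), and taking non-negative fourth roots yields
\[
P_{mn}(1,\pm 1)^{1/4}=|Q_{mn}(1,\pm 1)|^{1/2}.
\]
This takes care of the modulus factors in the formulae of Theorem~\ref{thm:Zmn} and Corollary~\ref{cor:Zmn}.

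Next I would handle the phase factors. Using $R(z,1)=S(z,1)S(z^{-1},1)$ I write
\[
\prod_{z^n=1} R(z,1)=\Bigl(\prod_{z^n=1}S(z,1)\Bigr)\Bigl(\prod_{z^n=1}S(z^{-1},1)\Bigr),
\]
and since $z\mapsto z^{-1}$ permutes the $n$-th roots of unity, the second product equals the first. Therefore $\prod_{z^n=1}R(z,1)=\bigl(\prod_{z^n=1}S(z,1)\bigr)^{2}$, which gives $\alpha_{n}\equiv 2\beta_{n}\pmod{2\pi}$. The same argument applied to $R(z,-1)=S(z,-1)S(z^{-1},-1)$ gives $\alpha'_{n}\equiv 2\beta'_{n}\pmod{2\pi}$.

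Finally I would combine the two steps. The congruence $\alpha_{n}\equiv 2\beta_{n}\pmod{2\pi}$ implies $\alpha_n/2\equiv\beta_n\pmod{\pi}$, hence
\[
|\sin(\alpha_{n}/2)|=|\sin(\beta_{n})|,\qquad |\cos(\alpha'_{n}/2)|=|\cos(\beta'_{n})|,
\]
and analogously $|\sin((\alpha_{n}-\alpha'_{n})/2)|=|\sin(\beta_{n}-\beta'_{n})|$. Substituting these identities together with $P_{mn}(1,\pm 1)^{1/4}=|Q_{mn}(1,\pm 1)|^{1/2}$ into the two cases (odd~$m$ and even~$m$) of Theorem~\ref{thm:Zmn} yields exactly the two expressions stated in the corollary, completing the proof.
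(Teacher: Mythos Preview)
Your proof is correct and takes essentially the same approach as the paper: the paper likewise uses the factorisation~\eqref{equ:factorPQ} to get~$P_{mn}(1,\pm 1)=|Q_{mn}(1,\pm 1)|^2$, and the factorisation~\eqref{equ:factorRS} together with the bijection~$z\mapsto z^{-1}$ on~$n$-th roots of unity to get~$\prod_{z^n=1}R(z,w)=\bigl(\prod_{z^n=1}S(z,w)\bigr)^2$, before substituting into Theorem~\ref{thm:Zmn}. Your treatment of the modular arithmetic for the phases is slightly more explicit than the paper's, but the argument is the same.
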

\begin{proof}
Equation~\eqref{equ:factorPQ} applied to~$\G_{mn}$ and~$z,w\in S^1$ leads to the equality
\[
P_{mn}(z,w)=Q_{mn}(z,w)Q_{mn}(\overline{z},\overline{w})=|Q_{mn}(z,w)|^2\,.
\]
Furthermore, Equation~\eqref{equ:factorRS} applied to~$\G$ and~$w=\pm 1$ yields
\[
\prod_{z^n=1}R(z,w)=\prod_{z^n=1}S(z,w)\prod_{z^n=1}S(z^{-1},w)=\prod_{z^n=1}S(z,w)^2\,.
\]
Corollary~\ref{cor:Zmn} is an immediate consequence of Theorem~\ref{thm:Zmn} together with the equalities displayed above.
\end{proof}

\begin{example}
\label{ex:Z}
Consider the bipartite square lattice illustrated in Figure~\ref{fig:ex1}, with weights~$x_1=x_2=:x$ and~$y_1=y_2=:y$.
As computed in Examples~\ref{ex:R} and~\ref{ex:P}, we have~$S(z,w)=x(z+z^{-1})+iy(1+w)$ and~$Q(z,w)=x^2(z+z^{-1}+2)+y^2(w+w^{-1}+2)$. Since~$S(z,-1)$ is always real for~$z\in S^1$, Corollary~\ref{cor:Zmn} now takes the simpler form
\[
Z_{mn}=\left|\sin(\beta_n)\right|\left|Q_{mn}(1,1)\right|^{1/2}+\left|Q_{mn}(1,-1)\right|^{1/2}\,.
\]
As a reality check, let us compute~$Z_{21}$ using this formula together with Equation~\eqref{equ:Pmn}. It yields
\[
Z_{21}=\frac{y}{\sqrt{x^2+y^2}}|4(x^2+y^2)4x^2|^{1/2}+|(4x^2+2y^2)(4x^2+2y^2)|^{1/2}=4x^2+4xy+2y^2\,,
\]
which can easily be checked by hand. Note that on this example of a~$4\times 1$ square lattice, Equation~(5) of~\cite{Lu-Wu} yields the incorrect result~$4x^2+2y^2$, an error that propagates to~\cite{IOH} (see Example~\ref{ex:sqbip} below).
\end{example}

\medskip

The rest of this section is devoted to the proof of Theorem~\ref{thm:Rmn}. It is divided into three parts. In Section~\ref{sub:period}, we show how the orientation~$K$ and cocycle~$\omega$ on~$\G_{mn}$ can be made periodic, so that Theorem~\ref{thm:C-K} can be used. In Section~\ref{sub:repr}, we analyse the two induced representations of~$\pi_1(\K)$ and show that they split as direct sums of~$1$- and~$2$-dimensional irreducible representations. Finally, Section~\ref{sub:proof} builds upon the two previous ones to complete the proof of Theorem~\ref{thm:Rmn}.

\subsection{Making the orientation and the cocycle periodic}
\label{sub:period}

Let us start by reformulating the definition of~$R_{mn}(1,\pm 1)$ using the langage of twisted operators introduced in Section~\ref{sub:cover}, in order to apply Theorem~\ref{thm:C-K}.

Let~$\G\subset\K$ be a graph endowed with edge weights~$\nu$, an orientation~$K$ satisfying conditions~(i) and~(ii) from Section~\ref{sub:gen}, and the~$1$-cocycle~$\omega\colon E\to\Z/2\Z$ given by~$\omega(e)=e\cdot(a+a')$. Let~$\rho$ be the trivial representation of~$\pi_1(\G)$, and~$\rho'$ denote the~$1$-dimensional representation determined by the connection~$(\varphi_e)_e$ given by~$\varphi_e=(-1)^{e\cdot a}$. Then, the associated twisted Kasteleyn matrices~$A^\rho=A^\rho(\G,\nu,K,\omega)$ and~$A^{\rho'}$ satisfy~$R(1,1)=\det(A^\rho)$ and~$R(1,-1)=\det(A^{\rho'})$.

If~$\widehat{\G}:=\G_{mn}$ is the graph considered above, then~$\nu$,~$K$ and~$\omega$ lift to edge weights~$\widehat{\nu}=\nu_{mn}$, an orientation~$\widehat{K}$, and a~$1$-cocycle~$\widehat{\omega}$ on~$\G_{mn}\subset\K_{mn}$. Theorem~\ref{thm:C-K} can be applied to compute~$\widehat{A}^\rho=A^\rho(\widehat{\G},\widehat{\nu},\widehat{K},\widehat{\omega})$ and~$\widehat{A}^{\rho'}$. However, in order to ensure that these matrices can be used to compute~$R_{mn}(1,\pm 1)$, we must ensure that~$\widehat{K}$ and~$\widehat{\omega}$ can be transformed to~$K_{mn}$ and~$\omega_{mn}$ satisfying the necessary properties explained in Section~\ref{sub:gen}.

To see this, let us denote by~$a_{mn},a'_{mn}\subset\K_{mn}$ the two parallel cycles generating~$H_1(\K_{mn};\Z)$ as described in Figure~\ref{fig:Klein}, and define~$\omega_{mn}\colon E(\G_{mn})=:E_{mn}\to\Z/2\Z$ by~$\omega_{mn}(e)=e\cdot(a_{mn}+a'_{mn})$. The two cycles~$a,a'\subset\K$ lift to~$2m$ parallel cycles~$\widehat{a},\widehat{a}'\subset\K_{mn}$ such that~$\widehat{\omega}(e)=e\cdot(\widehat{a}+\widehat{a}')$. Obviously, the~$1$-cycles~$a_{mn}+a'_{mn}$ and~$\widehat{a}+\widehat{a}'$ only coincide if~$m=1$. However, they are always homologous in~$H_1(\K_{mn};\Z/2\Z)$; indeed, their difference bounds a surface~$\Sigma$ consisting of~$\lfloor(m-1)/2\rfloor$ cylinders, plus one M\"obius strip if~$m$ is even. The cases~$m=2$ and~$m=3$ are illustrated in Figure~\ref{fig:M}. (Note that~$n$ is irrelevant in this argument.) Therefore, the cocycles~$\omega_{mn}$ and~$\widehat{\omega}$ are cohomologous: they can be obtained from each other by, for each vertex~$v$ in~$\Sigma$, flipping the value of all the edges adjacent to~$v$. At the level of Kasteleyn matrices, this amounts to multiplying by~$-i$ the rows and columns of~$\widehat{A}^{\rho}$ and~$\widehat{A}^{\rho'}$ corresponding to the vertices in~$\Sigma$. As~$\Sigma$ contains~$\lfloor m/2\rfloor n|V|$ vertices and~$|V|$ is even, the determinant is left unchanged by this operation.

\begin{figure}[htb]
\centering
\includegraphics[width=0.6\textwidth]{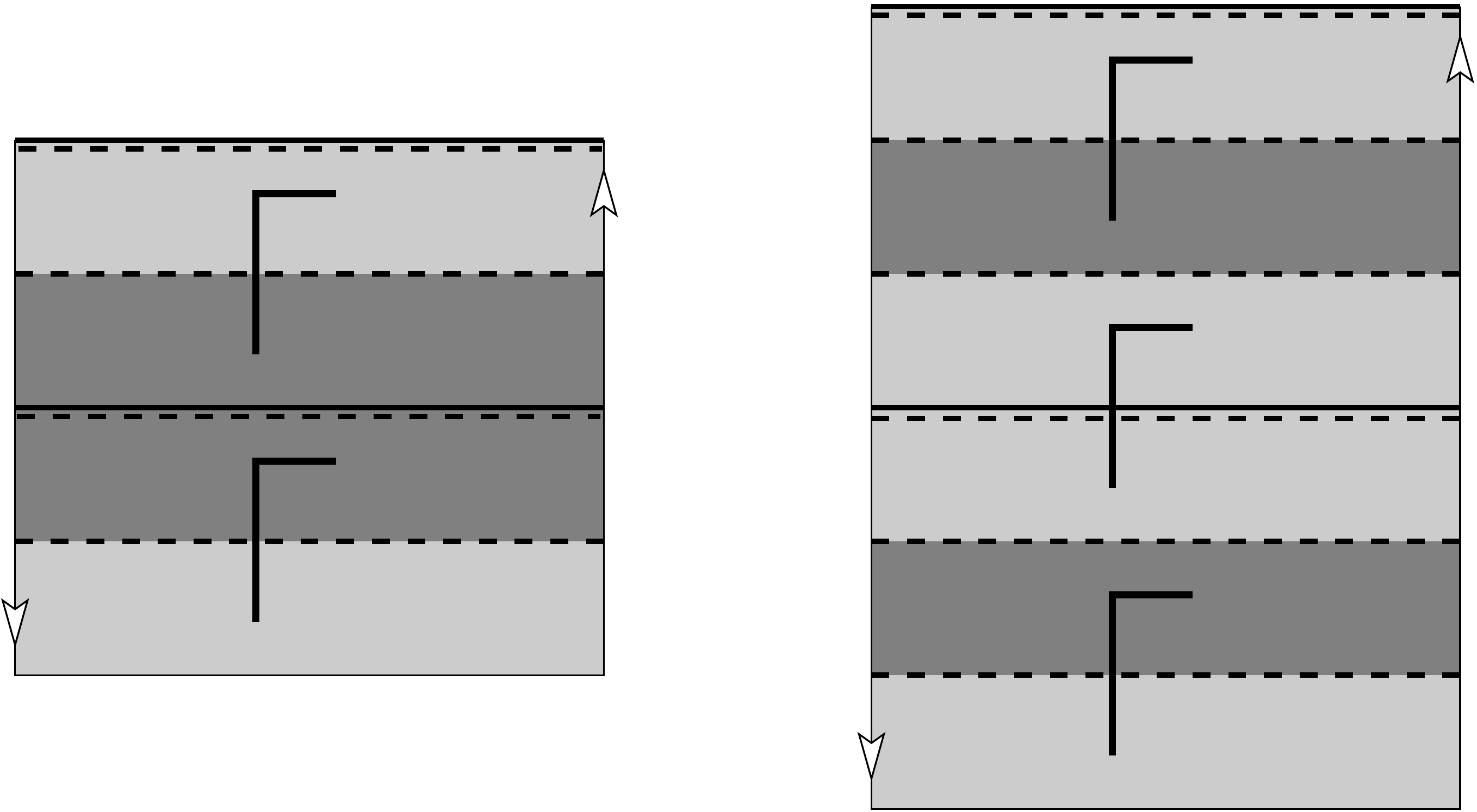}
\caption{The cycles~$a_{mn},a'_{mn}$ in full lines,~$\widehat{a},\widehat{a}'$ in dashed lines, in the cases~$m=2$ (left) and~$m=3$ (right). The surface~$\Sigma$ bounded by their difference is shaded.}
\label{fig:M}
\end{figure}

However, this operation also implies that the lifted orientation~$\widehat{K}$ is inverted along each edge both of whose endpoints are contained in~$\Sigma$, thus creating a new orientation~$K_{mn}$ on~$\G_{mn}\subset\K_{mn}$. It now remains to check that~$K_{mn}$ satisfies conditions~(i) and~(ii) of Section~\ref{sub:gen}, whose notations and terminology we assume. Once this is verified, we will able to use~$A_{mn}^\rho:=A^\rho(\G_{mn},\nu_{mn},K_{mn},\omega_{mn})$ and~$A_{mn}^{\rho'}$ to compute~$R_{mn}(1,\pm 1)$ and apply Theorem~\ref{thm:C-K} to obtain the equalities
\begin{align}
\label{equ:Zmn}
R_{mn}(1,1)&=\det(A_{mn}^\rho)=\det(\widehat{A}^\rho)=\det(A^{\rho^\#})\quad\text{and}\\
R_{mn}(1,-1)&=\det(A_{mn}^{\rho'})=\det(\widehat{A}^{\rho'})=\det(A^{(\rho')^\#})\,.
\label{equ:Rmn}
\end{align}

\medskip

To check the first condition, consider the commutative diagram of covering maps
\[
\xymatrix{
(\K_{mn},\G_{mn})\ar[d]& (\T^2_{mn},\widetilde{\G}_{mn})\ar[l]\ar[d]\\
(\K,\G)& \ar[l] (\T^2,\widetilde{\G})\,.
}
\]
Note that the orientation~$\widetilde{K_{mn}}$ on~$\widetilde{\G_{mn}}=\widetilde{\G}_{mn}\subset\T^2_{mn}$ is equal to the lift of~$K$ from the bottom-left to the upper-right of the above diagram, followed by the inversion of the edges both of whose endpoints belong to the lift of the upper half of~$\widetilde{\D}$ via~$\T^2_{mn}\to\T^2$. Hence,~$\widetilde{K_{mn}}$ is nothing but the lift of the orientation~$\widetilde{K}$ on~$\widetilde{\G}\subset\T^2$ via~$\T^2_{mn}\to\T^2$. The latter orientation being Kasteleyn by definition of~$K$, so is the former, and the first condition is satisfied.

\medskip

To check the second condition, let us denote by~$n^K(\delta)$ the (parity of the) number of edges of an oriented curve~$\delta$ where~$K$ disagrees with the orientation of~$\delta$. By assumption, we have that~$n^K(C)+n^K(C')$ is even, where~$C,C'$ are oriented curves in~$\G$ associated with~$a,a'$ as explained in Section~\ref{sub:gen} and illustrated in Figures~\ref{fig:Klein} and~\ref{fig:gamma}. We need to check that~$n^{K_{mn}}(C_{mn})+n^{K_{mn}}(C'_{mn})$ is even, with~$K_{mn}$ as above, and $C_{mn},C_{mn}'$ oriented curves in~$\G_{mn}$ associated with~$a_{mn},a'_{mn}$, respectively.

\begin{figure}[htb]
\labellist\small\hair 2.5pt
\pinlabel {$C_{mn}$} at 565 330
\pinlabel {$C'_{mn}$} at 565 184
\pinlabel {$C'_{mn}$} at 1225 200
\pinlabel {$C_{mn}$} at 1225 400
\pinlabel {$\gamma$} at 1320 50
\pinlabel {$\gamma'$} at 1320 100
\pinlabel {$C$} at 1310 290
\pinlabel {$C'$} at 1310 340
\endlabellist
\centering
\includegraphics[width=\textwidth]{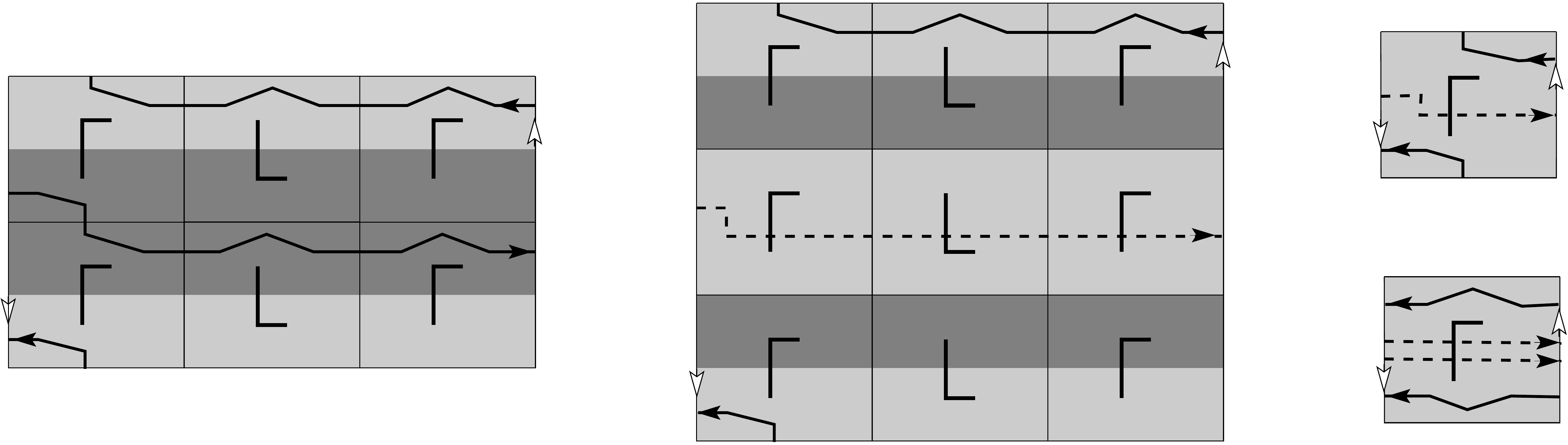}
\caption{The curves~$C_{mn}$ and~$C'_{mn}$ for~$n=3$ and~$m=2$ in the left part and~$m=3$ in the middle part. These curves can be constructed from copies of the curves~$C,C'$ and~$\gamma,\gamma'$ given to the right.}
\label{fig:gamma}
\end{figure}

By construction, the oriented curve~$C_{mn}\subset\G_{mn}\subset\K_{mn}$ can be obtained by a lift of~$C\subset\G$ together with~$(n-1)/2$ lifts of the oriented closed curve~$\gamma\subset\G\subset\K$ illustrated in the right part of Figure~\ref{fig:gamma}. (This curve can be defined as the oriented boundary of the M\"obius band obtained by all the~$2$-cells of~$\K$ meeting the curve~$a$.) By the assumption of Remark~\ref{rem:C}, the curve~$C_{mn}$ is located in the portion of~$\K_{mn}$ where the orientation~$K_{mn}$ coincides with the lift~$\widehat{K}$ of~$K$ (unshaded in Figure~\ref{fig:gamma}). Hence, we have
\[
n^{K_{mn}}(C_{mn})=n^{\widehat{K}}(C_{mn})=n^K(C)+\frac{n-1}{2}n^K(\gamma)\,.
\]
If~$m$ is even (Figure~\ref{fig:gamma}, left), then~$C'_{mn}$ can be obtained by a lift of~$-C$ together with~$(n-1)/2$ lifts of~$-\gamma$. But these curves are located in the portion of~$\K_{mn}$ where~$K_{mn}$ and~$\widehat{K}$ disagree (shaded in Figure~\ref{fig:gamma}). Hence, we have
\[
n^{K_{mn}}(C'_{mn})=n^{-\widehat{K}}(C'_{mn})=n^{-K}(-C)+\frac{n-1}{2}n^{-K}(-\gamma)=n^K(C)+\frac{n-1}{2}n^K(\gamma)\,.
\]
If~$m$ is odd (Figure~\ref{fig:gamma}, middle), then~$C'_{mn}$ can be obtained by a lift of~$C'$ and~$(n-1)/2$ lifts of the oriented curve~$\gamma'\subset\G\subset\K$ illustrated in right part of Figure~\ref{fig:gamma}. Since~$K_{mn}$ coincides with~$\widehat{K}$ along~$C'_{mn}$, we obtain
\[
n^{K_{mn}}(C'_{mn})=n^{\widehat{K}}(C'_{mn})=n^K(C')+\frac{n-1}{2}n^K(\gamma')\,.
\]
By the first two equations displayed above, we see that~$n^{K_{mn}}(C_{mn})+n^{K_{mn}}(C'_{mn})$ is always even for~$m$ even. For~$m$ odd, the first and third equations above together with the fact that~$n^K(C)+n^K(C')$ is even imply that we are left with the proof that~$n^K(\gamma)$ and~$n^K(\gamma')$ have the same parity.

This can be checked as follows. The curves~$\gamma,\gamma'\subset\G\subset\K$ lift to closed curves~$\widetilde{\gamma},\widetilde{\gamma}'\subset\widetilde{\G}$ in the lower half-part of~$\widetilde{\D}$, bounding a cylinder~$\mathcal{C}\subset\T^2$ containing the same number of vertices as~$\G$. Since~$K$ coincides with~$\widetilde{K}$ there, we have
\[
n^K(\gamma)+n^K(\gamma')=n^{\widetilde{K}}(\widetilde{\gamma})+n^{\widetilde{K}}(\widetilde{\gamma}')=n^{\widetilde{K}}(\partial\mathcal{C})\,.
\]
The fact that~$n^{\widetilde{K}}(\partial\mathcal{C})$ is even follows from the standard argument of Kasteleyn since~$\widetilde{K}$ is a Kasteleyn orientation, the cylinder~$\mathcal{C}$ contains an even number of vertices, and has even Euler characteristic (see e.g.~\cite{Ka3}).

\begin{remark}
\label{rem:Kmn}
When the graph is ``too small'', it might happen that the curves~$C$ and~$C'$ cannot be chosen to be disjoint from~$a'$ and~$a$, respectively (recall Remark~\ref{rem:C}). For~$m$ even, a mild extension of the argument above shows that~$n^{K_{mn}}(C_{mn})+n^{K_{mn}}(C'_{mn})$ is nevertheless always even. For~$m$ odd, this might no longer hold: the square lattice of Figure~\ref{fig:Klein} is an example of such a phenomenon. We shall deal with the necessary modifications in due time (see Remarks~\ref{rem:perK} and~\ref{rem:as-gen}~(iii) below). 
\end{remark}

\subsection{Identifying and factorising the induced representations}
\label{sub:repr}

We now proceed to the computation of the representations of~$\pi_1(\K)$ induced by the~$1$-dimensional representations of~$\pi_1(\K_{mn})$ arising in Equations~\eqref{equ:Zmn} and~\eqref{equ:Rmn}.

To do so, recall (following~\cite[Section~3.3]{Serre}) that given a representation~$\rho\colon H\to\operatorname{GL}(W)$ of a subgroup~$H<G$, the induced representation~$\rho^\#\colon G\to\operatorname{GL}(Z)$ is uniquely determined up to isomorphism by the following properties. If~$R\subset G$ denotes a system of representatives of~$G/H$ (i.e. each~$g\in G$ can be written uniquely as~$g=rh\in G$ with~$r\in R$ and~$h\in H$), then~$Z$ is given by the direct sum
\[
Z=\bigoplus_{r\in R}\rho_r^\#(W)\,,
\]
and for any~$g\in G$ and~$w\in W$, we have~$\rho_g^\#(\rho_r^\#(w))=\rho_{r'}^\#(\rho_h(w))$ where~$gr=r'h\in G$ with~$r'\in R$ and~$h\in H$.

In our case, the covering map~$p\colon\K_{mn}\to\K$ determines the inclusion of fundamental groups
\[
\pi_1(\K_{mn})=\left<a^n,b^m\,|\,a^nb^ma^{-n}b^m\right>\stackrel{p_*}{\longrightarrow}\left<a,b\,|\,aba^{-1}b\right>=\pi_1(\K)\,.
\]  
Furthermore, a natural system of representatives of the quotient~$\pi_1(\K)/\pi_1(\K_{mn})$ is given by~$R=\{b^ia^j\,|\,0\le i<m,\;0\le j<n\}$. Hence, we need for~$g=a$ and~$g=b$ to express~$g\cdot b^ia^j\in\pi_1(\K)$ in the form~$r\cdot h$ with~$r\in R$ and~$h\in\pi_1(\K_{mn})$. Using the relation~$aba^{-1}b=1$ (or equivalently, the relation~$a^{-1}b^{\pm 1}a=b^{\mp 1}$), we find
\[
b\cdot(b^ia^j)=\begin{cases}
b^{i+1}a^j\cdot 1&\mbox{if~$i<m-1$;} \\
b^ma^j=a^j(a^{-j}b^ma^j)=a^j(a^{-j}ba^j)^m=a^j\cdot b^{m(-1)^j}&\mbox{if~$i=m-1$,}
\end{cases}
\]
and for~$i>0$,
\[
a\cdot(b^ia^j)=b^{-i}a^{j+1}=b^{m-i}a^{j+1}(a^{-j-1}b^{-m}a^{j+1})=
\begin{cases}
b^{m-i}a^{j+1}\cdot b^{m(-1)^{j}}&\mbox{if~$j<n-1$;} \\
b^{m-i}\cdot a^n b^{m(-1)^{j}}&\mbox{if~$j=n-1$,}
\end{cases}
\]
while for~$i=0$,
\[
a\cdot(b^0a^j)=
\begin{cases}
a^{j+1}\cdot 1&\mbox{if~$j<n-1$;} \\
1\cdot a^n&\mbox{if~$j=n-1$.}
\end{cases}
\]
For~$\rho=1$ the trivial representation of~$\pi_1(\K_{mn})$ and~$\rho'$ given by~$\rho'_{a^n}=1$ and~$\rho'_{b^m}=-1$, we hence obtain the following result.

\begin{lemma}
\label{lemma:ind}
Let~$Z$ be the vector space with basis~$\{e(i,j)\,|\,0\le i<m,\;0\le j<n\}$. Then, the induced representations~$\rho^\#,(\rho')^\#\colon\pi_1(\K)=\left<a,b\,|\,aba^{-1}b\right>\to\GL(Z)$ are determined by
\begin{align*}
\rho_a^\#(e(i,j))&=\begin{cases}
e(m-i,j+1)&\mbox{if~$i>0,j<n-1$,}\\
e(m-i,0)&\mbox{if~$i>0,j=n-1$,}\\
e(0,j+1)&\mbox{if~$i=0,j<n-1$,}\\
e(0,0)&\mbox{if~$i=0,j=n-1$,}
\end{cases}
\\
\rho_b^\#(e(i,j))&=\begin{cases}
e(i+1,j)&\mbox{if~$i<m-1$,}\\
e(0,j)&\mbox{if~$i=m-1$,}
\end{cases}
\end{align*}
and
\begin{align*}
\pushQED{\qed}
(\rho')^\#_a(e(i,j))&=\begin{cases}
-e(m-i,j+1)&\mbox{if~$i>0,j<n-1$,}\\
-e(m-i,0)&\mbox{if~$i>0,j=n-1$,}\\
e(0,j+1)&\mbox{if~$i=0,j<n-1$,}\\
e(0,0)&\mbox{if~$i=0,j=n-1$,}
\end{cases}
\\
(\rho')^\#_b(e(i,j))&=\begin{cases}
e(i+1,j)&\mbox{if~$i<m-1$,}\\
-e(0,j)&\mbox{if~$i=m-1$.}
\end{cases}
\qedhere
\popQED
\end{align*}
\end{lemma}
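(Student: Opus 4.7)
The plan is to apply the standard construction of induced representations directly, using the system $R = \{b^i a^j : 0 \le i < m,\, 0 \le j < n\}$ of coset representatives of $\pi_1(\K_{mn})$ in $\pi_1(\K)$. Nearly all the work has already been carried out in the formulas displayed immediately above the lemma: these express, for each generator $g \in \{a,b\}$ and each representative $b^i a^j \in R$, the product $g \cdot (b^i a^j) \in \pi_1(\K)$ in the form $r' \cdot h$ with $r' \in R$ and $h \in \pi_1(\K_{mn})$. Reading off $r'$ and evaluating $\rho(h)$ or $\rho'(h)$ then yields the action of $\rho^\#_g$ or $(\rho')^\#_g$ on the basis vector $e(i,j)$.

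First I would verify that $R$ really is a system of representatives of $\pi_1(\K)/\pi_1(\K_{mn})$. The Klein bottle relation $aba^{-1} = b^{-1}$ iterates to the identity $a^{-k} b^\ell a^k = b^{(-1)^k \ell}$, which allows any word in $\{a^{\pm 1}, b^{\pm 1}\}$ to be brought to the form $b^p a^q$, and then, by Euclidean division, to $b^i a^j \cdot h$ with $0 \le i < m$, $0 \le j < n$, and $h \in \langle a^n, b^m \rangle = \pi_1(\K_{mn})$. Uniqueness then follows from the cardinality identity $|R| = mn = [\pi_1(\K) : \pi_1(\K_{mn})]$. The subsequent case analysis of $a \cdot (b^i a^j)$ and $b \cdot (b^i a^j)$ uses the same conjugation identity to push any $b^{\pm m}$-factor that arises past $a^j$, giving exactly the decompositions displayed in the excerpt.

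The final step is to evaluate $\rho(h)$ or $\rho'(h)$ in each case. For the trivial representation $\rho = 1$ every coefficient is $1$, yielding the formulas for $\rho^\#$ directly. For $\rho'$, the only subtlety is that $\rho'(b^{m(-1)^j}) = -1$ for every $j$, since $-1$ is its own inverse in $\C^*$; similarly $\rho'(a^n b^{m(-1)^j}) = -1$. Tracking which of the four cases produce such a factor — precisely those in which the action wraps around in the $b$-direction, namely $i = m-1$ for $g = b$ and $i > 0$ for $g = a$ — reproduces the sign pattern stated in the lemma. This sign bookkeeping is the only point requiring care, and it is not a serious obstacle.
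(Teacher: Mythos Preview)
Your proposal is correct and follows exactly the paper's own approach: the paper treats the lemma as an immediate consequence of the coset decompositions displayed just before it, ending the statement with a \qedhere rather than a separate proof. Your write-up actually adds a bit more detail than the paper (the verification that $R$ is a full set of coset representatives, and the explicit observation that $\rho'(b^{m(-1)^j})=-1$ regardless of the parity of $j$), but the substance is identical.
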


Arranging the basis vectors~$\{e(i,j)\,|\,0\le i<m,\;0\le j<n\}$ as the vertices of an~$(m\times n)$-grid with periodic-antiperiodic boundary conditions, we can understand~$\rho^\#$ as the representation permuting these vertices as illustrated in Figure~\ref{fig:permute}. Similarly, we can understand~$(\rho')^\#$ as a signed permutation representation, with the signs given in Figure~\ref{fig:permute}.

\begin{figure}[tb]
\labellist\small\hair 2.5pt
\pinlabel {\scriptsize \textcircled{-}} at 110 230
\pinlabel {\scriptsize \textcircled{-}} at 110 430
\pinlabel {\scriptsize \textcircled{-}} at 110 630
\pinlabel {\scriptsize \textcircled{-}} at 310 230
\pinlabel {\scriptsize \textcircled{-}} at 310 430
\pinlabel {\scriptsize \textcircled{-}} at 310 630
\pinlabel {\scriptsize \textcircled{-}} at 510 230
\pinlabel {\scriptsize \textcircled{-}} at 510 430
\pinlabel {\scriptsize \textcircled{-}} at 510 630
\pinlabel {\scriptsize \textcircled{-}} at 710 230
\pinlabel {\scriptsize \textcircled{-}} at 710 430
\pinlabel {\scriptsize \textcircled{-}} at 710 630
\pinlabel {\scriptsize \textcircled{-}} at 910 230
\pinlabel {\scriptsize \textcircled{-}} at 910 430
\pinlabel {\scriptsize \textcircled{-}} at 910 630
\pinlabel {\scriptsize \textcircled{-}} at -15 65
\pinlabel {\scriptsize \textcircled{-}} at 385 65
\pinlabel {\scriptsize \textcircled{-}} at 785 65
\pinlabel {\scriptsize \textcircled{-}} at 185 750
\pinlabel {\scriptsize \textcircled{-}} at 585 750
\pinlabel {\scriptsize \textcircled{-}} at 985 750
\pinlabel {\scriptsize $e(0,0)$} at -45 810
\pinlabel {\scriptsize $e(1,0)$} at -45 610
\pinlabel {\scriptsize $e(2,0)$} at -45 410
\pinlabel {\scriptsize $e(3,0)$} at -45 210
\pinlabel {\scriptsize $e(0,0)$} at -45 10
\pinlabel {\scriptsize $e(2,0)$} at 1060 410
\pinlabel {\scriptsize $e(1,0)$} at 1060 210
\pinlabel {\scriptsize $e(0,0)$} at 1060 10
\pinlabel {\scriptsize $e(3,0)$} at 1060 610
\pinlabel {\scriptsize $e(0,0)$} at 1060 810
\pinlabel {\scriptsize $e(0,1)$} at 210 835
\pinlabel {\scriptsize $e(0,2)$} at 410 835
\pinlabel {\scriptsize $e(0,3)$} at 610 835
\pinlabel {\scriptsize $e(0,4)$} at 810 835
\pinlabel {\scriptsize $e(0,1)$} at 210 -20
\pinlabel {\scriptsize $e(0,2)$} at 410 -20
\pinlabel {\scriptsize $e(0,3)$} at 610 -20
\pinlabel {\scriptsize $e(0,4)$} at 810 -20
\pinlabel {\scriptsize $e(3,1)$} at 255 580
\pinlabel {\scriptsize $e(1,2)$} at 455 580
\pinlabel {\scriptsize $e(3,3)$} at 655 580
\pinlabel {\scriptsize $e(1,4)$} at 855 580
\pinlabel {\scriptsize $e(2,1)$} at 255 380
\pinlabel {\scriptsize $e(2,2)$} at 455 380
\pinlabel {\scriptsize $e(2,3)$} at 655 380
\pinlabel {\scriptsize $e(2,4)$} at 855 380
\pinlabel {\scriptsize $e(1,1)$} at 255 180
\pinlabel {\scriptsize $e(3,2)$} at 455 180
\pinlabel {\scriptsize $e(1,3)$} at 655 180
\pinlabel {\scriptsize $e(3,4)$} at 855 180
\endlabellist
\centering
\includegraphics[width=0.7\textwidth]{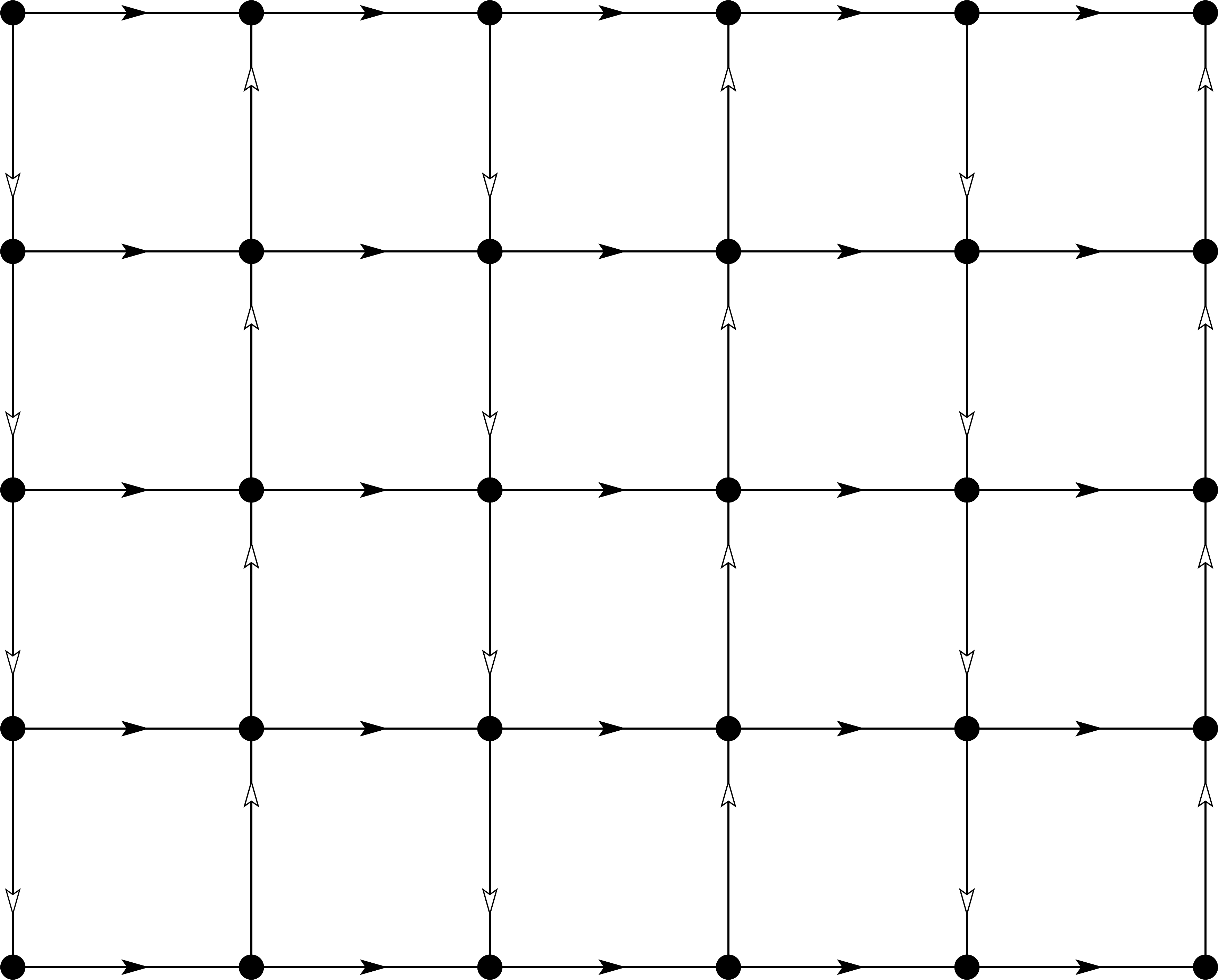}
\bigskip
\caption{The induced representation~$\rho^\#$ as a permutation representation, in the case~$m=4$,~$n=5$. The horizontal arrows correspond to the action of~$a$, the vertical ones to the action of~$b$. The circled minus signs {\scriptsize{\textcircled{-}}} indicate how~$(\rho')^\#$ differs from~$\rho^\#$.}
\label{fig:permute}
\end{figure}

\medskip

The next step is to determine the factorisation of the~$mn$-dimensional representations~$\rho^\#$ and~$(\rho')^\#$ into irreducible representations.
To do so, first observe that Lemma~\ref{lemma:ind} implies the equalities~$\rho_{a^{2n}}^\#=\rho_{b^m}^\#=\mathit{id}_Z$ (this is clear from Figure~\ref{fig:permute}). As a consequence, the representation~$\rho^\#$ factors through the natural projection of~$\pi_1(\K)$ onto the finite group given by the semi-direct product of two cyclic groups of order~$m$ and~$2n$:
\[
\pi_1(\K)=\left<a,b\,|\,aba^{-1}b\right>\twoheadrightarrow\left<a,b\,|\,a^{2n},b^m,aba^{-1}b\right>=C_m\rtimes C_{2n}\,.
\]
Similarly, one checks that~$(\rho')_{a^{2n}}^\#=(\rho')_{b^{2m}}^\#=\mathit{id}_Z$, so~$(\rho')^\#$ factors through the natural projection of~$\pi_1(\K)$ onto~$C_{2m}\rtimes C_{2n}$. By abuse of notation, we shall simply denote by~$\rho^\#$ (resp.~$(\rho')^\#$) the corresponding representation of~$C_m\rtimes C_{2n}$ (resp.~$C_{2m}\rtimes C_{2n}$), and we now need to understand the irreducible representations of the finite group
\[
\left<a,b\,|\,a^{2n},b^M,aba^{-1}b\right>=C_M\rtimes C_{2n}\,.
\]
Writing~$\zeta_M=\exp(2i\pi/M)$ and~$\zeta_{2n}=\exp(i\pi/n)$, the following assignements clearly define such representations.
\begin{itemize}
\item{For~$1\le\ell\le 2n$, the homomorphism~$\rho^\ell\colon C_M\rtimes C_{2n}\to\C^*$ given by~$a\mapsto\zeta_{2n}^\ell$ and~$b\mapsto 1$.}
\item{If~$M$ is even, for~$1\le\ell\le 2n$, the homomorphism~$\rho'^\ell\colon C_M\rtimes C_{2n}\to\C^*$ given by~$a\mapsto\zeta_{2n}^\ell$ and~$b\mapsto -1$.}
\item{For~$1\le k<M/2$ and~$1\le\ell\le n$, the homomorphism~$\rho^{k,\ell}\colon C_M\rtimes C_{2n}\to\operatorname{GL}_2(\C)$ given by
\[
a\mapsto\begin{pmatrix}
0&\zeta_{2n}^\ell\cr \zeta_{2n}^\ell&0
\end{pmatrix}
\quad\text{and}\quad
b\mapsto
\begin{pmatrix}
\zeta_{M}^k&0\cr 0&\overline{\zeta}_{M}^{k}
\end{pmatrix}\,.
\]}
\end{itemize}

To avoid very cumbersome notations, we do not include the index~$M$ in~$\rho^{k,\ell}$. However, the reader should keep in mind that the expression~$\rho^{k,\ell}$ will refer to the above representation, sometimes with~$M=m$ (when dealing with~$\rho^\#$), and sometimes with~$M=2m$ (when dealing with~$(\rho')^\#$).

\begin{lemma}
\label{lemma:irreps}
The representations~$\{\rho^\ell,\rho'^\ell\}_\ell$ and~$\{\rho^{k,\ell}\}_{k,\ell}$ above give the full list of irreducible representations of~$C_M\rtimes C_{2n}$ up to isomorphism.
\end{lemma}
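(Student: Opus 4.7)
The plan is to verify four things in sequence: (1) each listed assignment does define a representation of the finite quotient $C_M\rtimes C_{2n}=\langle a,b\mid a^{2n},b^M,aba^{-1}b\rangle$; (2) each $\rho^{k,\ell}$ is irreducible; (3) all representations listed are pairwise non-isomorphic; and (4) the sum of the squares of their dimensions equals the group order $|C_M\rtimes C_{2n}|=2nM$. By Wedderburn/Peter--Weyl, conclusions (1)--(4) together force this to be the complete list.

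First I would check the defining relations. For the one-dimensional $\rho^\ell$ and (when $M$ is even) $\rho'^\ell$, only the relation $aba^{-1}b=1$ needs attention, and it holds because $b$ is sent to $\pm1$, which is its own inverse. For the two-dimensional $\rho^{k,\ell}$, the relations $\rho^{k,\ell}(a)^{2n}=(\zeta_{2n}^\ell)^{2n}I=I$ and $\rho^{k,\ell}(b)^M=I$ are immediate, and $\rho^{k,\ell}(a)\rho^{k,\ell}(b)\rho^{k,\ell}(a)^{-1}$ is computed by direct $2\times2$ matrix multiplication: conjugation by the antidiagonal matrix swaps the two diagonal entries of $\rho^{k,\ell}(b)=\operatorname{diag}(\zeta_M^k,\overline{\zeta}_M^k)$, yielding exactly $\rho^{k,\ell}(b)^{-1}$. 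This matrix verification is the only genuinely computational step.

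Next I would verify irreducibility of $\rho^{k,\ell}$: since $\rho^{k,\ell}(b)$ is diagonal with two distinct eigenvalues $\zeta_M^k\neq\zeta_M^{-k}$ (using $0<k<M/2$), any stable subspace is spanned by coordinate vectors; but $\rho^{k,\ell}(a)$ exchanges the two coordinate lines, so no proper nonzero subspace is stable. For pairwise non-isomorphism, the one-dimensional representations are separated by their values on $a$ and $b$ (in particular $\rho^\ell(b)=1$ while $\rho'^\ell(b)=-1$). Two $\rho^{k,\ell}$ and $\rho^{k',\ell'}$ are distinguished by the unordered pair $\{\zeta_M^{\pm k}\}$ of eigenvalues of $b$, which recovers $k$ uniquely in the range $0<k<M/2$, and by the scalar $\rho^{k,\ell}(a^2)=\zeta_{2n}^{2\ell}I=\zeta_n^\ell I$, which recovers $\ell$ uniquely in the range $1\le\ell\le n$. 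A one- and a two-dimensional representation are obviously non-isomorphic.

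Finally I would sum the squared dimensions. If $M$ is odd, there are $2n$ one-dimensional representations $\rho^\ell$ and $\tfrac{M-1}{2}\cdot n$ two-dimensional representations $\rho^{k,\ell}$, giving
\[
2n\cdot 1+\tfrac{M-1}{2}\cdot n\cdot 4=2n+2n(M-1)=2nM.
\]
If $M$ is even, there are $2n+2n$ one-dimensional representations and $(\tfrac{M}{2}-1)\cdot n$ two-dimensional ones, giving
\[
4n+(\tfrac{M}{2}-1)\cdot n\cdot 4=4n+2nM-4n=2nM.
\]
In both cases the total equals $|C_M\rtimes C_{2n}|$, so the list is exhaustive. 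The main (mild) obstacle is the matrix calculation in step (1); conceptually, the entire classification is an instance of the little-groups method of Mackey--Wigner (Serre, \S 8.2) applied to the normal abelian subgroup $\langle b\rangle\cong C_M$, whose character group $\widehat{C_M}$ decomposes under $H=\langle a\rangle$ into the fixed characters $\chi_0$ and (if $M$ is even) $\chi_{M/2}$ with little group $C_{2n}$, together with the pairs $\{\chi_k,\chi_{-k}\}$ for $0<k<M/2$ with little group $C_n=\langle a^2\rangle$; this is the conceptual origin of the three families above.
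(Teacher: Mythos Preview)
Your proof is correct. The paper takes a slightly different route: it identifies the one-dimensional representations via the abelianization, but for the two-dimensional family it computes the characters $\chi_{k,\ell}$ explicitly and checks the orthogonality relations $(\chi_{k,\ell}\mid\chi_{k',\ell'})=\delta_{k,k'}\delta_{\ell,\ell'}$, which simultaneously yields irreducibility and pairwise non-isomorphism; it then finishes with the same dimension count you give. Your argument replaces this character computation by direct linear algebra --- distinct eigenvalues of $\rho^{k,\ell}(b)$ force any invariant subspace to be a coordinate line, and the antidiagonal $\rho^{k,\ell}(a)$ swaps these --- and separates the $\rho^{k,\ell}$ by reading off $k$ from the eigenvalue pair of $b$ and $\ell$ from the scalar $\rho^{k,\ell}(a^2)$. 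Your approach is more elementary and avoids the summation over the group; the paper's character approach is more uniform and fits the spirit of the surrounding arguments (the next lemma decomposes induced representations by computing inner products of characters). Your closing remark on the Mackey little-groups method is a nice conceptual addition not present in the paper.
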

\begin{proof}
The abelianisation of~$C_M\rtimes C_{2n}$ is given by~$\left<a\,|\,a^{2n}\right>$ if~$M$ is odd and by~$\left<a\,|\,a^{2n}\right>\times\left<b\,|\,b^{2}\right>$ if~$M$ is even. This immediately implies that the full list of~$1$-dimensional representations of~$C_M\rtimes C_{2n}$ is given by~$\{\rho^\ell,\rho'^\ell\}_\ell$ as above.

To analyse the~$2$-dimensional representations, let us denote by~$\chi_{k,\ell}\colon C_M\rtimes C_{2n}\to\C$ the character of~$\rho^{k,\ell}$ (see~\cite[Chapter~2]{Serre}). By definition, we get for~$0\le i\le M-1$ and~$0\le j\le 2n-1$
\begin{equation}
\label{equ:char}
\chi_{k,\ell}(b^ia^j)=\begin{cases}
\zeta_{2n}^{\ell j}(\zeta_M^{ki}+\overline{\zeta}_M^{ki})&\mbox{if~$j$ is even;} \\
0 & \mbox{if~$j$ is odd.}
\end{cases}
\end{equation}
For any~$1\le k,k'<M/2$ and~$1\le\ell,\ell'\le n$, this leads to
\begin{align*}
\left(\chi_{k,\ell}|\chi_{k',\ell'}\right)&:=\frac{1}{2nM}\sum_{i=0}^{M-1}\sum_{j=0}^{2n-1}\chi_{k,\ell}(b^ia^j)\overline{\chi_{k',\ell'}(b^ia^j)}\\
&=\frac{1}{n}\sum_{j=0}^{n-1}\zeta_n^{(\ell-\ell')j}\;\frac{1}{2M}\sum_{i=0}^{M-1}(\zeta_M^{(k+k')i}+\overline{\zeta}_M^{(k+k')i}+\zeta_M^{(k-k')i}+\overline{\zeta}_M^{(k-k')i})\\
&=\begin{cases}
1&\mbox{if~$k=k'$ and~$\ell=\ell'$;} \\
0 & \mbox{else.}
\end{cases}
\end{align*}
In other words, the characters~$\chi_{k,\ell}$ are orthogonal with respect to the usual scalar product.
By~\cite[Section~2.3]{Serre}, this implies that the representations~$\{\rho^{k,\ell}\}_{k,\ell}$ are irreducible and pairwise non-isomorphic.

Finally, note that the sum of the square of the degrees of the representations listed above gives, for~$M$ odd,
\[
2n\cdot 1^2+n\floor*{\frac{M-1}{2}}\cdot 2^2=2n+2n(M-1)=2nM=|C_M\rtimes C_{2n}|
\]
and for~$M$ even,
\[
4n\cdot 1^2+n\floor*{\frac{M-1}{2}}\cdot 2^2=4n+4n\left(\frac{M}{2}-1\right)=2nM=|C_M\rtimes C_{2n}|\,.
\]
By~\cite[Corollary~2]{Serre}, this shows that the above list is complete.
\end{proof}

We are now ready to compute the decomposition of the induced representations into irreducible ones. This is the content of the next lemma.

\begin{lemma}
\label{lemma:dec}
The decomposition of the representation~$\rho^\#$ of~$C_m\rtimes C_{2n}$ into irreducible representations is given by
\[
\rho^\#=\bigoplus_{\ell=1}^n\Bigg(\rho^{2\ell}\oplus\bigoplus_{1\le k<\frac{m}{2}}\rho^{k,\ell}\Bigg)
\]
if~$m$ is odd, and by
\[
\rho^\#=\bigoplus_{\ell=1}^n\Bigg(\rho^{2\ell}\oplus\rho'^{2\ell}\oplus\bigoplus_{1\le k<\frac{m}{2}}\rho^{k,\ell}\Bigg)
\]
if~$m$ is even. The decomposition of the representation~$(\rho')^\#$ of~$C_{2m}\rtimes C_{2n}$ into irreducible representations is given by
\[
(\rho')^\#=\bigoplus_{\ell=1}^n\Bigg(\rho'^{2\ell}\oplus\bigoplus_{\stackrel{1\le k\le m-1}{k\text{ odd}}}\rho^{k,\ell}\Bigg)
\]
if~$m$ is odd, and by
\[
(\rho')^\#=\bigoplus_{\ell=1}^n\bigoplus_{\stackrel{1\le k\le m-1}{k\text{ odd}}}\rho^{k,\ell}
\]
if~$m$ is even. 
\end{lemma}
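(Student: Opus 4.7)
The plan is to use character theory combined with Frobenius reciprocity. The key preliminary step is to view $\rho^\#$ and $(\rho')^\#$ as induced representations not from infinite subgroups of $\pi_1(\K)$, but from small subgroups of the finite quotients $\overline{G}:=C_m\rtimes C_{2n}$ and $\overline{G}':=C_{2m}\rtimes C_{2n}$ through which they factor. Because the kernel of $\pi_1(\K)\twoheadrightarrow\overline{G}$, normally generated by $\{a^{2n},b^m\}$, sits inside $\pi_1(\K_{mn})=\langle a^n,b^m\rangle$, the standard descent identity for induced representations yields $\rho^\#=\mathrm{Ind}_{\overline{H}}^{\overline{G}}\mathbf{1}$, where $\overline{H}=\langle a^n\rangle\cong C_2$ is the image of $\pi_1(\K_{mn})$ in $\overline{G}$ (using that $b^m$ becomes trivial there and $a^n$ has order~$2$). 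The parallel analysis for $(\rho')^\#$ uses, for $n$ odd, the relation $a^nb^ma^{-n}=b^{-m}=b^m$ in $\overline{G}'$, showing that the image $\overline{H}'\cong C_2\times C_2$; the representation $\rho'$ descends to the character $\bar\rho'(a^n)=1$, $\bar\rho'(b^m)=-1$ on $\overline{H}'$, and hence $(\rho')^\#=\mathrm{Ind}_{\overline{H}'}^{\overline{G}'}\bar\rho'$.

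Given these identifications, Frobenius reciprocity reduces every multiplicity to a tiny average over the small subgroup:
$$(\chi_{\rho^\#}\mid\chi_\sigma)_{\overline{G}}=\tfrac{1}{2}\bigl(\chi_\sigma(e)+\chi_\sigma(a^n)\bigr),\quad ((\rho')^\#\mid\chi_\sigma)_{\overline{G}'}=\tfrac{1}{4}\bigl(\chi_\sigma(e)+\chi_\sigma(a^n)-\chi_\sigma(b^m)-\chi_\sigma(a^nb^m)\bigr).$$
All the character values needed are read off from the explicit formulas preceding Lemma~\ref{lemma:irreps} and from Equation~\eqref{equ:char}. The essential input is that $n$ is odd, which forces $a^n$ and $a^nb^m=b^ma^n$ to correspond to $j=n$ odd in~\eqref{equ:char} and therefore $\chi_{k,\ell}(a^n)=\chi_{k,\ell}(a^nb^m)=0$; while $\chi_{k,\ell}(b^m)=\zeta_M^{km}+\overline{\zeta_M^{km}}$ evaluates to $2$ for $M=m$ (the case of $\rho^\#$) and $2(-1)^k$ for $M=2m$ (the case of $(\rho')^\#$). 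For the one-dimensional characters one simply has $\chi_{\rho^\ell}(a^n)=\chi_{\rho'^\ell}(a^n)=(-1)^\ell$ and $\chi_{\rho'^\ell}(b^m)=(-1)^m$.

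Substituting these values gives every multiplicity as $0$ or $1$, producing the decompositions in the lemma directly: for $\rho^\#$, each $\rho^{k,\ell}$ appears once, and $\rho^\ell$ (together with $\rho'^\ell$ when $m$ is even) appears iff $\ell$ is even, which we re-index as $\rho^{2\ell}$ and $\rho'^{2\ell}$ with $\ell\in\{1,\dots,n\}$; for $(\rho')^\#$, no $\rho^\ell$ ever appears, each $\rho^{k,\ell}$ with $k$ odd appears once, and $\rho'^{2\ell}$ appears (once, for $\ell\in\{1,\dots,n\}$) precisely when $m$ is odd. A final check that the dimensions of the four listed decompositions total $mn$ confirms the count. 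The main obstacle is purely organisational: one must verify the legitimacy of the descent to the two finite quotients (which hinges on $N\subset\pi_1(\K_{mn})$ and, for $(\rho')^\#$, on $\rho'$ vanishing on $\{a^{2n},b^{2m}\}$) and then keep track of the two different groups $\overline{G}$ and $\overline{G}'$, whose different values of $M$ in $\chi_{k,\ell}(b^m)$ are exactly what produces the four-case split in the statement.
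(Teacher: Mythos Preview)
Your proof is correct, and it takes a genuinely different route from the paper's. The paper computes the character of~$\rho^\#$ (respectively~$(\rho')^\#$) directly as a (signed) permutation character by counting fixed points of the action of~$b^ia^j$ on the~$mn$-element grid of Figure~\ref{fig:permute}, and then takes inner products with~$\chi_\ell,\chi'_\ell,\chi_{k,\ell}$ over the entire group~$C_m\rtimes C_{2n}$ (respectively~$C_{2m}\rtimes C_{2n}$). You instead observe that the kernel of the projection to the finite quotient lies inside~$\pi_1(\K_{mn})$, so the induced representation descends to an induction from the tiny subgroup~$\overline{H}\cong C_2$ (respectively~$\overline{H}'\cong C_2\times C_2$); Frobenius reciprocity then collapses each multiplicity to a two-term or four-term average. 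Your verifications that~$N\subset\pi_1(\K_{mn})$, that~$a^n$ and~$b^m$ commute modulo~$N'$ (using~$n$ odd so that~$a^nb^ma^{-n}=b^{-m}\equiv b^m$), and that~$\rho'$ is trivial on~$N'$ are all sound. The trade-off is clear: the paper's approach is hands-on and makes the permutation structure visible, while yours bypasses the fixed-point count entirely and is computationally lighter, at the modest cost of invoking the descent-of-induction identity and keeping straight which value of~$M$ governs~$\chi_{k,\ell}(b^m)$ in each of the two groups.
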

\begin{proof}
The idea is to use once again the theory of characters. More precisely, we shall determine the character~$\chi$ of~$\rho^\#$ and compute its scalar product with the characters~$\chi_\ell,\chi'_\ell$ and~$\chi_{k,\ell}$ of the representations~$\rho^\ell,\rho'^\ell$ and~$\rho^{k,\ell}$, respectively. This scalar product is nothing but the number of times that the corresponding irreducible representation appears in the decomposition of~$\rho^\#$ (see~\cite[Chapter~2]{Serre}).

Understanding~$\rho^\#$ as a permutation representation (recall the square grid of Figure~\ref{fig:permute}), one sees that~$\chi(b^ia^j):=\operatorname{Tr}(\rho^\#_{b^ia^j})$ is simply given by the number of vertices of the grid that are fixed by the action of~$b^ia^j$. For~$0\le i<m$ and~$0\le j<2n$, this leads to
\[
\chi(b^ia^j)=\begin{cases}
mn&\mbox{for~$i=j=0$;}\\
n&\mbox{for~$0\le i<m$ and~$j=n$, if~$m$ is odd;}\\
2n&\mbox{for~$0\le i<m$ even and~$j=n$, if~$m$ is even;}\\
0&\mbox{else.}
\end{cases}
\]
Hence, for any character~$\psi$ of~$C_m\rtimes C_{2n}$, we have
\[
\left(\chi|\psi\right)=\frac{1}{2mn}\Big(mn\,\psi(b^0a^0)+n\sum_{i=0}^{m-1}\psi(b^ia^n)\Big)=\frac{1}{2}\psi(b^0a^0)+\frac{1}{2m}\sum_{i=0}^{m-1}\psi(b^ia^n)
\]
if~$m$ is odd, and
\[
\left(\chi|\psi\right)=\frac{1}{2}\psi(b^0a^0)+\frac{1}{m}\sum_{\stackrel{0\le i\le m-1}{i\text{ even}}}\psi(b^ia^n)
\]
if~$m$ is even. Applying this to~$\psi=\chi_\ell$ which satisfies~$\chi_\ell(b^0a^0)=1$ and~$\chi_\ell(b^ia^n)=(-1)^\ell$, we obtain
\[
(\chi|\chi_\ell)=\begin{cases}1&\mbox{if~$\ell$ is even;}\\ 0&\mbox{if~$\ell$ is odd.}\end{cases}
\]
If~$m$ is even, then the exact same computation holds for~$\chi'_\ell$ as well. Finally, Equation~\eqref{equ:char}  applied to~$M=m$ gives~$\chi_{k,\ell}(b^0a^0)=2$ and~$\chi_{k,\ell}(b^ia^n)=0$ since~$n$ is odd, so~$(\chi|\chi_{k,\ell})=1$ for all~$1\le\ell\le n$ and~$1\le k<m/2$. This concludes the proof of the first assertion. (As a reality check, note that the degree of the right-hand side is equal to~$n(1+2\frac{m-1}{2})=nm$ if~$m$ is odd and to~$n(2+2(\frac{m}{2}-1))=nm$ if~$m$ is even, which is indeed the degree of~$\rho^\#$.)

Let us now turn to the representation~$(\rho')^\#$ of~$C_{2m}\rtimes C_{2n}$. As described in Lemma~\ref{lemma:ind}, this is no longer a simple permutation representation, but a signed one. Therefore, the associated character~$\chi'$ evaluated at~$b^ia^j$ is not simply given by the number of vertices fixed by the action of~$b^ia^j$: it is equal to the signed sum of these fixed vertices, with the signs given by Lemma~\ref{lemma:ind} (see also Figure~\ref{fig:permute}). For~$0\le i<2m$ and~$0\le j<2n$, we obtain
\[
\chi'(b^ia^j)=\begin{cases}
mn&\mbox{for~$i=j=0$;}\\
-mn&\mbox{for~$i=m$ and~$j=0$;}\\
(-1)^in&\mbox{for~$0\le i<2m$ and~$j=n$, if~$m$ is odd;}\\
0&\mbox{else.}
\end{cases}
\]
Hence, for any character~$\psi$ of~$C_{2m}\rtimes C_{2n}$, we have
\[
\left(\chi'|\psi\right)=\begin{cases}\frac{1}{4}(\psi(b^0a^0)-\psi(b^ma^0))+\frac{1}{4m}\sum_{i=0}^{2m-1}(-1)^i\psi(b^ia^n)&\mbox{if~$m$ is odd;}\\
\frac{1}{4}(\psi(b^0a^0)-\psi(b^ma^0))&\mbox{if~$m$ is even.}
\end{cases}
\]
Applying this to~$\psi=\chi_\ell$ which satisfies~$\chi_\ell(b^0a^0)=\chi(b^ma^0)=1$ and~$\chi_\ell(b^ia^n)=(-1)^\ell$, we obtain~$(\chi'|\chi_\ell)=0$. On the other hand, the character~$\chi'_\ell$ satisfies~$\chi'_\ell(b^0a^0)=1$,~$\chi'(b^ma^0)=(-1)^m$ and~$\chi_\ell(b^ia^n)=(-1)^{i+\ell}$, so we get
\[
(\chi'|\chi'_\ell)=\begin{cases}1&\mbox{if~$m$ is odd and~$\ell$ is even;}\\ 0&\mbox{else.}\end{cases}
\]
Finally, Equation~\eqref{equ:char} applied to~$M=2m$ gives the values~$\chi_{k,\ell}(b^0a^0)=2$,~$\chi_{k,\ell}(b^ma^0)=2(-1)^k$ and~$\chi_{k,\ell}(b^ia^n)=0$ for all~$1\le\ell\le n$ and~$1\le k<m$. This leads to
\[
(\chi'|\chi_{k,\ell})=\begin{cases}1&\mbox{if~$k$ is odd;}\\ 0&\mbox{if~$k$ is even,}\end{cases}
\]
and concludes the proof of the lemma. (Once again, one easily checks that the degree of the right-hand side is equal~$nm$, which is the degree of~$(\rho')^\#$.)
\end{proof}

\subsection{Proof of Theorem~\ref{thm:Rmn}}
\label{sub:proof}

We start with the first part of Theorem~\ref{thm:Rmn}.
By Equation~\eqref{equ:Zmn}, Lemma~\ref{lemma:dec} and Remark~\ref{rem:twist}~(ii), we have
\[
R_{mn}(1,1)=\begin{cases}
\prod_{\ell=1}^n\left(\det(A^{\rho^{2\ell}})\prod_{1\le k<m/2}\det(A^{\rho^{k,\ell}})\right)&\mbox{if~$m$ is odd;} \\
\prod_{\ell=1}^n\left(\det(A^{\rho^{2\ell}})\det(A^{\rho'^{2\ell}})\prod_{1\le k<m/2}\det(A^{\rho^{k,\ell}})\right)& \mbox{if~$m$ is even,}
\end{cases}
\]
with~$\rho^{2\ell}$,~$\rho'^{2\ell}$ and~$\rho^{k,\ell}$ the representations of~$\pi_1(\K)=\left<a,b\,|\,aba^{-1}b\right>$ determined by
\[
\rho^{2\ell}_a=\rho'^{2\ell}_a=\zeta_n^\ell,\quad\rho^{2\ell}_b=1,\quad\rho'^{2\ell}_b=-1,\quad
\rho^{k,\ell}_a=\begin{pmatrix}
0&\zeta_{2n}^\ell\cr \zeta_{2n}^\ell&0
\end{pmatrix}\,,
\quad\text{and}\quad
\rho^{k,\ell}_b=
\begin{pmatrix}
\zeta_{m}^k&0\cr 0&\overline{\zeta}_{m}^{k}
\end{pmatrix}\,.
\]
It is easy to check that any representation~$\rho\colon\pi_1(\K)\to\operatorname{GL}(W)$ can be represented by the following connection~$\Phi=(\varphi_e)_{e\in\mathbb{E}}$ on the oriented edges of~$\G$:
\[
\varphi_e=\begin{cases}
\rho_{ab} &\mbox{if~$e$ crosses a vertical side of the fundamental domain~$\D$ from left to right;} \\
\rho^{-1}_{ab} &\mbox{if~$e$ crosses a vertical side of~$\D$ from right to left;} \\
\rho_b &\mbox{if~$e$ crosses a horizontal side of~$\D$ from bottom to top;} \\
\rho^{-1}_b &\mbox{if~$e$ crosses a horizontal side of~$\D$ from top to bottom;} \\
\mathit{id}_W &\mbox{else.}
\end{cases}
\]
(Observe that the loop based at the point of~$\K$ corresponding to the corners of the square and following the curve~$a'$ is homotopic to~$ab$, hence the first equality above.)
As mentioned in Remark~\ref{rem:twist}~(i), there are other possible choices of connections for~$\rho$, but the determinants of the resulting twisted operators will coincide. Using this choice of connection and the definition of~$R$, we obtain the equalities
\[
\det(A^{\rho^{2\ell}})=R(\zeta_n^\ell,1)\quad\text{and}\quad\det(A^{\rho'^{2\ell}})=R(-\zeta_n^\ell,-1)\,.
\]

The first part of Theorem~\ref{thm:Rmn} now follows from one final lemma.

\begin{lemma}
\label{lemma:P}
For any~$1\le\ell\le n$ and~$1\le k<m/2$, we have~$\det(A^{\rho^{k,\ell}})=P(\zeta_n^\ell,\zeta_m^k)$.
\end{lemma}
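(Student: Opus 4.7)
The proof will follow the same strategy used just above the lemma for the $1$-dimensional representations $\rho^{2\ell}$ and $\rho'^{2\ell}$: exhibit $\rho^{k,\ell}$ as a representation induced from a $1$-dimensional representation of an appropriate subgroup, then apply Theorem~\ref{thm:C-K}. The key observation is that although $\rho^{k,\ell}$ is $2$-dimensional, it is induced from a $1$-dimensional representation of the index~$2$ subgroup of $\pi_1(\K)$ corresponding to the orientation cover $\T^2\to\K$.

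First I would identify this subgroup. Using the defining relation $aba^{-1}b=1$, one checks that $a^2$ and $b$ commute in $\pi_1(\K)$, so $H:=\langle a^2,b\rangle\cong\Z^2$ is an index~$2$ subgroup, which is precisely $\pi_1(\T^2)$ for the orientation cover. Let $\psi\colon H\to\C^*$ be the character defined by $\psi(a^2)=\zeta_n^\ell$ and $\psi(b)=\zeta_m^k$. Taking $\{1,a\}$ as coset representatives for $\pi_1(\K)/H$ and applying the formula for induced representations recalled in Section~\ref{sub:repr} (with the identity $ab=b^{-1}a$ coming from the relation), a direct computation shows that $\mathrm{Ind}_H^{\pi_1(\K)}(\psi)$ has matrix representatives for $a$ and $b$ that coincide with $\rho^{k,\ell}_a$ and $\rho^{k,\ell}_b$ after a diagonal conjugation by $\mathrm{diag}(1,\zeta_{2n}^{-\ell})$. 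Hence $\rho^{k,\ell}\cong\mathrm{Ind}_H^{\pi_1(\K)}(\psi)$.

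Next I would restrict this induced representation to $\pi_1(\G)$. Since $H$ is normal in $\pi_1(\K)$ and the graph $\G$ contains edges meeting the curve $a$ (so that $\pi_1(\G)\not\subset H$), the Mackey restriction-induction formula collapses to a single term:
\[
\rho^{k,\ell}|_{\pi_1(\G)}\cong\mathrm{Ind}_{\pi_1(\widetilde{\G})}^{\pi_1(\G)}\bigl(\psi|_{\pi_1(\widetilde{\G})}\bigr),
\]
where $\pi_1(\widetilde{\G})=\pi_1(\G)\cap H$ is the subgroup associated with the $2$-cover $\widetilde{\G}\to\G$. Applying Theorem~\ref{thm:C-K} to this cover with the $1$-dimensional representation $\psi|_{\pi_1(\widetilde{\G})}$ then yields
\[
\det(A^{\rho^{k,\ell}})=\det(\widehat{A}^{\psi}),
\]
where $\widehat{A}^{\psi}$ is the Kasteleyn operator on $\widetilde{\G}\subset\T^2$ twisted by $\psi$.

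Finally, to conclude I would identify this last determinant with $P(\zeta_n^\ell,\zeta_m^k)$. Under the inclusion $H\hookrightarrow\pi_1(\K)$, the standard generators $\tilde{a},\tilde{b}$ of $\pi_1(\T^2)$ (the horizontal and vertical sides of $\widetilde{\D}$ in Figure~\ref{fig:Klein}, right) correspond to $a^2$ and $b$ respectively, as $\widetilde{\D}$ is obtained by doubling $\D$ horizontally. Therefore $\psi$ evaluated on $\tilde{a},\tilde{b}$ gives exactly the pair of parameters $(\zeta_n^\ell,\zeta_m^k)$, and $\widehat{A}^{\psi}$ coincides with $\widetilde{A}(\zeta_n^\ell,\zeta_m^k)$ by the very definition of the toric characteristic polynomial. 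The main obstacle will be the careful bookkeeping of conventions: verifying the diagonal conjugation in the identification $\mathrm{Ind}_H^{\pi_1(\K)}(\psi)\cong\rho^{k,\ell}$, checking the Mackey hypothesis $\pi_1(\G)\cdot H=\pi_1(\K)$ in the possibly ``too small'' cases of Remark~\ref{rem:C}, and tracking the correspondence $\tilde{a}\leftrightarrow a^2$, $\tilde{b}\leftrightarrow b$ so that the arguments of $P$ end up in the order required by the statement of the lemma.
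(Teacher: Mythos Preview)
Your approach is valid and genuinely different from the paper's. The paper argues directly: it starts from the toric Kasteleyn matrix~$\widetilde{A}(z^2,w)$ on~$\widetilde{\G}$, performs a gauge transformation (multiplying by~$i$ the rows and columns corresponding to vertices in the upper half of~$\widetilde{\D}$, as in the proof of Proposition~\ref{prop:P}), and then recognises the resulting matrix as the block form of~$A^{\rho(z,w)}$ on~$\G$ for a specific $2$-dimensional representation~$\rho(z,w)$ which, after one further conjugation, coincides with~$\rho^{k,\ell}$ for~$(z,w)=(\zeta_{2n}^\ell,\zeta_m^k)$. Your route via~$\rho^{k,\ell}\cong\mathrm{Ind}_H^{\pi_1(\K)}(\psi)$, Mackey, and Theorem~\ref{thm:C-K} applied to the orientation cover is more conceptual and explains \emph{why} a $2$-dimensional representation should yield the toric polynomial: it is induced from the index~$2$ subgroup~$\pi_1(\T^2)$.

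There is, however, a real gap in your final step. You write that ``$\widehat{A}^{\psi}$ coincides with~$\widetilde{A}(\zeta_n^\ell,\zeta_m^k)$ by the very definition'', but this is not the case. By the conventions of Section~\ref{sub:CK}, the operator~$\widehat{A}^{\psi}$ on~$\widetilde{\G}$ is built from the \emph{lifted} orientation~$\widehat{K}$ and carries the factors~$i^{\widehat{\omega}(e)}$, whereas~$\widetilde{A}(z,w)$ is built from the Kasteleyn orientation~$\widetilde{K}$ (obtained from~$\widehat{K}$ by inverting all edges with both endpoints in the upper half of~$\widetilde{\D}$, see condition~(i) in Section~\ref{sub:gen}) and has no~$i$ factors at all. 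These are different matrices. Their determinants do agree, but proving this requires precisely the gauge argument---multiplying the upper-half rows and columns by~$i$, which simultaneously removes the~$i^{\widehat{\omega}}$ factors and converts~$\widehat{K}$ into~$\widetilde{K}$---that constitutes the core of the paper's own proof. So your approach does not circumvent the paper's key computation; it merely postpones it to a place where you have not carried it out. The three ``obstacles'' you list are all minor, but you have missed this one, which is the substantive step.
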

\begin{proof}
This proof is analogous to the demonstration of the second point of Proposition~\ref{prop:P}, whose notation we assume.
For any~$z,w\in\C^*$, observe that~$P(z^2,w)=\det\widetilde{A}(z^2,w)$ is left unchanged when replacing~$z^{2(e\cdot\tilde{b})}$ by~$z^{(e\cdot\tilde{b}-e\cdot\tilde{b}')}$, where~$\tilde{b},\tilde{b}'\subset\T^2$ denote the two lifts of~$b\subset\K$ illustrated in Figure~\ref{fig:Klein}. Also, multiplying by~$i$ the rows and columns of~$\widetilde{A}(z^2,w)$ corresponding to a vertex in the upper half of~$\widetilde{\D}$ amounts to multiplying its determinant by~$(-1)^{|V|}=1$, so the resulting matrix~$\widetilde{A}'(z^2,w)$ still has determinant equal to~$P(z^2,w)$. However, numbering the vertices of~$\widetilde{\G}$ in the right way, we see that~$\widetilde{A}'(z^2,w)$ is nothing but the twisted Kasteleyn matrix~$A^{\rho(z,w)}$, with~$\rho(z,w)$ the representation of~$\pi_1(\K)$ given by~$ab\mapsto\left(\begin{smallmatrix}0&z\cr z&0\end{smallmatrix}\right)$,~$b\mapsto\left(\begin{smallmatrix}w&0\cr 0&w^{-1}\end{smallmatrix}\right)$.
For any fixed~$z,w\in\C^*$, this representation is easily seen to be conjugate to~$\rho(z,w)'$ given by
\[
a\mapsto\begin{pmatrix}0&z\cr z&0\end{pmatrix}\,,\quad
b\mapsto\begin{pmatrix}
w&0\cr 0&w^{-1}\end{pmatrix}\,.
\]
Applying the resulting equality~$P(z^2,w)=\det(A^{\rho(z,w)'})$ to~$(z,w)=(\zeta_{2n}^\ell,\zeta_m^k)$ concludes the proof.
\end{proof}

The proof of the second part of Theorem~\ref{thm:Rmn} is similar: simply use Equation~\eqref{equ:Rmn} instead of Equation~\eqref{equ:Zmn}, the second part of Lemma~\ref{lemma:dec} instead of the first one, and replace~$\zeta_m$ by~$\zeta_{2m}$ in the definition of~$\rho^{k,\ell}$ (which is now defined on~$C_M\rtimes C_{2n}$ with~$M=2m$ and no longer~$m=M$). The statement corresponding to Lemma~\ref{lemma:P} now reads~$\det(A^{\rho^{k,\ell}})=P(\zeta_n^\ell,\zeta_{2m}^k)$, and the desired expression for~$R_{mn}(1,-1)$ follows readily.

This concludes the proof of Theorem~\ref{thm:Rmn}.

\begin{remark}
\label{rem:perK}
As mentioned in Remark~\ref{rem:C}, we made the additional assumption that the curve~$C$ and~$C'$ can be chosen disjoint from~$a'$ and~$a$, respectively, in order for Theorems~\ref{thm:Rmn} and~\ref{thm:Zmn} to hold as stated. When the graph is ``too small'', such as the~$1\times 2$ square lattice of Figure~\ref{fig:ex2-4}, this assumption is not satisfied, and these statements need some adjustments.

For~$m$ even, the orientation~$K_{mn}$ always satisfied Condition~(ii) of Section~\ref{sub:gen} (recall Remark~\ref{rem:Kmn}), so Theorems~\ref{thm:Rmn} and~\ref{thm:Zmn} hold unchanged. For~$m$ odd, it can happen that~$K_{mn}$ does not satisfy Condition~(ii). In such a case, the roles of~$R_{mn}(1,1)$ and~$R_{mn}(1,-1)$ are exchanged in the statement of Theorem~\ref{thm:Rmn}. As a direct consequence, Theorem~\ref{thm:Zmn} for~$m$ odd now reads
\begin{equation}
\label{equ:small}
Z_{mn}=\left|\sin(\alpha'_n/2)\right|P_{mn}(1,1)^{1/4}+\left|\cos(\alpha_n/2)\right|P_{mn}(1,-1)^{1/4}\,.
\end{equation}
We shall use this amended formula in Remark~\ref{rem:as-gen}~(iii) and Example~\ref{ex:fsc-sq} below.
\end{remark}


\section{On the asymptotics of the dimer and Ising models on Klein bottles}
\label{sec:as}

Our main result so far (Theorem~\ref{thm:Zmn}) gives an exact expression for the dimer partition function~$Z_{mn}$ for all~$m,n$ in terms of a finite set of data, namely the characteristic polynomials~$R$ and~$P$. This expression turns out to be well suited for the determination of the asymptotics of~$Z_{mn}$,
which is the subject of this section.

It is organised as follows. In Section~\ref{sub:KSW}, we recall~\cite[Theorem~1]{KSW} as well as the numerous notations required for this statement: this deals with the contribution of~$P$ to~$Z_{mn}$.
Section~\ref{sub:roots} contains some technical statements on the asymptotics of the
product of evaluations of a polynomial at roots of unity, dealing with the contribution of~$R$ to~$Z_{mn}$.
In Section~\ref{sub:as-gen}, we give the general form of the asymptotics of~$Z_{mn}$ for arbitrary weighted graphs in the Klein bottle, assuming a conjecture of~\cite{KSW} on the zeros of the characteristic polynomial~$P$ in the non-bipartite case. In Section~\ref{sub:as-bip}, we give the explicit form of this asymptotics for bipartite graphs. Section~\ref{sub:cons} deals with some consequences of these results for the dimer model. Finally, in Section~\ref{sub:as-Ising}, we compute the asymptotic expansion of the Ising partition function.

\subsection{The work of Kenyon, Sun and Wilson}
\label{sub:KSW}

The aim of this section is to recall a special case of~\cite[Theorem~1]{KSW}, namely an asymptotic expansion of
\[
P_{mn}(\zeta,\xi)=\prod_{z^n=\zeta}\prod_{w^m=\xi}P(z,w)\,,
\]
where~$P$ is an analytic non-negative function defined on the unit torus~$S^1\times S^1$.
We then explain the simpler form of this expansion when~$P(z,w)$ is a characteristic polynomial of
the form studied in our work.

It will be assumed that~$P$ does not vanish except at {\em positive nodes\/}, i.e. elements~$(e^{\pi ir_0},e^{\pi is_0})$ of the unit torus such that
\[
P(e^{\pi i(r_0+r)},e^{\pi i(s_0+s)})=\pi^2(A_zr^2+2Brs+A_ws^2)+O(\|(r,s)\|^3)
\]
with~$A_z,A_w>0$ and~$D:=\sqrt{A_zA_w-B^2}>0$. To such a node, let us associate the parameter
\begin{equation}
\label{equ:tau1}
\tau=\frac{-B+iD}{A_w}
\end{equation}
in the complex upper-half plane.
If~$\{(z_j,w_j)\,|\,1\le j\le \ell\}$ denote the zeros of~$P$ and~$\{\tau_j\,|\,1\le j\le \ell\}$ the associated parameters multiplied by~$\textstyle{\frac{m}{n}}$, then for all~$(\zeta,\xi)\in S^1\times S^1$ that are not zeros of~$P$, Kenyon, Sun and Wilson show that
\begin{equation}
\label{equ:KSW}
\log P_{mn}(\zeta,\xi)=2mn\,\mathbf{f}_0 + \sum_{j=1}^\ell 2\log \Xi\Big(\frac{\zeta}{z_j^n},\frac{\xi}{w_j^m}\Big|\tau_j\Big)+o(1)
\end{equation}
for~$m$ and~$n$ tending to infinity with~$\textstyle{\frac{m}{n}}$ bounded below and above, where
\begin{equation}
\label{equ:f0}
\mathbf{f}_0=\frac{1}{2}\iint_{S^1\times S^1}\log P(z,w)\frac{dz}{2\pi iz}\frac{dw}{2\pi iw}
\end{equation}
and~$\Xi$ is the explicit function defined by
\[
\Xi(-\exp(2\pi i\phi),-\exp(2\pi i\psi)|\tau)=\left|\frac{\vartheta(\phi\tau-\psi|\tau)\exp(\pi i\tau\phi^2)}{\eta(\tau)}\right|\,.
\]
Here,
\[
\vartheta(\nu|\tau)=\sum_{j\in\Z}\exp(\pi i(j^2\tau + 2j\nu))
\]
is the {\em Jacobi theta function\/} and
\[
\eta(\tau)=\exp(\pi i\textstyle{\frac{\tau}{12}})\prod_{j\ge 1}(1-\exp(2\pi ij\tau))
\]
the {\em Dedekind eta function\/}.

We refer to~\cite{KSW} for a more general form of this result, its proof, and for properties of these special functions.
See also~\cite[Section~3.3]{KSW} for an interpretation of~$\tau$ as the shape parameter of the torus in its natural conformal embedding.
Let us recall that there are three other Jacobi theta functions related to~$\vartheta=\vartheta_{00}$ by
\begin{align*}
\vartheta_{01}(\nu|\tau)&=\vartheta(\nu+\textstyle{\frac{1}{2}}|\tau)\\
\vartheta_{10}(\nu|\tau)&=\exp(\pi i(\nu+\textstyle{\frac{\tau}{4}}))\vartheta(\nu+\textstyle{\frac{\tau}{2}}|\tau)\\
\vartheta_{11}(\nu|\tau)&=i\exp(\pi i(\nu+\textstyle{\frac{\tau}{4}}))\vartheta(\nu+\textstyle{\frac{\tau}{2}}+\textstyle{\frac{1}{2}}|\tau)\,.
\end{align*}
For later use, we also recall the equalities
\begin{equation}
\label{equ:eta}
2\eta(\tau)^3=\vartheta_{00}(\tau)\vartheta_{10}(\tau)\vartheta_{01}(\tau)\,,
\end{equation}
where~$\vartheta_{rs}(\tau)$ stands for~$\vartheta_{rs}(0|\tau)$, and
\begin{equation}
\label{equ:2tau}
\frac{\vartheta_{00}(\tau)\vartheta_{01}(\tau)}{\eta(\tau)^2}=\frac{\vartheta_{01}(2\tau)}{\eta(2\tau)}\,,
\end{equation}
see e.g.~\cite[Lemma~2.5]{KSW}.

\medskip

We will apply Equation~\eqref{equ:KSW} to the characteristic polynomial~$P$,
which by Proposition~\ref{prop:P} satisfies the equality~$P(z,w^{-1})=P(z,w)$.
Furthermore, we know from Proposition~\ref{prop:Q(-1)} that if~$\G$ is bipartite, then any zero~$(z_j,w_j)$ of~$P$ on the unit torus satisfy~$z_j=-1$. We conjecture that this still holds in the non-bipartite case:

\begin{conjecture}
For any graph in the Klein bottle, all the zeroes of the characteristic polynomial~$P$ are positive nodes of the form~$(-1,w_0)$.
\end{conjecture}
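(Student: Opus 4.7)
The conjecture has two parts for a non-bipartite graph~$\G\subset\K$: (a) the spectral curve of~$P$ intersects the unit torus~$S^1\times S^1$ only in real positive nodes (and in at most two of them), and (b) these nodes are of the form~$(-1,w_0)$. Part~(a) is the non-bipartite KSW conjecture for toric graphs, independent of the Klein bottle structure; the plan is to assume~(a) and concentrate on the genuinely Kleinian statement~(b).

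Under~(a), the first step is a symmetry reduction. Any zero~$(z_0,w_0)\in S^1\times S^1$ of~$P$ produces three more zeros~$(z_0,\overline{w}_0)$,~$(\overline{z}_0,w_0)$, and~$(\overline{z}_0,\overline{w}_0)$, via Proposition~\ref{prop:P}(i) combined with the reality of~$P$. Since at most two of these can be distinct, at least one of~$z_0\in\{\pm 1\}$ or~$w_0\in\{\pm 1\}$ must hold. We thus need to rule out~$z_0=+1$ entirely, and to rule out the case~$w_0\in\{\pm 1\}$ with~$z_0\notin\{\pm 1\}$; the remaining possibility,~$z_0=-1$, is exactly the conclusion of~(b).

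The core of the argument is the case~$w_0\in\{\pm 1\}$. Using the factorization~$P(Z^2,w_0)=R(Z,w_0)R(-Z,w_0)$ from Proposition~\ref{prop:P}(ii) together with the identity~$R(-Z,w_0)=\overline{R(Z,w_0)}$ from Remark~\ref{rem:R}(iii), a zero~$(z_0,w_0)$ on~$S^1\times S^1$ corresponds to a pair~$\{\zeta,-\zeta\}\subset S^1$ of roots of~$R(\,\cdot\,,w_0)$. The plan is then to establish the non-bipartite analogue of Proposition~\ref{prop:rootS}: every root of~$R(z,\pm 1)$ on the unit circle is purely imaginary. Granted this, one has~$\zeta\in\{\pm i\}$ and hence~$z_0=\zeta^2=-1$, which in one stroke disposes of both the $(z_0=+1,w_0=\pm 1)$ subcase and the hybrid $(w_0=\pm 1, z_0\notin\{\pm 1\})$ subcase.

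The remaining possibility is~$z_0=+1$ with~$w_0\notin\{\pm 1\}$; this would produce two distinct zeros~$(1,w_0)$ and~$(1,\overline{w}_0)$ on the unit torus. Here one can use the non-negativity of~$P$ on~$S^1\times S^1$ (Proposition~\ref{prop:Ppos}) together with the node hypothesis to analyze~$P$ along the vertical line~$\{z=1\}$, aiming for a contradiction from the Hessian conditions. The principal obstacle, however, is the third-paragraph analogue of Proposition~\ref{prop:rootS}: the bipartite proof rests on Harnack curve theory from~\cite{KOS}, which has no direct non-bipartite counterpart. I would attempt a continuity argument, deforming edge weights from a reference configuration where the statement is verifiable (for instance a small perturbation of a bipartite case obtained by switching off the sign-creating edges), tracking the roots of~$R(z,\pm 1)$ with the goal of showing that they may cross the unit circle only through~$\pm i$. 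Making this rigorous would require a global topological constraint playing the role of the amoeba in the bipartite argument, and it is precisely this missing ingredient that makes the full statement conjectural.
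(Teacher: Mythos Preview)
This statement is explicitly labeled a \emph{conjecture} in the paper; there is no proof to compare against. The paper establishes only the bipartite case (Proposition~\ref{prop:Q(-1)}) and the case of Fisher graphs (Lemma~\ref{lemma:node}, via the bipartite mapping of~\cite{Dub,CDC}), and it records the compatibility with the KSW conjecture on non-bipartite toric graphs. You correctly acknowledge this at the end of your proposal.

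Your outline is nonetheless a reasonable roadmap, and it closely mirrors the structure of the paper's bipartite argument: the four-fold symmetry reduction is exactly the opening move of the proof of Proposition~\ref{prop:Q(-1)}, and the passage through the factorisation~$P(z,w_0)=R(z^{1/2},w_0)R(-z^{1/2},w_0)$ plays the role of the bipartite factorisation through~$S$. You rightly isolate the missing piece as a non-bipartite substitute for the Harnack-curve input behind Propositions~\ref{prop:Q(-1)} and~\ref{prop:rootS}. Two small corrections. First, the factorisation only guarantees that \emph{one} of~$\zeta,-\zeta$ is a root of~$R(\,\cdot\,,w_0)$, not necessarily both: Remark~\ref{rem:R}(iii) is an identity on coefficients, so for a specific~$\zeta\in S^1$ it yields~$\overline{R(\zeta,w_0)}=R(-\overline{\zeta},w_0)=R(-\zeta^{-1},w_0)$ rather than~$R(-\zeta,w_0)$. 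This does not damage your conclusion, since either root being~$\pm i$ still forces~$z_0=\zeta^2=-1$. Second, if one assumes the full KSW conjecture (non-bipartite zeros are \emph{real} positive nodes), then~$w_0\in\{\pm 1\}$ holds automatically and your ``remaining possibility'' $z_0=1$, $w_0\notin\{\pm 1\}$ does not arise; the whole question then reduces to ruling out~$z_0=+1$ when~$w_0=\pm 1$.
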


This fact is known to hold for Fisher graphs, see Lemma~\ref{lemma:node} below.
Note also that this is coherent with the conjecture of~\cite[Section~1.2]{KSW},
which states that for an arbitrary
non-bipartite toric graph, the associated characteristic polynomial either never vanishes on the unit torus,
or admits zeros that are positive real nodes. 

\begin{figure}[htb]
\labellist\small\hair 2.5pt
\pinlabel {$=$} at 660 200
\endlabellist
\centering
\includegraphics[width=0.6\textwidth]{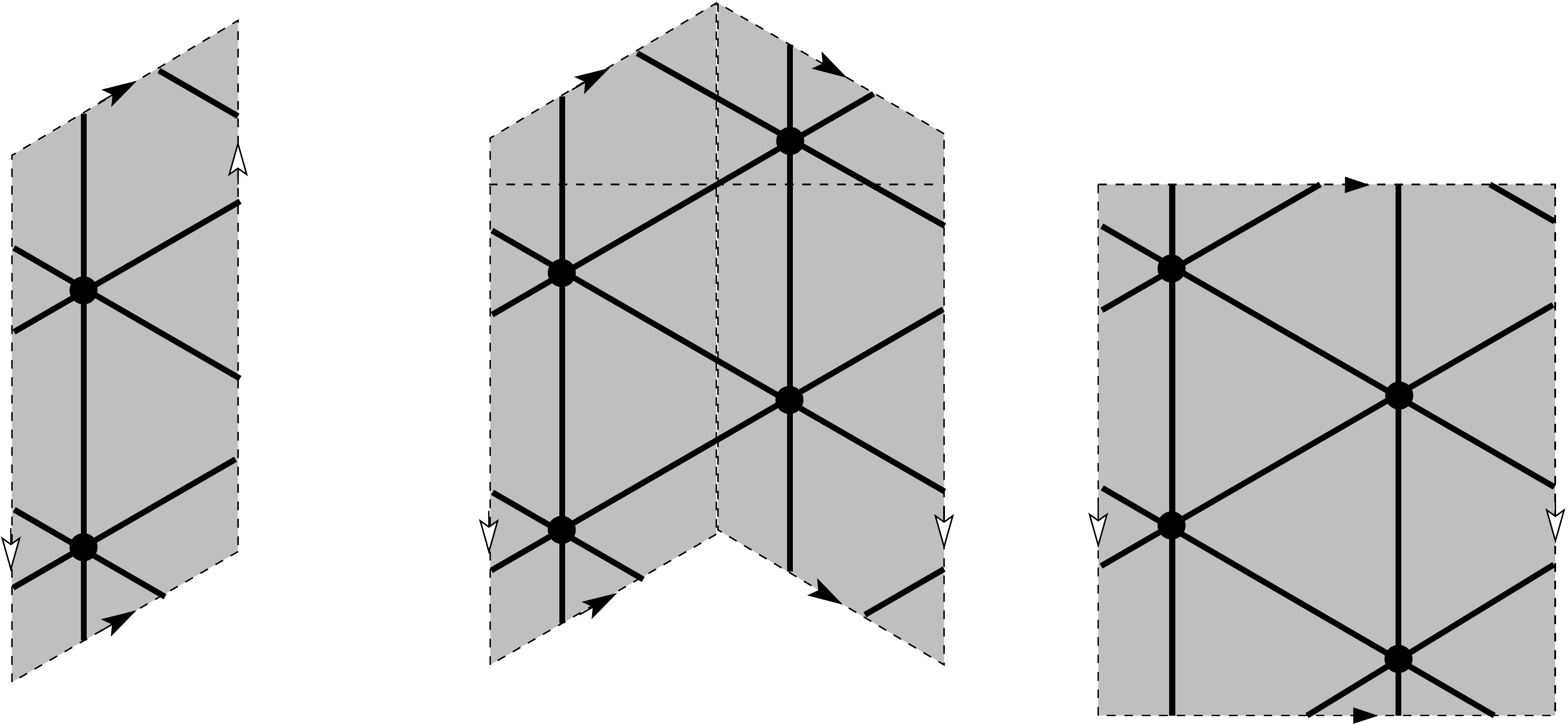}
\caption{A (slanted) Klein bottle together with its torus double cover, which by a cut-and-paste argument is seen to be rectangular. (The triangular lattice is only drawn for concreteness.)
}
\label{fig:tau}
\end{figure}

These two properties of~$P$ easily imply that at any zero of~$P$, the associated
parameter~$\tau$ from Equation~\eqref{equ:tau1} is purely imaginary.
Note that this can also be motivated geometrically, since any torus made up of two copies of a Klein bottle
is rectangular, as illustrated in Figure~\ref{fig:tau}.
(Thanks are due to Andrea Sportiello for this observation.)
More precisely, the parameter~$\tau_j$ associated with the zero~$(-1,w_j)$ of~$P$ is simply given by
\begin{equation}
\label{equ:tau}
\tau_j=i\,\frac{m}{n}\left|\frac{\partial_z^2P(-1,w_j)}{\partial_w^2P(-1,w_j)}\right|^{1/2}\,.
\end{equation}
In the bipartite case, it takes the yet simpler form
\begin{equation}
\label{equ:tau-bip}
\tau_j=i\,\frac{m}{n}\left|\frac{\partial_zQ(-1,w_j)}{\partial_wQ(-1,w_j)}\right|\,.
\end{equation}

Note that we will only need to apply Equation~\eqref{equ:KSW} for~$(\zeta,\xi)=(1,\pm 1)$.
Since~$z_j=-1$ and~$n$ is odd, we will only make use of evaluations at~$(-1,\frac{\xi}{w^m_j})$ of the function~$\Xi$, which are of the form
\begin{equation}
\label{equ:Xi}
\Xi(-1,-\exp(2\pi i\psi)|\tau)=\left|\frac{\vartheta(-\psi|\tau)}{\eta(\tau)}\right|=\frac{\vartheta(\psi|\tau)}{\eta(\tau)}>0\,,
\end{equation}
since~$\eta(\tau)$ and~$\vartheta(\psi|\tau)$ are strictly positive for~$\psi$ real
and~$\tau$ purely imaginary.

\subsection{Product of evaluations of a polynomial at roots of unity}
\label{sub:roots}

For any weighted graph in~$\T^2$, we saw in Proposition~\ref{prop:Ppos} that the associated characteristic polynomial~$P$ is non-negative on the unit torus. Furthermore, it is proved in the bipartite case~\cite{KOS} and conjectured in the general case~\cite{KSW} that all the zeros of~$P$ in the unit torus are positive nodes.
Therefore, in view of Theorem~\ref{thm:Zmn} and Equation~\eqref{equ:KSW} above, we are left with the analysis of the asymptotics of
\[
\mathrm{Arg}\Big(\prod_{z^n=1}R(z,1)\Big)\quad\text{and}\quad\mathrm{Arg}\Big(\prod_{z^n=1}R(z,-1)\Big)
\]
for~$n$ odd tending to infinity. This is the subject of this section.

Let us assume that a given polynomial~$R(z)\in\C[z^{\pm 1}]$ satisfies~$R(iz)\in\R[z^{\pm 1}]$. Then, the argument of its leading coefficient is of the form~$\lambda\frac{\pi}{2}$ for some~$\lambda\in\Z/4\Z$. Furthermore, its roots are either purely imaginary, or paired up as~$\{iz,i\overline{z}\}$. Let us denote by~$p$ the number of such pairs with modulus~$>1$ and by~$r_+$ (resp.~$r_-$) the number of roots in the positive (resp. negative) imaginary axis with modulus~$>1$, counted with multiplicity. Finally, let us assume that the only roots of~$R$ on the unit circle are~$i$ and~$-i$, and write~$m_+$ and~$m_-$ for the respective multiplicities. 

\begin{lemma}
\label{lemma:alpha}
With the notations and assumptions above, we have
\[
\mathrm{Arg}\Big(\prod_{z^n=1}R(z)\Big)=\left((\lambda+2p+r_--r_+)n+(-1)^{\frac{n-1}{2}}\textstyle{\frac{m_--m_+}{2}}\right)\frac{\pi}{2}+o(1)\in \Z/2\pi\Z 
\]
as~$n$ odd tends to infinity.
\end{lemma}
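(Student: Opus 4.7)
My plan is to reduce the product over roots of unity to a product indexed by the zeros of~$R$, and then to track arguments case by case. Writing~$R(z)=c\,z^{-N}\prod_j(z-z_j)$ with~$c$ the leading coefficient (so~$\mathrm{Arg}(c)=\lambda\pi/2$) and the product running over all zeros counted with multiplicity, and using the odd-$n$ identities~$\prod_{z^n=1}z=e^{\pi i(n-1)}=1$ and~$\prod_{z^n=1}(z-z_j)=(-1)^n(z_j^n-1)=1-z_j^n$, I obtain
\[
\prod_{z^n=1}R(z)=c^n\prod_j(1-z_j^n).
\]
The remaining task is to estimate $n\lambda\,\pi/2+\sum_j\mathrm{Arg}(1-z_j^n)$ asymptotically modulo~$2\pi$.

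Next I would split the zeros of~$R$ into three classes by modulus. Roots with~$|z_j|<1$ give~$1-z_j^n\to 1$, contributing~$o(1)$. Roots with~$|z_j|>1$ satisfy~$\mathrm{Arg}(1-z_j^n)=\pi+n\,\mathrm{Arg}(z_j)+o(1)$ modulo~$2\pi$, via the factorisation~$1-z_j^n=-z_j^n(1-z_j^{-n})$; a pair~$\{iw,i\bar w\}$ (whose arguments sum to~$\pi$) therefore contributes~$2\pi+n\pi$, while a positive (resp. negative) imaginary root of modulus~$>1$ contributes~$\pi+n\pi/2$ (resp.~$\pi-n\pi/2$). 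For the only allowed unit-circle roots~$z_j=\pm i$, setting~$\epsilon:=(-1)^{(n-1)/2}$ one has~$i^n=\epsilon i$ for odd~$n$, so a direct computation gives~$\mathrm{Arg}(1-(\pm i)^n)=\mp\epsilon\pi/4$.

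Summing all contributions in units of~$\pi/2$ yields
\[
n(\lambda+2p+r_+-r_-)+4p+2r_++2r_-+\epsilon(m_--m_+)/2+o(1)\pmod 4.
\]
To match the claimed formula~$n(\lambda+2p+r_--r_+)+\epsilon(m_--m_+)/2$, I would verify that the difference~$2n(r_+-r_-)+4p+2(r_++r_-)$ vanishes modulo~$4$. Since~$n$ is odd, $2n\equiv 2\pmod 4$ and~$4p\equiv 0$, so the difference collapses to~$2(r_+-r_-)+2(r_++r_-)=4r_+\equiv 0\pmod 4$, completing the proof.

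The main obstacle I anticipate is purely combinatorial bookkeeping: the natural computation produces~$r_+-r_-$ rather than the claimed~$r_--r_+$ as the coefficient of~$n$, and their equality modulo~$2\pi$ depends critically on both the parity of~$n$ and the residual term~$4p+2r_++2r_-$ coming from the~$\pi$-shifts in the asymptotics of~$\mathrm{Arg}(1-z_j^n)$ for~$|z_j|>1$. One must also take care not to reduce individual arguments modulo~$2\pi$ prematurely, since it is only the total, summed in~$\mathbb{R}/2\pi\mathbb{Z}$, that matches the claim. Once this mod-$4$ reconciliation is observed, the remaining analysis is routine.
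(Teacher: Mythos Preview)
Your proof is correct and follows essentially the same route as the paper's: factor~$R$, reduce the product over roots of unity to~$c^n\prod_j(1-z_j^n)$, and track the argument root by root according to~$|z_j|$ relative to~$1$. The only cosmetic difference is that the paper writes the large-modulus contribution as~$(\mathrm{Arg}(\mu)+\pi)n+o(1)$ and reduces each case modulo~$2\pi$ immediately, obtaining~$\pi n$,~$-\frac{\pi}{2}n$,~$\frac{\pi}{2}n$ for pairs, positive-imaginary and negative-imaginary roots respectively; this yields the stated coefficient~$r_--r_+$ directly without the residual~$4p+2r_++2r_-$ and the mod-$4$ reconciliation you perform at the end. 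Your more cautious bookkeeping (defer all reductions, then collapse~$4r_+\equiv 0$) is equally valid and arguably safer.
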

\begin{proof}
If~$\{\mu_k\}_k$ denotes the set of roots of~$R(z)$ and~$m_k$ the multiplicity of~$\mu_k$,
then we have~$R(z)=c\,z^d\,i^\lambda\prod_k(z-\mu_k)^{m_k}$ for some positive real number~$c$ and~$d\in\Z$.
Therefore, the equality~$\prod_{z^n=1}(z-\mu)=1-\mu^n$ leads to
\[
\prod_{z^n=1}R(z)=c^n\Big(\prod_{z^n=1}z\Big)^di^{\lambda n}\prod_k\Big(\prod_{z^n=1}(z-\mu_k)\Big)^{m_k}=c^n\,i^{\lambda n}\prod_k(1-\mu^n_k)^{m_k}\,.
\]
It follows that
\[
\mathrm{Arg}\Big(\prod_{z^n=1}R(z)\Big)=\lambda n \frac{\pi}{2}+\sum_k m_k\,\mathrm{Arg}(1-\mu_k^n)\in\Z/2\pi\Z\,,
\]
and we are left with the analysis of~$\mathrm{Arg}(1-\mu^n)$ as~$n$ tends to infinity for various~$\mu\in\C$.

If a root~$\mu$ has modulus~$|\mu|<1$, then~$\mathrm{Arg}(1-\mu^n)=o(1)$ does not contribute in the limit. On the other hand, if~$\mu=re^{i\varphi}$ with~$r>1$, then
\[
e^{i\mathrm{Arg}(1-\mu^n)}=\frac{1-\mu^n}{|1-\mu^n|}=\frac{1-r^ne^{in\varphi}}{|1-r^ne^{in\varphi}|}=\frac{r^{-n}-e^{in\varphi}}{|r^{-n}-e^{in\varphi}|}=-e^{i(n\varphi+o(1))}\,,
\]
so~$\mathrm{Arg}(1-\mu^n)=(\mathrm{Arg}(\mu)+\pi)n+o(1)$ since~$n$ is odd. Therefore, each pair~$\{iz,i\overline{z}\}$ with~$|z|>1$ contributes~$\mathrm{Arg}(iz)n+\mathrm{Arg}(i\overline{z})n=\pi n$, each root~$\mu\in i\R_{>0}$ with~$|\mu|>1$ contributes~$-\frac{\pi}{2}n$ and each root~$\mu\in i\R_{<0}$ with~$|\mu|>1$ contributes~$\frac{\pi}{2}n$. We end up with the total contribution of the roots of modulus~$>1$ equal to~$(2p-r_++r_-)n\frac{\pi}{2}$, as expected.
Finally, since~$n$ is odd, each root~$\mu=i$ (resp.~$\mu=-i$) contributes~$\mathrm{Arg}(1-i^n)=(-1)^{\frac{n+1}{2}}\frac{\pi}{4}$ (resp.~$\mathrm{Arg}(1+i^n)=(-1)^{\frac{n-1}{2}}\frac{\pi}{4}$). This concludes the proof.
\end{proof}

We need one last lemma.

\begin{lemma}
\label{lemma:cst}
Let~$p_t(z)\in\R[z^{\pm1}]$ be a~$1$-parameter family of non-zero Laurent polynomials having
no roots in the unit circle, with the coefficients of~$p_t$ given by continuous functions of~$t$.
Consider the associated integer
\[
A_t:=(s-1-d)+2p+r_--r_+\,,
\]
where~$d$ is the top-degree of~$p_t$,~$s\in\{\pm 1\}$ denotes the sign of its leading coefficient, and~$r_+$ (resp.~$r_-$,~$p$) the number of real roots~$z>1$ of~$p_t$
(resp. real roots~$z<-1$, resp. the number of pairs of conjugate roots with modulus~$>1$), counted
with multiplicities.
Then, the congruence class of~$A_t$ modulo~$4$ does not depend on~$t$.
\end{lemma}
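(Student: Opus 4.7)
\textit{Plan.} The plan is to exhibit two manifestly $t$-invariant quantities whose combination determines $A_t \bmod 4$, bypassing any case-by-case analysis of transitions (roots escaping to infinity, complex pairs meeting the real axis, sign flips of $s$, etc.).

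First I would consider the winding number $W_t \in \Z$ of the continuous map $p_t|_{S^1}\colon S^1 \to \C^{*}$; this is well defined by the no-roots-on-$S^1$ hypothesis, and is continuous hence constant in $t$. Writing $p_t(z) = z^{j_{\min}} q_t(z)$ with $q_t$ a polynomial of degree $d - j_{\min}$ satisfying $q_t(0) \ne 0$, the argument principle gives
\[
W_t \;=\; j_{\min} + \#\bigl\{\text{roots of } q_t \text{ inside } |z|<1\bigr\} \;=\; d - (2p + r_+ + r_-),
\]
since the roots of $q_t$ with $|z|>1$ are precisely the roots of $p_t$ in $\C^{*}$ with $|z|>1$, numbering $2p + r_+ + r_-$. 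Next I would use $\varepsilon_t := \mathrm{sign}(p_t(1)) \in \{\pm 1\}$, constant by continuity and nonvanishing; expanding $p_t(1) = q_t(1) = a_d \prod_i (1 - \mu_i)$ and noting that conjugate pairs contribute $|1-\mu|^2 > 0$, real roots $\mu > 1$ contribute a negative factor, and every other real root a positive factor, one finds $\varepsilon_t = s \cdot (-1)^{r_+}$.

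From the definitions, $A_t + W_t = s - 1 - 2 r_+$, so $A_t = -1 + (s - 2 r_+) - W_t$, and the key congruence to check is $s - 2 r_+ \equiv s(-1)^{r_+} \pmod 4$. This follows from an elementary parity analysis: if $r_+$ is even, $2 r_+ \equiv 0 \pmod 4$ and both sides equal $s$; if $r_+$ is odd, $2 r_+ \equiv 2 \pmod 4$, and $s - 2 \equiv -s \pmod 4$ (since $s = \pm 1$ gives $2s + 2 \equiv 0 \pmod 4$), matching $s(-1)^{r_+}$. Consequently,
\[
A_t \;\equiv\; -1 + \varepsilon_t - W_t \pmod 4,
\]
and the right-hand side is constant in $t$.

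The main subtlety I expect is conceptual rather than computational: one must trust that the formula $W_t = d_t - (2 p_t + r_{+,t} + r_{-,t})$ computes the \emph{same} winding number at every $t$, even at exceptional values where $d_t$, $j_{\min,t}$ or the counts $p, r_\pm$ individually jump (e.g.\ when a leading coefficient vanishes and a real root is thrown to infinity, or a conjugate pair outside the unit disc collides into two real roots). This is automatic from homotopy invariance---the path $t \mapsto p_t|_{S^1}$ is continuous in the space of non-vanishing maps $S^1 \to \C^{*}$---so the jumps in the individual summands $d_t, p_t, r_{\pm, t}$ must cancel exactly, and one never needs to enumerate them.
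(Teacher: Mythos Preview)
Your proof is correct and takes a genuinely different route from the paper. The paper argues by case analysis: first assuming the leading coefficient does not vanish (so roots vary continuously and the only event to track is a conjugate pair with $|z|>1$ merging into a real double root, which changes $A_t$ by $-4$), then treating the vanishing of the leading coefficient by checking the four sign combinations $(s,s')$ of the top two coefficients and verifying directly that $A_t \bmod 4$ survives the root escaping to $\pm\infty$; simultaneous vanishing of several top coefficients is handled by perturbation. Your approach replaces all of this bookkeeping by the single identity $A_t \equiv -1 + \varepsilon_t - W_t \pmod 4$, where $\varepsilon_t = \operatorname{sign} p_t(1)$ and $W_t$ is the winding number of $p_t|_{S^1}$, both manifestly constant by continuity. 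This is cleaner and more conceptual: it explains \emph{why} the congruence class is invariant rather than verifying that each possible transition preserves it. The paper's argument, on the other hand, stays closer to the ground and requires no appeal to the argument principle. One small point worth making explicit in your write-up: the phrase ``coefficients given by continuous functions of $t$'' should be read as a fixed finite family $\{c_k(t)\}_{j_0\le k\le d_0}$ of continuous functions, so that $t\mapsto p_t|_{S^1}$ is continuous into $C(S^1,\C^*)$ and the homotopy invariance of $W_t$ applies; this is clearly the intended meaning, but your last paragraph relies on it.
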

\begin{proof}
Let us first assume that the leading coefficient of~$p_t$ does not vanish. Since the roots of~$p_t$ depend continuously on~$t$ and are not allowed to cross the unit circle, the congruence class of~$A_t$ modulo~$4$
is indeed constant: this is trivial unless
a pair of conjugate roots merges into a double real root~$z>1$, in which case~$A_t$ changes by~$-4$.

Let us now assume that the leading coefficient~$c_d$ of~$p_t$ vanishes at~$t=t'$,
but the next coefficient~$c_{d-1}$ does not vanish at~$t=t'$.
There are~$4$ cases to be considered, depending on the possible values of
the signs~$s$ of~$c_d$ for~$t\in(t'-\varepsilon,t')$ and~$s'$ of~$c_{d-1}$ at~$t'$.
Let us first assume that~$(s,s')=(1,1)$. Then, as~$t$ tends to~$t'$, the largest root of~$p_t$ tends
to~$-\infty$ along the real axis. Hence, both integers~$r_-$ and~$d$ drop by~$1$ while all the other
integers stay constant, leading to~$A_t=A_{t'}$. The case of~$(s,s')=(-1,-1)$ is identical.
For~$(s,s')=(1,-1)$, the largest root of~$p_t$ tends to~$+\infty$ along the real axis, leading to~$r_+$ and~$d$ dropping by~$1$, which is compensated by~$s$ dropping by~$2$.
The final case is~$(s,s')=(-1,-1)$, where the largest root of~$p_t$ tends to~$-\infty$ along the real axis, leading to~$r_+$ and~$d$ dropping by~$1$ and~$s$ dropping by~$2$. As a result, we have~$A_{t'}=A_t+4$, and
the residue modulo~$4$ is constant indeed.

In general, it might well happen that the coefficients~$c_d,c_{d-1},\dots,c_{d-\ell+1}$ of~$p_t$
simultaneously vanish at~$t=t'$ for some~$\ell\ge 1$, with~$c_{d-\ell}\neq 0$ since~$p_t$ is never
identically zero. However, by a small perturbation of the coefficients, it can be assumed that~$c_d$ vanishes
first, followed by~$c_{d-1}$ at a later time, then by~$c_{d-2}$, and so one. Therefore,~$\ell$ successive applications of the case studied above leads to the proof of the general case. 
\end{proof}

\subsection{Dimer asymptotics in the general case}
\label{sub:as-gen}

We are now ready to state and prove the main result of this section in its most general form,
valid for arbitrary (possibly non-bipartite) graphs in the Klein bottle. The bipartite case is the topic of the next section.

Let~$\G\subset\K$ be a weighted graph embedded in the Klein bottle, and let~$R(z,\pm 1)$ denote the
associated Kleinian polynomials.
Let us write
\begin{equation}
\label{equ:A}
A:=\lambda+2p+r_--r_++\textstyle{\frac{m_--m_+}{2}}\,,
\end{equation}
where the integers~$\lambda,p,r_\pm$ and~$m_\pm$ are associated to the polynomial~$R(z,1)$ as in Lemma~\ref{lemma:alpha}, and let us denote by~$A'$ the
corresponding quantity for~$R(z,-1)$.
We first assume that~$\G\subset\K$ is not ``too small'' in the sense of Remark~\ref{rem:C},
and deal with the ``too small'' case in Remark~\ref{rem:as-gen}~(iii).

\begin{theorem}
\label{thm:as-gen}
Let~$\G\subset\K$ be a weighted graph embedded in the Klein bottle, and let us assume that all the zeros in the unit torus of the associated characteristic polynomial~$P$ are positive nodes of the form~$\{(-1,\exp(2\pi i\psi_j))\,|\,1\le j\le \ell\}$. Then, we have the asymptotic expansion
\[
\log Z_{mn}=mn\frac{\mathbf{f}_0}{2}+\mathsf{fsc}+o(1)
\]
for~$m$ and~$n$ tending to infinity with~$n$ odd and~$m/n$ bounded below and above, where~$\mathbf{f}_0$ is as in~\eqref{equ:f0} and~$\mathsf{fsc}=\log\mathsf{FSC}$ with
\[
\mathsf{FSC}=\left|\sin(A{\textstyle{\frac{\pi}{4}}})\right|\prod_{j=1}^\ell\left(\frac{\vartheta_{01}(m\psi_j|\tau_j)}{\eta(\tau_j)}\right)^{\frac{1}{2}}+\left|\cos(A'{\textstyle{\frac{\pi}{4}}})\right|\prod_{j=1}^\ell\left(\frac{\vartheta_{00}(m\psi_j|\tau_j)}{\eta(\tau_j)}\right)^{\frac{1}{2}}
\]
if~$m$ is odd, and
\[
\mathsf{FSC}=\left|\sin((A-A'){\textstyle{\frac{\pi}{4}}})\right|\prod_{j=1}^\ell\left(\frac{\vartheta_{01}(m\psi_j|\tau_j)}{\eta(\tau_j)}\right)^{\frac{1}{2}}+\prod_{j=1}^\ell\left(\frac{\vartheta_{00}(m\psi_j|\tau_j)}{\eta(\tau_j)}\right)^{\frac{1}{2}}
\]
if~$m$ is even.
Here, the parameter~$\tau_j$ is given
by~$i\frac{m}{n}\left|\frac{\partial_z^2P(-1,w_j)}{\partial_w^2P(-1,w_j)}\right|^{1/2}$,
while~$A$ and~$A'$ are the modulo~$4$ integers determined by~$R(z,1)$ and~$R(z-1)$ as in~\eqref{equ:A}.
Finally, if the dimer weights vary continuously so that~$P(-1,1)\neq 0$ (resp,~$P(-1,-1)\neq 0$), then the modulo~$4$ integer~$A$ (resp.~$A'$) stays constant.
\end{theorem}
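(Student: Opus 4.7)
The plan is to feed two asymptotic inputs into the exact expression of Theorem~\ref{thm:Zmn}: the KSW expansion~\eqref{equ:KSW} for $P_{mn}(1,\pm 1)$, and Lemma~\ref{lemma:alpha} for the angles $\alpha_n,\alpha'_n$; taking a logarithm will then recover the claimed additive expansion.

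Under the hypothesis, every zero of $P$ on the unit torus is a positive node of the form $(-1,\exp(2\pi i\psi_j))$ with $\tau_j$ purely imaginary by~\eqref{equ:tau}. Since $n$ is odd, $z_j^n=-1$. Writing the second argument of $\Xi$ at $\xi=\pm 1$ in the form $-\exp(2\pi i\psi)$ and using~\eqref{equ:Xi} together with $\vartheta_{01}(\nu|\tau)=\vartheta_{00}(\nu+\tfrac{1}{2}|\tau)$, one reads off $\Xi=\vartheta_{01}(m\psi_j|\tau_j)/\eta(\tau_j)$ for $\xi=1$ and $\Xi=\vartheta_{00}(m\psi_j|\tau_j)/\eta(\tau_j)$ for $\xi=-1$. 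Dividing~\eqref{equ:KSW} by $4$ then gives
\[
P_{mn}(1,1)^{1/4}=\exp\!\big(\tfrac{mn}{2}\mathbf{f}_0\big)\prod_{j=1}^{\ell}\Bigg(\frac{\vartheta_{01}(m\psi_j|\tau_j)}{\eta(\tau_j)}\Bigg)^{1/2}(1+o(1)),
\]
and the same with $\vartheta_{00}$ in place of $\vartheta_{01}$ for $P_{mn}(1,-1)^{1/4}$. Simultaneously, the hypothesis combined with Proposition~\ref{prop:P}(ii)--(iii) forces $R(\pm 1,\pm 1)\neq 0$ and, generically, $R(\pm i,\pm 1)\neq 0$, so $m_\pm=0$ in Lemma~\ref{lemma:alpha} and $\alpha_n=An\pi/2+o(1)$, $\alpha'_n=A'n\pi/2+o(1)$. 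The identities $\sin(\tfrac{3k\pi}{4})=-(-1)^k\sin(\tfrac{k\pi}{4})$ and $\cos(\tfrac{3k\pi}{4})=(-1)^k\cos(\tfrac{k\pi}{4})$ for integer~$k$ then imply $|\sin(An\pi/4)|=|\sin(A\pi/4)|$, $|\cos(A'n\pi/4)|=|\cos(A'\pi/4)|$ and $|\sin((A-A')n\pi/4)|=|\sin((A-A')\pi/4)|$ for all odd~$n$. The non-generic case where some $\psi_j\in\{0,\tfrac{1}{2}\}$ is handled by perturbing the weights off this locus and passing to the limit by continuity. Substitution into Theorem~\ref{thm:Zmn} followed by a logarithm yields the stated expression for $\mathsf{fsc}$.

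For the constancy claim, apply Lemma~\ref{lemma:cst} to the one-parameter family $\widetilde R_t(z):=R_t(iz,1)\in\R[z^{\pm 1}]$ (real by Remark~\ref{rem:R}(iii)). A root of $\widetilde R_t$ on the unit circle lifts via $z\mapsto iz$ to a root of $R_t(z,1)$ on $S^1$ and, by Proposition~\ref{prop:P}(ii), to a root of $P_t(z,1)$ on $S^1$; the hypothesis forces any such root to equal $-1$, so $P_t(-1,1)\neq 0$ throughout the deformation keeps the roots of $\widetilde R_t$ off the unit circle, and Lemma~\ref{lemma:cst} applies. The invariants $p,r_\pm$ of $\widetilde R_t$ on the real axis coincide with those of $R_t(z,1)$ on the imaginary axis used in the definition of $A$, and inspecting the leading coefficient $i^dc_d$ of $\widetilde R_t$ shows that $d+\lambda$ must be even and $(s-1-d)\equiv\lambda\pmod 4$, so the Lemma~\ref{lemma:cst} invariant coincides with $A\bmod 4$. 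The case of $A'$ is identical, using $R_t(iz,-1)$. The main obstacle is precisely this last step: one must carefully bridge the two bookkeeping conventions (the theorem's, on the imaginary axis, versus Lemma~\ref{lemma:cst}'s, on the real axis) and verify that the reality constraint on $\widetilde R_t$ forces the apparent shift $\lambda+d+1-s$ to vanish modulo~$4$.
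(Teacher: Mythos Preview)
Your overall strategy coincides with the paper's: feed the KSW expansion~\eqref{equ:KSW} and Lemma~\ref{lemma:alpha} into Theorem~\ref{thm:Zmn}, then take a logarithm. Your identification of the two $\Xi$ values with $\vartheta_{01}(m\psi_j|\tau_j)/\eta(\tau_j)$ and $\vartheta_{00}(m\psi_j|\tau_j)/\eta(\tau_j)$, the reduction $|\sin(An\pi/4)|=|\sin(A\pi/4)|$ for odd~$n$, and the constancy argument via Lemma~\ref{lemma:cst} (including the check that $s-1-d\equiv\lambda\pmod 4$) are all correct and match the paper.

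The gap is your treatment of the case $\psi_j\in\{0,\tfrac12\}$ by perturbation. This fails for two reasons. First, the symmetry $P(z,w)=P(z,w^{-1})$ of Proposition~\ref{prop:P}(i) pairs every zero $(-1,w_0)$ with $(-1,\bar w_0)$; when $P$ has a \emph{single} node on the unit torus it is therefore forced to sit at $w_0=\pm 1$, and no perturbation of the Klein-bottle weights can move it off that locus. Second, even when a perturbation exists, you would need the limiting value of the generic $A$ to agree modulo~$4$ with the value of $A$ at the non-generic point as defined in~\eqref{equ:A}; but the term $\tfrac{m_--m_+}{2}$ there can be nonzero (e.g.\ $m_+=2$, $m_-=0$ is not excluded), and your constancy statement via Lemma~\ref{lemma:cst} applies only while $P(-1,1)\neq 0$, i.e.\ precisely away from the locus you are trying to reach.

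The paper avoids the detour entirely. By Proposition~\ref{prop:P}(ii) the only roots of $R(z,1)$ on $S^1$ are at $z=\pm i$, and a node of $P$ at $(-1,1)$ forces $m_++m_-=2$, so $\tfrac{m_--m_+}{2}\in\Z$ and Lemma~\ref{lemma:alpha} applies as stated, with the $m_\pm$ kept in the formula. A one-line congruence then gives, for all odd $n$,
\[
(\lambda+2p+r_--r_+)\,n+(-1)^{(n-1)/2}\,\tfrac{m_--m_+}{2}\;\equiv\;A\,n\pmod 4
\]
with $A$ as in~\eqref{equ:A}, whence $|\sin(\alpha_n/2)|\to|\sin(A\pi/4)|$ without any genericity assumption.
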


\begin{proof}
Let us apply Theorem~\ref{thm:Zmn} to~$\G\subset\K$, and the work of Kenyon, Sun and Wilson in the form of Equation~\eqref{equ:KSW} to~$P(\zeta,\xi)=P(1,\pm 1)$, together with Equations~\eqref{equ:tau} and~\eqref{equ:Xi}. This yields the expected
asymptotic expansion of~$\log Z_{mn}$, with angles~$\alpha_n,\alpha'_n$ to be determined.
 
By Remark~\ref{rem:R}~(i), the polynomial~$R(iz,1)$ belong to~$\R[z^{\pm 1}]$.
Furthermore, the equality~$P(z^2,1)=R(z,1)R(-z,1)$ of Proposition~\ref{prop:P}
shows that the roots of~$R(z,1)$ in~$S^1$ correspond to roots of~$P$ in the unit torus,
i.e. satisfy~$z=\pm i$ by hypothesis. Since such a root is a node of~$P$,
the roots~$i$ and~$-i$ of~$R(z,1)$ have total multiplicity~$m_++m_-=2$, so~$\frac{m_+-m_-}{2}$ is an integer.
Therefore, Lemma~\ref{lemma:alpha} can be applied to~$R(z,1)$, leading to
\[
\alpha_n/2=\mathrm{Arg}\Big(\prod_{z^n=1}R(z,1)\Big)/2=\left((\lambda+2p+r_--r_+)n+(-1)^{\frac{n-1}{2}}\textstyle{\frac{m_--m_+}{2}}\right)\frac{\pi}{4}+o(1)\in \Z/\pi\Z\,,
\]
and similarly for~$R(z,-1)$. Furthermore, since~$n$ is odd, we have the modulo~$4$ equality
\[
(\lambda+2p+r_--r_+)n+(-1)^{\frac{n-1}{2}}\textstyle{\frac{m_--m_+}{2}}\equiv\left(\lambda+2p+r_--r_++\textstyle{\frac{m_--m_+}{2}}\right)n=A\cdot n\in\Z/4\Z\,,
\]
and similarly for~$A'$. The result follows from the observation that~$|\sin(An{\textstyle{\frac{\pi}{4}}})|$ and~$|\cos(An{\textstyle{\frac{\pi}{4}}})|$ are independent of~$n$ odd.

Finally, let us consider the~$1$-parameter family of polynomials~$p_t(z)=R(iz,1)$ given by a
continuous path in the dimer weights of~$\G$, assuming that~$P(-1,1)=R(i,1)R(-i,1)$ never vanishes.
In such as case, we have~$m_+=m_-=0$, and the modulo~$4$ integer~$\lambda$ associated with~$R(z,1)$
translates to~$s-1-d$ for~$p_t(z)$, leading to the identification of~$A$ from Equation~\eqref{equ:A}
with~$A_t$ from Lemma~\ref{lemma:cst}. The coefficients of~$R(iz,1)$ being continuous functions of
the dimer weights, the last sentence of the theorem now follows from Lemma~\ref{lemma:cst}.
The proof for~$A'$ is identical.
\end{proof}

\begin{remarks}
\label{rem:as-gen}
\begin{enumerate}[(i)]
\item{The only contribution of the characteristic polynomial~$R$ to~$\mathsf{FSC}$ is in
the coefficients~$|\sin(A{\textstyle{\frac{\pi}{4}}})|$,~$|\cos(A{\textstyle{\frac{\pi}{4}}})|$
and~$|\sin((A-A'){\textstyle{\frac{\pi}{4}}})|$, which take the values
in~$\{0,\textstyle{\frac{\sqrt{2}}{2}},1\}$.}
\item{It is conjectured in the non-bipartite case and proved in the bipartite case and for Fisher graphs
that~$P$ admits at most two zeros on the unit torus, that are positive nodes of the form~$(-1,w_0)$.
(This follows from~\cite{KOS} together with Proposition~\ref{prop:Q(-1)} in the bipartite case,
and from Lemma~\ref{lemma:node} for Fisher graphs.)
Together with the previous remark, this implies that there are a
finite number of possible finite size corrections. In the bipartite case and for Fisher graphs,
this statement can be made much more precise, see Theorems~\ref{thm:as-bip} and~\ref{thm:as-Ising} below.}
\item{As mentioned in Remarks~\ref{rem:C},~\ref{rem:Kmn} and~\ref{rem:perK},
some graphs are ``too small'', and Theorem~\ref{thm:Zmn} needs some minor adjustment.
In Theorem~\ref{thm:as-gen} above, by Equation~\eqref{equ:small}, this simply
amounts to exchanging the roles of~$A$ and~$A'$.}
\end{enumerate}
\end{remarks}

We conclude this section with two explicit non-bipartite examples.

\begin{example}
\label{ex:fsc-sq}
Consider the~$(1\times 2)$-square lattice of Figure~\ref{fig:ex2-4} with horizontal weights~$x_1=x_2=:x$ and vertical weights~$y_1=y_2=:y$. As computed in Example~\ref{ex:R}, we have
\[
R(z,1)=R(z,-1)=ix^2(z+z^{-1})\,.
\]
This polynomial has leading coefficient of argument~$\frac{\pi}{2}$ and roots~$\pm i$ of multiplicity~$1$; with the notations of Lemma~\ref{lemma:alpha}, this gives~$\lambda=1$,~$p=r_+=r_-=0$,~$m_+=m_-=1$ and leads to~$A=A'=1$.
Furthermore, as computed in Example~\ref{ex:P}, we have
\[
P(z,w)=y^4(w-w^{-1})^4-4x^2y^2(w-w^{-1})^2+x^4(2+z+z^{-1})\,.
\]
This polynomial has two roots in~$S^1\times S^1$, namely~$(z_1,w_1)=(-1,-1)$ and~$(z_2,w_2)=(-1,1)$, both of which are positive nodes with associated parameters~$\tau_1=\tau_2=i\frac{m}{n}\frac{x}{4y}$.
Applying Theorem~\ref{thm:as-gen} to this example (with~$M=m$ and~$N=2n$) together with
Equation~\eqref{equ:2tau}, we get the following result.
(Note that Remark~\ref{rem:as-gen}~(iii) needs to be applied, but has no effect on the final formula.)

\begin{corollary}
\label{cor:square}
For~$N\equiv 2\pmod{4}$, the finite size corrections in the asymptotic expansion of the dimer partition function for the~$(M\times N)$-square lattice in the Klein bottle are given by~$\mathsf{FSC}=\left(\frac{2\vartheta_{01}(2\tau)}{\eta(2\tau)}\right)^{1/2}$ if~$M$ is odd, and by~$\mathsf{FSC}=\frac{\vartheta_{00}(\tau)}{\eta(\tau)}$ if~$M$ is even, where~$\tau=i\frac{Mx}{2Ny}$.\qed
\end{corollary}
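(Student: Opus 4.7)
The plan is to apply Theorem~\ref{thm:as-gen} directly using the data assembled in Example~\ref{ex:fsc-sq}. Writing $m=M$ and $n=N/2$ (so $n$ is odd by the hypothesis $N\equiv 2\pmod 4$), we have $A=A'=1$, and the two nodes of $P$ on the unit torus are $(-1,1)$ and $(-1,-1)$, corresponding to phases $\psi_1=0$ and $\psi_2=1/2$, with common parameter $\tau_1=\tau_2=\tau:=i\frac{Mx}{2Ny}$. Since $A=A'$, Remark~\ref{rem:as-gen}~(iii) leaves the formula unchanged, so we may apply Theorem~\ref{thm:as-gen} verbatim.

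First suppose $M=m$ is odd. Then $|\sin(A\pi/4)|=|\cos(A'\pi/4)|=\tfrac{\sqrt 2}{2}$, and the products over the two nodes take the form
\[
\left(\frac{\vartheta_{\epsilon 0}(0|\tau)\,\vartheta_{\epsilon 0}(M/2|\tau)}{\eta(\tau)^2}\right)^{1/2}
\quad\text{for }\epsilon\in\{0,1\}.
\]
Since $M/2=(M-1)/2+1/2$ with $(M-1)/2\in\Z$, the quasi-periodicity $\vartheta_{rs}(\nu+1|\tau)=\vartheta_{rs}(\nu|\tau)$ together with the standard identity $\vartheta_{00}(\nu+1/2|\tau)=\vartheta_{01}(\nu|\tau)$ gives $\vartheta_{00}(M/2|\tau)=\vartheta_{01}(\tau)$ and $\vartheta_{01}(M/2|\tau)=\vartheta_{00}(\tau)$. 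Both summands of $\mathsf{FSC}$ thus collapse to $\tfrac{\sqrt 2}{2}\bigl(\vartheta_{00}(\tau)\vartheta_{01}(\tau)/\eta(\tau)^2\bigr)^{1/2}$, so their sum equals $\sqrt 2\bigl(\vartheta_{00}(\tau)\vartheta_{01}(\tau)/\eta(\tau)^2\bigr)^{1/2}$. Invoking Equation~\eqref{equ:2tau} yields the claimed expression $\bigl(2\vartheta_{01}(2\tau)/\eta(2\tau)\bigr)^{1/2}$.

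Next suppose $M=m$ is even. Then $|\sin((A-A')\pi/4)|=0$ wipes out the first term of $\mathsf{FSC}$, so only the product of the $\vartheta_{00}$-factors survives. Now $M\psi_1=0$ and $M\psi_2=M/2\in\Z$, hence $\vartheta_{00}(M\psi_j|\tau)=\vartheta_{00}(\tau)$ for $j=1,2$ by periodicity, and we obtain $\mathsf{FSC}=\vartheta_{00}(\tau)/\eta(\tau)$ as claimed.

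The only nontrivial step is therefore the evaluation of the theta functions at the half-integer $M/2$ when $M$ is odd, together with the invocation of the duplication identity~\eqref{equ:2tau} to rewrite the resulting product in terms of $2\tau$; everything else is bookkeeping in Theorem~\ref{thm:as-gen} with the data already tabulated in Example~\ref{ex:fsc-sq}.
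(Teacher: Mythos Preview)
Your argument is correct and follows exactly the route sketched in the paper: plug the data $A=A'=1$, $\psi_1=0$, $\psi_2=\tfrac12$, $\tau_1=\tau_2=\tau$ from Example~\ref{ex:fsc-sq} into Theorem~\ref{thm:as-gen}, reduce the theta evaluations via periodicity and the half-shift, and for odd $M$ finish with the duplication identity~\eqref{equ:2tau}. One small slip: your displayed product should read $\vartheta_{0\epsilon}$ rather than $\vartheta_{\epsilon 0}$, since Theorem~\ref{thm:as-gen} involves $\vartheta_{00}$ and $\vartheta_{01}$ (your subsequent computations use the correct functions, so this is purely notational).
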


Note that this result can be extended to any even~$N$ by considering the more general fundamental domain given by the~$(1\times 2^k)$-square lattice with~$k\ge 1$. Its Kleinian characteristic polynomial is given
by~$R(z,w)=i x^{2^k}(z+z^{-1})$. Its toric one, which can easily be computed inductively, has the
two roots~$(-1,\pm 1)$ in the unit torus, and
satisfies~$\partial_z^2P(-1,\pm 1)=2x^{2^{k+1}}$ and~$\partial_w^2P(-1,\pm 1)=2^{2k+3}x^{2^{k+1}-2}y^2$,
leading to the parameters~$\tau_1=\tau_2=i\frac{m}{n}\frac{x}{2^{k+1}y}$. Setting~$M=m$ and~$N=2^kn$ thus
leads to the same result as above, now valid for any even~$N$.

Using Equality~\eqref{equ:eta} and the other standard notation~$\vartheta_2:=\vartheta_{10}$,
~$\vartheta_3:=\vartheta_{00}$,~$\vartheta_4:=\vartheta_{01}$, the logarithm of these finite size corrections
can be written as
\[
\mathsf{fsc}=\frac{1}{2}\log 2 + \frac{1}{6}\log\frac{2\vartheta_4(2\tau)^2}{\vartheta_2(2\tau)\vartheta_3(2\tau)}\]
for~$M$ odd, and
\[
\mathsf{fsc}=\frac{1}{3}\log\frac{2\vartheta_3(\tau)^2}{\vartheta_2(\tau)\vartheta_4(\tau)}
\]
for~$M$ even. This coincides with the formulas~$(80),(81)$ and~$(76),(77)$ of~\cite{IOH}, respectively.
(For~$M$ even, it does not coincide with~\cite[Equations~(35),(41)]{Lu-Wu99}.)
\end{example}

\begin{example}
\label{ex:tri-sq}
Consider the isotropic triangular lattice of Figure~\ref{fig:ex2-4}. By Example~\ref{ex:R}, we have
\[
R(z,1)=2(z^2+z^{-2})+4i(z+z^{-1})-4\quad\text{and}\quad R(z,-1)=4\,.
\]
The first polynomial has two roots of modulus~$>1$, namely~$\frac{1}{2}(\sqrt{3}+1)(\pm 1-i)$, and two roots of modulus~$<1$, namely~$\frac{1}{2}(\sqrt{3}-1)(\pm 1+i)$. With the notations of Lemma~\ref{lemma:alpha}, this gives~$\lambda=0$,~$p=1$,~$r_+=r_-=m_+=m_-=0$ and leads to~$A=2$.
On the other hand, we obviously have~$A'=0$.
Furthermore, as computed in Example~\ref{ex:P}, we have
\[
P(z,w)=(z^2+z^{-2})(w+w^{-1}+2)+10(w+w^{-1})+w^2+w^{-2}+34\,,
\]
which is strictly positive on the unit torus.
Applying Theorem~\ref{thm:as-gen} to this example, we get the asymptotic expansion
\[
\log Z_{mn}=mn\frac{\mathbf{f}_0}{2}+\log(2)+o(1)\,.
\]
Hence, we see that the finite-size correction term might be non-zero even when the spectral curve does not meet the unit torus. As we shall see in Theorems~\ref{thm:as-bip} and~\ref{thm:as-Ising}, this never occurs
in the bipartite case, but it does occur for Fisher graphs.

Note that this phenomemon is not specific to the Klein bottle, as it does also appear in the toric (non-bipartite) case. The first such example was computed in~\cite{BEP}, thus proving the first sentence of~\cite[Theorem~2]{KSW} to be incorrect, see also Remark~\ref{rem:Ising}~(i) below.
\end{example}

\subsection{Dimer asymptotics in the bipartite case}
\label{sub:as-bip}

Let us now consider a bipartite graph~$\G$ embedded in~$\K$. In this case, we know from the work of Kenyon,
Okounkov and Sheffield~\cite{KOS} together with Proposition~\ref{prop:Q(-1)}
that the associated toric characteristic polynomial~$Q(z,w)$ has at most two (conjugate)
zeros~$(-1,w_0)$ and~$(-1,\overline{w}_0)$ in the unit torus~$S^1\times S^1$,
that might coincide to form a single real node.
This leads to three cases in the asymptotic expansion of the associated dimer partition function,
as already stated in Theorem~\ref{thmintro:as-bip}. We now recall this result for the reader's convenience,
and give the proof.

\begin{theorem}
\label{thm:as-bip}
Let~$\G\subset\K$ be a weighted bipartite graph embedded in the Klein bottle. Then, we have the asymptotic expansion
\[
\log Z_{mn}=mn\frac{\mathbf{f}_0}{2}+\mathsf{fsc}+o(1)
\]
for~$m$ and~$n$ tending to infinity with~$n$ odd and~$m/n$ bounded below and above, with
\[
\mathbf{f}_0=\iint_{S^1\times S^1}\log |Q(z,w)|\frac{dz}{2\pi iz}\frac{dw}{2\pi iw}
\]
and~$\mathsf{fsc}=\log\mathsf{FSC}$ given as follows:
\begin{enumerate}
\item{If~$Q(z,w)$ has no zeros in the unit torus, then~$\mathsf{FSC}=1$.}
\item{If~$Q(z,w)$ has two zeros~$(-1,w_0)\neq(-1,\overline{w}_0)$ in the unit torus
with~$w_0=\exp(2\pi i\psi)$, then
\[
\mathsf{FSC}=\frac{\vartheta_{00}(m\psi|\tau)}{\eta(\tau)}+\frac{\vartheta_{01}(m\psi|\tau)}{\eta(\tau)}
\quad\text{with}\quad
\tau=i\,\frac{m}{n}\left|\frac{\partial_zQ(-1,w_0)}{\partial_wQ(-1,w_0)}\right|\,.
\]}
\item{If~$Q(z,w)$ has a single (real) node at~$(-1,w_0)$ in the unit torus, then
\[
\mathsf{FSC}=\frac{\vartheta_{00}(\tau)}{\eta(\tau)}+\frac{\vartheta_{01}(\tau)}{\eta(\tau)}
\quad\text{with}\quad
\tau=i\,\frac{m}{n}\left|\frac{\partial^2_zQ(-1,w_0)}{\partial^2_wQ(-1,w_0)}\right|^{1/2}\,.
\]}
\end{enumerate}
\end{theorem}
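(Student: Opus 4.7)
\emph{Proof plan.} The plan is to apply Corollary~\ref{cor:Zmn} and extract the asymptotics of its two ingredients---the moduli $|Q_{mn}(1,\pm 1)|^{1/2} = P_{mn}(1,\pm 1)^{1/4}$ and the angles $\beta_n,\beta_n'$---separately, then combine them case by case. The hypothesis of the Kenyon--Sun--Wilson estimate~\eqref{equ:KSW} is satisfied in our bipartite setting: by Proposition~\ref{prop:Q(-1)} every zero of $P=|Q|^2$ on the unit torus has first coordinate $z=-1$; by~\cite{KOS} these zeros are positive nodes; and by~\eqref{equ:tau-bip} the associated parameters $\tau_j$ are purely imaginary of the stated form. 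The bulk free energy is then read off immediately from~\eqref{equ:f0} together with $P=|Q|^2$.

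First I would unpack the $\Xi$-factors in~\eqref{equ:KSW} at $(\zeta,\xi)=(1,\pm 1)$. Since $n$ is odd and $z_j=-1$ one has $\zeta/z_j^n=-1$; writing each torus zero as $(-1,e^{2\pi i\psi_j})$ and using~\eqref{equ:Xi} together with the standard translation identities for the $\vartheta_{rs}$, the contribution of each such zero reduces to $\vartheta_{01}(m\psi_j|\tau_j)/\eta(\tau_j)$ when $\xi=1$, and to $\vartheta_{00}(m\psi_j|\tau_j)/\eta(\tau_j)$ when $\xi=-1$. Taking fourth roots then yields the asymptotics of $|Q_{mn}(1,\pm 1)|^{1/2}$: no theta factor in Case~(1), a pair of coinciding (complex conjugate) theta factors in Case~(2), and a single doubled theta factor in Case~(3).

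Next I would analyse the angles. Adapting the argument of Lemma~\ref{lemma:alpha} to the polynomial $S(z,\pm 1)$---which by Proposition~\ref{prop:rootS} has only simple purely imaginary roots---gives, modulo $\pi$ and in the large-$n$ limit, $\beta_n\equiv A\tfrac{\pi}{2}$ and $\beta_n'\equiv A'\tfrac{\pi}{2}$, where $A$ and $A'$ are the integers introduced in Lemma~\ref{lemma:S}. The key point is that for each root $\mu=ir$ of $S(z,\pm 1)$ with $|r|>1$ one has $\arg(1-\mu^n)\equiv\tfrac{\pi}{2}\pmod{\pi}$ for every odd $n$, independently of the sign of $r$; summing these contributions with the argument of the leading coefficient recovers $A$, respectively $A'$. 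Consequently $|\sin\beta_n|$, $|\cos\beta_n'|$ and $|\sin(\beta_n-\beta_n')|$ each tend to $0$ or $1$ according to the parities of $A$, $A'$ and $A-A'$.

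Finally, I combine the two ingredients case by case via the parity classification of Lemma~\ref{lemma:S}. In Case~(1) we have $A+A'$ even, so in Corollary~\ref{cor:Zmn} exactly one of the two terms (for $m$ odd), or only the second term (for $m$ even), survives with coefficient $1$, and the $\Xi$-factor is trivial, yielding $\mathsf{FSC}=1$. In Cases~(2) and~(3) we have $A$ odd and $A'$ even, hence all relevant coefficients tend to $1$, and the surviving sum is exactly $\vartheta_{01}(m\psi|\tau)/\eta(\tau)+\vartheta_{00}(m\psi|\tau)/\eta(\tau)$ in Case~(2) and $\vartheta_{01}(\tau)/\eta(\tau)+\vartheta_{00}(\tau)/\eta(\tau)$ in Case~(3). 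The main obstacle, which is essentially the content of Section~\ref{sub:S}, is the matching between Lemma~\ref{lemma:S}'s parity statements and the phase structure of $\mathbb{A}(Q)$: it is the input that forces the two coefficient limits above to cohere with the number and location of torus zeros of $Q$; once it is in hand, the combination of ingredients is mechanical.
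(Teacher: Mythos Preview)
Your proposal is correct and follows essentially the same route as the paper: start from Corollary~\ref{cor:Zmn}, feed $P=|Q|^2$ into the Kenyon--Sun--Wilson estimate~\eqref{equ:KSW} to get the asymptotics of $|Q_{mn}(1,\pm 1)|^{1/2}$, apply Lemma~\ref{lemma:alpha} to $S(z,\pm 1)$ using the root structure from Proposition~\ref{prop:rootS} to reduce the angles $\beta_n,\beta_n'$ to the parities of $A,A'$, and finish with the case split of Lemma~\ref{lemma:S}. The paper carries out exactly these steps in the same order; the only cosmetic difference is that it records the intermediate quantity $\mathsf{FSC}(\xi)=\Xi(-1,\xi w_0^m\,|\,\tau)$ explicitly before invoking Lemma~\ref{lemma:S}, and it notes that $m_+=m_-$ (via $S(-z,w)=\overline{S(z,w)}$) so that the unit-circle root contribution in Lemma~\ref{lemma:alpha} drops out---a point implicit in your sketch. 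One small wording issue: your claim that $\arg(1-\mu^n)\equiv\tfrac{\pi}{2}\pmod\pi$ ``for every odd $n$'' should read ``asymptotically as odd $n\to\infty$'', which is what Lemma~\ref{lemma:alpha} actually gives; since you already wrote ``in the large-$n$ limit'' just before, this is harmless.
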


\begin{proof}
Corollary~\ref{cor:Zmn} expresses~$Z_{mn}$ in terms of~$|Q_{mn}(1,\pm 1)|^{1/2}$ and
of~$\mathrm{Arg}\left(\prod_{z^n=1}S(z,\pm 1)\right)$, two contributions whose asymptotics we now analyse.

First, let us apply Equation~\eqref{equ:KSW} to~$P(z,w)=Q(z,w)Q(z^{-1},w^{-1})$. This leads to
\[
\frac{1}{2}\log\left|Q_{mn}(1,\xi)\right|=\frac{1}{4}\log P_{mn}(1,\xi)=mn\,\frac{\mathbf{f}_0}{2} + \mathsf{fsc}(\xi) + o(1)
\]
for~$\xi=\pm 1$, with
\[
\mathbf{f}_0=\frac{1}{2}\iint_{S^1\times S^1}\log P(z,w)\frac{dz}{2\pi iz}\frac{dw}{2\pi iw}=\iint_{S^1\times S^1}\log |Q(z,w)|\frac{dz}{2\pi iz}\frac{dw}{2\pi iw}\,,
\]
and~$\mathsf{fsc}(\xi)=\log\mathsf{FSC}(\xi)$ given as follows.
\begin{enumerate}[(1)]
\item{If~$Q$ has no zeros in~$S^1\times S^1$, then~$\mathsf{FSC}(\xi)=1$.}
\item{If~$Q(z,w)$ has two distinct zeros~$(-1,w_0)\neq(-1,\overline{w}_0)$ in~$S^1\times S^1$, then
\[
\mathsf{FSC}(\xi)=\Xi((-1)^n,\xi\overline{w}_0^m|\tau)^{1/2}\,\Xi((-1)^n,\xi w_0^m|\tau)^{1/2}=\Xi(-1,\xi w_0^m|\tau)\,,
\]
since~$n$ is odd and~$\vartheta$ is an even function.}
\item{If~$Q(z,w)$ has a single real node at~$(-1,w_0)$ in~$S^1\times S^1$, then
\[
\mathsf{FSC}(\xi)=\Xi(-1,\xi w_0^m|\tau)\,.
\]}
\end{enumerate}

We now apply Lemma~\ref{lemma:alpha}, whose notation we assume,
to the polynomial~$S(z,1)\in\C[z^{\pm 1}]$.
By Equation~\eqref{equ:S}, we know that~$S(iz,1)$ belongs to~$\R[z^{\pm 1}]$.
By Proposition~\ref{prop:rootS}, all the roots of~$S(z,1)$ are simple and purely imaginary,
so Lemma~\ref{lemma:alpha} can indeed be applied, with~$p=0$. 
By Proposition~\ref{prop:rootS} and Equation~\eqref{equ:QS}, the roots~$i$ and~$-i$ of~$S(z,1)$ have multiplicities~$m_+=m_-$. With these observations, Lemma~\ref{lemma:alpha} now reads
\[
\beta_n=\mathrm{Arg}\Big(\prod_{z^n=1}S(z,1)\Big)=An\frac{\pi}{2}+o(1)\,,
\]
with~$A=\lambda+r_--r_+$. By Corollary~\ref{cor:Zmn}, we are only concerned with the parity of~$An$.
Since~$n$ is odd, this parity is simply given by~$A=\lambda+r$, where~$r=r_++r_-$ denotes
the number of roots of~$S(z,1)$ with modulus~$>1$.
The same argument obviously holds for~$S(z,-1)$, and we denote by~$A'=\lambda'+r'$ the
corresponding integers.

To summarize, Corollary~\ref{cor:Zmn}, Equation~\eqref{equ:KSW} and Lemma~\ref{lemma:alpha} lead to the statement of
the theorem, with finite size corrections given by
\[
\mathsf{FSC}=
\begin{cases}
|\sin(A\frac{\pi}{2})|\,\mathsf{FSC}(1)+|\cos(A'\frac{\pi}{2})|\,\mathsf{FSC}(-1) & \text{for $m$ odd;}\\
|\sin((A+A')\frac{\pi}{2})|\,\mathsf{FSC}(1)+\mathsf{FSC}(-1) & \text{for $m$ even,}
\end{cases}
\]
where~$A,A'$ are as in Lemma~\ref{lemma:S} and~$\mathsf{FSC}(\pm 1)$ as described above, depending
on the cases~(1)-(3). The statement now follows from Lemma~\ref{lemma:S} together with Equations~\eqref{equ:tau}-\eqref{equ:Xi} and the relation~$\vartheta_{00}(\nu+\frac{1}{2}|\tau)=\vartheta_{01}(\nu|\tau)$.
\end{proof}

We conclude this section with two examples realizing the three cases in the statement of this theorem.

\begin{example}
\label{ex:sqbip}
Consider the~$2\times 1$ square lattice of Figure~\ref{fig:ex1} with~$x_1=x_2=:x$ and~$y_1=y_2=:y$. Its toric characteristic polynomial~$Q(z,w)=x^2(z+z^{-1}+2)+y^2(w+w^{-1}+2)$
has a single real node at~$(-1,-1)$, and the corresponding Kleinian polynomial~$S(z,-1)=x(z+z^{-1})$ has 
simple roots at~$\pm i$. We are therefore in case~(3),
with~$\tau=i\,\frac{m}{n}\left|\frac{\partial^2_zQ(-1,-1)}{\partial^2_wQ(-1,-1)}\right|^{1/2}=i\frac{mx}{ny}$. Applying Theorem~\ref{thm:as-bip} with~$M=2m$ and~$N=n$, we get the following result.

\begin{corollary}
\label{cor:sqbip}
For~$M$ even and~$N$ odd, the finite size correction in the asymptotic expansion of the dimer partition function for the~$(M\times N)$-square lattice in the Klein bottle is given by~$\mathsf{FSC}=\frac{\vartheta_{00}(\tau)}{\eta(\tau)}+\frac{\vartheta_{01}(\tau)}{\eta(\tau)}$, where~$\tau=i\frac{Mx}{2Ny}$.\qed
\end{corollary}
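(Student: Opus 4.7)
The plan is to derive Corollary~\ref{cor:sqbip} as a direct application of case~(3) of Theorem~\ref{thm:as-bip}, using the data for the $2\times 1$ fundamental square lattice already gathered in Example~\ref{ex:sqbip}. The first step is to observe that, since $M$ is even and $N$ is odd, the $(M\times N)$-square lattice on the Klein bottle coincides with the cover $\G_{mn}$ of the $2\times 1$ graph $\G\subset\K$ for $m=M/2$ and $n=N$. In particular $n$ is odd, so the hypotheses of Theorem~\ref{thm:as-bip} are met (the assumption that $m/n$ is bounded below and above is equivalent to $M/N$ being so).

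Next I would verify that we are precisely in Case~(3) of the theorem. Using the identity $z+z^{-1}+2=|1+z|^2\ge 0$ on $S^1$ and the analogous bound in $w$, the polynomial $Q(z,w)=x^2(z+z^{-1}+2)+y^2(w+w^{-1}+2)$ is non-negative on $S^1\times S^1$ and vanishes only at the single point $(-1,-1)$. Checking directly that $\partial_z Q(-1,-1)=\partial_w Q(-1,-1)=0$ confirms that this is a real node, in agreement with Example~\ref{ex:sqbip}. The second partials $\partial_z^2 Q(-1,-1)=-2x^2$ and $\partial_w^2 Q(-1,-1)=-2y^2$ then give
\[
\tau=i\,\frac{m}{n}\left|\frac{\partial_z^2 Q(-1,-1)}{\partial_w^2 Q(-1,-1)}\right|^{1/2}=i\,\frac{mx}{ny}=i\,\frac{Mx}{2Ny},
\]
so plugging into case~(3) of Theorem~\ref{thm:as-bip} yields the announced formula for $\mathsf{FSC}$.

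Since Theorem~\ref{thm:as-bip} does all of the heavy lifting, there is no substantial obstacle; the proof is essentially an application plus bookkeeping to convert between the $(m,n)$ and $(M,N)$ parametrizations. The only subtleties worth double-checking are the normalizations of $S$ and $Q$ required for case~(3) to apply in its stated form — the orientation chosen on the $2\times 1$ lattice in Example~\ref{ex:K} satisfies both conditions of Section~\ref{sub:gen}, while Proposition~\ref{prop:rootS} guarantees the simpleness of the roots $\pm i$ of $S(z,-1)=x(z+z^{-1})$, which is the final ingredient needed to legitimately place us in Case~(3).
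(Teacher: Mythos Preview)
Your proposal is correct and follows essentially the same approach as the paper: identify the $(M\times N)$-square lattice with $\G_{mn}$ for the $2\times 1$ fundamental domain (with $m=M/2$, $n=N$), verify that $Q(z,w)=x^2(z+z^{-1}+2)+y^2(w+w^{-1}+2)$ has a single real node at $(-1,-1)$, compute $\tau$ via the second partials, and apply case~(3) of Theorem~\ref{thm:as-bip}. Your explicit verification of non-negativity and the node condition is a bit more detailed than the paper's, but the argument is the same.
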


Note that this result does not coincide with the finite size corrections obtained in Equations~(82),(83) of~\cite{IOH}, and for a good reason: as mentioned in Example~\ref{ex:Z}, these computations are based
on the incorrect formula~\cite[Equation~(5)]{Lu-Wu} for the dimer partition function.
\end{example}

\begin{example}
\label{ex:hexa}
Consider the hexagonal lattice of Figure~\ref{fig:ex2-4}, whose characteristic polynomial is given
by~$Q(z,w)=\nu_1^2+\nu_3^2+\nu_1\nu_3(w+w^{-1})+\nu_2^2z$. If the edge weights are so that~$\nu_1+\nu_3<\nu_2$ or~$\nu_2<|\nu_1-\nu_3|$, then~$Q$ never vanishes on the unit torus and we are in case~(1),
so~$\mathsf{FSC}=1$.

On the other hand, if the edge weights satisfy~$|\nu_1-\nu_3|<\nu_2<\nu_1+\nu_3$, then we are in case~(2) where
\[
\mathsf{FSC}=\frac{\vartheta_{00}(m\psi|\tau)}{\eta(\tau)}+\frac{\vartheta_{01}(m\psi|\tau)}{\eta(\tau)}
\]
with~$w_0=\exp(2\pi i\psi)$ such that~$\nu_1^2+\nu_3^2+\nu_1\nu_3(w_0+w_0^{-1})=\nu_2^2$ and~$\tau=i\,\frac{m}{n}\frac{\nu_2^2}{\nu_1\nu_3|1-w_0^2|}$.
For example, the isotropic case~$\nu_1=\nu_2=\nu_3$ leads to~$\mathsf{FSC}=\frac{\vartheta_{00}(\frac{m}{3}|\tau)}{\eta(\tau)}+\frac{\vartheta_{01}(\frac{m}{3}|\tau)}{\eta(\tau)}$ with~$\tau=i\,\frac{m}{n}\frac{\sqrt{3}}{3}$.
\end{example}

\subsection{Consequences for the dimer model}
\label{sub:cons}

We now explore some consequences of Theorems~\ref{thm:as-gen} and~\ref{thm:as-bip} for the dimer model.
Analogous results hold for the Ising model as well, see Remark~\ref{rem:Ising} below.

\subsubsection*{Asymptotic for~$\tau_\text{im}\to\infty$}
In~\cite{BCN}, Bl\"ote, Cardy and Nightingale argue that for a conformally invariant model at criticality
on an infinitely long strip, the amplitude of the finite-size corrections to the free energy is linearly related to the central charge~$c$ of the model. We now compare our results to these predictions.
 
As explained in~\cite[Section~3.4]{KSW}, the function~$\Xi$ satisfies
\[
\Xi(-1,-\exp(2\pi i\psi)|\tau)=\frac{\vartheta(\psi|\tau)}{\eta(\tau)}=\exp(\pi\tau_\text{im}/12+o(1))
\]
in the limit~$\tau_\text{im}\to\infty$. As a consequence, Theorem~\ref{thm:as-bip} implies that if~$\G$ is bipartite and the dimer model is in the liquid phase (i.e., if the spectral curve intersects the unit torus),
then the partition function satisfies the asymptotic expansion
\begin{equation}
\label{equ:CFT}
\log Z_{mn}=mn\frac{\mathbf{f}_0}{2}+\frac{\pi\tau_\text{im}}{12}+C+o(1)
\end{equation}
for~$m$,~$n$ and~$\tau_\text{im}$ tending to infinity with~$n$ odd and~$m/n$ bounded below and above,
with~$C=\log(2)$. Such an expansion also holds for the non-bipartite examples considered in Section~\ref{sub:as-gen}, with~$C=\frac{\log(2)}{2}$ (resp.~$C=0$) for the~$M\times N$ square lattice with~$N$ even and~$M$ odd (resp.~$M$ even).

Let us now compare this result to the CFT predictions of~\cite{BCN}, where the authors consider the asymptotic expansion of the free energy per unit length of an infinitely long strip of width~$L$ at criticality. They claim it to be of the form
\[
F=fL+\frac{\Delta}{L}+f^\times+\dots
\]
with~$f$ the bulk free energy per unit area,~$\frac{1}{2}f^\times$ the surface free energy, and~$\Delta$ a universal term explicitely given by
\[
\Delta=\left\{\begin{array}{ll}
-\pi c/6&\text{for periodic boundary conditions},\\
-\pi c/24&\text{for free or fixed boundary conditions}.
\end{array}\right.
\]

Coming back to our setting, let us assume that the strip has antiperiodic horizontal boundary conditions (and periodic vertical ones). The term~$f^\times$ vanishes as the Klein bottle has no boundary, while~$f$ corresponds to~$-\frac{\mathbf{f}_0}{2}$. Writing~$L'$ for the length of the strip, we have
\begin{equation}
\label{equ:CFT'}
\log Z=-L' F=L'L\frac{\mathbf{f}_0}{2}-\Delta\frac{L'}{L}+\dots\,.
\end{equation}
Observe that the shape parameter~$\tau$ of the corresponding torus is given by~$\tau=i\frac{L'}{2L}$. Therefore,
the comparison of expansions~\eqref{equ:CFT} and~\eqref{equ:CFT'} leads
to~$\Delta=-\pi/24$. This is consistent with ``fixed'' boundary
conditions including antiperiodic ones, and with the value~$c=1$ for the central charge of
a conformal field theory describing the bipartite dimer model.

The analogous discussion applied to the Ising model is presented in Remark~\ref{rem:Ising}~(iii) below.

\subsubsection*{Ratios of partition functions}

Let us now consider an arbitrary weighted graph~$\G$ embedded in the Klein bottle~$\K$,
and denote by~$\widetilde{\G}$ its~$2$-cover embedded in the torus~$\mathbb{T}^2$.
As usual, let us write~$\G_{mn}\subset\K$ and~$\widetilde{\G}_{mn}\subset\mathbb{T}^2$ for the relevant covers, for~$m,n$ integers with~$n$ odd. Finally, let us denote by~$Z(\G_{mn})$ and~$Z(\widetilde{\G}_{mn})$ the
corresponding dimer partition functions.

By the results of Section~\ref{sec:as}, we have
\[
\log Z(\G_{mn})=mn\frac{\mathbf{f}_0}{2}+\mathsf{fsc}_\K(\tau)+o(1)\,,
\]
where~$\mathbf{f}_0$ can be computed via~\eqref{equ:f0} and
$\mathsf{fsc}_\K$ is a function of an explicit parameter~$\tau$, a function which
falls within a finite number of classes
(recall Remark~\ref{rem:as-gen}~(ii)).
Also, by~\cite[Theorem~2]{KSW}, we have
\[
\log Z(\widetilde{\G}_{mn})=mn\,\mathbf{f}_0+\mathsf{fsc}_\mathbb{T}(\tau)+o(1)\,,
\]
with~$\mathbf{f}_0$ as above, and~$\mathsf{fsc}_\mathbb{T}$ a function of \emph{the same}
parameter~$\tau$ which falls within \emph{the same} classes. As a immediate consequence,
the limit
\[
\lim_{m,n\to\infty}\frac{Z(\G_{mn})^2}{Z(\widetilde{\G}_{mn})}=\frac{\mathsf{fsc}_\K(\tau)^2}{\mathsf{fsc}_\mathbb{T}(\tau)}
\]
is given by some explicit function of~$\tau$ which only depends on the relevant class.

The bipartite case is completely described by Theorem~\ref{thm:as-bip} and~\cite[Theorem~2(b,c)]{KSW}, yielding the following result.
\begin{enumerate}
\item{If~$Q(z,w)$ has no zeros in the unit torus, then~$\lim_{m,n\to\infty}\frac{Z(\G_{mn})^2}{Z(\widetilde{\G}_{mn})}=1$.}
\item{If~$Q(z,w)$ has two zeros~$(-1,w_0)\neq(-1,\overline{w}_0)$ in the unit torus
with~$w_0=\exp(2\pi i\psi)$, then
\[
\lim_{m,n\to\infty}\frac{Z(\G_{mn})^2}{Z(\widetilde{\G}_{mn})}=\frac{2(\vartheta_{00}(m\psi|\tau)+\vartheta_{01}(m\psi|\tau))^2}{\vartheta_{00}(m\psi|\tau)^2+\vartheta_{01}(m\psi|\tau)^2+\vartheta_{10}(m\psi|\tau)^2+\vartheta_{11}(m\psi|\tau)^2}\,,
\]
where~$\tau=i\frac{m}{n}\left|\frac{\partial_zQ(-1,w_0)}{\partial_wQ(-1,w_0)}\right|$.}
\item{If~$Q(z,w)$ has a single (real) node at~$(-1,w_0)$ in the unit torus, then
\[
\lim_{m,n\to\infty}\frac{Z(\G_{mn})^2}{Z(\widetilde{\G}_{mn})}=\frac{2(\vartheta_{00}(\tau)+\vartheta_{01}(\tau))^2}{\vartheta_{00}(\tau)^2+\vartheta_{01}(\tau)^2+\vartheta_{10}(\tau)^2}\,,
\quad\text{with}\quad\tau=i\,\frac{m}{n}\left|\frac{\partial^2_zQ(-1,w_0)}{\partial^2_wQ(-1,w_0)}\right|^{1/2}\,.
\]
}
\end{enumerate}

The non-bipartite case yields additional possible limits. For instance, Example~\ref{ex:fsc-sq} and~\cite[Theorem~2(d)]{KSW} yield the following result. If~$\G_{MN}$ is the~$M\times N$-square lattice with~$N$ and~$M$ even, then
\[
\lim_{M,N\to\infty}\frac{Z(\G_{MN})^2}{Z(\widetilde{\G}_{MN})}=
\frac{2\vartheta_{00}(\tau)^2}{\vartheta_{00}(\tau)^2+\vartheta_{01}(\tau)^2+\vartheta_{10}(\tau)^2}
\]
with~$\tau=i\frac{Mx}{2Ny}$, while for~$N$ even and~$M$ odd, we simply get
\[
\lim_{M,N\to\infty}\frac{Z(\G_{MN})^2}{Z(\widetilde{\G}_{MN})}=2\,.
\]
This latter result can be seen as a reality check for our computations,
since the equality~$Z(\G_{MN})^2=2\,Z(\widetilde{\G}_{MN})$ actually holds
for the~$M\times N$-square lattice with~$N$ even and~$M$ odd.
We refer to~\cite[Equation~(66)]{IOH} for the first occurrence of this non-trivial result, and
to~\cite[Theorem~1.6]{Cim-P} for an extension to a wider class of graphs.

\subsection{Asymptotics of the Ising partition function}
\label{sub:as-Ising}

We now apply our results and methods to the study of the Ising model on a graph~$G\subset\K$
as in Section~\ref{sub:intro-Ising}, whose notation we assume.

Let us first note that, with the help of the well-known Equation~\eqref{equ:Ising-dimer},
it is a trivial task to translate Theorem~\ref{thm:Zmn}
from the dimer to the Ising model, thus providing a closed formula for the Ising partition function
on the cover~$G_{mn}$ of an arbitrary weighted graph~$G$ embedded in the Klein bottle.
We shall not state this result explicitely, but directly move to the study of the resulting asymptotic
expansion, in the form of Theorem~\ref{thmintro:Ising} whose statement we now recall.

\begin{theorem}
\label{thm:as-Ising}
Let~$(G,J)$ be a non-degenerate weighted graph embedded in the Klein bottle, and let~$P(z,w)$ be the
characteristic polynomial of the associated Fisher graph~$\widetilde{G^F}\subset\mathbb{T}^2$.
Then, the Ising partition function on~$G_{mn}$ satisfies
\[
\log Z^J_\beta(G_{mn})= mn\frac{\mathbf{f}_0}{2} + \mathsf{fsc}+o(1)
\]
for~$m$ and~$n$ tending to infinity with~$n$ odd and~$m/n$ bounded below and above, with
\[
\mathbf{f}_0=2\sum_{e\in E(G)}\log\cosh(\beta J_e)+\frac{1}{2} \int_{\mathbb T^2} \log P(z,w)\frac{dz}{2\pi iz}\frac{dw}{2\pi iw}
\]
and~$\mathsf{fsc}=0$ in the subcritical regime~$\beta<\beta_c$,~$\mathsf{fsc}=\log(2)$ in the supercritical regime~$\beta>\beta_c$, and
\[
\mathsf{fsc}=\log\left(\left(\frac{\vartheta_{00}(\tau)}{\eta(\tau)}\right)^{1/2}+\left(\frac{\vartheta_{01}(\tau)}{2\eta(\tau)}\right)^{1/2}\right)\,,\quad\text{where}\quad\tau=i\,\frac{m}{n}\left|\frac{\partial^2_zP(-1,1)}{\partial^2_wP(-1,1)}\right|^{1/2}
\]
in the critical regime~$\beta=\beta_c$.
\end{theorem}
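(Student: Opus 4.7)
The plan is to apply the Fisher correspondence and then invoke Theorem~\ref{thm:as-gen} on the resulting dimer model. Writing
\[
\log Z_\beta^J(G_{mn}) = mn\sum_{e \in E(G)} \log \cosh(\beta J_e) + \log Z(G^F_{mn})
\]
via Equation~\eqref{equ:Ising-dimer} (noting that $G_{mn}$ carries $mn$ copies of each edge of $G$), the dimer bulk term $\tfrac{mn}{4}\int \log P$ of Theorem~\ref{thm:as-gen} combines with the Fisher prefactor to give the stated $\tfrac{mn}{2}\mathbf{f}_0$. For the finite-size correction, I would invoke the classical results of~\cite{Li,CDC}, obtained through the bipartite reformulation of~\cite{Dub}, which imply that the spectral curve of the Fisher graph of a non-degenerate weighted graph is disjoint from the unit torus for $\beta \neq \beta_c$ and meets it at a unique real positive node for $\beta = \beta_c$. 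Combined with the Klein-bottle symmetry $P(z,w) = P(z,w^{-1})$ of Proposition~\ref{prop:P}, this node must sit at some $(\pm 1, \pm 1)$; the explicit form of $\tau$ in the statement then pins it at $(-1, 1)$.

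It remains to determine the modulo-$4$ invariants $A$ and $A'$ of Theorem~\ref{thm:as-gen}. Since $P(-1, -1) \neq 0$ throughout $\beta \in (0, \infty)$, Lemma~\ref{lemma:cst} forces $A' \pmod 4$ to be constant; similarly $A \pmod 4$ is constant on each of the open intervals $(0, \beta_c)$ and $(\beta_c, \infty)$. I would fix these constants by two anchor computations. At $\beta = 0$ the identity $Z^J_0(G_{mn}) = 2^{mn|V(G)|}$ together with the explicit form of the bulk free energy at $\beta = 0$ forces $\mathsf{fsc} = 0$ in the subcritical regime, which by the allowed coefficient values $\{0, \tfrac{\sqrt 2}{2}, 1\}$ of Remark~\ref{rem:as-gen}(i) uniquely imposes $A \equiv A' \equiv 0 \pmod 4$. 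As $\beta \to \infty$, the dominance of the two constant spin configurations yields $\log Z^J_\beta(G_{mn}) = \log 2 + \beta mn \sum_e J_e + o(1)$, so that $\mathsf{fsc} = \log 2$ in the supercritical regime, hence $A \equiv 2$ and $A' \equiv 0 \pmod 4$ there.

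At criticality itself, Proposition~\ref{prop:P} and the nodal structure of $P$ at $(-1, 1)$ imply that $R(z, 1)$ acquires simple zeros at $z = \pm i$ (so $m_+ = m_- = 1$ in Lemma~\ref{lemma:alpha}), while $R(z, -1)$ remains nonzero on the unit circle. A direct bookkeeping via Lemma~\ref{lemma:alpha} then places $A$ at an odd residue modulo $4$, yielding $|\sin(A\pi/4)| = \tfrac{\sqrt 2}{2}$; plugging this into case~(3) of Theorem~\ref{thm:as-gen} at $\psi_1 = 0$ (so that $\vartheta_{00}(m\psi_1|\tau) = \vartheta_{00}(\tau)$ and $\vartheta_{01}(m\psi_1|\tau) = \vartheta_{01}(\tau)$) produces the stated critical $\mathsf{fsc}$, with $\tau$ read off from Equation~\eqref{equ:tau} at the node $(-1, 1)$. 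The hardest part will be the critical bookkeeping, where the continuity argument of Lemma~\ref{lemma:cst} breaks down: one must verify by direct use of Lemma~\ref{lemma:alpha} that $A \pmod 4$ takes precisely the value producing the coefficient $\tfrac{1}{\sqrt 2}$, and reconcile the two subcases $m$ odd and $m$ even of Theorem~\ref{thm:as-gen} into the single unified formula of the statement.
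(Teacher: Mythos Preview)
Your overall architecture matches the paper's: Fisher correspondence, Theorem~\ref{thm:as-gen}, spectral-curve input from~\cite{Li,CDC}, endpoint anchors to pin $A,A'$ away from criticality, and a separate argument at $\beta=\beta_c$. Two steps, however, are not actually carried out.

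First, the location of the node. Knowing that the unique real node sits at some $(\pm 1,\pm 1)$ is fine, but ``the explicit form of $\tau$ in the statement then pins it at $(-1,1)$'' is circular. The paper settles this in Lemma~\ref{lemma:node} via the quadratic-form interpretation of Kasteleyn orientations from~\cite{C-R,Cim}: condition~(ii) of Section~\ref{sub:gen} fixes the quadratic enhancement, and the node of the Kac-Ward/Fisher determinant corresponds to the unique \emph{odd} quadratic form, which under our conventions is $(z_0,w_0)=(-1,1)$. Without this, you cannot distinguish $(-1,1)$ from $(-1,-1)$, and the roles of $A$ and $A'$ in the critical formula would be swapped.

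Second, and more seriously, the critical bookkeeping. Your claim that ``$R(z,1)$ acquires simple zeros at $z=\pm i$ (so $m_+=m_-=1$)'' does not follow from Proposition~\ref{prop:P} alone: from $P(z,1)=R(z^{1/2},1)R(-z^{1/2},1)$ and a node at $(-1,1)$ you only get $m_++m_-=2$, and the symmetry $R(-z,1)=\bar R(z,1)$ merely forces $R(\pm i,1)\in\R$, not that both vanish. Even granting $m_+=m_-=1$, Lemma~\ref{lemma:alpha} by itself does not place $A$ at an odd residue; one needs control over how the imaginary roots of $R(z,1)$ are distributed relative to the unit circle. The paper obtains this by observing that the identity $P=k\cdot Q$ with $Q$ the characteristic polynomial of the bipartite graph $\widetilde G^C$ (from~\cite{Dub,CDC}) lets the proofs of Proposition~\ref{prop:rootS} and Lemma~\ref{lemma:S} run \emph{verbatim} with $R(z,\pm 1)$ playing the role of $S(z,\pm 1)$; case~(3) of the extended Lemma~\ref{lemma:S} then gives $A$ odd directly. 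This is the substantive content you are missing.

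A minor point: your $\beta\to\infty$ anchor via spin dominance mixes the limits $m,n\to\infty$ and $\beta\to\infty$. The paper avoids this by working at $x_e=1$ exactly in the dimer model (Lemma~\ref{lemma:fsc}), computing $Z(G^F_{mn})=2^{1+mn|E(G)|}$ via an Euler-characteristic count and $P\equiv 2^{4|E(G)|}$ via the Kac-Ward/bipartite correspondence, which reads off $\mathsf{fsc}=\log 2$ without any limit interchange.
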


We break down the proof into a series of lemmas. The first one is a direct consequence of~\cite{CDC} (see also~\cite{Li}) and of our conventions for Kasteleyn orientations.

\begin{lemma}
\label{lemma:node}
Let~$(G,J)$ be a non-degenerate weighted graph embedded in the Klein bottle, and let~$P(z,w)$ be the
characteristic polynomial of the associated Fisher graph~$\widetilde{G^F}\subset\mathbb{T}^2$.
For~$\beta\neq\beta_c$, this polynomial never vanishes on the unit torus, while for~$\beta=\beta_c$, it
only vanishes at the positive node~$(z_0,w_0)=(-1,1)$.
\end{lemma}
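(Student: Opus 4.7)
The lemma has two parts. The first part---that $P$ is non-vanishing on the unit torus away from criticality and that at $\beta=\beta_c$ its spectral curve meets the torus at a single real positive node---follows directly from the main result of~\cite{CDC} (see also~\cite{Li}) applied to the toric Fisher graph $\widetilde{G^F}\subset\mathbb{T}^2$, since $G$ is assumed non-degenerate and the Fisher correspondence preserves non-degeneracy. The nontrivial remaining content is to pin down the coordinates of the critical node as $(-1,1)$, which depends on the sign conventions fixed in Section~\ref{sec:char}.

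The plan for this identification proceeds in two steps. First, since the critical node is real and lies on the unit torus, one has $(z_0,w_0)\in\{\pm 1\}^2$, reducing the problem to a finite choice among four candidates; the symmetry $P(z,w)=P(z,w^{-1})$ of Proposition~\ref{prop:P}(i) is automatic on these points but confirms the consistency of the reduction. Second, to distinguish among the four candidates, I would compare the Kasteleyn orientation $\widetilde K$ used here---inherited from the Kleinian orientation $K$ on $G^F\subset\K$ via condition~(i) of Section~\ref{sub:gen} and the subsequent inversion of edges in the upper half of $\widetilde{\D}$---with a reference Kasteleyn orientation on $\widetilde{G^F}$ for which the location of the critical node is explicitly computed in~\cite{CDC}. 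Since any two Kasteleyn orientations on a toric graph differ by edge reversals along a $\mathbb{Z}/2\mathbb{Z}$-cycle, this transformation acts on $P(z,w)$ by one of the four substitutions $(z,w)\mapsto(\pm z,\pm w)$, so the identification reduces to a single sign comparison.

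The main obstacle is the careful bookkeeping of signs through the lifting procedure $K\mapsto\widetilde K$, together with those arising from our definition of the intersection numbers $e\cdot\tilde a$ and $e\cdot\tilde b$. Rather than tracking these abstractly, I would verify the claim on one minimal explicit example---for instance the isotropic square-lattice Ising model on $\mathbb{T}^2$ at $\beta_c$, where the four evaluations $P(\pm 1,\pm 1)$ admit a direct closed-form computation and only $P(-1,1)$ is seen to vanish. Since the location of the critical node depends only on the sign conventions and not on the specific coupling constants~$J$ (it is topologically robust along the one-parameter family of critical models obtained by varying $J$), this single reference computation is enough to fix the claim for all non-degenerate weighted graphs $G\subset\K$.
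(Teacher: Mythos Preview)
Your proposal is largely on target for the first part and correctly identifies that the real content is locating the node at $(-1,1)$, but the justification for your second step has a gap.

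You correctly reduce the problem to determining which of the four substitutions $(z,w)\mapsto(\pm z,\pm w)$ relates the orientation $\widetilde K$ used here to a reference orientation for which \cite{CDC} locates the node. You then propose to settle this by computing one explicit example. The trouble is your stated reason why one example suffices: you say the location ``is topologically robust along the one-parameter family of critical models obtained by varying $J$'', but varying $J$ only moves you within the space of weights on a \emph{fixed} graph. It does not connect the square lattice to an arbitrary non-degenerate $G\subset\K$. To make your single-example check rigorous you would have to argue that the substitution relating the two orientation conventions is \emph{graph-independent}---equivalently, that conditions~(i) and~(ii) of Section~\ref{sub:gen} determine a quadratic form on $H_1(\mathbb{T}^2;\Z/2\Z)$ that does not depend on $G$. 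This is exactly the content you are trying to avoid.

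The paper takes the abstract route you wished to bypass: it uses the geometric interpretation of Kasteleyn orientations as quadratic enhancements \cite{C-R,Cim} to translate condition~(ii) into the statement $q(a)=q(a')\in\{1,3\}$ on the Klein bottle, lifts this via \cite{Cim-P} to $\tilde q(\tilde a)=0$, $\tilde q(\tilde b)=1$ on the torus, and then observes that the critical node of \cite{CDC} corresponds to the unique odd quadratic form, obtained from $\tilde q$ by flipping $\tilde q(\tilde a)$---i.e.\ twisting by $(z_0,w_0)=(-1,1)$. This argument is graph-independent by construction. Your example-based strategy would become valid once you insert precisely this observation; at that point, however, the explicit computation is no longer needed.
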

\begin{proof}
Up to a non-vanishing multiplicative constant, the polynomial~$P(z,w)$
coincides with the characteristic polynomial
of~$\widetilde{G}\subset\mathbb{T}^2$ obtained using the Kac-Ward matrix (see~\cite[Section~4.3]{Cim-KW} or~\cite[Section~3.1]{CCK} for details). As showed in~\cite[Theorem~3.1]{CDC}, this polynomial coincides up to a non-vanishing multiplicative constant with the characteristic polynomial~$Q(z,w)$ of an associated {\em bipartite\/} graph~$\widetilde{G}^C\subset\mathbb{T}^2$. (This result is already present in essence in~\cite{Dub}.) Moreover, it is also shown in~\cite{CDC} that~$Q(z,w)$ never vanishes on the unit torus for~$\beta\neq\beta_c$, and does so at a unique real positive node for~$\beta=\beta_c$. It only remains to
check that, with the conventions of Section~\ref{sub:gen}, this zero can only be located at~$(-1,1)$.
(Note that this is consistent with Proposition~\ref{prop:Q(-1)}.)

One way to do so is to use the geometric interpretation of Kasteleyn orientations of~\cite{C-R,Cim}.
In a nutshell, the powers of~$i$ (resp. the signs) appearing in the Pfaffian expansion
of~$R(z_0,w_0)$ (resp.~$P(z_0,w_0)$) with~$z_0,w_0\in\{\pm 1\}$ are described by maps
$q\colon H_1(\K;\Z)\to\Z/4\Z$ (resp~$\tilde{q}\colon H_1(\mathbb{T}^2;\Z)\to\Z/2\Z$) called quadratic forms
(resp. quadratic enhancements). 
Using this language, condition~(ii) in Section~\ref{sub:Klein} can be translated as follows: the quadratic
enhancement corresponding to~$K$ is determined by~$q(a)=q(a')=1$ or~$q(a)=q(a')=3$.
Using~\cite[Section~3]{Cim-P}, the quadratic form corresponding to the associated Kasteleyn orientation~$\widetilde{K}$ satisfies~$\tilde{q}(\tilde{a})=0$ and~$\tilde{q}(\tilde{b})=1$.
As showed in~\cite{CDC} in the context of Kac-Ward determinants, the real node
of~$P$ corresponds to the unique odd quadratic form, i.e. the one mapping~$\tilde{a}$ and~$\tilde{b}$ to~$1$.
By the computation above, this corresponds to changing the parity of~$\tilde{q}(\tilde{a})$, i.e. twisting
the determinant of the Kasteleyn matrix by~$(z_0,w_0)=(-1,1)$. This completes the proof.
\end{proof}

We can now apply Equation~\eqref{equ:Ising-dimer} to~$G_{mn}$ and Theorem~\ref{thm:as-gen} to the dimer model
on~$\G=G^F\subset\K$. This yields the asymptotic expansion of~$\log Z^J_\beta(G_{mn})$ with free energy~$\mathbf{f}_0$ as described in the statement,
and~$\mathsf{fsc}=\log\mathsf{FSC}$ given as follows: for~$\beta\neq\beta_c$, we have
\begin{equation}
\label{equ:FSC1}
\mathsf{FSC}=\left\{
\begin{array}{ll}
\left|\sin(A{\textstyle{\frac{\pi}{4}}})\right|+\left|\cos(A'{\textstyle{\frac{\pi}{4}}})\right|&\text{for~$m$ odd,}\\
\left|\sin((A-A'){\textstyle{\frac{\pi}{4}}})\right|+1&\text{for~$m$ even,}
\end{array}
\right.
\end{equation}
while for~$\beta=\beta_c$, we have
\begin{equation}
\label{equ:FSC2}
\mathsf{FSC}=\left\{
\begin{array}{ll}
\left|\sin(A{\textstyle{\frac{\pi}{4}}})\right|\left(\frac{\vartheta_{01}(\tau)}{\eta(\tau)}\right)^{\frac{1}{2}}+\left|\cos(A'{\textstyle{\frac{\pi}{4}}})\right|\left(\frac{\vartheta_{00}(\tau)}{\eta(\tau)}\right)^{\frac{1}{2}}&\text{for~$m$ odd,}\\
\left|\sin((A-A'){\textstyle{\frac{\pi}{4}}})\right|\left(\frac{\vartheta_{01}(\tau)}{\eta(\tau)}\right)^{\frac{1}{2}}+\left(\frac{\vartheta_{00}(\tau)}{\eta(\tau)}\right)^{\frac{1}{2}}&\text{for~$m$ even,}
\end{array}
\right.
\end{equation}
with~$\tau=i\,\frac{m}{n}\left|\frac{\partial^2_zP(-1,1)}{\partial^2_wP(-1,1)}\right|^{1/2}$.
Hence, we are now left with the computation of~$A,A'\in\Z/4\Z$.

First note that by the last sentence of Theorem~\ref{thm:as-gen} together with Lemma~\ref{lemma:node},~$A'$
is independant of~$\beta$ while~$A$ can only change at~$\beta=\beta_c$.
Furthermore, for extremal values of~$\beta\in[0,\infty]$, these modulo~$4$ integers can be determined 
by the following result.

\begin{lemma}
\label{lemma:fsc}
We have~$\mathsf{fsc}=0$ for~$\beta=0$ and~$\mathsf{fsc}=\log(2)$ for~$\beta=\infty$.
\end{lemma}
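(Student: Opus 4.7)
The plan is to exploit the Ising--dimer dictionary~\eqref{equ:Ising-dimer} together with the high-temperature expansion~\eqref{equ:Fisher} to reduce the computation of~$\mathsf{fsc}$ at $\beta=0$ and $\beta=\infty$ to two exact (non-asymptotic) evaluations of the Fisher graph dimer partition function, at the degenerate weight values $x_e=0$ and $x_e=1$ respectively. Taking the logarithm of~\eqref{equ:Ising-dimer} applied to $G_{mn}$ shows that the term $\sum_{e\in E(G_{mn})}\log\cosh(\beta J_e)=mn\sum_{e\in E(G)}\log\cosh(\beta J_e)$ is a pure bulk contribution, matching precisely the first summand of the formula for $\mathbf{f}_0$ in Theorem~\ref{thm:as-Ising}. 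Consequently $\mathsf{fsc}$ equals the constant-order term in the asymptotic expansion of $\log Z(G_{mn}^F,x^F)$, so the question reduces to computing this term at $x=0$ and at $x=1$.

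For $\beta=0$, we have $x_e=0$ for every edge, and formula~\eqref{equ:Fisher} collapses to the contribution of the empty even subgraph, yielding the exact identity $Z(G_{mn}^F,0)=2^{|V(G_{mn})|}=2^{mn|V(G)|}$. The logarithm $mn|V(G)|\log 2$ is purely bulk in $mn$, forcing the constant-order correction to vanish: $\mathsf{fsc}=0$.

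For $\beta=\infty$, we have $x_e=1$, and~\eqref{equ:Fisher} gives $Z(G_{mn}^F,1)=2^{|V(G_{mn})|}\cdot|\mathcal{E}(G_{mn})|$. Invoking the classical fact that the set $\mathcal{E}(H)$ of even subgraphs of a connected graph $H$ is an $\mathbb{F}_2$-vector space (the cycle space) of dimension equal to the cyclomatic number $|E(H)|-|V(H)|+1$, and noting that $G_{mn}$ is connected (being a connected cover of the connected graph $G$, since the covering space $\K_{mn}$ of $\K$ is itself connected and $G_{mn}$ is its spine), one obtains $|\mathcal{E}(G_{mn})|=2^{|E(G_{mn})|-|V(G_{mn})|+1}$ and hence the exact identity $Z(G_{mn}^F,1)=2^{mn|E(G)|+1}$. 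Its logarithm splits as the bulk $mn|E(G)|\log 2$ plus the constant $\log 2$, so $\mathsf{fsc}=\log 2$.

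No serious technical difficulty arises; the only point requiring care is the interpretation of the formal value $\beta=\infty$, which at first sight falls outside the framework of Theorem~\ref{thm:as-Ising} because of the divergent $\cosh(\beta J_e)$ factors. The dimer-side reformulation above circumvents this issue, since $x_e=1$ is a perfectly regular parameter for the Fisher dimer model and the argument is carried out entirely at the level of $Z(G_{mn}^F,x^F)$. This viewpoint also explains the universality of the constant $\log 2$: it arises solely from the (graph-independent) cyclomatic identity, with no dependence on the embedding, the weights, or the detailed structure of $G$.
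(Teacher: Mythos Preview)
Your proof is correct and follows essentially the same route as the paper: reduce via~\eqref{equ:Ising-dimer} and~\eqref{equ:Fisher} to the dimer partition function on $G_{mn}^F$, evaluate it exactly at $x=0$ (only the empty even subgraph contributes) and at $x=1$ (the cycle-space count $|\mathcal{E}(G_{mn})|=2^{|E(G_{mn})|-|V(G_{mn})|+1}$), and read off the constant-order term. The one minor difference is that the paper also computes $P(z,w)$ explicitly at these two extremal weights (obtaining $2^{4|V(G)|}$ and $2^{4|E(G)|}$ respectively) and checks directly that the integral $I$ matches the linear-in-$mn$ part, whereas you bypass this by invoking uniqueness of the asymptotic expansion; both are valid and yield the same conclusion.
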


\begin{proof}
Recall that we have the asymptotic expansion
\[
\log Z(G^F_{mn},x^F)=mn\frac{I}{4}+\mathsf{fsc}+o(1),\quad\text{with}\quad I=\int_{\mathbb T^2} \log P(z,w)\frac{dz}{2\pi iz}\frac{dw}{2\pi iw}\,.
\]
Let us start with the value~$\beta=0$, corresponding to the dimer weights~$x_e=\tanh(\beta J_e)=0$.
In this case, we have
\[
Z(G^F_{mn},x^F)=2^{|V(G_{mn})|}=2^{mn|V(G)|}\,,
\]
while the characteristic polynomial is constant equal to
\[
P(z,w)=Z(\widetilde{G^F},x^F)^2=\left(2^{|V(\widetilde{G})|}\right)^2=2^{4|V(G)|}\,,
\]
yielding~$I=4|V(G)|\log(2)$.
Comparing this with the asymptotic expansion displayed above,
we see that for~$\beta=0$, we have~$\mathsf{fsc}=0$.

Let us now consider the extremal value~$\beta=\infty$, which gives the dimer weights~$x_e=1$.
By Equation~\eqref{equ:Fisher}, the corresponding dimer partition function is given
by
\[
Z(G_{mn}^F,x^F)=2^{|V(G_{mn})|}|\mathcal{E}(G_{mn})|=2^{mn|V(G)|+\dim H_1(G_{mn})}\,.
\]
The graph~$G_{mn}$ being connected, we have the Euler characteristic computation
\[
1-\dim H_1(G_{mn})=\chi(G_{mn})=|V(G_{mn})|-|E(G_{mn})|=mn(|V(G)|-|E(G)|)\,,
\]
leading to
\[
Z(G_{mn}^F,x^F)=2^{1+mn|E(G)|}\,.
\]
A possible way of computing the corresponding polynomial~$P(z,w)$ is to recall that it is equal
to~$2^{2|V(\widetilde{G})|}$ times the twisted Kac-Ward determinant (see~\cite[Section~3.1]{CCK}),
which for~$x=1$, is equal to~$2^{-|V(\widetilde{G})|+|E(\widetilde{G})|}$ times the characteristic polynomial~$Q(z,w)$ (see~\cite[Theorem~3.1]{CDC}). Since this later polynomial is given by~$Q(z,w)=2^{|F(\widetilde{G})|}$
for~$x=1$, where~$F(\widetilde{G})$ denotes the faces of~$\widetilde{G}\subset\mathbb{T}^2$,
the Euler characteristic computation~$|V(\widetilde{G})|-|E(\widetilde{G})|+|F(\widetilde{G})|=\chi(\mathbb{T}^2)=0$ now leads to
\[
P(z,w)=2^{2|E(\widetilde{G})|}=2^{4|E(G)|}\,,
\]
yielding~$I=4|E(G)|\log(2)$.
Comparing these results with the asymptotic expansion displayed above leads to the value~$\mathsf{fsc}=\log(2)$.
\end{proof}

Theorem~\ref{thm:as-Ising} now follows Equations~\eqref{equ:FSC1} and~\eqref{equ:FSC2}
together with one final lemma. 

\begin{lemma}
\label{lemma:tech-Ising}
\begin{enumerate}[(i)]
\item{For all~$\beta\in[0,\infty]$, we have~$A'=0\in\Z/4\Z$.}
\item{The modulo~$4$ integer~$A$ is equal to~$0$ for~$\beta<\beta_c$, to~$2$ for~$\beta>\beta_c$,
and to~$\pm 1$ for~$\beta=\beta_c$.}
\end{enumerate}
\end{lemma}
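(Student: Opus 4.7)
The plan unfolds in three stages, following the pattern of Lemma~\ref{lemma:fsc} and the continuity statement of Theorem~\ref{thm:as-gen}.

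\emph{Part (i): determining $A'$.} By Lemma~\ref{lemma:node} we have $P(-1,-1)\neq 0$ throughout $\beta\in[0,\infty]$, so the continuity statement at the end of Theorem~\ref{thm:as-gen} guarantees that $A'$ is constant on $[0,\infty]$. Evaluating at $\beta=\infty$, where $\mathsf{fsc}=\log 2$ by Lemma~\ref{lemma:fsc}, the formula~\eqref{equ:FSC1} for $m$ odd gives $|\sin(A\pi/4)|+|\cos(A'\pi/4)|=2$; since each term is bounded by one, both must equal one, forcing $A\equiv 2$ and $A'\equiv 0\pmod 4$. Hence $A'=0$ throughout $[0,\infty]$.

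\emph{Part (ii) in the non-critical regimes.} Lemma~\ref{lemma:node} and Theorem~\ref{thm:as-gen} show that $A$ is constant on each of $[0,\beta_c)$ and $(\beta_c,\infty]$ separately. At $\beta=0$, Lemma~\ref{lemma:fsc} gives $\mathsf{FSC}=1$, and combining with $A'=0$ and~\eqref{equ:FSC1} forces $|\sin(A\pi/4)|=0$, so $A\equiv 0\pmod 4$. At $\beta=\infty$, the analysis of the previous paragraph gives $A\equiv 2\pmod 4$.

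\emph{The critical case.} At $\beta=\beta_c$, Proposition~\ref{prop:P}(ii) and Lemma~\ref{lemma:node} yield $R(i,1)R(-i,1)=P(-1,1)=0$; combined with the symmetry $R(-z,1)=\overline{R(z,1)}$ of Remark~\ref{rem:R}(iii) this shows that both $R(\pm i,1)$ vanish. The positive-node condition on $P$ forces these zeros to be simple, so $m_+=m_-=1$ for $R(z,1;\beta_c)$, and Lemma~\ref{lemma:alpha} yields
\[
\mathrm{Arg}\Big(\prod_{z^n=1}R(z,1;\beta_c)\Big) = A(\beta_c)\,n\,\frac{\pi}{2} + o(1)
\]
with $A(\beta_c)\in\Z/4\Z$ to be determined, the contribution of the $\pm i$ roots vanishing since $m_-=m_+$. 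To locate $A(\beta_c)$ I would exploit the factorisation $R(z,1;\beta)=(z^2+y(\beta)^2)\hat R_\beta(z)$, where $y(\beta)\to 1$ as $\beta\to\beta_c$ is the position of the root of $R(z,1;\beta)$ near $+i$ (purely imaginary by the symmetry of Remark~\ref{rem:R}(iii) and the implicit function theorem) and $\hat R_\beta$ is a continuous family with roots bounded away from the unit circle near $\beta_c$. The identity $\prod_{z^n=1}(z^2+y^2)=1+y^{2n}>0$ for $n$ odd then shows that $\arg F_n(\beta)=\arg\prod_{z^n=1}\hat R_\beta(z)$, where $F_n(\beta):=\prod_{z^n=1}R(z,1;\beta)$ is continuous and non-vanishing (since $\pm i$ are not $n$-th roots of unity for $n$ odd). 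Combining the asymptotic values $\arg F_n(\beta_c^\pm)\approx 0,\pi$ coming from $A=0$ and $A=2$ on either side with the asymptotic formula at $\beta_c$ itself forces $A(\beta_c)n\equiv\pm 1\pmod 4$, and hence $A(\beta_c)\equiv\pm 1\pmod 4$ since $n$ is odd.

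The main obstacle is making the last step fully rigorous. The $o(1)$ error in Lemma~\ref{lemma:alpha} is not uniform in $\beta$ near $\beta_c$, since the roots $\pm iy(\beta)$ of $R(z,1;\beta)$ approach the unit circle and interact with the $n$-th roots of unity in a $\beta$-dependent window. The cleanest approach is probably to compute the subdominant correction to $\arg F_n(\beta)$ coming from the near-critical roots in closed form using the $(z^2+y(\beta)^2)$ factor, and verify directly that at $\beta=\beta_c$ its limiting value produces the odd-parity phase $\pm\pi/2\pmod{2\pi}$. An alternative would be a parity argument modelled on the amoeba analysis in the proof of Lemma~\ref{lemma:S}, adapted from the bipartite to the Fisher-graph setting via the mapping of~\cite{CDC}.
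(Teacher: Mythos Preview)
Your treatment of part~(i) and of part~(ii) away from $\beta_c$ is correct and matches the paper. The gap is exactly where you flag it: the critical case.

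Your continuity/interpolation argument for $A(\beta_c)$ does not close. For each fixed odd $n$, the function $\beta\mapsto\arg F_n(\beta)$ is indeed continuous and moves (as $n\to\infty$) from a value near $0$ for $\beta<\beta_c$ to a value near $\pi$ for $\beta>\beta_c$; by the intermediate value theorem it passes through $\pm\pi/2$ somewhere. But nothing forces that crossing to occur \emph{at} $\beta=\beta_c$, nor to be its asymptotic value there. The near-critical factor $(z-iy_+(\beta))(z+iy_-(\beta))$ (note the two imaginary roots need not be symmetric about $0$, so your $(z^2+y^2)$ is not quite the right normal form) does not help: computing $\prod_{z^n=1}$ of this factor at $\beta_c$ gives a positive real number, so it contributes nothing to the argument, and the remaining $\hat R_{\beta_c}$ piece is not pinned down by the $\beta\neq\beta_c$ information.

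The paper takes precisely the alternative you mention in your last sentence, and it is not merely an alternative but the intended argument. The point is that by~\cite{CDC} one has $P(z,w)=k\cdot Q(z,w)$ for a non-zero constant $k$ and $Q$ the characteristic polynomial of a genuinely \emph{bipartite} toric graph $\widetilde{G}^C$. Together with the identity $R(z^{1/2},\pm 1)R(-z^{1/2},\pm 1)=P(z,\pm 1)$ of Proposition~\ref{prop:P}, this means the pair $(P,R)$ satisfies exactly the same algebraic relations as the pair $(Q,S)$ in the bipartite dimer setting. Consequently the proofs of Propositions~\ref{prop:Q(-1)} and~\ref{prop:rootS} and, crucially, of Lemma~\ref{lemma:S} carry over verbatim with $Q\to P$ and $S\to R$. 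Since at $\beta=\beta_c$ the polynomial $P$ has a single real node in the unit torus (Lemma~\ref{lemma:node}), case~(3) of Lemma~\ref{lemma:S} gives $A$ odd directly, with no limiting procedure in $\beta$ required.
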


\begin{proof}
First observe that Lemma~\ref{lemma:fsc} and Equation~\eqref{equ:FSC1} imply that~$A'=0$ for~$\beta=\infty$.
Since this modulo~$4$ integer is known to be constant, the first point follows.
By the same Lemma~\ref{lemma:fsc} and Equation~\eqref{equ:FSC1}, now with~$A'=0$,
we have~$A=0$ for~$\beta=0$ and~$A=2$ for~$\beta=\infty$.
Since~$A$ remains constant as~$\beta$ varies in~$[0,\infty]\setminus\{\beta_c\}$, the second point
is proved for~$\beta\neq\beta_c$, and we are left with the determination of~$A$ at~$\beta=\beta_c$.
To do so, let us recall the identities
\[
R(z^{1/2},\pm 1)R(-z^{1/2},\pm 1)=P(z,\pm 1)\quad\text{and}\quad P(z,w)=k\cdot Q(z,w)
\]
of Proposition~\ref{prop:P} and of the proof of Lemma~\ref{lemma:node}, with~$k\in\R^*$ and~$Q(z,w)$ the characteristic polynomial of a bipartite toric graph~$\widetilde{G}^C$.
With these two equalities, the proof of Proposition~\ref{prop:Q(-1)}, Proposition~\ref{prop:rootS}
and Lemma~\ref{lemma:S} extend \emph{verbatim}, with~$Q(z,w)$ replaced by~$P(z,w)$
and~$S(z,\pm 1)$ by~$R(z,\pm 1)$.
By Lemma~\ref{lemma:node}, we know that for~$\beta=\beta_c$, the polynomial~$P(z,w)$ has a node in the unit torus. The third point of Lemma~\ref{lemma:S} then implies that~$A$ is odd, i.e. congruent to~$\pm 1$ modulo~$4$. This concludes the proof of the lemma.
\end{proof}

We conclude this article with an explicit example, and remarks.

\begin{example}
\label{ex:Ising}
Consider the only case previously studied in the literature, that of the isotropic square lattice.
Standard computations lead to the value~$\tau=i\frac{m}{2n}$ at the critical temperature~$\beta_c=\frac{1}{2}\log(\sqrt{2}+1)$.
The corresponding value of~$\mathsf{fsc}$ from Theorem~\ref{thm:as-Ising} is easily seen to coincide with~\cite[Equation~(59)]{LWIsing} (and also with the result of~\cite{Ch-P}). However, it does not coincide with~\cite[Equation~(41)]{Iz12}.
\end{example}

\begin{remarks}
\label{rem:Ising}
\begin{enumerate}[(i)]
\item{The statement of Lemma~\ref{lemma:fsc} holds in the case of a toric graph~$G\subset\T^2$ as well,
with the exact same proof. By continuity, this determines the finite size corrections
for all~$\beta\neq\beta_c$, and completes the determination of the asymptotic expansion of
the Ising partition function on~$G_{mn}$ given in~\cite[Corollary~3.5]{Ctech}. In particular,
Fisher graphs in the supercritical regime form a wide class of graphs with~$\mathsf{fsc}\neq 0$ even
though the spectral curve is disjoint from the unit torus,
see also the discussion at the end of Example~\ref{ex:tri-sq}.}
\item{In the spirit of Section~\ref{sub:cons}, we can consider the ratio of partition
functions~$\frac{Z^J_\beta(G_{mn})^2}{Z^J_\beta(\widetilde{G}_{mn})}$ for an arbitrary
non-degenerate graph~$G\subset\K$. Theorem~\ref{thm:as-Ising} and the remark above imply that it
tends to~$1$ for~$\beta\neq\beta_c$, and to the following universal limit at criticality:
\[
\lim_{m,n\to\infty}\frac{Z^J_\beta(G_{mn})^2}{Z^J_\beta(\widetilde{G}_{mn})}
=\frac{2\vartheta_{00}(\tau)+\left(2\vartheta_{00}(\tau)\vartheta_{01}(\tau)\right)^{1/2}+\vartheta_{01}(\tau)}{\vartheta_{00}(\tau)+\vartheta_{01}(\tau)+\vartheta_{10}(\tau)}\,.
\]
}
\item{Considering the~$\tau_\text{im}\to\infty$ limit as in Section~\ref{sub:cons} leads to the asymptotic expansion
\[
\log Z^J_\beta(G_{mn})=mn\frac{\mathbf{f}_0}{2}+\frac{\pi\tau_\text{im}}{48}-\log(2-\sqrt{2})+o(1)\,.
\]
Comparing it with~\eqref{equ:CFT'} leads to~$\Delta=-\pi/24$, i.e. to the value~$c=\frac{1}{2}$ for the central charge of a conformal field theory describing the Ising model. Therefore, our results are in full agreement with the predictions of~\cite{BCN}.}
\end{enumerate}
\end{remarks}

\bibliographystyle{plain}

\bibliography{bibliographie}

\end{document}